\numberwithin{equation}{section}
\begin{document}

\title{\vspace{-.3in}\textbf{Multivariate Fidelities}}

\author{Theshani Nuradha\thanks{School of Electrical and Computer Engineering, Cornell University, Ithaca, NY 14850, USA
  (pt388@cornell.edu (corresponding author), hemant.mishra@cornell.edu, wilde@cornell.edu).}
\and  Hemant K.~Mishra\footnotemark[1]
\and Felix Leditzky\thanks{Department of Mathematics \&
Illinois Quantum Information Science and Technology (IQUIST) Center, University of Illinois Urbana-Champaign, Urbana, IL 61801, USA
  (leditzky@illinois.edu)}  
\and Mark M.~Wilde\footnotemark[1]
}

\date{ }

\maketitle

\begin{abstract}
The bivariate classical fidelity is a widely used measure of the similarity of two probability distributions. 
There exist a few extensions of the notion of the bivariate classical fidelity to more than two probability distributions; herein we call these extensions multivariate classical fidelities, with some examples being the Matusita multivariate fidelity and the average pairwise fidelity. Hitherto, quantum generalizations of multivariate classical fidelities have not been systematically explored, even though there are several well known generalizations of the bivariate classical fidelity to quantum states, such as the Uhlmann and Holevo fidelities. 
The main contribution of our paper is to introduce a number of multivariate quantum fidelities and show that they satisfy several desirable properties that are natural extensions of those of the Uhlmann and Holevo fidelities. We propose several variants that reduce to the average pairwise fidelity for  commuting states, including the average pairwise $z$-fidelities, the multivariate semi-definite programming (SDP) fidelity, and  a multivariate fidelity inspired by an existing secrecy measure. The second one is obtained by extending the SDP formulation of the Uhlmann fidelity to more than two states. All of these variants satisfy the following properties: 
(i) reduction to multivariate classical fidelities for commuting states, (ii)  the data-processing inequality, (iii) invariance under  permutations of the  states, (iv) its values are in the interval~$[0,1]$; they are faithful, that is, their values are equal to one if and only if all the  states are equal, and they satisfy orthogonality, that is their values are equal to zero if and only if the  states are mutually orthogonal to each other, (v)  direct-sum property, (vi)  joint concavity, and (vii) uniform continuity bounds under certain conditions. 
Furthermore, we establish inequalities relating these different variants, indeed clarifying that all these definitions coincide with the average pairwise fidelity for commuting states. We also introduce another multivariate fidelity called multivariate log-Euclidean fidelity, which is a quantum generalization of the Matusita multivariate fidelity. We also show that it satisfies most of the desirable properties listed above, it is a function of a multivariate log-Euclidean divergence, and it has an operational interpretation in terms of quantum hypothesis testing with an arbitrarily varying null hypothesis. Lastly, we propose multivariate generalizations of Matsumoto's geometric fidelity and establish several properties of them.
\end{abstract}

\setcounter{tocdepth}{2}
\tableofcontents

\section{Introduction}

Distinguishability and similarity are essential concepts that hold significance across all scientific disciplines. The fundamental toolkit for understanding these concepts revolves around measures of distinguishability and  similarity. 
This has led to several measures of distinguishability between probability distributions in classical information theory, such as the Kullback--Leibler divergence~\cite{kl1951} and R\'enyi relative entropy~\cite{renyi1961measures}, and similarity measures such as the Bhattacharyya overlap~\cite{Bhattacharyya1946}.
In quantum information theory, we have distinguishability measures between quantum states such as quantum relative entropy~\cite{umegaki1962conditional}, Petz--R\'enyi relative entropy~\cite{P86}, and sandwiched R\'enyi relative entropy~\cite{muller2013quantum, wilde2014strong}, and information quantities derived from these measures, such as quantum mutual information~\cite{stratonovich1965information}.  Moreover, the Uhlmann~\cite{uhlmann1976transition} and Holevo~\cite{holevo1972} fidelities  are similarity measures between two quantum states. 
All of the aforementioned distinguishability and similarity measures take into account two probability distributions in the classical case and two quantum states in the quantum case.

There exist a few extensions of the notion of the Bhattacharyya overlap or  bivariate classical fidelity to more than two probability distributions. Examples include the Matusita affinity~\cite{Matusita1967notion} and the average pairwise classical fidelity; herein we call all such extensions  multivariate classical fidelities.
Hitherto, quantum generalizations of multivariate classical fidelities have not been systematically explored. 
Existing multivariate measures that do not reduce to classical multivariate fidelities are as follows. The Holevo information of a tuple of states 
with uniform prior characterizes the distinguishability of the states (see~\eqref{eq:generalized_Holevo} for a precise definition).
Indeed, the Holevo information being approximately zero implies that the states in this tuple are similar, while it being far from zero indicates that they are distinguishable. The multivariate Chernoff divergence is defined in \cite[Definition~3]{mnw2023} for commuting states, and various generalizations to non-commuting states were postulated there. 
Moreover, multivariate generalizations of R\'enyi divergences have been explored in~\cite{mosonyi2022geometric} for general states and in~\cite{farooq2023asymptotic} for commuting states.

The main contribution of our paper is to introduce quantum generalizations of the Matusita affinity and average pairwise classical fidelity; here we call them multivariate quantum fidelities. We show that these multivariate fidelities satisfy several desirable properties that are natural generalizations of those of bivariate fidelities. Consequently, the multivariate quantum fidelities are bona fide measures of similarity between multiple states. See \Cref{sec:paper-contribs} for a more detailed discussion of our contributions.

\subsection{Classical and quantum bivariate fidelities}

\label{sec:c-q-bivariate-fid-review}

We first provide a brief overview of classical and quantum fidelities.

The bivariate classical fidelity quantifies the similarity of two discrete probability distributions $p$ and $q$ on a finite set $\cX$; it is defined as 
\begin{equation}\label{eq:bc}
    F(p,q) \coloneqq \sum_{x\in \cX} \sqrt{p(x) q(x)}.
\end{equation}
It satisfies $0 \leq F(p,q) \leq 1$ for all distributions $p$ and $q$,  it is equal to one if and only if $p=q$, and it is equal to zero if and only if the supports of $p$ and $q$ are disjoint. The above formulation can be naturally extended to commuting states as follows: for commuting states~$\rho$ and $\sigma$, define
\begin{equation}\label{eq:commuting_bivariate}
    F(\rho,\sigma) \coloneqq \sum_x \sqrt{\rho(x) \sigma(x)},
\end{equation}
where $(\rho(x))_x$ and $(\sigma(x))_x$ are the tuples of eigenvalues of $\rho$ and $\sigma$, respectively, in their common eigenbasis.

In the same spirit of distinguishability measures, there is an infinite number  of similarity measures that generalize the bivariate classical fidelity to quantum states. Here we call them $z$-fidelities and  define them for $z \geq 1/2 $ and quantum states $\rho$ and $\sigma$ as
\begin{equation}
    F_z(\rho,\sigma) \coloneqq \Tr\!\left[  \left( \sigma^{\frac{1}{4z}} \rho^{\frac{1}{2z}} \sigma^{\frac{1}{4z}} \right)^z \right] = \left\| \rho^{\frac{1}{4z}} \sigma^{\frac{1}{4z}} \right\|_{2z}^{2z}.
    \label{eq:z-fid-def}
\end{equation}
For every positive integer $z$, cyclicity of trace leads to the following expression:
\begin{equation}\label{eq:decompose_z}
    F_z(\rho, \sigma) = \operatorname{Tr}\!\left[\left(\rho^{1/2z} \sigma^{1/2z}\right)^z\right].
\end{equation}
{As indicated in \cite[Eq.~(4)]{audenaert2015alpha}, this identity in fact holds for all $z>0$. See also \cite{li2015some,baldwin2023efficiently} for the special case $z=1/2$.}

The $z$-fidelities are obtained by fixing $\alpha=1/2$ in the $\alpha$-$z$ R\'enyi relative entropies~\cite{audenaert2015alpha}. They reduce to the classical fidelity for commuting states, and each of them satisfies the data-processing inequality \cite[Theorem~1.2]{Z20}; i.e., for all $z\geq 1/2$ and for every quantum channel~$\cN$ and pair of states $\rho$ and $\sigma$, the following inequality holds:
\begin{equation}
    F_z(\rho,\sigma) \leq F_z\!\left( \cN(\rho), \cN(\sigma) \right).
    \label{eq:z-fid-DP}
\end{equation}
The $z$-fidelities are monotonically decreasing and continuous in~$z$ \cite[Proposition~6]{lin2015investigating}.
Notably, for $z=1/2$, the $z$-fidelity reduces to the Uhlmann fidelity~\cite{uhlmann1976transition}: 
\begin{equation}
    F_{1/2}(\rho,\sigma)=F(\rho,\sigma) \coloneqq \left\| \sqrt{\rho} \sqrt{\sigma} \right\|_1,
\end{equation}
which was expounded upon in~\cite{J94fid}. 
For $z=1$, the $z$-fidelity reduces to the Holevo fidelity~\cite{holevo1972}: 
\begin{equation}
    F_1(\rho,\sigma) = F_H(\rho, \sigma) \coloneqq \Tr\!\left[\sqrt{\rho} \sqrt{\sigma}\right].
\end{equation}
The log-Euclidean fidelity is defined as the $z\to \infty$ limit of~\eqref{eq:z-fid-def}, for which it is possible to obtain a closed-form expression by an application of the Lie--Trotter product formula. Indeed, for $\varepsilon>0$, define $\rho(\varepsilon)\coloneqq
\rho+\varepsilon I$ and $\sigma(\varepsilon)\coloneqq
\sigma+\varepsilon I$. Then
\begin{equation}
    \begin{aligned}
   F_{\flat}(\rho,\sigma)   \coloneqq  \lim_{z\to \infty} F_z(\rho,\sigma) 
    &= \inf_{\varepsilon >0} \operatorname{Tr}\!\left[\exp\!\left(\frac{1}{2} (\ln \rho(\varepsilon) + \ln \sigma(\varepsilon))\right)\right] \\
   & = \lim_{\varepsilon \to  0^+} \operatorname{Tr}\!\left[\exp\!\left(\frac{1}{2} (\ln \rho(\varepsilon) + \ln \sigma(\varepsilon))\right)\right].
\end{aligned}
\label{eq:log-euc-fid-limit}
\end{equation}
See \cref{sec:proof-log-euclid-fid} for a short proof of~\eqref{eq:log-euc-fid-limit}, and see \cite[Section~4]{audenaert2015alpha} and \cite[Eq.~(17)]{mosonyi2017strong} for a quantity that generalizes the log-Euclidean fidelity. 

For two pure states $|\psi\rangle\!\langle \psi|$ and $|\phi\rangle\!\langle \phi|$, the $z$-fidelity reduces to
\begin{equation}
    F_z(|\psi\rangle\!\langle \psi|, |\phi\rangle\!\langle \phi|) = | \langle \psi | \phi \rangle |^{2z}.
\end{equation}
Operationally, this is the $z$th power of the probability for the state $|\psi\rangle\!\langle \psi|$ to pass a test for being the state $|\phi\rangle\!\langle \phi|$.

Another generalization of the classical bivariate fidelity is Matsumoto's geometric fidelity \cite{matsumoto2010reverse}, defined as
\begin{equation}
\label{eq:intro-geo-fid-def}
     F_G(\rho,\sigma)\coloneqq \inf_{\varepsilon >0 } \Tr[ \rho(\varepsilon) \# \sigma(\varepsilon)],
\end{equation}
where $A \# B$ denotes the matrix geometric mean of the positive definite operators $A$ and $B$:
\begin{equation}
    A \# B \coloneqq A^{1/2}\left( A^{-1/2} B A^{-1/2} \right)^{1/2} A^{1/2}, \label{eq:geometric-mean}
\end{equation}
$\rho(\varepsilon) \coloneqq \rho + \varepsilon I$,
and $\sigma(\varepsilon) \coloneqq \sigma + \varepsilon I$. This quantity obeys several natural properties desired for a quantum fidelity and was studied further in \cite{cree2020fidelity}.

The bivariate setting focuses on the similarity of two states. Going forward, the main goal of our paper is to identify measures for quantifying the similarity of a given \textit{tuple of states}.
We focus on two main approaches to arrive at multivariate quantum fidelities, as shown in \cref{fig:boxes_fidelities}. The first approach is to generalize multivariate classical fidelities, which reduce to the bivariate classical fidelity for two commuting states, to a tuple of non-commuting states. The second is to generalize bivariate quantum fidelities to multivariate quantum fidelities that reduce to the bivariate setting when two states are considered. We also note that some generalizations can be obtained by following either of the aforementioned approaches.

\begin{figure}
\centering
\begin{tikzpicture}
[squarednode/.style={rectangle, draw=black!60, very thick, minimum height=3em,minimum width=3em,align=center},distance=4cm]
      \node [squarednode,text width=5cm] (model) {\textbf{Multivariate classical} \\ \textbf{fidelities} \\[.25em] 
    $F$, $F_r$, $F_{k,r}$};
    \node [squarednode, text width=5.5cm, right=of model.0] (model2) {\textbf{Multivariate quantum} \\  \textbf{fidelities} \\[.25em] $F_z$, $F_{\operatorname{SDP}}$, $F_S$, $  F^{\flat}_r$, $F_{k,r}^{\flat}$, $F_G$, $F^G_{\operatorname{SDP}}$};
        \node [squarednode, below=of model, text width=5cm] (model3){\textbf{Bivariate classical fidelity} \\[.25em] $F$};
    \node [squarednode,right=of model3.0, below=of model2, text width=5.5cm] (model4) {\textbf{Bivariate quantum fidelities} \\[.25em] $F_z$, $F_G$ 
    };
    \draw [-latex,very thick] (model) -- (model2); 
    \draw[-latex,very thick] (model3.north) -- (model.south);
    \draw [-latex,very thick] (model3) -- (model4); 
    \draw[-latex,very thick] (model4.north) -- (model2.south);
\end{tikzpicture}
\caption{\small Two main approaches employed in this work to arrive at multivariate quantum fidelities are as follows. Our first approach is to generalize multivariate classical fidelities to a tuple of non-commuting states. These include the average pairwise fidelity $F$ in~\eqref{eq:average_pairwise_commuting}, the Matusita multivariate fidelity $F_r$ in~\eqref{eq:Matusita_fid_commuting}, and the average $k$-wise fidelities in~\eqref{eq:def-avg-k-wise}, all of which reduce to the bivariate classical fidelity in~\eqref{eq:bc} for two commuting states. Our second approach is to generalize  bivariate quantum fidelities ($z$-fidelity $F_z$ in~\eqref{eq:z-fid-def} or geometric fidelity $F_G$ in~\eqref{eq:intro-geo-fid-def}) to multivariate quantum fidelities, each of which reduces to a bivariate quantum fidelity when two general states are considered. With these two approaches, we present several multivariate quantum fidelities:  average pairwise $z$-fidelity $F_z$ (\cref{subS:avg_pairwise});   multivariate SDP fidelity $F_{\operatorname{SDP}}$ (\cref{subS:SDP_F});   secrecy-based multivariate fidelity  $F_S$ (\cref{subS:secrecy_based_F}); multivariate log-Euclidean fidelity~$  F^{\flat}_r$ (\cref{subS:loog_eucledian_F}); average $k$-wise log-Euclidean fidelity $F_{k,r}^{\flat}$ (\cref{subS:average_k_log_eucli_F}); average pairwise geometric fidelity $F_G$  and multivariate geometric SDP fidelity $F^G_{\operatorname{SDP}}$ (\cref{sec:multiv-geo-fids}).}
\label{fig:boxes_fidelities}
\end{figure}
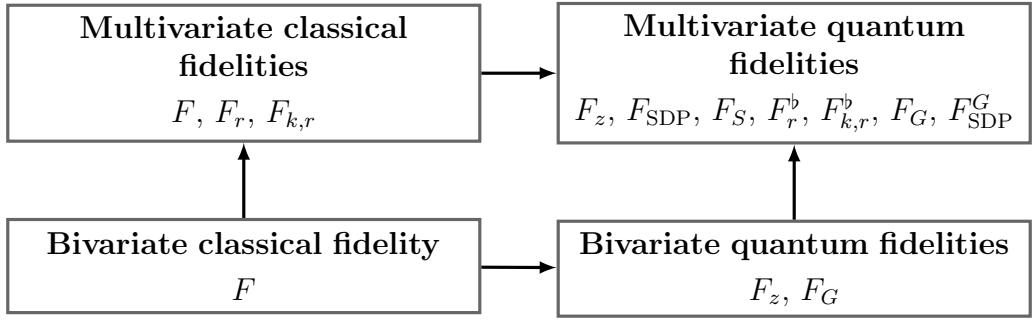

\subsection{Contributions}

\label{sec:paper-contribs}

In this paper, we comprehensively study multivariate generalizations of bivariate fidelity, while establishing connections between different formulations. 

First, we recall some multivariate classical fidelities for commuting states, namely, the Matusita multivariate fidelity and average pairwise fidelity, and then we establish an inequality relating them (\cref{prop:r_root_to_commuting_fidelity}). Thereafter we generalize these quantities to average $k$-wise fidelities (\cref{def:average_k_wise_classical}) and prove that they are ordered (\cref{prop:ineqs-avg-k-wise-classical}). 

Next, we introduce quantum generalizations,  mainly focusing on three variants that reduce to average pairwise fidelity for commuting states: average pairwise fidelity using $z$-fidelities for $z \geq  1/2$  (\cref{def:average_pairwise_z_fidelity}); multivariate semi-definite programming (SDP) fidelity (\cref{def:multivar-fid}); and secrecy-based multivariate fidelity (\cref{def:secrecy_based_multi}). The multivariate SDP fidelity is obtained by generalizing the SDP of Uhlmann fidelity from~\cite{watrous2012simpler} to multiple states. The secrecy-based multivariate fidelity is inspired by an existing secrecy measure from~\cite[Eq.~(19)]{konig2009operational}---the average fidelity between each state of the tuple and a fixed state, where it is maximized over all such fixed states (c.f.,~\eqref{eq:secrecy_measure} for the definition). 
With these definitions, we show that both SDP fidelity  and secrecy-based multivariate fidelity are sandwiched between the average pairwise Holevo fidelity $(z=1)$ and the average pairwise Uhlmann fidelity $(z=1/2)$, implying that all of these variants converge to average pairwise fidelities in the commuting case (see \cref{thm:SDP_fidelity_pairwise_upper_and_lower}).

\cref{Prop:properties_average_pairwise_z}, \cref{thm:properties_SDP_fidelity},  and \cref{thm:properties_secrecy_multi_F} assert that all three of the aforementioned quantum variants satisfy a number of properties desired for a multivariate fidelity: (i) reduction to multivariate classical fidelities for commuting states; (ii) data processing; (iii) permutation invariance; (iv) faithfulness (i.e., it is equal to one if and only if the states are the same); (v) orthogonality (i.e., it is equal to zero if and only if the tuple of states forms an  orthogonal set); (vi) direct-sum property; and (vii) joint concavity. In addition, the average pairwise fidelity satisfies super-multiplicativity, whereas the SDP fidelity and secrecy-based multivariate fidelity do not satisfy this property in general. \cref{thm:uniform_cont_SDP_F} establishes that the multivariate SDP fidelity satisfies a uniform continuity bound, which follows by means of an alternative formulation for the SDP fidelity in \cref{prop:another_formulation_multi_SDP_fid}. \cref{Prop:uniform_cont_average_pairwise} and \cref{prop:uniform_cont_secrecy_multi_F} state that the average pairwise Uhlmann and Holevo fidelities and secrecy-based multivariate fidelity also satisfy uniform continuity bounds, respectively. In addition, we define maximal and minimal extensions of multivariate classical fidelities (\cref{def:maximal_multi_fidelity} and \cref{def:minimal_multi_F}) and analyse their properties. 

Furthermore, we explore a quantum generalization of the Matusita multivariate fidelity, and we show that it is a special case of a multivariate log-Euclidean divergence. We call this variant the multivariate log-Euclidean fidelity.  We show that this satisfies most of the desirable properties of a multivariate fidelity in \cref{thm:properties_quantum_Matusita_multi}.
We give an operational interpretation of all multivariate log-Euclidean divergences in terms of quantum hypothesis testing with an arbitrarily varying null hypothesis (\cref{Cor:log_Euclidean_operational}), by making use of the main result of~\cite{Notzel_2014}. In addition, we define the \textit{oveloh information} in~\eqref{eq:oveloh-info} and show that it is related to the multivariate log-Euclidean fidelity in~\eqref{eq:quantum_Matusita_oveloh}. In passing, we also derive a connection between the Holevo information and oveloh information in \cref{Cor:Holevo_and_Oveloh}. We also define average $k$-wise log-Euclidean fidelities (\cref{def:average_k_log_Eucl}) and show that they are ordered (\cref{prop:avg-k-wise-log-Euc-ordered}), as a generalization of the aforementioned average $k$-wise classical fidelities.

Lastly, we define multivariate geometric fidelities (\cref{def:avg_geometric} and \cref{def:SDP_geometric_multi}) by generalizing geometric fidelity (also known as Matsumoto fidelity)~\cite{matsumoto2010reverse,cree2020fidelity} with the same procedures used in generalizing Uhlmann fidelity, and we analyse their properties.

\subsection{Paper organization}

The rest of our paper is organized as follows.  In Section~\ref{Sec:Background}, we introduce notation and background needed to understand the rest of the paper. Section~\ref{Sec:BivariateFidelity} reviews and presents different formulations of the bivariate Uhlmann fidelity. Multivariate classical fidelities are defined and studied in Section~\ref{Sec:classical_multivariate}. In Section~\ref{Sec:QuantumMultivariate_Fidelities}, we introduce proposals for multivariate quantum fidelity,  focusing on 
their properties, relationships between different formulations, and operational interpretations of some of them.  
In the appendices, we provide mathematical proofs of many results stated in the paper.
Finally, Section~\ref{Sec:Conclusion} provides concluding remarks and future directions.

\section{Notations and background} \label{Sec:Background}

We begin by reviewing basic concepts from quantum information theory and refer the reader to~\cite{khatri2020principles} for more details. A quantum system~$R$ is identified with a finite-dimensional complex Hilbert space~$\mathcal{H}_R$ with inner product denoted in bra-ket notation as $\langle\psi|\phi\rangle$ for $|\psi\rangle,|\phi\rangle\in\cH_R$. We denote the set of linear operators acting on $\mathcal{H}_R$ by $\mathscr{L}_R$. The support of a linear operator $X \in \mathscr{L}_R$ is defined to be the orthogonal complement of its kernel, and we denote it by $\operatorname{supp}(X)$. For a probability distribution $p = (p_1,\dots,p_n)$ the support is defined as $\supp(p) = \lbrace i\colon p_i>0\rbrace$. We denote the {\it Hermitian conjugate} or {\it adjoint} of $X$ by $X^{\dagger}$, which is the unique linear operator acting on $\mathcal{H}_R$ that satisfies 
$\langle \psi| (X^\dagger |\phi\rangle) = (X |\psi\rangle)^\dagger |\phi\rangle$
for all $| \psi \rangle, | \phi \rangle \in \mathcal{H}_R$.
Here, $|\psi\rangle^\dagger \equiv \langle\psi|$ is the linear functional $\mathcal{H}_R\to\mathbb{C}$ given by $ |\phi\rangle\mapsto \langle\psi|\phi\rangle$.
The set $\mathscr{L}_R$ is a linear space and has a Hilbert-space structure given by the Hilbert--Schmidt inner product, defined as $\langle X, Y \rangle \coloneqq \Tr[X^\dagger Y]$ for all $X, Y \in \mathscr{L}_R$.
Given two quantum systems $A$ and $B$, with respective Hilbert spaces $\mathcal{H}_A$ and $\mathcal{H}_B$, the Hilbert space of the composite system $AB$ is given by $\mathcal{H}_{AB}\coloneqq\mathcal{H}_A \otimes \mathcal{H}_B$. For $K_{AB} \in \mathscr{L}_{AB}$, let $\Tr\!\left[K_{AB} \right]$ denote the trace of~$K_{AB}$, and let $\Tr_A \!\left[K_{AB}\right]$ denote the partial trace of $K$ over the system~$A$.
We use the standard notation $K_A \equiv \Tr_{B}[K_{AB}]$ and $K_B \equiv \Tr_{A}[K_{AB}]$ to denote the marginals of $K_{AB}$. The trace norm  of an operator $K$ is defined as $\left\|K\right\|_1 \coloneqq \Tr[\sqrt{K^\dagger K} ]$. For Hermitian operators 
$K$ and $L$, the notation $K \geq L$ indicates that $K-L$ is a positive semi-definite (PSD) operator, while $K > L$ indicates that $K-L$ is a positive definite operator. Given a complex number $z$, we denote the real part of $z$ by $\mathfrak{R}[z]$.

A quantum state of a system $R$ is identified with a density operator $\rho_R \in \mathscr{L}_R$, which is a PSD operator of unit trace. 
We denote the set of all density operators acting on $\mathcal{H}_R$ by~$\mathscr{D}_R$. We also use the notations $\mathscr{D}$ and $\mathscr{L}$ to denote the sets of density operators and linear operators, respectively, when there is no ambiguity regarding the underlying quantum system. A quantum state $\rho_R$  is called a pure state if its rank is equal to one, and in this case, there exists a state vector $| \psi \rangle_R \in \mathcal{H}_R$ such that $\rho_R= | \psi \rangle\!\langle \psi |_R $. Otherwise, $\rho_R$ is called a mixed state. By the spectral decomposition theorem, every state can be written as a convex combination of pure and mutually orthogonal states. 
A quantum channel from system $A$ to system $B$ is a linear, 
completely positive and trace-preserving (CPTP) map from $\mathscr{L}_A$ to $\mathscr{L}_B$. 
A measurement of a quantum system $R$ is described by a
positive operator-valued measure (POVM), which is defined as a tuple of PSD operators $(M_y)_{y \in \mathcal{Y}} $ satisfying $\sum_{y \in \mathcal{Y}} M_y= I_{R}$, where $I_{R}$ is the identity operator acting on~$\mathcal{H}_R$ and $\mathcal{Y}$ is a finite alphabet. The Born rule asserts that, when applying the above POVM to a state $\rho$, the probability of observing the outcome $y$ is given by $\Tr\!\left[M_y \rho \right]$~\cite{born1926quantum}.  Associated with any POVM $(M_y)_{y \in \mathcal{Y}}$ is a measurement or quantum-to-classical channel $\mathcal{M}$ described as follows. Let $\mathcal{K}$ be a complex Hilbert space of dimension $|\mathcal{Y}|$, and let $\{|y \rangle: y \in \mathcal{Y}\}$ be a fixed orthonormal basis of $\mathcal{K}$. Given  an  input state $\omega$, the output of the measurement channel is given by
\begin{equation}\label{eq:measurement_channel}
    \mathcal{M}(\omega) \coloneqq \sum_{y \in \mathcal{Y}} \Tr\!\left[ M_y \omega \right]   |y\rangle\!\langle y|.
\end{equation}

Finally, throughout our paper, we adopt the shorthand $[r] \coloneqq \{1, \ldots, r\}$, and we let $S_r$ denote the permutation group on $[r]$.

\section{Many forms of Uhlmann fidelity and its properties} \label{Sec:BivariateFidelity}

In this section, we discuss several equivalent formulations for the Uhlmann fidelity, along with its various properties. This section also serves as a foundation for subsequent sections, in which we extend the Uhlmann fidelity to multivariate fidelities.

We adopt the following notations here: $\mathscr{U}_R$ denotes the set of unitary operators acting on~$\mathcal{H}_R$;
$\textnormal{POVM}_A$ denotes the set of POVMs acting on system $A$; and $\mathscr{L}_A^{+} \subset \mathscr{L}_A$ denotes the set of positive semi-definite operators.

Let us begin by recalling that the Uhlmann fidelity has several equivalent formulations, as stated in the following proposition. This is one of the reasons that it is the {most widely used} bivariate quantum fidelity.

\begin{proposition}[Equivalent formulations of Uhlmann fidelity]\label{prop:bivariate_fidelity_formulas}
Let $\rho$ and $\sigma$ be quantum states of a system $A$. Let $R$ be a reference system of the same dimension as $A$, and let {$| \psi^\rho \rangle_{AR}$ and $| \psi^\sigma \rangle_{AR}$} be purifications of $\rho$ and $\sigma$, respectively.
 The Uhlmann fidelity $F(\rho,\sigma)$ between $\rho$ and $\sigma$ is equal to any one of the following expressions:
  \begin{align}
      \notag F(\rho,\sigma) &\coloneqq \left\| \sqrt{\rho} \sqrt{\sigma} \right\|_1 \\
      &=\sup_{U_R \in \mathscr{U}_R }\left| \langle \psi^\rho \middle | U_R \otimes I_A \middle| \psi^\sigma\rangle \right| \label{eq:uhlmann}\\ 
      &=\inf_{(\Lambda_x)_x \in \textnormal{ POVM}_A} \sum_x \sqrt{\Tr[\Lambda_x \rho] \Tr[\Lambda_x \sigma]} \label{eq:bivariate_measured} \\ 
        &= \frac{1}{2} \inf_{Y_1, Y_2 \geq 0} \left\{ \Tr[Y_1 \rho] + \Tr[Y_2 \sigma]:  \begin{bmatrix}
        Y_1 & -I \\
        -I & Y_2
        \end{bmatrix} \geq 0  \right\} \label{eq:bivariate_fid_primal}\\ 
          &=\sup_{X \in \mathscr{L}_A} \left\{ \mathfrak{R}\!\left[ \Tr[X] \right] :   \begin{bmatrix}
        \rho & X \\
        X^\dagger & \sigma
        \end{bmatrix} \geq 0  \right\}. \label{eq:bivariate_fid_dual} \\
       & = {\Tr[\sqrt{\rho \sigma}] } \label{eq:simplified_bi_fid}
  \end{align}
  {Furthermore, when $\rho$ and $\sigma$ are invertible, the Uhlmann fidelity can also be written as follows:
  \begin{equation} \label{eq:invertible_fidelity}
      F(\rho,\sigma)= \Tr\!\left[  \left(\rho^{-1} \#\sigma \right) \rho \right],
  \end{equation}
where $A \# B$ denotes the geometric mean of the positive definite operators $A$ and $B$, as defined in~\eqref{eq:geometric-mean}. 
  }\end{proposition}

 Uhlmann's theorem~\eqref{eq:uhlmann} was established in~\cite{uhlmann1976transition}. The fact that the fidelity is achieved by a quantum measurement~\eqref{eq:bivariate_measured} was realized in \cite[Eq.~(7)]{fuchs1995mathematical}. The SDPs~\eqref{eq:bivariate_fid_primal} and~\eqref{eq:bivariate_fid_dual} for bivariate fidelity were established in~\cite{watrous2012simpler}. We refer to~\cite{khatri2020principles} for proofs of the expressions above (i.e., Theorem~6.8, Theorem~6.12, and Proposition~6.6 therein). We also note that the supremum and infimum in the SDP formulations are achievable and can be replaced with a maximum and minimum, respectively. The simplified expression in~\eqref{eq:simplified_bi_fid} follows from~\cite[Eq.~(4)]{audenaert2015alpha} and \cite{li2015some,baldwin2023efficiently}.
  For invertible states, \eqref{eq:invertible_fidelity} follows by identifying that $Y= \rho^{-1}\# \sigma$ is the optimal choice for the SDP in~\eqref{eq:bivariate_fid_primal}, which follows due to Alberti Theorem as shown in~\cite{alberti1983stochastic}. 
In addition to these expressions, a corollary of \Cref{prop:another_formulation_multi_SDP_fid} in \Cref{Sec:QuantumMultivariate_Fidelities} is a new formulation for bivariate fidelity of two states $\rho$ and $\sigma$, which we state as \Cref{Cor:another_formulation_bivariate}.

In the following proposition, we list several properties of the Uhlmann fidelity that we use as a guide for defining multivariate quantum fidelities. 
\begin{proposition}[Properties of Uhlmann fidelity]\label{prop:properties_bivariate_F}
    The Uhlmann fidelity satisfies the following properties for  states $\rho$ and $\sigma$:
    \begin{enumerate}[label={(\roman*)}]
        \item Reduction to classical fidelity: If the states $\rho$ and $\sigma$ commute, then
        \begin{equation}
            F(\rho,\sigma) = \sum_{x} \sqrt{\rho(x) \sigma(x)},
        \end{equation}
        where $(\rho(x))_x$ and $(\sigma(x))_x$ are the spectra of $\rho$ and $\sigma$, respectively, in their common eigenbasis.
        \item Data-processing inequality: For every quantum channel $\cN$, we have
        \begin{equation}
            F(\rho,\sigma) \leq F\!\left( \cN(\rho), \cN(\sigma) \right).
        \end{equation}
        \item Symmetry: Uhlmann fidelity is independent of the order of the quantum states, i.e.,
        \begin{equation}
            F(\rho,\sigma)= F(\sigma,\rho).
        \end{equation}
    \item Faithfulness and orthogonality: Uhlmann fidelity satisfies the inequalities $0\leq F(\rho, \sigma) \leq 1$. Furthermore, $F(\rho,\sigma)=1$ if and only if $\rho=\sigma$, and 
         $F(\rho,\sigma)=0$ if and only if $\rho$ and~$\sigma$ are orthogonal to each other, i.e., $\rho \sigma=0$.
         
        \item Direct-sum property: Let $(\rho_x)_{x \in \cX},$ and $(\sigma_x)_{x\in \cX}$ be tuples of quantum states, where $\cX$ is a finite alphabet. For the classical-quantum states formed using those tuples, the following equality holds: 
    \begin{equation}\label{eq:CQequality}
         F\! \left( \sum_{x \in \cX} p(x) |x\rangle\!\langle x| \otimes \rho_x, \sum_{x \in \cX} p(x)  |x\rangle\!\langle x| \otimes \sigma_x\right) = \sum_{x \in \cX} p(x)  F(\rho_x, \sigma_x),
    \end{equation} 
    where $p$ is an arbitrary probability distribution on $\cX$.
    
    \item Joint concavity: Let $(\rho_x)_{x \in \cX},$ and $(\sigma_x)_{x\in \cX}$ be tuples of quantum states for a finite alphabet $\cX$, and let $(p(x))_{x\in \cX}$ be a probability distribution. Then,
    \begin{equation}
        F\! \left( \sum_{x \in \cX} p(x) \rho_x, \sum_{x \in \cX} p(x) \sigma_x\right) \geq \sum_{x \in \cX} p(x)  F(\rho_x, \sigma_x).
    \end{equation}
  
    \end{enumerate}
\end{proposition}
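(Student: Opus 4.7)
My approach is to leverage the equivalent formulations collected in \Cref{prop:bivariate_fidelity_formulas}, since each of the six properties admits a clean derivation from one of them. For (i), (iii), and (iv), I would argue directly from the trace-norm definition $F(\rho,\sigma) = \|\sqrt{\rho}\sqrt{\sigma}\|_1$. When $\rho$ and $\sigma$ commute, diagonalizing them in a common eigenbasis makes $\sqrt{\rho}\sqrt{\sigma}$ diagonal with non-negative entries $\sqrt{\rho(x)\sigma(x)}$, so its trace norm equals $\sum_x \sqrt{\rho(x)\sigma(x)}$, giving (i). Symmetry (iii) follows from $\|X\|_1 = \|X^\dagger\|_1$ combined with $(\sqrt{\rho}\sqrt{\sigma})^\dagger = \sqrt{\sigma}\sqrt{\rho}$. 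For (iv), non-negativity is immediate, while the upper bound $F(\rho,\sigma)\leq 1$ is cleanest via Uhlmann's theorem \eqref{eq:uhlmann} together with Cauchy--Schwarz applied to two unit vectors. The equality $F(\rho,\sigma)=1$ forces the Uhlmann-optimal purifications of $\rho$ and $\sigma$ to coincide (up to a global phase) via the equality case in Cauchy--Schwarz, so that tracing out the reference yields $\rho=\sigma$; the converse is trivial. Orthogonality follows because $\|\sqrt{\rho}\sqrt{\sigma}\|_1 = 0$ if and only if $\sqrt{\rho}\sqrt{\sigma} = 0$, which after multiplying by $\sqrt{\rho}$ on the left and $\sqrt{\sigma}$ on the right is equivalent to $\rho\sigma = 0$.

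For (ii), I would invoke Stinespring dilation: writing $\cN(\cdot) = \Tr_E[V(\cdot)V^\dagger]$ for some isometry $V:\cH_A \to \cH_B\otimes\cH_E$, the vectors $(V\otimes I_R)|\psi^\rho\rangle$ and $(V\otimes I_R)|\psi^\sigma\rangle$ are purifications of $\cN(\rho)$ and $\cN(\sigma)$ on $BER$, and their inner product equals $\langle\psi^\rho|\psi^\sigma\rangle$ since $V^\dagger V = I_A$. Because the right-hand side of \eqref{eq:uhlmann} for $F(\cN(\rho),\cN(\sigma))$ is a supremum over all purifications, it dominates any specific choice, in particular the pair induced by the Uhlmann-optimal purifications of $\rho$ and $\sigma$, establishing the data-processing inequality.

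For (v), I would exploit the block-diagonal structure of classical-quantum states. Since $\sqrt{\sum_x p(x)|x\rangle\!\langle x|\otimes \rho_x} = \sum_x \sqrt{p(x)}\,|x\rangle\!\langle x|\otimes \sqrt{\rho_x}$, and likewise for $\sigma$, the product of the two square roots is the block-diagonal operator whose $x$-th block is $p(x)\sqrt{\rho_x}\sqrt{\sigma_x}$. Additivity of the trace norm on block-diagonal operators then yields \eqref{eq:CQequality}. Finally, (vi) is immediate from combining (ii) with (v): applying the partial-trace channel over the classical register to the two CQ states in \eqref{eq:CQequality} produces $\sum_x p(x)\rho_x$ and $\sum_x p(x)\sigma_x$, and the DPI bound together with the direct-sum identity delivers joint concavity. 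The one delicate point I anticipate is the equality case in (iv), where attainment of the Uhlmann supremum and the equality condition in Cauchy--Schwarz must be invoked carefully; every remaining step reduces to routine manipulation once \Cref{prop:bivariate_fidelity_formulas} is available.
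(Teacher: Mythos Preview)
Your proposal is correct and aligns with the paper's treatment: the paper does not give a self-contained proof but refers to \cite{khatri2020principles}, noting explicitly that (i) and (iii) follow from the definition and that joint concavity (vi) is a consequence of data processing (ii) and the direct-sum property (v)---precisely the route you take. Your arguments for each item are the standard ones one finds in that reference, so there is nothing to add.
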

  We note that the joint concavity of Uhlmann fidelity is a consequence of its properties {$(ii)$}~data-processing inequality under the partial trace channel and {$(v)$}~direct-sum property. We refer the reader to~\cite{khatri2020principles} for proofs of these properties.\footnote{In particular, we refer to Theorem~6.9 for (ii), Theorem~6.7 for (iv), and Theorem~6.11 for (vi) in~\cite{khatri2020principles}, while (i) and (iii) follow by the definition of fidelity.}
  
\section{Multivariate classical fidelities} 
\label{Sec:classical_multivariate}

The basic requirement that we set for a multivariate classical fidelity is that it should reduce to the bivariate classical fidelity in~\eqref{eq:bc}, when only two probability distributions are being considered. Further requirements include satisfying the properties outlined in \cref{prop:properties_bivariate_F}, such as the data-processing inequality, symmetry under exchange of the probability distributions, faithfulness, orthogonality, and the direct-sum property. As such, there are many possible functions that satisfy these requirements, and 
the aims of this section are to outline several variants of multivariate classical fidelity and to establish relationships between them.
In subsequent sections, 
we show how quantum generalizations reduce to these definitions for commuting states (states represented using classical distributions), which is one of the desired properties for a quantum generalization of a classical measure. 

\subsection{Matusita multivariate fidelity}

Matusita  introduced a generalization of the bivariate classical fidelity of two probability distributions to several probability distributions~\cite{Matusita1967notion}, and therein it was called the affinity of several distributions. Here we call it the Matusita multivariate fidelity, and we recall its definition now.

\begin{definition}[Matusita multivariate fidelity] \label{def:Matusita_multi_classical}
For $r\in\mathbb{N}$, let $p_1,\ldots, p_r$ be probability distributions on a finite set $\cX$. 
    The Matusita multivariate fidelity is defined as 
\begin{equation}\label{eq:Matusita_fid_probability}
        F_r(p_1, \ldots, p_r) \coloneqq \sum_{x \in \cX} \left(p_1(x) \cdots p_r(x)\right)^{\frac{1}{r}}.
    \end{equation}
Furthermore, for a tuple of commuting states $(\rho_i)_{i=1}^r$, the Matusita multivariate fidelity is defined as 
\begin{equation}\label{eq:Matusita_fid_commuting}
        F_r(\rho_1, \ldots, \rho_r) \coloneqq \sum_{x} \left( \rho_1(x) \cdots \rho_r(x)\right)^{\frac{1}{r}}
    \end{equation}
 where, for $i\in [r]$, $\left(\rho_i(x)\right)_x$ is the spectrum of the state $\rho_i$  in the common eigenbasis of the states.
\end{definition}

The Matusita multivariate fidelity satisfies the data-processing inequality, symmetry under exchange of the probability distributions, and the direct-sum property. It also satisfies $0 \leq F_r(p_1, \ldots, p_r) \leq 1$, and the equality $F_r(p_1, \ldots, p_r)=1$ holds if and only if $p_1=\cdots = p_r$.  See Theorem~1 and Corollary~3 of~\cite{Matusita1967notion}. Also, if there exists at least one pair of distributions $p_i$ and $p_j$ with $i\neq j \in [r]$ that have disjoint support, then  $F_r(p_1, \ldots, p_r)=0$. However,  $F_r(p_1, \ldots, p_r)=0$ does not imply that at least one pair of distributions is disjoint. 
For example, the probability vectors
\begin{equation}
    p_1=(0,1/2, 1/2), \qquad  p_2=(1/2, 0, 1/2), \qquad p_3=(1/2, 1/2, 0)
\label{eq:counterexample-orthogonality-Matusita}
\end{equation}
satisfy $F_r(p_1,p_2,p_3)=0$, but no two distributions have disjoint support.

The Matusita multivariate fidelity can be written in terms of the classical relative entropy 
\begin{equation}
D(p\Vert q) \coloneqq \begin{cases}
    \sum_x p(x) \ln\!\left(\frac{p(x)}{q(x)}\right) & \text{ if } \text{supp}(p)\subseteq \text{supp}(q) \\
    +\infty & \text{ else }
\end{cases}
\end{equation}
as follows:
\begin{equation}
    F_r(p_1, \ldots, p_r) = \exp\!\left(-\inf_{w} \frac{1}{r} \sum\nolimits_i D(w\Vert p_i)\right),
    \label{eq:rel-ent-Matusita}
\end{equation}
where the infimum is over every probability distribution~$w$ with support contained in the intersection of the supports of $p_1, \ldots, p_r$.
If no such distribution exists, then the relative entropy is equal to $+\infty$, and we set $F_r = 0$.
We review this representation of the Matusita fidelity in the more general quantum case in \Cref{subS:loog_eucledian_F}, 
where we also provide an operational interpretation of it. Since the relative entropy is well known to obey the data-processing inequality, the formulation of the Matusita fidelity in~\eqref{eq:rel-ent-Matusita} makes it easier to see that it obeys the data-processing inequality as well.

\begin{remark}[Reduction to R\'enyi relative entropy \& Hellinger transform]\ \\
    The R\'enyi relative entropy between two probability distributions $p$ and $q$  is defined  for $\alpha \in (0,1) \cup (1, \infty)$ as~\cite{renyi1961measures}
    \begin{equation}
D_\alpha(p\Vert q) \coloneqq \begin{cases}
   \frac{1}{\alpha-1} \ln \!\left(\sum_x p(x) ^\alpha  q(x)^{1-\alpha} \right) & \text{ if } \alpha \in (0,1) \vee (\alpha > 1 \wedge \operatorname{supp}(p)\subseteq \operatorname{supp}(q)) \\
    +\infty & \text{ else }
\end{cases}.
\end{equation}
Let $\alpha \in (0,1)$ be rational such that $\alpha=t/r$ with $t<r$. Then, for such $\alpha$ we have that 
\begin{equation}
    D_\alpha(p\Vert q) =  \frac{1}{\alpha-1} \ln \!\left(F_r(p, \ldots,p,q,\ldots,q) \right),
\end{equation}
where, on the right-hand side, $p$ occurs $t$ times and $q$ occurs $r-t$ times. Thus, the R\'enyi relative entropy of rational order is a special case of the Matusita fidelity. 

For a tuple of probability distributions $p_1, \ldots, p_r$ and a probability vector $s=(s_1,\dots,s_r)$, the Hellinger transform is defined as 
(see \cite[page~189]{Matusita1967notion} and \cite[Eq.~(34)]{toussaint1974some}) 
\begin{equation}
H_s(p_1, ..., p_r) \coloneqq  \sum_x \prod_{i=1}^r  p_i(x)^{s_i}.
\label{eq:hellinger-transform-def}
\end{equation}
If every $s_i$ is rational such that $s_i=t_i/t$, then we have 
\begin{equation}
    H_s(p_1, ..., p_r)= F_t(p_1,\ldots,p_1, p_2, \ldots,p_2,\ldots, p_r, \ldots, p_r),
\end{equation}
where, on the right-hand side, $p_i$ appears $t_i$ times, for all $i \in [r]$. As such, the Hellinger transform for rational $s$ is a special case of the Matusita fidelity.
\end{remark}

Next, we state that the Matusita multivariate fidelity obeys a uniform continuity bound, which quantifies its deviation in terms of the deviation of corresponding probability distributions in two different tuples of probability distributions. To quantify the deviation of corresponding probability distributions, we employ the Hellinger distance, defined for two probability distributions $p$ and $q$ as 
\begin{equation} \label{eq:Helinger_classical}
    d_H (p, q)  \coloneqq \left\| \sqrt{p} - \sqrt{q} \right\|_2 = \left(\sum_{x} \left( \sqrt{p(x)} - \sqrt{q(x)} \right)^2\right)^{\frac{1}{2}} .
\end{equation}

\begin{proposition}
[Uniform continuity of Matusita multivariate fidelity]
\label{prop:unif-cont-Matusita}\ \\
Let $\left(  p_{1},\ldots,p_{r}\right)  $ and
$\left(  q_{1},\ldots,q_{r}\right)  $ be two
tuples of probability distributions, and let $\varepsilon\geq 0$ be such that
$\frac{1}{r}\left(  \sum_{i=1}^{r}  d_H (p_i, q_i)^2 \right)  \leq\varepsilon$, where $ d_H (\cdot, \cdot) $ is defined in~\eqref{eq:Helinger_classical}. Then,
\begin{equation}
\left\vert F_{r}(p_{1},\ldots,p_{r})-F_{r}(q_{1},\ldots,q_{r})\right\vert \leq {r \sqrt[r]{\varepsilon}}.
\end{equation}

\end{proposition}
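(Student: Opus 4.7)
The plan is to rewrite $F_r(p_1,\ldots,p_r) - F_r(q_1,\ldots,q_r)$ as a telescoping sum in the $r$ indices, bound each summand by H\"older's inequality (using that every $p_i$ and every $q_i$ is a probability distribution, so that most of the resulting factors collapse to $1$), and then apply a concavity/power-mean estimate to convert the resulting sum of squared Hellinger distances into their arithmetic mean. The shape of the target bound, with a prefactor $r$ and a term $\varepsilon^{1/r}$, strongly suggests H\"older with $r$ equal exponents and a final appeal to $\frac{1}{r}\sum_{k} t_k^{1/r} \leq \bigl(\frac{1}{r}\sum_k t_k\bigr)^{1/r}$.

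Concretely, I would set $a_i(x) \coloneqq \sqrt{p_i(x)}$ and $b_i(x) \coloneqq \sqrt{q_i(x)}$, so that $F_r(p_1,\ldots,p_r) = \sum_x \prod_{i=1}^{r} a_i(x)^{2/r}$, and similarly for the $q_i$. Using the standard telescoping identity
\begin{equation*}
\prod_{i=1}^{r} x_i \;-\; \prod_{i=1}^{r} y_i \;=\; \sum_{k=1}^{r} \Bigl(\prod_{i<k} y_i\Bigr)(x_k - y_k)\Bigl(\prod_{i>k} x_i\Bigr)
\end{equation*}
with $x_i = a_i(x)^{2/r}$, $y_i = b_i(x)^{2/r}$, and summing over $x$, I bound each of the $r$ resulting contributions by H\"older's inequality with $r$ equal factors at exponent $r$. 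Since $\sum_x a_i(x)^2 = \sum_x p_i(x) = 1$ (and likewise for $b_i$), every factor except the $k$-th one evaluates to $1$, leaving the $k$-th contribution at most $\bigl(\sum_x |a_k(x)^{2/r} - b_k(x)^{2/r}|^r\bigr)^{1/r}$.

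Assume $r \geq 2$ (the case $r=1$ is trivial since $F_1 \equiv 1$). Then $t \mapsto t^{2/r}$ is concave on $[0,\infty)$ and vanishes at $0$, hence subadditive, yielding the pointwise estimate $\lvert a_k(x)^{2/r} - b_k(x)^{2/r}\rvert^{r} \leq (a_k(x) - b_k(x))^{2}$. Summing in $x$ converts the $k$-th telescoping term into $\bigl([d_H(p_k,q_k)]^2\bigr)^{1/r}$. Summing over $k$ and invoking concavity of $t \mapsto t^{1/r}$ produces
\begin{equation*}
\sum_{k=1}^{r} \bigl([d_H(p_k,q_k)]^2\bigr)^{1/r} \;\leq\; r \Bigl(\tfrac{1}{r}\sum_{k=1}^{r}[d_H(p_k,q_k)]^2\Bigr)^{1/r} \;\leq\; r\,\varepsilon^{1/r},
\end{equation*}
which is precisely the claimed bound. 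The only genuinely delicate point is lining up the $2/r$ exponents with the squared Hellinger distance via subadditivity of $t \mapsto t^{2/r}$; once the substitution $p_i(x) = a_i(x)^2$ is made, the exponents produced by H\"older match exactly, and every remaining step is a routine application of H\"older and the power-mean inequality.
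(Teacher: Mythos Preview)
Your proof is correct and follows essentially the same approach as the paper: telescope the difference in the index $i$, apply H\"older so that all but one factor collapse to $1$ (the paper uses the two-exponent form $(r,\,r/(r-1))$ and identifies the collapsing factor as $F_{r-1}(\ldots)^{(r-1)/r}\le 1$, whereas you use the $r$-fold H\"older with equal exponents, which is slightly cleaner), then invoke the pointwise inequality $|p^{1/r}-q^{1/r}|^{r}\le(\sqrt{p}-\sqrt{q})^{2}$ and finish with concavity of $t\mapsto t^{1/r}$. The only substantive difference is that the paper cites this last pointwise inequality as \cite[Theorem~5]{Matusita1967notion}, while you supply a direct subadditivity proof of it.
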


\begin{proof}
    See \Cref{sec:proof-unif-cont-Matusita}.
\end{proof}

\subsection{Average pairwise fidelity}

The average pairwise Bhattacharyya overlap was defined in \cite[Eq.~(7)]{averageBattacharya} and shown to be an upper bound on the error probability of an uncoded transmission of classical data through polarized channels. We also call this quantity the  average pairwise fidelity, and note here that it has been considered previously in a similar context for the quantum case \cite[Section~II]{nasser2018polar}.

Here we recall the definition of the average pairwise fidelity for classical probability distributions and  for commuting states more generally.

\begin{definition}[Average pairwise fidelity] \label{def:classical_average_fidelity}
Let $p_1,\ldots, p_r$ be probability distributions on a finite set $\cX$. 
The average pairwise fidelity is defined as 
\begin{align}
    F(p_1,\ldots, p_r) & \coloneqq \frac{2}{r(r-1)}  \sum_{i <j} F (p_i, p_j) \\
    & = \frac{2}{r(r-1)}  \sum_{i <j} \sum_{x \in \cX} \sqrt{p_i(x) p_j(x)}.
\end{align}
Furthermore, for  commuting states $\rho_1, \ldots, \rho_r$, the average pairwise fidelity is defined as 
    \begin{align}
        F(\rho_1, \ldots, \rho_r) & \coloneqq \frac{2}{r(r-1)} \sum_{i <j} F(\rho_i,\rho_j) \label{eq:average_pairwise_commuting} \\
        &=  \frac{2}{r(r-1)}  \sum_{i <j} \sum_x \sqrt{\rho_i(x) \rho_j(x)} ,
    \end{align}
where $(\rho_i(x))_x$ is the tuple of eigenvalues of $\rho_i$.
\end{definition}

\begin{remark}[Uniform continuity of average pairwise fidelity]
The average pairwise fidelity of commuting states obeys a uniform continuity bound, which follows as an immediate consequence of \Cref{Prop:uniform_cont_average_pairwise} below, the latter holding for general quantum states.    
\end{remark}

\begin{proposition}[Inequalities relating  multivariate classical fidelities]
\label{prop:r_root_to_commuting_fidelity}
  For a tuple of commuting states $(\rho_i)_{i=1}^r$, the following inequality holds, relating the average pairwise fidelity $F(\rho_1, \ldots, \rho_r)$ and the Matusita fidelity $F_r(\rho_1, \ldots, \rho_r)$:
  \begin{equation}
  \label{eq:rel-Matusita-average-fid}
        F(\rho_1, \ldots, \rho_r) \geq F_r(\rho_1, \ldots, \rho_r).
   \end{equation}
\end{proposition}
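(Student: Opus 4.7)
The plan is to reduce the statement, via simultaneous diagonalization, to a pointwise numerical inequality on eigenvalues, and then dispatch that inequality with a single application of AM--GM.

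First I would exploit commutativity: since $\rho_1,\dots,\rho_r$ pairwise commute, they admit a common eigenbasis $\{|x\rangle\}$, so writing $a_i \coloneqq \rho_i(x)\geq 0$, both quantities decompose as sums over $x$. Specifically,
\begin{equation}
F(\rho_1,\dots,\rho_r)-F_r(\rho_1,\dots,\rho_r) = \sum_x \!\left( \frac{2}{r(r-1)} \sum_{i<j} \sqrt{a_i a_j} - \bigl(a_1 \cdots a_r\bigr)^{1/r}\right)\!.
\end{equation}
It therefore suffices to prove, for each fixed $x$ (and thus for arbitrary nonnegative reals $a_1,\dots,a_r$), the pointwise inequality
\begin{equation}
\frac{2}{r(r-1)} \sum_{i<j} \sqrt{a_i a_j} \;\geq\; \bigl(a_1 \cdots a_r\bigr)^{1/r}.
\label{eq:pointwise-goal}
\end{equation}

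The next step is to apply the AM--GM inequality to the $\binom{r}{2}$ nonnegative terms $\sqrt{a_i a_j}$ indexed by pairs $i<j$. This yields
\begin{equation}
\frac{2}{r(r-1)} \sum_{i<j} \sqrt{a_i a_j} \;\geq\; \Biggl(\prod_{i<j} \sqrt{a_i a_j}\Biggr)^{\!\frac{2}{r(r-1)}}.
\end{equation}
The key observation is now a combinatorial counting: each index $i\in[r]$ appears in exactly $r-1$ unordered pairs, so
\begin{equation}
\prod_{i<j} \sqrt{a_i a_j} = \prod_{i=1}^{r} a_i^{(r-1)/2},
\end{equation}
and raising to the power $\tfrac{2}{r(r-1)}$ produces exactly $\prod_{i=1}^{r} a_i^{1/r} = (a_1\cdots a_r)^{1/r}$, which matches the right-hand side of \eqref{eq:pointwise-goal}. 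Summing over $x$ then gives \eqref{eq:rel-Matusita-average-fid}.

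There is no real obstacle here: the only subtlety is the bookkeeping that the geometric mean of the pairwise geometric means collapses precisely to the $r$-th root of the product, which is why the constant $\tfrac{2}{r(r-1)}$ in the definition of the average pairwise fidelity is exactly the one needed to make the inequality tight against the Matusita form. I would mention, as a sanity check, that equality holds when all $a_i$ are equal (e.g.\ when all $\rho_i$ coincide), consistent with both sides attaining $1$ in that case.
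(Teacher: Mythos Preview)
Your proof is correct and follows essentially the same approach as the paper's: reduce to a pointwise inequality on the common eigenvalues, apply AM--GM to the $\binom{r}{2}$ pairwise square-root terms, and use the counting fact that each index appears in exactly $r-1$ pairs to collapse the geometric mean to $(a_1\cdots a_r)^{1/r}$.
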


\begin{proof}
By \cref{def:classical_average_fidelity}, we have
\begin{align}
        F(\rho_1, \ldots, \rho_r)  & =\frac{2}{r(r-1)} \sum_{i<j} \sum_{x} \sqrt{\rho_i(x) \rho_j(x)} \\
        & = \sum_{x}
 \frac{2}{r(r-1)} \sum_{i<j} \sqrt{\rho_i(x) \rho_j(x)},
    \end{align}
where $(\rho_i(x))_x$ is the tuple of eigenvalues of $\rho_i$, for $i\in [r]$.
For each $x$, the right-hand side of the above expression is an average of $r(r-1)/2$ terms.
Applying the inequality of arithmetic and geometric means then gives
\begin{align}
    \frac{2}{r(r-1)} \sum_{i<j} \sqrt{\rho_i(x) \rho_j(x)} & \geq  \prod_{i<j}  \left(\sqrt{\rho_i(x) \rho_j(x)} \ \right)^{\frac{2}{r(r-1)}}\\
    & = \prod_{i<j}  \left({\rho_i(x) \rho_j(x)}\right)^{\frac{1}{r(r-1)}}.
    \label{eq:am-gm-inequality-fidelity}
\end{align}
The desired inequality~\eqref{eq:rel-Matusita-average-fid} then follows from the above inequality because each $\rho_i(x)$ appears exactly $r-1$ times in the  product in~\eqref{eq:am-gm-inequality-fidelity}.
\end{proof}
\medskip

Later on in \Cref{prop:upper_multi_log_eucli}, we generalize the inequality in~\eqref{eq:rel-Matusita-average-fid} to the log-Euclidean class of quantum fidelities.

\subsection{Average $k$-wise fidelities}

\label{sec:avg-classical-k-wise-fids}

As a generalization of the average pairwise fidelity and to interpolate between this quantity and the Matusita fidelity of order $r$, we define the average $k$-wise fidelities of a tuple of $r$~commuting states as follows:
\begin{definition}[Average $k$-wise fidelities] \label{def:average_k_wise_classical}
    For $r\in \{3, 4, \ldots\}$, let $\rho_1, \ldots, \rho_r$ be a tuple of commuting states, where $(\rho_i(x))_x$ is the tuple of eigenvalues of $\rho_i$. For $k \in \{2, \ldots, r\}$, we define the average $k$-wise fidelity of $\rho_1, \ldots, \rho_r$  as
    \begin{equation}
        F_{k,r}(\rho_1, \ldots, \rho_r) \coloneqq \binom{r}{k}^{-1} \sum_{i_1 < i_2 < \cdots <i_k} F_k(\rho_{i_1}, \rho_{i_2}, \ldots, \rho_{i_k}),
        \label{eq:def-avg-k-wise}
    \end{equation}
    where $F_k(\rho_{i_1}, \rho_{i_2}, \ldots, \rho_{i_k}) = \sum_x (\rho_{i_1}(x) \cdots \rho_{i_k}(x))^{\frac{1}{k}}$ is the Matusita fidelity of order $k$.
\end{definition}

As a consequence of the properties of Matusita fidelity, all of these quantities obey the data-processing inequality, symmetry under exchange of the states, faithfulness, equal to zero for orthogonal states, and the direct-sum property. They are all also jointly concave, as a consequence of the data-processing inequality and the direct-sum property.

Observe that $F(\rho_1, \ldots, \rho_r) = F_{2,r}(\rho_1, \ldots, \rho_r)$ and $F_r(\rho_1, \ldots, \rho_r) = F_{r,r}(\rho_1, \ldots, \rho_r)$, so that both the average pairwise fidelity and the Matusita fidelity of order $r$ are special cases of the average $k$-wise fidelities.

As a refinement of \Cref{prop:r_root_to_commuting_fidelity}, the average $k$-wise fidelities are sorted in a descending order, which again follows from an application of the inequality of arithmetic and geometric means, as well as basic combinatorial reasoning:

\begin{proposition}[Inequalities relating average $k$-wise fidelities]
\label{prop:ineqs-avg-k-wise-classical}
For $r \in \{3,4, \ldots\}$, let $\rho_1, \ldots, \rho_r$ be a tuple of commuting states. Then
\begin{equation}
    F_{2,r} \geq F_{3,r} \geq \cdots \geq F_{r-1,r}
     \geq F_{r,r},
\end{equation}
where, for brevity, we have suppressed the dependence of each quantity $F_{k,r}$ on $\rho_1, \ldots, \rho_r$.    
\end{proposition}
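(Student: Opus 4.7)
The plan is to reduce the statement to the pointwise inequality obtained after bringing the sum over $x$ outside, and then establish that pointwise inequality by a single application of the AM--GM inequality to each $(k{+}1)$-subset followed by a combinatorial double-counting argument. Since each $F_{k,r}$ is of the form $\sum_x (\cdots)$ with the bracketed quantity depending only on the eigenvalues $(a_i)_{i=1}^r \coloneqq (\rho_i(x))_{i=1}^r$ for a given $x$, it suffices to show that for all nonnegative reals $a_1, \ldots, a_r$ and all $k\in\{2,\ldots,r-1\}$,
\begin{equation}
\frac{1}{\binom{r}{k}} \sum_{\substack{S\subseteq [r]\\|S|=k}} \left(\prod_{i\in S} a_i\right)^{1/k}
\geq
\frac{1}{\binom{r}{k+1}} \sum_{\substack{T\subseteq [r]\\|T|=k+1}} \left(\prod_{i\in T} a_i\right)^{1/(k+1)}. \label{eq:plan-pointwise}
\end{equation}

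The key step is as follows. Fix a $(k{+}1)$-subset $T\subseteq[r]$. The set $T$ has exactly $k+1$ distinct $k$-subsets, and each index $i\in T$ appears in precisely $k$ of them (since we must choose the remaining $k-1$ indices from the $k$ elements of $T\setminus\{i\}$). Applying AM--GM to these $k+1$ terms yields
\begin{equation}
\frac{1}{k+1} \sum_{\substack{S\subset T\\|S|=k}} \left(\prod_{i\in S} a_i\right)^{1/k}
\;\geq\; \left(\prod_{\substack{S\subset T\\|S|=k}} \prod_{i\in S} a_i\right)^{1/(k(k+1))}
\;=\; \left(\prod_{i\in T} a_i^{k}\right)^{1/(k(k+1))}
\;=\; \left(\prod_{i\in T} a_i\right)^{1/(k+1)},
\end{equation}
where the counting identity from the previous sentence was used in the middle equality.

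Next I would sum this inequality over all $(k{+}1)$-subsets $T\subseteq[r]$ and swap the order of summation on the left-hand side. Since every $k$-subset $S\subseteq[r]$ is contained in exactly $r-k$ different $(k{+}1)$-subsets, the double sum collapses, giving
\begin{equation}
\frac{r-k}{k+1} \sum_{|S|=k} \left(\prod_{i\in S} a_i\right)^{1/k}
\;\geq\; \sum_{|T|=k+1} \left(\prod_{i\in T} a_i\right)^{1/(k+1)}.
\end{equation}
Since $\binom{r}{k}/\binom{r}{k+1} = (k+1)/(r-k)$, dividing through by $\binom{r}{k}$ yields exactly~\eqref{eq:plan-pointwise}, completing the pointwise inequality. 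Finally, summing over $x$ gives $F_{k,r}\geq F_{k+1,r}$, and iterating over $k=2,\ldots,r-1$ yields the claimed chain.

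No significant obstacle is anticipated here: the entire argument is a clean AM--GM applied per $(k{+}1)$-subset, paired with the standard double-counting identity $\#\{T\supset S:|T|=k+1\}=r-k$ and the ratio $\binom{r}{k}/\binom{r}{k+1}=(k+1)/(r-k)$. The only thing to be careful about is bookkeeping the two combinatorial coefficients so that they match up with the normalizations in the definition of $F_{k,r}$, which is exactly what the ratio $(k+1)/(r-k)$ achieves.
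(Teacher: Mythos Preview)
Your proposal is correct and follows essentially the same approach as the paper: both reduce to a pointwise inequality, apply AM--GM within each larger subset to the geometric means over its immediate sub-subsets, and then use the double-counting identity to match the binomial normalizations. (One trivial slip: in your final line you should divide through by $\binom{r}{k+1}$, not $\binom{r}{k}$, to land exactly on~\eqref{eq:plan-pointwise}; the ratio you quoted makes this work.)
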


\begin{proof}
    See \Cref{sec:proof-avg-k-wise-order-classical}.
\end{proof}

\section{Proposed quantum generalizations} 

\label{Sec:QuantumMultivariate_Fidelities}

In this section, we first introduce several desirable properties of multivariate quantum fidelity, which are generalizations of the properties satisfied by the Uhlmann fidelity in \cref{prop:properties_bivariate_F}.
Then we propose four main generalizations of the bivariate quantum fidelities from \Cref{sec:c-q-bivariate-fid-review}:
\begin{enumerate}
    \item The first generalization is the average pairwise $z$-fidelity, which is the simplest generalization of the classical average pairwise fidelity in \cref{def:classical_average_fidelity}.

    \item The next generalization, called the multivariate SDP fidelity, is based on the SDP formulation of the Uhlmann fidelity presented in \Cref{prop:bivariate_fidelity_formulas}.

\item The third generalization, called the secrecy-based multivariate fidelity, is inspired by an existing secrecy measure from~\cite{konig2009operational}.

\item The fourth generalization, called the multivariate log-Euclidean fidelity, is defined through the log-Euclidean divergences given in \cref{def:multi-var-log-euclid}.
\end{enumerate}
We also show that the second and third generalizations are  quantum generalizations of the classical average pairwise fidelity introduced in \Cref{def:classical_average_fidelity}. Furthermore, we show that the last one is a quantum generalization of the Matusita multivariate fidelity given in \cref{def:Matusita_multi_classical}.  In addition, we define maximal and minimal extensions of multivariate classical fidelities (see \cref{def:maximal_multi_fidelity} and \cref{def:minimal_multi_F}) and multivariate geometric fidelities (\cref{def:avg_geometric} and \cref{def:SDP_geometric_multi}) by generalizing geometric fidelity (also known as Matsumoto fidelity)~\cite{matsumoto2010reverse,cree2020fidelity}. We  show that all proposed multivariate quantum fidelities satisfy several desired properties  presented in \cref{def:properties_multi_fidelity}.
\begin{definition}[Desired properties of multivariate fidelity]
\label{def:properties_multi_fidelity}
    For quantum states $\rho_1,\ldots, \rho_r$,  the following are desirable properties of a multivariate fidelity quantity $\mathbf{F}(\rho_1, \ldots, \rho_r)$:
     \begin{enumerate}[label={(\roman*)}]
        \item Reduction to multivariate classical fidelity: For commuting states $\rho_1, \ldots, \rho_r$,
        \begin{align}
        \mathbf{F}(\rho_1, \ldots, \rho_r) & = \frac{2}{r(r-1)} \sum_{i <j} F(\rho_i,\rho_j) \\
        &=  \frac{2}{r(r-1)}  \sum_{i <j} \sum_x \sqrt{\rho_i(x) \rho_j(x)},
    \end{align}
or 
\begin{align}
    \mathbf{F}(\rho_1, \ldots, \rho_r) & =  F_r(\rho_1, \ldots, \rho_r) \\
    &=\sum_{x} \left( \rho_1(x) \cdots \rho_r(x)\right)^{\frac{1}{r}},
\end{align}
or, for some $k\in \{3,\ldots, r-1\}$,
\begin{align}
    \mathbf{F}(\rho_1, \ldots, \rho_r) & =  F_{k,r}(\rho_1, \ldots, \rho_r) \\
    &=\binom{r}{k}^{-1} \sum_{i_1 < \dots < i_k} \sum_{x} \left( \rho_{i_1}(x) \cdots \rho_{i_k}(x)\right)^{\frac{1}{k}},
\end{align}
where $(\rho_i(x))_x$ is the spectrum of the state $\rho_i$ in the common eigenbasis of all the states.

        \item Data processing: For a quantum channel $\cN$,
        \begin{equation}
           \mathbf{F}(\rho_1, \ldots, \rho_r) \leq  \mathbf{F}\!\left( \cN(\rho_1), \ldots, \cN(\rho_r) \right).
           \label{eq:data-proc-property}
        \end{equation}
        \item Symmetry: For every permutation  $\pi$ of  $[r]$,
        \begin{equation}
             \mathbf{F}(\rho_1,\ldots, \rho_r)= \mathbf{F}(\rho_{\pi(1)}, \ldots, \rho_{\pi(r)}).
        \end{equation}
        \item Faithfulness: $ \mathbf{F}(\rho_1, \ldots, \rho_r)=1$ if and only if all the states are the same, i.e., $\rho_i=\rho_j$ for all $i,j \in [r]$.
        
        \item Orthogonality: $ \mathbf{F}(\rho_1, \ldots, \rho_r)=0$ if and only if all states are orthogonal to each other; i.e., $\rho_i \rho_j=0$ for all $ i,j  \in [r]$ such that $i \neq j$.\footnote{Note that it is only possible for the Matusita fidelity to satisfy the following implication: if all states are orthogonal to each other, then $ \mathbf{F}(\rho_1, \ldots, \rho_r)=0$. For the Matusita fidelity, the example presented in~\eqref{eq:counterexample-orthogonality-Matusita} excludes the converse implication from holding. The same statements apply to the average $k$-wise fidelities for $r\geq 3$ and $3 \leq k \leq r$.}

        \item Direct-sum property: Let $\left(\rho_i^x\right)_x$ be tuples of quantum states for all $i \in [r]$. For a probability distribution $(p(x))_{x\in \cX}$ where $\cX$ is a finite alphabet and classical--quantum states formed using those tuples, the following holds: 
    \begin{equation}\label{eq:CQequality_multi_general}
          \mathbf{F}\! \left( \sum_{x \in \cX} p(x) |x\rangle\!\langle x| \otimes \rho_1^x, \ldots, \sum_{x \in \cX} p(x)  |x\rangle\!\langle x| \otimes \rho_r^x\right) \\
         = \sum_{x \in \cX} p(x)  \mathbf{F}(\rho_1^x, \ldots, \rho_r^x).
    \end{equation} 
    \end{enumerate}
\end{definition}

Another desirable property for a multivariate fidelity is joint concavity, stated as follows. Let $(\rho_i^x)_{x \in \mathcal{X}}$ be tuples of quantum states for all $i \in [r]$, and let $(p(x))_{x\in \mathcal{X}}$ be a probability distribution. Then
    \begin{equation}
         \mathbf{F}\! \left( \sum_{x \in \cX} p(x) \rho_1^x, \ldots, \sum_{x \in \cX} p(x) \rho_r^x\right) \geq \sum_{x \in \cX} p(x)   \mathbf{F}(\rho_1^x, \ldots, \rho_r^x).
    \end{equation}
This property is an immediate consequence of data processing and the direct-sum property. Indeed, if these latter two properties hold, then one obtains joint concavity by applying~\eqref{eq:CQequality_multi_general} and then~\eqref{eq:data-proc-property} with the channel $\mathcal{N}$ as the partial trace over the classical register.

\subsection{Average pairwise fidelities} \label{subS:avg_pairwise}

In this subsection, we generalize the  average pairwise classical fidelity in the simplest way possible by using the $z$-fidelity defined in~\eqref{eq:z-fid-def}.

\begin{definition}[Average pairwise $z$-fidelity] \label{def:average_pairwise_z_fidelity}
    For $z \geq 1/2$ and quantum states $\rho_1, \ldots, \rho_r$, 
   the average pairwise $z$-fidelity is defined as 
   \begin{equation}
       F_z(\rho_1, \ldots, \rho_r) \coloneqq \frac{2}{r(r-1)} \sum_{i <j} F_z(\rho_i,\rho_j).
   \end{equation}
\end{definition}

\begin{remark}[Average pairwise Holevo and Uhlmann fidelities]\label{rem:average-pairwise-fids}
    By fixing $z=1$, we obtain the average pairwise Holevo fidelity, denoted by
    \begin{equation}
        F_{H}\! \left(  \rho_{1},\ldots,\rho_{r}\right)   \coloneqq \frac{2}{r\left(  r-1\right)  }\sum
_{i<j}F_{H}(\rho_{i},\rho_{j}).
    \end{equation}
    For $z=1/2$, we obtain the average pairwise Uhlmann fidelity, denoted by
    \begin{equation}
        F_{U}\! \left(  \rho_{1},\ldots,\rho_{r}\right)    \coloneqq \frac{2}{r\left(  r-1\right)  }\sum
_{i<j}F(\rho_{i},\rho_{j}).
    \end{equation}
    As mentioned previously, the average pairwise Uhlmann fidelity has appeared in~\cite{nasser2018polar} as a measure of the reliability of a classical-quantum channel.
    Note that in the classical (viz.,~commuting states) case, the average pairwise $z$-fidelities are equal for all $z \geq 1/2$,  reducing to the multivariate classical fidelity given in \cref{def:classical_average_fidelity}.
\end{remark}

\subsubsection{Uniform continuity bound for average pairwise Uhlmann and Holevo fidelities}

Here we prove uniform continuity bounds for the Uhlmann and Holevo average pairwise fidelities.
Before doing so, let us define the Bures~\cite{Hel67b,Bur69} and Hellinger distances~\cite{jencova2004,LZ04} and recall that they obey the triangle
inequality:
\begin{align}
d_{B}(\rho,\sigma)  &  \coloneqq \sqrt{2\left[  1-F(\rho,\sigma)\right]
}, \label{eq:bures-dist-def} \\
d_{H}(\rho,\sigma)  &  \coloneqq \sqrt{2\left[  1-F_{H}(\rho,\sigma)\right]
}.
\end{align}

\begin{proposition}[Uniform continuity of average pairwise fidelities]
\label{Prop:uniform_cont_average_pairwise}
Let $\rho_1, \ldots, \rho_r, \sigma_1, \ldots, \sigma_r$ be quantum states, and let $\varepsilon\geq 0$ be such that $\frac{1}{r}\sum_{i=1}^r d_{B}(\rho_{i},\sigma_{i})\leq\varepsilon$.
Then
\begin{align}
&\!\!\!\! \left\vert F_{U}\! \left(  \rho_{1},\ldots,\rho_{r}\right)-F_{U} \!\left(  \sigma_{1},\ldots,\sigma_{r}\right)\right\vert 
\leq 2\sqrt{2}  \varepsilon. \label{eq:Uhlmann-average-fidelity-difference-mod-two-sets-quantum-states-2}
\end{align}
Similarly, if $\varepsilon\geq 0$ is such that $\frac{1}{r}\sum_{i=1}^r d_{H}(\rho_{i},\sigma_{i}
)\leq\varepsilon$,
then
\begin{align}
&\!\!\!\!\left\vert F_{H}\! \left(  \rho_{1},\ldots,\rho_{r}\right)-F_{H} \!\left(  \sigma_{1},\ldots,\sigma_{r}\right)\right\vert 
\leq 2\sqrt{2}  \varepsilon.
\label{eq:holevo-pairwise-unif-cont-2}
\end{align}
\end{proposition}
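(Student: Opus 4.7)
The plan is to reduce the multivariate continuity bound to the pairwise case by exploiting the triangle inequality for the Bures distance (respectively, the Hellinger distance), and then to apply convexity of the average. I will prove both statements in parallel since the argument for the Holevo case is identical with $d_B$ replaced by $d_H$.

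First, I would express the fidelity difference on each pair via the defining relation $F(\rho,\sigma) = 1 - \frac{1}{2}d_B(\rho,\sigma)^2$ from \eqref{eq:bures-dist-def}, which yields
\begin{equation}
    F(\rho_i,\rho_j) - F(\sigma_i,\sigma_j) = \tfrac{1}{2}\bigl[d_B(\sigma_i,\sigma_j)^2 - d_B(\rho_i,\rho_j)^2\bigr].
\end{equation}
Factoring the difference of squares as $(a-b)(a+b)$, the second factor is trivially bounded by $2\sqrt{2}$ since $d_B \leq \sqrt{2}$ (as $F \geq 0$), while the first factor satisfies
\begin{equation}
    \bigl| d_B(\sigma_i,\sigma_j) - d_B(\rho_i,\rho_j) \bigr| \leq d_B(\rho_i,\sigma_i) + d_B(\rho_j,\sigma_j)
\end{equation}
by two applications of the triangle inequality for $d_B$. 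Combining these pieces gives the pairwise estimate
\begin{equation}
    \bigl| F(\rho_i,\rho_j) - F(\sigma_i,\sigma_j) \bigr| \leq \sqrt{2}\, \bigl[ d_B(\rho_i,\sigma_i) + d_B(\rho_j,\sigma_j) \bigr].
\end{equation}

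Next, I would plug this into the difference $F_U(\rho_1,\ldots,\rho_r) - F_U(\sigma_1,\ldots,\sigma_r)$ using the triangle inequality and then perform a double-counting: when summing $d_B(\rho_i,\sigma_i) + d_B(\rho_j,\sigma_j)$ over all unordered pairs $i<j$, each individual distance $d_B(\rho_k,\sigma_k)$ appears exactly $r-1$ times. Thus
\begin{equation}
    \bigl|F_U(\rho_1,\ldots,\rho_r) - F_U(\sigma_1,\ldots,\sigma_r)\bigr| \leq \frac{2\sqrt{2}}{r(r-1)} \cdot (r-1) \sum_{i=1}^r d_B(\rho_i,\sigma_i) = \frac{2\sqrt{2}}{r}\sum_{i=1}^r d_B(\rho_i,\sigma_i),
\end{equation}
and the hypothesis $\tfrac{1}{r}\sum_i d_B(\rho_i,\sigma_i)\leq \varepsilon$ yields \eqref{eq:Uhlmann-average-fidelity-difference-mod-two-sets-quantum-states-2}.

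The Holevo case in \eqref{eq:holevo-pairwise-unif-cont-2} then follows line-for-line, invoking the triangle inequality for $d_H$ and the relation $F_H(\rho,\sigma) = 1 - \tfrac{1}{2}d_H(\rho,\sigma)^2$, since the argument used no property of $d_B$ beyond the triangle inequality and the uniform bound $d_B \leq \sqrt{2}$, both of which also hold for $d_H$. No step here is a real obstacle; the only substantive ingredients are the triangle inequalities for $d_B$ and $d_H$ (already stated in the excerpt) and the elementary combinatorial fact about pairwise averaging.
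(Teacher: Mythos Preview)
Your proposal is correct and follows essentially the same approach as the paper: rewrite each pairwise fidelity difference via $F = 1 - \tfrac{1}{2}d_B^2$, factor the difference of squares, bound the sum factor by $2\sqrt{2}$ and the difference factor by $d_B(\rho_i,\sigma_i)+d_B(\rho_j,\sigma_j)$ via the triangle inequality, and then average using the double-counting identity. The only cosmetic difference is that you bound the absolute value directly, whereas the paper bounds one direction and appeals to symmetry.
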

\begin{proof}
    See \Cref{Proof:uniform_cont_average_pairwise}.
\end{proof}

\subsubsection{Properties of average pairwise $z$-fidelities}

Here we show that the average pairwise $z$-fidelities satisfy all desirable properties from \cref{def:properties_multi_fidelity}. In addition, we show that they satisfy super-multiplicativity and coarse-graining. 

\begin{theorem}[Properties of average pairwise $z$-fidelity]\label{Prop:properties_average_pairwise_z}
    For all $z\in [1/2,+\infty)$, the average pairwise $z$-fidelity satisfies all desired properties of a multivariate fidelity, as listed in \cref{def:properties_multi_fidelity}.
\end{theorem}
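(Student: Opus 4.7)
The plan is to reduce each of the six desired properties to the corresponding property of the bivariate $z$-fidelity, exploiting the fact that by construction
\[
F_z(\rho_1,\ldots,\rho_r)=\frac{2}{r(r-1)}\sum_{i<j}F_z(\rho_i,\rho_j)
\]
is a uniform average of $\binom{r}{2}$ bivariate values. The bivariate inputs I will invoke are: reduction of $F_z$ to the classical fidelity on a common eigenbasis (recorded below \eqref{eq:z-fid-def}); the data-processing inequality \eqref{eq:z-fid-DP}; the symmetry $F_z(\rho,\sigma)=F_z(\sigma,\rho)$, which follows from the Schatten identity $\|A\|_{2z}^{2z}=\operatorname{Tr}[(A^{\dagger}A)^z]=\operatorname{Tr}[(AA^{\dagger})^z]$ applied to $A=\rho^{1/4z}\sigma^{1/4z}$; the sharp bounds $0\leq F_z(\rho,\sigma)\leq 1$, together with $F_z(\rho,\sigma)=1\iff\rho=\sigma$ and $F_z(\rho,\sigma)=0\iff\rho\sigma=0$ (inherited from faithfulness of the $\alpha$-$z$ R\'enyi divergence at $\alpha=1/2$, since $F_z=\exp(-\tfrac{1}{2}D_{1/2,z}(\rho\Vert\sigma))$ up to a log); and a direct-sum identity at the bivariate level.

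Given these bivariate ingredients, properties (i)--(v) are essentially immediate by linearity of averaging. For (i), when all $\rho_i$ share an eigenbasis, each $F_z(\rho_i,\rho_j)$ collapses to $\sum_x\sqrt{\rho_i(x)\rho_j(x)}$, so the average matches \cref{def:classical_average_fidelity}. For (ii), I apply \eqref{eq:z-fid-DP} termwise and average. For (iii), any permutation $\pi\in S_r$ induces a bijection on the set of unordered pairs $\{i,j\}$, and the bivariate symmetry absorbs any flip within a pair. For (iv) and (v), the arithmetic mean of numbers in $[0,1]$ attains $1$ (resp.~$0$) if and only if every summand does, so the multivariate equality conditions collapse to the conjunction over $i<j$ of the bivariate ones: all $\rho_i$ equal (for faithfulness), or pairwise $\rho_i\rho_j=0$ (for orthogonality).

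The only substantive step is (vi), for which I first establish the direct-sum identity at the bivariate level. For classical--quantum states $\rho=\sum_x p(x)|x\rangle\!\langle x|\otimes\rho^x$ and $\sigma=\sum_x p(x)|x\rangle\!\langle x|\otimes\sigma^x$, functional calculus gives $\rho^{1/4z}=\sum_x p(x)^{1/4z}|x\rangle\!\langle x|\otimes(\rho^x)^{1/4z}$ and similarly for $\sigma$, so that $\rho^{1/4z}\sigma^{1/4z}$ is block diagonal with $x$th block equal to $p(x)^{1/2z}(\rho^x)^{1/4z}(\sigma^x)^{1/4z}$. Raising the Schatten $2z$-norm to the $2z$ power blockwise yields
\[
F_z(\rho,\sigma)=\sum_x p(x)\,F_z(\rho^x,\sigma^x),
\]
and averaging this identity over $i<j$ (applied to each pair $\rho_i,\rho_j$) furnishes the multivariate direct-sum property. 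I expect this block-diagonal calculation, together with locating the bivariate faithfulness/orthogonality equality conditions in the $\alpha$-$z$ R\'enyi divergence literature, to be the only genuine work; the remainder of the proof is a one-line consequence of the averaging structure and can be written out very briefly.
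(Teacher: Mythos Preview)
Your proposal is correct and follows essentially the same route as the paper: both reduce every property to the corresponding bivariate fact for $F_z$ and then average over pairs, with the direct-sum identity established via the same block-diagonal functional-calculus computation. The only cosmetic difference is that for the bivariate equality cases (faithfulness and orthogonality) you defer to the $\alpha$-$z$ R\'enyi divergence literature, whereas the paper argues more directly---using antimonotonicity of $F_z$ in $z$ to reduce $F_z(\rho,\sigma)=1$ to the Uhlmann case, and using positive definiteness of the norm in \eqref{eq:z-fid-def} to handle $F_z(\rho,\sigma)=0$---but either way works and the overall structure is the same.
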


\begin{proof}
    Proofs of these properties follow because the $z$-fidelity (for $r=2$) satisfies all the properties listed in \cref{prop:properties_bivariate_F}. In particular, reduction to the classical setting follows by plugging commuting states into~\eqref{eq:z-fid-def} and simplifying.  Faithfulness follows because, if the states are the same, then $F_z(\rho, \rho) = \big\| \rho^{\frac{1}{4z}} \rho^{\frac{1}{4z}} \big\|_{2z}^{2z}= \big\| \rho^{\frac{1}{2z}}  \big\|_{2z}^{2z} = 1$. If $F_z(\rho, \sigma) = 1$, then due to the fact that the $z$-fidelities are monotonically decreasing and continuous in~$z$ \cite[Proposition~6]{lin2015investigating}, we conclude that $F(\rho, \sigma)=1$, from which we conclude that $\rho = \sigma$. This also proves that $F_z(\rho_1, \ldots, \rho_r) = 1$ implies $\rho_1 = \cdots = \rho_r$; this is because the equality $F_z(\rho_1, \ldots, \rho_r) = 1$ implies that each of the pairwise fidelities, of the form $F_z(\rho_i,\rho_j)$, is equal to one. Orthogonality follows because, if $\rho \sigma = 0 $, then  $\rho^{\frac{1}{4z}} \sigma^{\frac{1}{4z}} = 0$, from which we conclude that $F_z(\rho, \sigma) = 0$. If $F_z(\rho, \sigma) = 0$, then $\rho^{\frac{1}{4z}} \sigma^{\frac{1}{4z}} = 0$ (from positive definiteness of the norm expression in~\eqref{eq:z-fid-def}), which implies that $\rho \sigma = 0 $. The same reasoning can be applied to the average pairwise $z$-fidelity.
    Moreover, data-processing follows from \cite[Theorem~1.2]{Z20} and the direct-sum property because
    \begin{multline}
   \Tr\!\left[ \left( \left( \rho_{XA} \right)^{\frac{1}{4z} } \left( \sum\nolimits_{x}  p(x)|x\rangle \!\langle x| \otimes \sigma_x\right)^{\frac{1}{2z} }\left( \rho_{XA}\right)^{\frac{1}{4z} }\right)^z\right] \\
     = \sum\nolimits_x p(x) \Tr\!\left[ \left(\rho_x^{{\frac{1}{4z} }}\sigma_x^{{\frac{1}{2z}} }\rho_x^{{\frac{1}{4z}} }\right)^z \right],
 \end{multline}
 where $\rho_{XA} \coloneqq  \sum\nolimits_{x} p(x) |x\rangle \!\langle x| \otimes \rho_x$,
 thus concluding the proof.
\end{proof}

\begin{proposition}[Super-multiplicativity of average pairwise fidelities]\label{prop:super_multiplicative_average_pairwise}
    For states \\ 
    $\rho_1, \ldots, \rho_r$ and $z \geq 1/2$, the average pairwise $z$-fidelity is super-multiplicative: 
    \begin{equation}
        \left(F_z(\rho_1, \ldots, \rho_r) \right)^n \leq  F_z\!\left(\rho_1^{\otimes n}, \ldots, \rho_r^{\otimes n} \right).
    \end{equation}
\end{proposition}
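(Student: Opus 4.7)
The plan is to reduce the inequality to a bivariate multiplicativity statement together with Jensen's inequality applied to the convex map $x \mapsto x^n$.

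First, I would establish the bivariate multiplicativity of the $z$-fidelity, namely that for any states $\rho, \sigma$ and any integer $n \geq 1$,
\begin{equation}
    F_z\!\left(\rho^{\otimes n}, \sigma^{\otimes n}\right) = \left(F_z(\rho,\sigma)\right)^n.
\end{equation}
This follows directly from the norm-based formula $F_z(\rho,\sigma) = \left\| \rho^{1/4z} \sigma^{1/4z} \right\|_{2z}^{2z}$ in~\eqref{eq:z-fid-def}, the identity $(A \otimes B)^{t} = A^{t} \otimes B^{t}$ for PSD operators $A,B$ and $t > 0$, and the multiplicativity of Schatten norms under tensor products: $\left\| X \otimes Y \right\|_{p} = \left\| X \right\|_{p} \left\| Y \right\|_{p}$. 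Applied inductively in $n$, this gives bivariate multiplicativity. Consequently,
\begin{equation}
    F_z\!\left(\rho_1^{\otimes n}, \ldots, \rho_r^{\otimes n}\right) = \frac{2}{r(r-1)} \sum_{i<j} \left(F_z(\rho_i,\rho_j)\right)^{n}.
\end{equation}

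Second, I would apply Jensen's inequality. The function $\phi(x) = x^{n}$ is convex on $[0,\infty)$ for $n \geq 1$, and the coefficients $\frac{2}{r(r-1)}$ form a probability distribution over the $\binom{r}{2}$ index pairs $(i,j)$ with $i<j$. Therefore,
\begin{equation}
    \left( \frac{2}{r(r-1)} \sum_{i<j} F_z(\rho_i,\rho_j) \right)^{n} \leq \frac{2}{r(r-1)} \sum_{i<j} \left(F_z(\rho_i,\rho_j)\right)^{n}.
\end{equation}
Combining the two displays completes the proof.

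There is no real obstacle here: the only non-trivial input is the tensor-product multiplicativity of the $z$-fidelity, which is an immediate consequence of standard Schatten-norm identities. One small detail to keep in mind is that the statement is restricted to positive integer $n$, which is precisely the regime in which $x\mapsto x^{n}$ is convex on $[0,\infty)$ and in which tensor powers are defined as stated.
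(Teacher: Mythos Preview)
Your proposal is correct and follows essentially the same approach as the paper: both arguments combine the bivariate multiplicativity $F_z(\rho^{\otimes n},\sigma^{\otimes n}) = (F_z(\rho,\sigma))^n$ with Jensen's inequality for the convex map $x\mapsto x^n$ applied to the uniform average over pairs. Your write-up even supplies slightly more detail on the bivariate multiplicativity step via Schatten-norm identities, which the paper simply asserts.
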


\begin{proof}
    Using \cref{def:average_pairwise_z_fidelity} for the average pairwise $z$-fidelity, consider the following steps:
    \begin{align}
        F_z\!\left(\rho_1^{\otimes n}, \ldots, \rho_r^{\otimes n} \right) 
        &= \frac{2}{r(r-1)} \sum_{i <j} F_z(\rho_i^{\otimes n},\rho_j^{\otimes n}) \\
        &=  \frac{2}{r(r-1)} \sum_{i <j}   \left(F_z(\rho_i,\rho_j)\right)^n \\
        & \geq \left( \frac{2}{r(r-1)} \sum_{i <j}   F_z(\rho_i,\rho_j)  \right)^n \\
        &=  \left(F_z(\rho_1, \ldots, \rho_r) \right)^n ,
    \end{align}
    where the second equality follows from the multiplicativity of $z$-fidelity.
    For the inequality, we use convexity of the function $x^n$ for $n\geq 1$. 
\end{proof}

\begin{proposition}[Coarse-graining property of average pairwise fidelities]
\label{prop:coarse_graining_avg_pairwise}\ \\ 
For states $\rho_1,\ldots,\rho_r, \dots, \rho_{r+m}$ and for all $z \geq 1/2$,
    the average pairwise $z$-fidelity satisfies the following inequality:
    \begin{equation}
        F_{z}(\rho_1,\ldots, \rho_r) \leq \frac{(r+m)(r+m-1)}{r(r-1)} F_{z}(\rho_1,\ldots,\rho_r, \dots, \rho_{r+m}).
    \end{equation}
\end{proposition}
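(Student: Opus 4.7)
The plan is to unfold both sides using the definition of the average pairwise $z$-fidelity and then observe that the inequality reduces to dropping a non-negative tail of a sum. First, I would multiply through the normalization constants: by \Cref{def:average_pairwise_z_fidelity}, the right-hand side equals
\begin{equation}
\frac{(r+m)(r+m-1)}{r(r-1)}\cdot\frac{2}{(r+m)(r+m-1)}\sum_{1\leq i<j\leq r+m} F_z(\rho_i,\rho_j)
=\frac{2}{r(r-1)}\sum_{1\leq i<j\leq r+m}F_z(\rho_i,\rho_j),
\end{equation}
while the left-hand side equals $\frac{2}{r(r-1)}\sum_{1\leq i<j\leq r}F_z(\rho_i,\rho_j)$.

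Next, I would split the sum on the right according to whether the index pair $(i,j)$ lies entirely within $[r]$ or involves an index in $\{r+1,\ldots,r+m\}$:
\begin{equation}
\sum_{1\leq i<j\leq r+m}F_z(\rho_i,\rho_j)=\sum_{1\leq i<j\leq r}F_z(\rho_i,\rho_j)+\sum_{\substack{1\leq i<j\leq r+m\\ j>r}}F_z(\rho_i,\rho_j).
\end{equation}
The inequality then follows immediately because $F_z(\rho_i,\rho_j)\geq 0$ for all $i,j$, so the second sum on the right is non-negative.

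There is no substantive obstacle here; the only thing that needs justification is the non-negativity of $F_z$, which is immediate from the norm expression $F_z(\rho,\sigma)=\|\rho^{1/4z}\sigma^{1/4z}\|_{2z}^{2z}\geq 0$ in~\eqref{eq:z-fid-def}. In short, the coarse-graining bound is just the observation that enlarging the state tuple can only add non-negative terms to the unnormalized sum, and the stated prefactor $(r+m)(r+m-1)/[r(r-1)]$ exactly compensates for the change in the averaging normalization.
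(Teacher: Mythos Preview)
Your proof is correct and follows essentially the same approach as the paper: both arguments reduce the inequality to the observation that $\sum_{i\neq j\in[r]}F_z(\rho_i,\rho_j)\leq \sum_{i\neq j\in[r+m]}F_z(\rho_i,\rho_j)$ by non-negativity of $F_z$, and then unwind the normalization via the definition. Your write-up is simply a bit more explicit about the splitting and the cancellation of the prefactors.
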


\begin{proof}
Consider that
    \begin{equation}
   \sum_{i, j=1 : i \neq j}^{r} F_z(\rho_i,\rho_j) \leq       \sum_{i, j=1 : i \neq j}^{r+m} F_z(\rho_i,\rho_j)
    \end{equation}
    due to $0 \leq F_z(\rho,\sigma)$. Then, by rewriting the above terms by using \cref{def:average_pairwise_z_fidelity}, we arrive at the desired inequality.    
\end{proof}

\subsection{Multivariate semi-definite programming  fidelity}\label{subS:SDP_F}

Here we propose a multivariate generalization of Uhlmann fidelity by generalizing its SDP formulation  in~\eqref{eq:bivariate_fid_primal}:
\begin{definition}[Multivariate SDP fidelity]
\label{def:multivar-fid}
    Given quantum states $\rho_1, \ldots, \rho_r$, we define their multivariate SDP fidelity as 
\begin{equation}
        \label{eq:multivar-fid}
       F_{\operatorname{SDP}}(\rho_1,\ldots, \rho_r) \\
    \coloneqq \frac{1}{r (r-1)} \inf_{\substack{Y_1, \ldots, Y_r \geq 0 }} \left \{ \sum_{i=1}^r \Tr[ Y_i \rho_i] : \sum_{i=1}^r |i\rangle\!\langle i| \otimes Y_i \geq  \sum_{i \neq j} |i\rangle\!\langle j| \otimes I \right\}.
    \end{equation}
    \end{definition}

\begin{remark}[Multivariate SDP fidelity for PSD operators]
 \cref{def:multivar-fid} extends to general positive semi-definite (PSD) operators, simply by replacing the tuple of states $(\rho_i)_{i=1}^r$ with the tuple of PSD operators $(A_i)_{i=1}^r$.
\end{remark}

In view of the other SDP formulations 
of the Uhlmann fidelity in Proposition~\ref{prop:bivariate_fidelity_formulas}, it is natural to ask if analogous dual formulations can be constructed for the multivariate SDP fidelity. This is indeed the case, as stated in \cref{prop:dual_SDP_multi} below.
We also give an alternative dual formulation in \cref{prop:another_formulation_multi_SDP_fid}.

\begin{proposition}[Dual of multivariate SDP fidelity] \label{prop:dual_SDP_multi}
The multivariate SDP fidelity in \cref{def:multivar-fid} has the following dual formulation: 
      \begin{multline}
        F_{\operatorname{SDP}}(\rho_1, \ldots, \rho_r)= \\
         \frac{2}{r (r-1)}\sup_{\substack{(X_{ij})_{i \neq j} \textnormal{ s.t.} \\ X_{ji} =X^\dagger_{ij}} } \left \{  \sum_{i <j} \mathfrak{R}\!\left[ \Tr[X_{ij}]\right]: \sum_{i=1}^r |i\rangle\!\langle i| \otimes \rho_i + \sum_{i \neq j} |i\rangle\!\langle j| \otimes X_{ij} \geq 0 \right \}.         \label{eq:multivariate_fid_sdp_dual}
      \end{multline}
\end{proposition}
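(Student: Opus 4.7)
The plan is to derive the dual via the standard Lagrangian duality for semi-definite programs. First, I recast the primal~\eqref{eq:multivar-fid} in the canonical Watrous-style form $\inf\{\langle A, Y\rangle : \Phi(Y) \geq B,\ Y \in \operatorname{Herm}\}$, where $Y$ is the tuple $(Y_1, \ldots, Y_r)$ of Hermitian operators, the objective data is $A = (\rho_1, \ldots, \rho_r)$, the linear map $\Phi$ is the block-diagonal embedding $\Phi(Y_1, \ldots, Y_r) = \sum_{i=1}^r |i\rangle\!\langle i| \otimes Y_i$, and the target is $B = \sum_{i \neq j} |i\rangle\!\langle j| \otimes I$. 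The explicit constraint $Y_i \geq 0$ in~\eqref{eq:multivar-fid} is redundant: projecting the PSD constraint $\sum_i |i\rangle\!\langle i| \otimes Y_i - \sum_{i \neq j} |i\rangle\!\langle j| \otimes I \geq 0$ onto the $i$-th diagonal block yields $Y_i \geq 0$ for free.

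Second, I compute the adjoint $\Phi^*$. Writing a Hermitian operator on $\mathbb{C}^r \otimes \mathcal{H}_A$ in block form as $Z = \sum_{k, l} |k\rangle\!\langle l| \otimes Z_{kl}$ with $Z_{lk} = Z_{kl}^\dagger$, a short calculation gives $\Phi^*(Z) = (Z_{11}, \ldots, Z_{rr})$. The standard SDP dual is then $\sup\{\langle B, Z\rangle : \Phi^*(Z) = A,\ Z \geq 0\}$; the constraint $\Phi^*(Z) = A$ pins the diagonal blocks to $Z_{ii} = \rho_i$, and relabeling the off-diagonal blocks as $X_{ij} \coloneqq Z_{ij}$ for $i \neq j$ recovers both the Hermiticity $X_{ji} = X_{ij}^\dagger$ and the PSD constraint appearing in~\eqref{eq:multivariate_fid_sdp_dual}.

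Third, I simplify the dual objective. A direct computation shows $\langle B, Z\rangle = \Tr[B Z] = \sum_{i \neq j} \Tr[Z_{ji}]$; grouping the term $(i,j)$ with $(j,i)$ and using $Z_{ji} = Z_{ij}^\dagger$ converts this to $2 \sum_{i<j} \mathfrak{R}[\Tr[X_{ij}]]$. Combined with the prefactor $\tfrac{1}{r(r-1)}$ already present in \cref{def:multivar-fid}, this reproduces the $\tfrac{2}{r(r-1)}$ appearing in~\eqref{eq:multivariate_fid_sdp_dual}.

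Finally, to guarantee that the two optima coincide, I verify Slater's condition on the primal. Take $Y_i = r I$ for every $i$. Then $\sum_i |i\rangle\!\langle i| \otimes rI - \sum_{i \neq j} |i\rangle\!\langle j| \otimes I = \bigl((r+1) I_r - P\bigr) \otimes I$, where $P = \sum_{i,j} |i\rangle\!\langle j|$ is the all-ones matrix on $\mathbb{C}^r$ with spectrum $\{r, 0, \ldots, 0\}$, so the expression is strictly positive definite. Thus the primal is strictly feasible, and strong duality forces the primal and dual optima to coincide. The main obstacle is bookkeeping: correctly identifying the adjoint of the block-diagonal embedding and reducing $\sum_{i \neq j} \Tr[Z_{ji}]$ to the real-part form in the stated dual; there is no deeper analytical content beyond routine SDP duality.
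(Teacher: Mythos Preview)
Your proof is correct and follows essentially the same route as the paper's: standard SDP duality plus Slater's condition with the same strictly feasible point $Y_i = rI$. The only cosmetic difference is that you drop the redundant constraint $Y_i \geq 0$ up front, which yields the equality constraint $Z_{ii} = \rho_i$ in the dual directly, whereas the paper keeps $Y_i \geq 0$, obtains the inequality $X_{ii} \leq \rho_i$ in the dual, and then absorbs it into the PSD constraint; both routes arrive at the same dual.
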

\begin{proof}
    See \Cref{Sec:Proof_dual_SDP}.
\end{proof}

\subsubsection{Uniform continuity bound for multivariate SDP fidelity}

The goal of this subsection is to prove a uniform continuity bound for the multivariate SDP fidelity. One important step in doing so is to introduce the following quantity, which we show later to be equal to the multivariate SDP fidelity.

\begin{definition}[$K^\star$-representation]
\label{def:K-star-rep}
    For a tuple $(\rho_i)_{i=1}^r$ of states, define 
\begin{equation}
\label{eq:K_star_F}
  F_{K^\star}(\rho_1, \ldots, \rho_r) \coloneqq \\
  \frac{1}{ r-1}\sup_{K \geq 0} \bigg\{  \langle \psi | K \otimes I_d | \psi \rangle -1:  K= I_r \otimes I_d+ \sum_{i \neq j} |i\rangle\!\langle j| \otimes K_{ij}     \bigg \},
\end{equation}
where
\begin{equation}
| \psi\rangle \coloneqq \frac{1}{\sqrt{r}} \sum_{i=1}^r |i \rangle | \phi^{\rho_i} \rangle  ,  
\end{equation}
with $| \phi^{\rho_i} \rangle$ being  a purification of $\rho_i$ for all $i \in [r]$ and $d$ the dimension of the Hilbert space of the  states.
\end{definition}

We call this the $K^\star$-representation, and later on,  Theorem~\ref{prop:another_formulation_multi_SDP_fid} states that it is equal to the multivariate SDP fidelity:
\begin{equation}
    F_{\operatorname{SDP}}(\rho_1, \ldots, \rho_r) 
 = F_{K^\star}(\rho_1, \ldots, \rho_r) .
\end{equation}

Let us note that the $K^\star$-representation can also be expressed as
\begin{equation}
     F_{K^\star}(\rho_1, \ldots, \rho_r) = \\
     \frac{1}{ r-1}\sup_{K' \in \mathrm{Herm}} \bigg\{  \langle \psi | K' \otimes I_d | \psi \rangle :    K' \coloneqq \sum_{i \neq j} |i\rangle\!\langle j| \otimes K_{ij}  \leq I_r \otimes I_d    \bigg \},
     \label{eq:alt-K}
\end{equation}
where {we denote the set of Hermitian matrices by $\mathrm{Herm}$.}

Next, we show that the $K^\star$-representation  satisfies a uniform continuity bound. This property finds use in the proof of \cref{prop:another_formulation_multi_SDP_fid}.

\begin{theorem}[Uniform continuity of $K^\star$-representation]\label{thm:uniform_cont_SDP_K}
Let $\rho_1$, \ldots, $\rho_r$, $\sigma_1$, \ldots, $\sigma_r$ be quantum states, and let $\varepsilon\in[0,1]$ be such that
\begin{equation}
    \frac{1}{r}\sum_{i=1}^{r}F(\rho_{i}
,\sigma_{i})\geq1-\varepsilon.
\label{eq:cont-SDP-assumption}
\end{equation}
Then,
\begin{equation}
\left\vert F_{K^\star}(\rho_1, \ldots, \rho_r)-F_{K^\star}(\sigma_1, \ldots, \sigma_r)\right\vert \leq
\frac{r}{r-1}\sqrt{\varepsilon\left(  2-\varepsilon\right)  }.
\end{equation}
\end{theorem}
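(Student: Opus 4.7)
The strategy is to reduce the continuity bound to a H\"older-type perturbation estimate for expectations against nearby pure states, combined with a purification alignment and an operator-norm bound on the optimization variable.

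First, I would appeal to the purification-invariance of $F_{K^\star}$ (which is implicit in its forthcoming identification with $F_{\operatorname{SDP}}$ in \cref{prop:another_formulation_multi_SDP_fid}) and invoke Uhlmann's theorem to pick purifications $|\phi^{\rho_i}\rangle$ and $|\phi^{\sigma_i}\rangle$ of $\rho_i$ and $\sigma_i$ in a common $d$-dimensional reference system such that $\langle\phi^{\rho_i}|\phi^{\sigma_i}\rangle = F(\rho_i,\sigma_i)\geq 0$ for each $i\in[r]$. Under this choice, $\langle\psi^\rho|\psi^\sigma\rangle = \frac{1}{r}\sum_i F(\rho_i,\sigma_i) \geq 1-\varepsilon$ by the hypothesis~\eqref{eq:cont-SDP-assumption}, so the standard trace-distance identity for pure states yields $\big\||\psi^\rho\rangle\!\langle\psi^\rho| - |\psi^\sigma\rangle\!\langle\psi^\sigma|\big\|_1 = 2\sqrt{1 - |\langle\psi^\rho|\psi^\sigma\rangle|^2} \leq 2\sqrt{\varepsilon(2-\varepsilon)}$.

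Second, I would show that any feasible $K$ (i.e.,~$K \geq 0$ with identity diagonal blocks) satisfies $0 \leq K \leq r\,I_r\otimes I_d$. Writing $|v\rangle = \sum_i|i\rangle|v_i\rangle$, the PSD block Cauchy--Schwarz inequality together with $K_{ii}=I_d$ gives $|\langle v_i|K_{ij}|v_j\rangle| \leq \sqrt{\langle v_i|K_{ii}|v_i\rangle\langle v_j|K_{jj}|v_j\rangle} = \|v_i\|\,\|v_j\|$, and summing with one more Cauchy--Schwarz yields $\langle v|K|v\rangle \leq \bigl(\sum_i\|v_i\|\bigr)^2 \leq r\sum_i\|v_i\|^2 = r\|v\|^2$. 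Hence the centered operator $P\coloneqq (K - (r/2)I_r\otimes I_d)\otimes I_d$ has operator norm at most $r/2$.

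Third, since $Q\coloneqq |\psi^\rho\rangle\!\langle\psi^\rho| - |\psi^\sigma\rangle\!\langle\psi^\sigma|$ is traceless, the identity shift cancels under $\Tr[(\cdot)\,Q]$, and H\"older's inequality gives $\bigl|\langle\psi^\rho|K\otimes I_d|\psi^\rho\rangle - \langle\psi^\sigma|K\otimes I_d|\psi^\sigma\rangle\bigr| = |\Tr[PQ]| \leq \|P\|_\infty\|Q\|_1 \leq (r/2)\cdot 2\sqrt{\varepsilon(2-\varepsilon)} = r\sqrt{\varepsilon(2-\varepsilon)}$. Taking the supremum over feasible $K$ and interchanging the roles of $\rho$ and $\sigma$ then yields $(r-1)\,|F_{K^\star}(\rho_1,\ldots,\rho_r)-F_{K^\star}(\sigma_1,\ldots,\sigma_r)|\leq r\sqrt{\varepsilon(2-\varepsilon)}$, which is the stated bound. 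The main obstacle is obtaining the sharp prefactor $\frac{r}{r-1}$: a direct H\"older bound using only $\|K\otimes I_d\|_\infty\leq r$ is off by a factor of two, so the centering that exploits $\Tr[Q]=0$ is essential; the block Cauchy--Schwarz estimate $K \leq r\,I$ is the other non-routine input, though it follows naturally from positivity combined with the identity diagonal-block constraint.
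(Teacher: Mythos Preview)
Your argument is correct and follows the same overall route as the paper: pick Uhlmann purifications so that $\langle\psi^\rho|\psi^\sigma\rangle=\frac{1}{r}\sum_i F(\rho_i,\sigma_i)\geq 1-\varepsilon$, bound the feasible $K$ in operator norm, and apply H\"older together with the pure-state trace-distance formula. Two points of comparison are worth recording.

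For the operator-norm bound $K\leq r\,I_r\otimes I_d$, the paper uses a dephasing identity with the Heisenberg--Weyl phase operators, namely $I_r\otimes I_d=\frac{1}{r}\sum_{k}(Z(k)\otimes I)K(Z(k)\otimes I)^\dagger\geq\frac{1}{r}K$, whereas you obtain the same bound via the block Cauchy--Schwarz inequality. Both are short and equally effective.

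More interestingly, your centering trick is not in the paper and is in fact what rescues the stated constant. The paper applies H\"older directly with $\|K\|_\infty\leq r$ and then asserts $\||\psi^\rho\rangle\!\langle\psi^\rho|-|\psi^\sigma\rangle\!\langle\psi^\sigma|\|_1=\sqrt{1-|\langle\psi^\rho|\psi^\sigma\rangle|^2}$; the correct trace-norm formula carries a factor of $2$, so the paper's chain as written yields $\frac{2r}{r-1}\sqrt{\varepsilon(2-\varepsilon)}$. Your shift $K\mapsto K-\frac{r}{2}I$, combined with the observation that the difference of pure states is traceless, halves the effective operator norm to $r/2$ and cleanly recovers the claimed $\frac{r}{r-1}$ prefactor. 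So your proof is not only correct but slightly tighter in its bookkeeping than the paper's.

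One minor remark: you justify the freedom to choose Uhlmann-optimal purifications by citing the later \cref{prop:another_formulation_multi_SDP_fid}. That result's Step~1 (the equivalence $F_{K^\star}=F_{\operatorname{Int}}$) does establish purification-invariance and does not itself rely on the present theorem, so there is no circularity; but it would be cleaner to note that purification-invariance follows directly from the isometric freedom in purifications and the block-diagonal conjugation $K_{ij}\mapsto U_i^\dagger K_{ij}U_j$ preserving feasibility.
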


\begin{proof}
    See \Cref{Proof:uniform_cont_SDP_F}.
\end{proof}
\bigskip

\Cref{prop:another_formulation_multi_SDP_fid} below provides another equivalent formulation for the multivariate SDP fidelity, stating that it is equivalent to $K^\star$-representation. 
This formulation will be useful in proving the lower bound stated in  \cref{thm:SDP_fidelity_pairwise_upper_and_lower} and some consequential properties in \cref{thm:properties_SDP_fidelity}.

\begin{theorem}[Multivariate SDP fidelity and $K^\star$-representation] \label{prop:another_formulation_multi_SDP_fid}
For 
quantum states $\rho_1,\ldots, \rho_r$, the multivariate SDP fidelity~\eqref{eq:multivar-fid} is equal to the $K^\star$-representation: 
\begin{align}\label{eq:sdp_fidelity_K_star_equivalence}
     F_{\operatorname{SDP}}(\rho_1, \ldots, \rho_r) = F_{K^\star}(\rho_1, \ldots, \rho_r) .
\end{align}
\end{theorem}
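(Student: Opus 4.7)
The plan is to deduce the equality $F_{\operatorname{SDP}} = F_{K^\star}$ by exhibiting an explicit change of variables between the dual SDP of \cref{prop:dual_SDP_multi} and the optimization defining $F_{K^\star}$. I work throughout with canonical purifications $|\phi^{\rho_i}\rangle = (\rho_i^{1/2} \otimes I_d)|\Gamma\rangle$, where $|\Gamma\rangle = \sum_k |k\rangle|k\rangle$; these yield $\langle \phi^{\rho_i}| A \otimes I_d | \phi^{\rho_j}\rangle = \Tr[\rho_i^{1/2} A \rho_j^{1/2}]$ for every operator $A$ on the system. Using $K_{ii} = I_d$, a short computation then gives
\begin{equation}
\langle \psi | K \otimes I_d | \psi \rangle - 1 = \frac{1}{r} \sum_{i \neq j} \Tr\!\left[\rho_i^{1/2} K_{ij} \rho_j^{1/2}\right],
\end{equation}
which motivates the substitution $X_{ij} := \rho_i^{1/2} K_{ij} \rho_j^{1/2}$ for $i \neq j$. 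Under this substitution, $K_{ji} = K_{ij}^\dagger$ becomes $X_{ji} = X_{ij}^\dagger$, and the normalizations in~\eqref{eq:multivar-fid},~\eqref{eq:multivariate_fid_sdp_dual}, and~\eqref{eq:K_star_F} are arranged so that the two objective values agree exactly.

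For the inequality $F_{K^\star} \leq F_{\operatorname{SDP}}$, given any feasible $K$, introduce $D := \sum_i |i\rangle\!\langle i|\otimes \rho_i^{1/2}$; the constraint operator of the dual SDP is $M = D K D$, whose diagonal blocks are $\rho_i^{1/2} I_d \rho_i^{1/2} = \rho_i$ and which is PSD since $K \geq 0$. Hence the induced tuple $(X_{ij})_{i \neq j}$ is feasible for the dual SDP with identical objective value, giving this direction unconditionally.

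The main obstacle is the reverse inequality $F_{\operatorname{SDP}} \leq F_{K^\star}$ for rank-deficient tuples, since the naive inverse $K = D^{-1} M D^{-1}$ is ill-defined when $D$ fails to be invertible. I plan to resolve this via a Moore--Penrose pseudo-inverse construction. First, a standard positivity argument on $M \geq 0$ with $M_{ii} = \rho_i$ (obtained by testing $M$ against vectors $|i\rangle|u\rangle + \lambda |j\rangle|w\rangle$ and minimizing over $\lambda \in \mathbb{C}$) forces the off-diagonal blocks to satisfy $X_{ij} = P_i X_{ij} P_j$, where $P_i$ projects onto $\operatorname{supp}(\rho_i)$. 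I then set $K_{ij} := \rho_i^{-1/2} X_{ij} \rho_j^{-1/2}$ using pseudo-inverses and $K_{ii} := I_d$. To check $K \geq 0$, for any $|v\rangle = \sum_i |i\rangle|v_i\rangle$ split $|v_i\rangle = |v_i^S\rangle + |v_i^N\rangle$ along $\operatorname{supp}(\rho_i) \oplus \ker(\rho_i)$ and set $|w^S\rangle := \sum_i |i\rangle \otimes \rho_i^{-1/2}|v_i^S\rangle$; a direct computation yields
\begin{equation}
\langle v | K | v \rangle = \sum_i \|v_i^N\|^2 + \langle w^S | M | w^S \rangle \geq 0.
\end{equation}
The two objectives then match because $\rho_i^{1/2} K_{ij} \rho_j^{1/2} = P_i X_{ij} P_j = X_{ij}$, completing the equivalence.
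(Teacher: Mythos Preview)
Your proposal is correct and takes a genuinely different route from the paper's own proof for the harder direction $F_{\operatorname{SDP}} \leq F_{K^\star}$. The easy direction $F_{K^\star} \leq F_{\operatorname{SDP}}$ via $M = DKD$ with $D = \sum_i |i\rangle\!\langle i|\otimes \rho_i^{1/2}$ is identical to the paper's Step~2. For the reverse direction, however, the paper first establishes equality only for \emph{invertible} states (where $D^{-1}$ exists), and then extends to general states by an $\varepsilon$-perturbation argument combining the monotonicity statement of \cref{prop:monotonicity} with the uniform continuity of the $K^\star$-representation from \cref{thm:uniform_cont_SDP_K}. In contrast, you handle rank-deficient states directly: the support condition $X_{ij} = P_i X_{ij} P_j$ (forced by $M \geq 0$) lets you define $K_{ij}$ via Moore--Penrose pseudo-inverses, and your decomposition $\langle v|K|v\rangle = \sum_i \|v_i^N\|^2 + \langle w^S|M|w^S\rangle$ verifies $K \geq 0$ without any limiting procedure. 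Your argument is more elementary and self-contained; the paper's approach, while longer, has the side benefit that the uniform continuity bound (\cref{thm:uniform_cont_SDP_K}) is established along the way and then transferred to $F_{\operatorname{SDP}}$ itself in \cref{thm:uniform_cont_SDP_F}.
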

\begin{proof}
See \Cref{Sec:proof_another_formulation_multi}.
\end{proof}
\bigskip

By using \cref{thm:uniform_cont_SDP_K} and \cref{prop:another_formulation_multi_SDP_fid}, we arrive at the following uniform continuity bound for the multivariate SDP fidelity.

\begin{corollary}[Uniform continuity of multivariate SDP fidelity]\label{thm:uniform_cont_SDP_F}
Let $\rho_1$, \ldots, $\rho_r$, $\sigma_1$, \ldots, $\sigma_r$ be quantum states, and let $\varepsilon\in\left[  0,1\right]  $ be such that
\begin{equation}
    \frac{1}{r}\sum_{i=1}^{r}F(\rho_{i}
.\sigma_{i})\geq1-\varepsilon,
\end{equation}
Then,
\begin{equation}
\left\vert F_{\operatorname{SDP}}(\rho_1, \ldots, \rho_r)-F_{\operatorname{SDP}}(\sigma_1, \ldots, \sigma_r)\right\vert \leq
\frac{r}{r-1}\sqrt{\varepsilon\left(  2-\varepsilon\right)  }.
\end{equation}
\end{corollary}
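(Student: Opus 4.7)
The plan is to observe that this corollary follows immediately by chaining together the two results that precede it in the section. Specifically, \Cref{prop:another_formulation_multi_SDP_fid} asserts the identity $F_{\operatorname{SDP}}(\rho_1, \ldots, \rho_r) = F_{K^\star}(\rho_1, \ldots, \rho_r)$ for every tuple of states, and \Cref{thm:uniform_cont_SDP_K} provides the uniform continuity bound for $F_{K^\star}$ under the assumption $\frac{1}{r}\sum_{i=1}^{r}F(\rho_{i},\sigma_{i})\geq 1-\varepsilon$.

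The execution would proceed in two short steps. First, I would apply \Cref{prop:another_formulation_multi_SDP_fid} twice, once to the tuple $(\rho_1,\ldots,\rho_r)$ and once to $(\sigma_1,\ldots,\sigma_r)$, to rewrite
\begin{equation}
\left\vert F_{\operatorname{SDP}}(\rho_1, \ldots, \rho_r)-F_{\operatorname{SDP}}(\sigma_1, \ldots, \sigma_r)\right\vert
= \left\vert F_{K^\star}(\rho_1, \ldots, \rho_r)-F_{K^\star}(\sigma_1, \ldots, \sigma_r)\right\vert.
\end{equation}
Second, I would invoke \Cref{thm:uniform_cont_SDP_K}, whose hypothesis is exactly the average fidelity condition assumed in the corollary, to upper bound the right-hand side by $\frac{r}{r-1}\sqrt{\varepsilon(2-\varepsilon)}$, yielding the desired estimate.

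There is no essential difficulty here: the content of the corollary has already been absorbed into the two inputs. The only point worth noting is that the hypothesis on the average Uhlmann fidelity of the two tuples is identical in both \Cref{thm:uniform_cont_SDP_K} and this corollary, so no reformulation of the assumption is required. Consequently, the proof reduces to a one-line substitution followed by a direct citation of the continuity bound for the $K^\star$-representation.
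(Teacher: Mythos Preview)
Your proposal is correct and matches the paper's approach exactly: the paper presents this corollary as an immediate consequence of \Cref{thm:uniform_cont_SDP_K} and \Cref{prop:another_formulation_multi_SDP_fid}, which is precisely the two-step substitution you describe.
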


\subsubsection{Properties of multivariate SDP fidelity}

Here we present several properties satisfied by the multivariate SDP fidelity. 

\begin{theorem}[Properties of multivariate SDP fidelity]\label{thm:properties_SDP_fidelity}
   The multivariate SDP fidelity satisfies all the desired properties of a multivariate fidelity, as listed in \cref{def:properties_multi_fidelity}. For commuting states, it reduces to the classical average pairwise fidelity.  
\end{theorem}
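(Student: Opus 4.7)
I would verify each of the six properties in \cref{def:properties_multi_fidelity} by combining the primal formulation of \cref{def:multivar-fid}, the dual formulation of \cref{prop:dual_SDP_multi}, and the sandwich bound $F_H \leq F_{\operatorname{SDP}} \leq F_U$ asserted in \cref{thm:SDP_fidelity_pairwise_upper_and_lower}. The reduction to the classical average pairwise fidelity for commuting states is then immediate: in the commuting case one has $F_H(\rho_i,\rho_j) = F(\rho_i,\rho_j) = F_U(\rho_i,\rho_j)$ for every pair, so the sandwich bound collapses to equality.

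\textbf{Data processing and symmetry.} For a channel $\cN$, I would pull back any primal-feasible tuple $(Y'_i)$ for $F_{\operatorname{SDP}}(\cN(\rho_1),\ldots,\cN(\rho_r))$ to $Y_i \coloneqq \cN^\dagger(Y'_i)$. Positivity is preserved since $\cN^\dagger$ is completely positive; the objective matches because $\Tr[Y_i\rho_i] = \Tr[Y'_i\cN(\rho_i)]$; and the constraint transports because applying $\operatorname{id}\otimes\cN^\dagger$ to $\sum_i|i\rangle\!\langle i|\otimes Y'_i \geq \sum_{i\neq j}|i\rangle\!\langle j|\otimes I$ preserves the PSD ordering while fixing the right-hand side, via the unitality $\cN^\dagger(I) = I$ (which follows from $\cN$ being trace-preserving). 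For symmetry, conjugating the primal for $(\rho_{\pi(1)},\ldots,\rho_{\pi(r)})$ by $U_\pi\otimes I$, where $U_\pi|i\rangle = |\pi(i)\rangle$, fixes $\sum_{i\neq j}|i\rangle\!\langle j|\otimes I$ (since the index set $\{(i,j):i\neq j\}$ is permutation-invariant) and reindexes the block-diagonal variable to yield the standard SDP for $(\rho_1,\ldots,\rho_r)$.

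\textbf{Faithfulness, orthogonality, and joint concavity.} If all $\rho_i$ coincide, then $F_U = F_H = 1$, so the sandwich bound pins $F_{\operatorname{SDP}} = 1$. Conversely $F_{\operatorname{SDP}} = 1$ forces $F_U \geq 1$, and combined with $F_U \leq 1$ (every pairwise fidelity is at most one) this gives $F(\rho_i,\rho_j) = 1$ for every pair, hence $\rho_i = \rho_j$. Orthogonality is the parallel argument, using non-negativity of the primal objective ($\Tr[Y_i\rho_i]\geq 0$) to conclude $F_{\operatorname{SDP}} \in [0,1]$, and then exploiting the sandwich bound in both directions. Joint concavity follows automatically from data processing applied to the partial trace over the classical register, together with the direct-sum property, as noted just after \cref{def:properties_multi_fidelity}.

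\textbf{Direct-sum.} For $\tilde{\rho}_i \coloneqq \sum_x p(x)|x\rangle\!\langle x|_X\otimes\rho_i^x$, the $\leq$ direction follows from the primal block-diagonal ansatz $Y_i = \sum_x |x\rangle\!\langle x|_X\otimes Y_i^x$, with $(Y_i^x)_i$ primal-optimal for each fiber $F_{\operatorname{SDP}}(\rho_1^x,\ldots,\rho_r^x)$: reordering tensor factors shows the PSD constraint decouples blockwise in $X$, and the objective becomes $\sum_x p(x)\sum_i\Tr[Y_i^x\rho_i^x]$. The matching $\geq$ direction uses the dual in \cref{prop:dual_SDP_multi} with the analogous ansatz $X_{ij} = \sum_x p(x)|x\rangle\!\langle x|_X\otimes X_{ij}^x$; the feasibility inequality again decomposes into a direct sum of per-$x$ PSD blocks, and the Hermitian pairing $X_{ji} = X_{ij}^\dagger$ descends fiberwise. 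The step I expect to require the most care is precisely this direct-sum identity, where one must confirm that these block-diagonal ansätze on both sides saturate without further pinching or averaging; the structural reason the restriction is lossless is that $\tilde{\rho}_i$ is block-diagonal in $X$, so the primal objective only sees the $(x,x)$-diagonal blocks of $Y_i$, and analogously for the dual.
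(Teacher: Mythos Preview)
Your proposal is correct and closely mirrors the paper's own proof: data processing via pulling back through $\cN^\dagger$ in the primal, symmetry from the definition, faithfulness and orthogonality from the sandwich bound of \cref{thm:SDP_fidelity_pairwise_upper_and_lower}, and joint concavity from direct-sum plus data processing under partial trace.

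The one point of divergence is the direct-sum identity. For the inequality $F_{\operatorname{SDP}}(\tilde\rho_1,\ldots,\tilde\rho_r)\geq\sum_x p(x)\,F_{\operatorname{SDP}}(\rho_1^x,\ldots,\rho_r^x)$ you propose a block-diagonal ansatz in the \emph{dual} of \cref{prop:dual_SDP_multi}, whereas the paper stays entirely in the primal: it takes an arbitrary primal-feasible $(Z_i)_i$ for the classical--quantum tuple and restricts each $Z_i$ to its $(x,x)$-block $(\langle x|\otimes I)Z_i(|x\rangle\otimes I)$, observing that these compressions are primal-feasible for the fiber at $x$. Both arguments are valid; the paper's primal-only route has the small advantage of never invoking attainment of the dual supremum (one just works with arbitrary feasible points and takes an infimum at the end), while your dual ansatz is arguably the more symmetric bookend to the primal ansatz you use for the other inequality.
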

\begin{proof}
    See \Cref{Sec:proof_properties}.
\end{proof}

\begin{remark}[Properties for  a set of PSD operators]
  For a tuple of PSD operators $(A_i)_{i=1}^r$, the following properties hold.
  For $c >0$, by applying \Cref{def:multivar-fid} directly, we have the following scaling property: 
\begin{equation}\label{eq:scaling_multi_PSD}
    F_{\operatorname{SDP}}\!\left( cA_1, \ldots cA_r\right) = c F_{\operatorname{SDP}}(A_1, \ldots, A_r).
\end{equation}
Let $(A_i^x)_x$ be tuples of PSD operators for all $i \in [r]$. Then
\begin{equation}\label{eq:CQequality_multi_PSD}
         F_{\operatorname{SDP}}\! \left( \sum_{x \in \cX} |x\rangle\!\langle x| \otimes A_1^x, \ldots, \sum_{x \in \cX}  |x\rangle\!\langle x| \otimes A_r^x\right) = \sum_{x \in \cX}  F_{\operatorname{SDP}}(A_1^x, \ldots, A_r^x).
    \end{equation} 
    A proof of~\eqref{eq:CQequality_multi_PSD} follows similarly to the proof of the direct-sum property of multivariate SDP fidelity in \cref{thm:properties_SDP_fidelity}.
    To this end, note that property (vi) in \cref{thm:properties_SDP_fidelity} can be obtained as a special case of~\eqref{eq:CQequality_multi_PSD} by using~\eqref{eq:scaling_multi_PSD} and~\eqref{eq:CQequality_multi_PSD}.
The other properties (i)-(iii) 
in \cref{thm:properties_SDP_fidelity} also hold by following the same proof arguments, due to the fact that \cref{thm:SDP_fidelity_pairwise_upper_and_lower} holds even for PSD operators, among other things.
\end{remark}

\begin{proposition}[Coarse-graining property] \label{prop:coarse_graining}
    Let $r,m \in \mathbb{N}$, and let $\rho_1,\ldots, \rho_{r+m}$ be quantum states. We have the following inequality: 
    \begin{equation}
        F_{\operatorname{SDP}}(\rho_1,\ldots, \rho_r) \leq \frac{(r+m)(r+m-1)}{r(r-1)} F_{\operatorname{SDP}}(\rho_1,\ldots,\rho_r, \dots, \rho_{r+m}).
    \end{equation}    
\end{proposition}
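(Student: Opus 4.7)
The plan is to use the dual SDP formulation from \Cref{prop:dual_SDP_multi}, which expresses $F_{\operatorname{SDP}}$ as a supremum; any feasible point then yields a lower bound, which is exactly what the coarse-graining inequality needs. The key idea is that an optimal dual solution for the smaller problem can be extended by zeros to a feasible dual solution for the larger problem without changing the (unnormalized) objective value.

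Concretely, first let $(X_{ij}^{\star})_{i\neq j,\, i,j\in[r]}$ be an optimal dual solution in~\eqref{eq:multivariate_fid_sdp_dual} for $F_{\operatorname{SDP}}(\rho_1,\ldots,\rho_r)$, so that $X_{ji}^{\star}=(X_{ij}^{\star})^{\dagger}$, the feasibility condition
$$\sum_{i=1}^{r} |i\rangle\!\langle i|\otimes \rho_i + \sum_{\substack{i\neq j\\ i,j\in[r]}} |i\rangle\!\langle j|\otimes X_{ij}^{\star} \;\geq\; 0$$
holds, and the value equals $\tfrac{r(r-1)}{2}F_{\operatorname{SDP}}(\rho_1,\ldots,\rho_r)$ after undoing the normalization. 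I would then define an extension $(X'_{ij})_{i\neq j,\, i,j\in[r+m]}$ by $X'_{ij}=X_{ij}^{\star}$ when $i,j\in[r]$ and $X'_{ij}=0$ otherwise, which manifestly respects the Hermiticity constraint $X'_{ji}=(X'_{ij})^{\dagger}$.

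Next, I would verify that $(X'_{ij})$ is feasible for the $(r+m)$-state problem. Writing the block decomposition with respect to the partition $[r]\cup\{r+1,\ldots,r+m\}$, the operator $\sum_{i=1}^{r+m}|i\rangle\!\langle i|\otimes\rho_i + \sum_{i\neq j}|i\rangle\!\langle j|\otimes X'_{ij}$ becomes block-diagonal: the $[r]$-block coincides with the original feasibility operator (hence is PSD by optimality), and the $\{r+1,\ldots,r+m\}$-block is $\sum_{i=r+1}^{r+m}|i\rangle\!\langle i|\otimes \rho_i$, which is PSD as a direct sum of states. All cross blocks vanish since $X'_{ij}=0$ whenever $\max(i,j)>r$. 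Feasibility of the extension therefore holds.

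Finally, the dual objective evaluated at $(X'_{ij})$ is $\sum_{i<j,\, i,j\in[r+m]} \mathfrak{R}[\operatorname{Tr}[X'_{ij}]] = \sum_{i<j,\, i,j\in[r]} \mathfrak{R}[\operatorname{Tr}[X_{ij}^{\star}]]$, because every newly introduced off-diagonal block is zero. Plugging this into the supremum defining $F_{\operatorname{SDP}}(\rho_1,\ldots,\rho_{r+m})$ and accounting for its $\tfrac{2}{(r+m)(r+m-1)}$ prefactor yields
$$F_{\operatorname{SDP}}(\rho_1,\ldots,\rho_{r+m}) \;\geq\; \frac{r(r-1)}{(r+m)(r+m-1)}\,F_{\operatorname{SDP}}(\rho_1,\ldots,\rho_r),$$
which rearranges to the claimed coarse-graining inequality. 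There is no real obstacle here; the only nontrivial step is the block-diagonal feasibility check, which is straightforward. I note that attempting the analogous argument via the primal~\eqref{eq:multivar-fid} would be substantially less convenient: extending $(Y_1,\ldots,Y_r)$ to $(Y_1,\ldots,Y_{r+m})$ must accommodate the new off-diagonal $-I$ blocks, forcing the new $Y_i$'s to be large and inflating the primal objective in a way that is not obviously controlled by the $(r{+}m)(r{+}m{-}1)/(r(r-1))$ factor.
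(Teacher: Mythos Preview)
Your proof is correct. The paper, however, takes the mirror-image route via the \emph{primal}~\eqref{eq:multivar-fid}: starting from an arbitrary primal-feasible tuple $(Y_i)_{i=1}^{r+m}$ for the larger problem, it observes that the principal $[r]\times[r]$ block of the constraint matrix is again PSD, so the restriction $(Y_i)_{i=1}^{r}$ is feasible for the smaller problem; then $\sum_{i=1}^{r+m}\operatorname{Tr}[Y_i\rho_i]\geq \sum_{i=1}^{r}\operatorname{Tr}[Y_i\rho_i]\geq r(r-1)F_{\operatorname{SDP}}(\rho_1,\ldots,\rho_r)$, and infimizing gives the claim. Thus your closing remark is somewhat misplaced: the primal is not inconvenient once one restricts from large to small rather than trying to extend from small to large. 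The two arguments are exactly dual to one another---you extend a dual solution by zeros, the paper restricts a primal solution by truncation---and neither buys anything the other does not. One small point: you invoke an \emph{optimal} dual solution $(X_{ij}^\star)$; its existence is guaranteed here because the paper establishes strict primal feasibility in the proof of \Cref{prop:dual_SDP_multi}, so Slater's condition ensures dual attainment. Alternatively, a near-optimal sequence would suffice.
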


\begin{proof}
    See \Cref{Proof:coarse_graining}.
\end{proof}

\begin{remark}[Super-multiplicativity of multivariate SDP fidelity] \label{rem:super_multiplic_SDP}
  Through numerical calculations, we found that there exists a tuple of states for which super-multiplicativity does not hold for the SDP fidelity, i.e., 
  $ \left(F_{\operatorname{SDP}}(\rho_1, \ldots, \rho_r) \right)^n >  F_{\operatorname{SDP}}\!\left(\rho_1^{\otimes n}, \ldots, \rho_r^{\otimes n} \right) $ for some $r$ and $(\rho_i)_{i=1}^r$. An example of a tuple of pure states (i.e., $\rho_i \coloneqq | \psi_i\rangle\!\langle \psi_i|$ for $i\in \{1,2,3\}$) for $r=3$, $n=2$, and $d=3$ is as follows (up to numerical approximations of MATLAB): 
  \begin{align}
      |\psi_1 \rangle &=\begin{pmatrix}-0.8954 + 0.2791 i\\0.2061-0.0805i \\ 0.2418+ 0.1135i \end{pmatrix}, \\
      |\psi_2 \rangle &=\begin{pmatrix}-0.2422 + 0.2315 i\\0.4386-0.4318i \\ -0.6928 - 0.1704i \end{pmatrix}, \\ 
      |\psi_3 \rangle &=\begin{pmatrix}0.2560+ 0.5837i\\0.4811-0.3057i \\ -0.5175- 0.314i \end{pmatrix}.
  \end{align}
For the above example, 
\begin{equation}
0.4075=\left(F_{\operatorname{SDP}}(\rho_1, \rho_2, \rho_3) \right)^2 >  F_{\operatorname{SDP}}\!\left(\rho_1^{\otimes 2}, \rho_2^{\otimes 2}, \rho_3^{\otimes 2} \right) =0.3820,
\end{equation}
which shows a violation of super-multiplicativity.\footnote{The MATLAB code used to find the given counterexample is available as arXiv ancillary files along with the arXiv posting of this paper.} 
\end{remark}

Although the multivariate SDP fidelity is not multiplicative, there is an interesting function of it that is. Let us define the following
secrecy measure based on the multivariate SDP\ fidelity:
\begin{equation}
S_{\operatorname{SDP}}(\rho_{1},\ldots,\rho_{r})\coloneqq \sqrt{ \frac{\left(  r-1\right)
F_{\operatorname{SDP}}(\rho_{1},\ldots,\rho_{r})+1}{r}}.
\label{eq:def-S-SDP-secrecy}
\end{equation}
We call it a secrecy measure in analogy with the secrecy measure introduced later on in~\eqref{eq:secrecy_measure}.

The quantity $S_{\operatorname{SDP}}$ has the following SDP representation, which we can use to prove that it is multiplicative:
\begin{proposition}
[SDP representation for $S_{\operatorname{SDP}}$]
\label{prop:SDP-secrecy-SDP}
Given $r$ states $\rho_{1},\ldots,\rho_{r}$ of dimension $d$, the square of the secrecy measure
$S_{\operatorname{SDP}}(\rho_{1},\ldots,\rho_{r})$ can be written as follows:
\begin{align}
S_{\operatorname{SDP}}(\rho_{1},\ldots,\rho_{r})^{2}  & =\frac{1}{r^{2}}\sup_{K\geq
0}\left\{  \langle\varphi|K\otimes I_{d}|\varphi\rangle : \left(  \Delta
_{r}\otimes\operatorname{id}_{d}\right)  \left(  K\right)  =I_{r} \otimes I_{d} \right\}
\label{eq:SDP-secrecy-measure}\\
& =\frac{1}{r^{2}}\inf_{Y\in\operatorname{Herm}}\left\{  \operatorname{Tr}[Y] : \left(
\Delta_{r}\otimes\operatorname{id}_{d}\right)  (Y)\geq\omega\right\}
,\label{eq:SDP-secrecy-measure-SDP-dual}
\end{align}
where
\begin{equation}
|\varphi\rangle_{XRS}\coloneqq \sum_{i=1}^{r}|i\rangle_{X}|\psi^{\rho_{i}}\rangle
_{RS},
\end{equation}
the state vector $|\psi^{\rho_{i}}\rangle$ is a purification of $\rho_{i}$,
$\Delta_{r}(\cdot)\coloneqq \sum_{i=1}^{r}|i\rangle\!\langle i|(\cdot)|i\rangle\!\langle
i|$ denotes the completely dephasing channel, and
\begin{equation}
\omega\coloneqq \operatorname{Tr}_{S}[|\varphi\rangle\!\langle\varphi|_{XRS}].
\end{equation}

\end{proposition}

\begin{proof}
    See \cref{app:SDP-secrecy-SDP}.
\end{proof}

Define the following notation:
\begin{equation}\label{eq:notation_tensor_two_states}
(\rho_{i}\otimes\sigma_{j})_{i\in\left[
r_{1}\right]  ,j\in\left[  r_{2}\right]  } \coloneqq (\rho_{1}\otimes\sigma_{1},\ldots,\rho_{1}\otimes\sigma_{r_{2}
},\ldots,\rho_{r_{1}}\otimes\sigma_{1},\ldots,\rho_{r_{1}}\otimes\sigma
_{r_{2}}).
\end{equation}
{ In words, the notation in~\eqref{eq:notation_tensor_two_states} refers to the tuple of $r_1 r_2$ states that consists of the states obtained by the tensor product of each element of the tuples of the states $(\rho_i)_{i\in [r_1]}$ and $(\sigma_j)_{j\in [r_2]}$.}
\begin{proposition}
[Multiplicativity of $S_{\operatorname{SDP}}$]
\label{prop:Multiplicativity-S-SDP}
The secrecy measure $S_{\operatorname{SDP}}$ is multiplicative in the following sense:
\begin{equation}
S_{\operatorname{SDP}}\!\left( (\rho_{i}\otimes\sigma_{j})_{i\in\left[
r_{1}\right]  ,j\in\left[  r_{2}\right]  }\right)
=S_{\operatorname{SDP}}(\rho_{1},\ldots,\rho_{r_{1}})\cdot S_{\operatorname{SDP}}(\sigma
_{1},\ldots,\sigma_{r_{2}}),
\end{equation}
where $\rho_{1},\ldots,\rho_{r_{1}}$ and $\sigma_{1},\ldots,\sigma_{r_{2}}$
are states.
\end{proposition}

\begin{proof}
    See \cref{app:Multiplicativity-S-SDP}.
\end{proof}

\subsubsection{Other reductions of multivariate SDP fidelity}

 \begin{remark}[Multivariate SDP fidelity as a pairwise quantity]
The development in the proof of \cref{prop:another_formulation_multi_SDP_fid} gives insight into how we can view the SDP\ fidelity as being somewhat like
an average pairwise quantity (see~\eqref{eq:pairwise_relation_fidelity}). Indeed, we see that
\begin{equation} \label{eq:SDP_fidelity_with_pairwise_structure}
F_{\operatorname{SDP}}(\rho_{1},\ldots,\rho_{r})=\\
\frac{2}{r\left(  r-1\right)  }\sup_{K\geq0}\left\{
\begin{array}
[c]{c}
\sum_{i<j}\mathfrak{R}\left[  \langle\psi^{\rho_{i}}|K_{ij}\otimes
I|\psi^{\rho_{j}}\rangle\right]  :\\
K=\sum_{i,j=1}^{r}|i\rangle\!\langle j|\otimes K_{ij},\ K_{ii}=I_{d}\ \forall
i\in\left[  r\right]
\end{array}
\right\}  .
\end{equation}
\end{remark}

By choosing $r=2$ in \cref{prop:another_formulation_multi_SDP_fid} and applying~\eqref{eq:alt-K}, we obtain yet another formulation for Uhlmann fidelity. 

\begin{corollary}[Another formulation for Uhlmann fidelity]\label{Cor:another_formulation_bivariate}
    For all  quantum states $\rho$ and~$\sigma$, the Uhlmann fidelity can be expressed as 
    \begin{equation}
        F(\rho,\sigma)=\sup_{K\in \operatorname{Herm}}\left\{
\begin{array}
[c]{c}
\langle\phi|K\otimes I_d|\phi\rangle:
I_2 \otimes I_d \geq
\begin{bmatrix}
0 & X\\
X^{\dagger} & 0
\end{bmatrix}
\eqqcolon K
\end{array}
\right\}
,
    \end{equation}
    where 
    \begin{equation}
        |\phi\rangle \coloneqq \frac{1}{\sqrt{2}}\left(  |0\rangle|\psi^{\rho}\rangle
+|1\rangle|\psi^{\sigma}\rangle\right)
    \end{equation}
  with   $ | \psi^\rho \rangle$ and $| \psi^\sigma \rangle $ being purifications of $\rho$ and $\sigma$, respectively.
\end{corollary}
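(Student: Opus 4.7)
The plan is to specialize \Cref{prop:another_formulation_multi_SDP_fid} to $r=2$ and combine it with two observations: (a) that the multivariate SDP fidelity reduces to the standard Uhlmann fidelity when only two states are considered, and (b) that the alternative form of the $K^\star$-representation given in~\eqref{eq:alt-K} takes on a particularly simple shape in the bivariate case.

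For the first observation, I would specialize \Cref{def:multivar-fid} to $r=2$. The constraint $\sum_i |i\rangle\!\langle i|\otimes Y_i \geq \sum_{i\neq j}|i\rangle\!\langle j|\otimes I$ rewrites as $\begin{bmatrix} Y_1 & -I \\ -I & Y_2 \end{bmatrix}\geq 0$, and the prefactor $1/[r(r-1)]$ equals $1/2$. This is identical to the SDP formulation in~\eqref{eq:bivariate_fid_primal}, so $F_{\operatorname{SDP}}(\rho,\sigma)=F(\rho,\sigma)$.

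For the second observation, specializing the $K^\star$-representation to $r=2$ yields the purification state $|\psi\rangle=\tfrac{1}{\sqrt{2}}(|0\rangle|\psi^\rho\rangle+|1\rangle|\psi^\sigma\rangle)=|\phi\rangle$ and prefactor $1/(r-1)=1$. Invoking~\eqref{eq:alt-K}, $F_{K^\star}$ becomes a supremum over Hermitian $K'=\sum_{i\neq j}|i\rangle\!\langle j|\otimes K_{ij}$ subject to $K'\leq I_r\otimes I_d$. For $r=2$, Hermiticity forces $K_{10}=K_{01}^\dagger$, so writing $X\coloneqq K_{01}$ produces $K'=\begin{bmatrix} 0 & X \\ X^\dagger & 0 \end{bmatrix}$, and the constraint is $I_2\otimes I_d\geq K'$, exactly as in the corollary.

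I do not anticipate any substantive obstacle, since the argument amounts to direct substitution. The only mild sanity check worth making is that the constraint $K'\leq I$ in~\eqref{eq:alt-K} is consistent with $K\geq 0$ in~\eqref{eq:K_star_F}: the shift $K=I+K'$ converts $K\geq 0$ into $K'\geq -I$, and since any Hermitian matrix with zero diagonal blocks has spectrum symmetric about zero (its eigenvalues are $\pm\sigma_i(X)$), the conditions $K'\geq -I$ and $K'\leq I$ coincide in this setting. Combining the two specializations then delivers the claim.
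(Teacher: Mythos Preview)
Your proposal is correct and follows essentially the same route as the paper: specialize \Cref{prop:another_formulation_multi_SDP_fid} to $r=2$, use the alternative form~\eqref{eq:alt-K}, and identify the bivariate SDP fidelity with the Uhlmann fidelity via~\eqref{eq:bivariate_fid_primal}. Your added sanity check about the spectral symmetry of the zero-diagonal block matrix is a nice touch that makes explicit why the constraint $K'\geq -I$ from~\eqref{eq:K_star_F} and $K'\leq I$ from~\eqref{eq:alt-K} coincide in the $r=2$ case.
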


We obtain a formulation for multivariate SDP fidelity of pure states by simplifying the $K^\star$-representation and applying  \cref{prop:another_formulation_multi_SDP_fid}.
\begin{corollary}
    [Multivariate SDP fidelity for pure states]
    \label{Prop:pure_states_SDP_F}
For  a tuple of pure states, $(  |\psi_{1}\rangle\!\langle\psi_{1}|,\ldots,|\psi_{r}\rangle
\!\langle\psi_{r}|)  $,
\begin{multline}
F_{\operatorname{SDP}}(|\psi_{1}\rangle\!\langle\psi_{1}|,\ldots,|\psi
_{r}\rangle\!\langle\psi_{r}|)=\label{eq:K-form-SDP-fid-pure}\\
\frac{2}{r\left(  r-1\right)  }\sup_{k_{ij}\in\mathbb{C}}\left\{  \sum
_{i<j}\mathfrak{R}[k_{ij}\langle\psi_{i}|\psi_{j}\rangle]:\sum
_{i,j=1}^{r}k_{ij}|i\rangle\!\langle j|\geq0,\ k_{ii}=1\ \forall i\in\left[
r\right]  \right\}  .
\end{multline}

\end{corollary}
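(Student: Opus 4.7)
The plan is to specialize the $K^\star$-representation from \Cref{prop:another_formulation_multi_SDP_fid} to pure states, since by that theorem $F_{\operatorname{SDP}}=F_{K^\star}$. For each pure state $\rho_i=|\psi_i\rangle\!\langle\psi_i|$, I will choose the purification $|\phi^{\rho_i}\rangle\coloneqq|\psi_i\rangle_A\otimes|\psi_i\rangle_R$, so that $|\psi\rangle=\frac{1}{\sqrt{r}}\sum_{i}|i\rangle|\psi_i\rangle|\psi_i\rangle$. Substituting $K=I_r\otimes I_d+\sum_{i\neq j}|i\rangle\!\langle j|\otimes K_{ij}$ into $\langle\psi|K\otimes I_d|\psi\rangle$ and using $K_{ii}=I_d$ together with $\langle\psi_i|\psi_i\rangle=1$, a direct expansion should give
\begin{equation*}
\langle\psi|K\otimes I_d|\psi\rangle = 1+\frac{1}{r}\sum_{i\neq j}\langle\psi_i|K_{ij}|\psi_j\rangle\,\langle\psi_i|\psi_j\rangle.
\end{equation*}
Then Hermiticity $K_{ji}=K_{ij}^{\dagger}$ combines conjugate pairs into a real part, so \eqref{eq:K_star_F} becomes
\begin{equation*}
F_{\operatorname{SDP}}(|\psi_1\rangle\!\langle\psi_1|,\ldots,|\psi_r\rangle\!\langle\psi_r|)=\frac{2}{r(r-1)}\sup_{K\geq 0,\,K_{ii}=I_d}\sum_{i<j}\mathfrak{R}\!\left[\langle\psi_i|K_{ij}|\psi_j\rangle\,\langle\psi_i|\psi_j\rangle\right].
\end{equation*}

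The main step is to reduce this operator supremum to a scalar one. I will set $k_{ij}\coloneqq\langle\psi_i|K_{ij}|\psi_j\rangle$ for $i\neq j$ and $k_{ii}\coloneqq 1$, and show that the set of Hermitian matrices $k=\sum_{i,j}k_{ij}|i\rangle\!\langle j|$ arising from feasible $K$ equals $\{k:k\geq 0,\ k_{ii}=1\}$. The key device is the isometry $V:\mathbb{C}^r\to\mathbb{C}^r\otimes\mathbb{C}^d$ defined by $V|i\rangle\coloneqq|i\rangle|\psi_i\rangle$, which is indeed isometric because the $|\psi_i\rangle$ are unit vectors. The forward direction is immediate: $k=V^{\dagger}KV$, so $K\geq 0$ gives $k\geq 0$, and $K_{ii}=I_d$ gives $k_{ii}=1$.

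For the reverse direction, given any PSD scalar matrix $k$ with unit diagonal, I will exhibit
\begin{equation*}
K\coloneqq VkV^{\dagger}+\sum_{i=1}^{r}|i\rangle\!\langle i|\otimes(I_d-|\psi_i\rangle\!\langle\psi_i|).
\end{equation*}
Both summands are manifestly PSD, so $K\geq 0$; the diagonal blocks satisfy $K_{ii}=k_{ii}|\psi_i\rangle\!\langle\psi_i|+I_d-|\psi_i\rangle\!\langle\psi_i|=I_d$; and for $i\neq j$ the off-diagonal blocks are $K_{ij}=k_{ij}|\psi_i\rangle\!\langle\psi_j|$, giving $\langle\psi_i|K_{ij}|\psi_j\rangle=k_{ij}$. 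With this bijection, the supremum over $K$ becomes the supremum over $(k_{ij})$, and the objective collapses to $\sum_{i<j}\mathfrak{R}[k_{ij}\langle\psi_i|\psi_j\rangle]$, which is exactly~\eqref{eq:K-form-SDP-fid-pure}. The only substantive idea is the isometry-based parametrization of the feasible set; everything else is bookkeeping of the purification and the block structure.
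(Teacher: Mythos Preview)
Your proof is correct. Both you and the paper start from the $K^\star$-representation (\Cref{prop:another_formulation_multi_SDP_fid}), but you take a longer route: by choosing the $d$-dimensional purification $|\phi^{\rho_i}\rangle=|\psi_i\rangle\otimes|\psi_i\rangle$, you are forced to introduce the isometry $V$ and argue that the operator optimization over $K_{ij}\in\mathbb{C}^{d\times d}$ collapses to a scalar one. The paper instead observes that a pure state $|\psi_i\rangle\!\langle\psi_i|$ is already its own purification with a \emph{trivial} (one-dimensional) reference system; with that choice each block $K_{ij}$ is a scalar $k_{ij}$ from the outset, and \eqref{eq:K-form-SDP-fid-pure} drops out of the expansion of $\langle\psi|K\otimes I_d|\psi\rangle$ with no further work. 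Your isometry-plus-projector construction is a clean way to do the reduction, but the paper's choice of purification makes that entire step unnecessary.
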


\begin{proof}
    See \Cref{Proof:pure_states_SDP_F}.
\end{proof}

We end this section by noting the following alternative forms of $F_{\operatorname{SDP}}$ and $S_{\operatorname{SDP}}$, which follow from developments similar to those given in \cref{app:mult-geo-SDP-G-alt} and \cref{app:secrecy-meas-geo-SDP-SDP-dual}: Given $r$ states $\rho_{1},\ldots,\rho_{r}$ of dimension $d$, the multivariate
SDP fidelity $F_{\operatorname{SDP}}(\rho_{1},\ldots,\rho_{r})$ and the secrecy measure $S_{\operatorname{SDP}}(\rho_{1},\ldots,\rho_{r})$
can be written as follows:
\begin{align}
F_{\operatorname{SDP}}(\rho_{1},\ldots,\rho_{r}) & =\frac{1}{r\left(
r-1\right)  }\sup_{X\geq0}\left\{
\begin{array}
[c]{c}
\operatorname{Tr}[\left(  |+\rangle\!\langle+|\otimes I_d\right)  X]-r:\\
\left(  \Delta_{r}\otimes\operatorname{id}_{d}\right)  \left(  X\right)
=\sum_{i=1}^{r}|i\rangle\!\langle i|\otimes\rho_{i},
\end{array}
\right\}  , \\
S_{\operatorname{SDP}}(\rho_{1},\ldots,\rho_{r})^{2}
&  =\frac{1}{r^{2}}\sup_{X\geq0}\left\{
\begin{array}
[c]{c}
\operatorname{Tr}[\left(  |+\rangle\!\langle+|\otimes I_d\right)  X]:\\
\left(  \Delta_{r}\otimes\operatorname{id}_{d}\right)  \left(  X\right)
=\sum_{i=1}^{r}|i\rangle\!\langle i|\otimes\rho_{i}
\end{array}
\right\} \\
&  =\frac{1}{r^{2}} \inf_{Y\in\operatorname{Herm}}\left\{
\begin{array}
[c]{c}
\operatorname{Tr}\!\left[  Y\left(  \sum_{i=1}^{r}|i\rangle\!\langle
i|\otimes\rho_{i}\right)  \right]  :\\
|+\rangle\!\langle+|\otimes I_d  \leq\left(  \Delta_{r}\otimes\operatorname{id}_{d}\right)
\left(  Y\right)  
\end{array}
\right\}  .
\end{align}
where the vector $|+\rangle$ and the completely dephasing channel $\Delta_{r}$ are defined as
\begin{equation} 
|+\rangle   \coloneqq \sum_{i=1}^{r}|i\rangle,
\qquad
\Delta_{r}(\cdot)   \coloneqq \sum_{i=1}^{r}|i\rangle\!\langle
i|(\cdot)|i\rangle\! \langle i|.
\end{equation}

\subsection{Secrecy-based multivariate fidelity} \label{subS:secrecy_based_F}

In this subsection, we introduce another multivariate quantum fidelity inspired by an existing secrecy measure in~\eqref{eq:secrecy_measure} and analyse its properties.

Recall the following secrecy measure from~\cite[Eq.~(19)]{konig2009operational}, defined in terms of the average Uhlmann fidelity of a tuple of states to a reference state:
\begin{equation}
\label{eq:secrecy_measure}
    S(\rho_1, \ldots, \rho_r) \coloneqq  \sup_{\sigma  \in \mathscr{D}}\frac{1}{r}\sum_{i=1}^{r}F(  \rho_{i}, \sigma) =\sup_{\sigma \in \mathscr{D}} F(\rho_{XA}, \rho_X \otimes \sigma),
\end{equation}
where
\begin{equation}
\label{eq:cq-state-secrecy-meas}
\rho_{XA}\coloneqq \sum_{i=1}^{r} \frac{1}{r}|i\rangle\!\langle i|_{X}\otimes\rho_{i}.     
\end{equation}
Observe that $\rho_X = I/r$. The square of this  secrecy measure was given a direct operational interpretation in \cite[Theorem~6]{RASW23} as the maximum success probability, in a quantum interactive proof, that a prover could pass a test for the states being similar. See~\cite{afham2022quantum} for a comprehensive study on the optimization problem involved with this secrecy measure. In particular, by putting together Eqs.~(22)--(23), Definition~3, Lemma~4, and Theorems~5 and 6 of~\cite{afham2022quantum} and performing some other simplifications, we arrive at the following primal and dual SDP formulations of $S(\rho_1, \ldots, \rho_r)$:
\begin{align}
  & S(\rho_{1},\ldots,\rho_{r})  \notag \\
& =\frac{1}{r}\sup_{\substack{X_{i,j}=X_{j,i}^{\dag}\in \mathscr{L},\\ \forall i,j\in\{0,1, \ldots, r\}  ,\\\sigma\geq0}}\left\{
\begin{array}
[c]{c}
\sum_{i=1}^{r}\mathfrak{R}[\operatorname{Tr}[X_{i,0}
]] : \operatorname{Tr}[\sigma]=1,\\  |0\rangle\!\langle
0|\otimes\sigma
+\sum_{i=1}^{r}|i\rangle\!\langle i|\otimes\rho_{i}
+\sum_{i,j=0: i\neq j}^{r}|i\rangle\!\langle j|\otimes X_{i,j}\geq0
\end{array}
\right\}  \label{eq:sup_SDP_secrecy_measure}\\
& =\frac{1}{2r}\inf_{\substack{Y_{1},\ldots,Y_{r}\geq0,\\\mu\geq0}}\left\{
\begin{array}
[c]{c}
\mu+\sum_{i=1}^{r}\operatorname{Tr}[Y_{i}\rho_{i}]:\\ \begin{aligned}
    & |0\rangle\!\langle
0|\otimes\mu I_{d} +
\sum_{i=1}^{r}|i\rangle\!\langle i|\otimes Y_{i}
\geq\sum_{i=1}^{r}\left(  |i\rangle\!\langle 0|+|0\rangle\!\langle i|\right)
\otimes I_d
\end{aligned}
\end{array}
\right\} \label{eq:inf_SDP_secrecy_measure} .
\end{align}
Another SDP\ formulation of the secrecy measure $S(\rho_{1},\ldots,\rho_{r})$ is as follows:
\begin{align}
& S(\rho_{1},\ldots,\rho_{r})\nonumber\\
& =\frac{1}{2r}\sup_{\substack{X_{1},\ldots,X_{r}\in\mathscr{L},\\\sigma\geq
0}}\left\{  \sum_{i=1}^{r}\operatorname{Tr}[X_i]+\operatorname{Tr}[X_i^{\dag
}]:\operatorname{Tr}[\sigma]=1,\
\begin{bmatrix}
\rho_{i} & X_{i}\\
X_{i}^{\dag} & \sigma
\end{bmatrix}
\geq0\ \forall i\in\left[  r\right]  \right\}
\label{eq:primal-SDP-secrecy-measure}\\
& =\frac{1}{2r}\inf_{\substack{Y_{1},\ldots,Y_{r},\\Z_{1},\ldots,Z_{r}
\geq0,\lambda\geq0}}\left\{  \sum_{i=1}^{r}\operatorname{Tr}[Y_{i}\rho
_{i}]+\lambda:\sum_{i=1}^{r}Z_{i}\leq\lambda I,\
\begin{bmatrix}
Y_{i} & -I\\
-I & Z_{i}
\end{bmatrix}
\geq0\ \forall i\in\left[  r\right]  \right\}  .
\label{eq:dual-alt-sdp-secrecy}
\end{align}
The first SDP\ in \eqref{eq:primal-SDP-secrecy-measure} was noted in \cite[Appendix~A]{afham2022quantum}, where it was stated to be more numerically stable than that given in \eqref{eq:sup_SDP_secrecy_measure}--\eqref{eq:inf_SDP_secrecy_measure}. It
 follows from a direct application of the fidelity SDP in \eqref{eq:bivariate_fid_primal} to the
first formula in \eqref{eq:secrecy_measure}. In Appendix~\ref{app:dual-alt-sdp-secrecy}, we derive the dual SDP in \eqref{eq:dual-alt-sdp-secrecy} and prove that strong duality holds. Similar to \cref{prop:Multiplicativity-S-SDP}, the secrecy measure $S$ is multiplicative in the following sense:
\begin{equation}
S\!\left( (\rho_{i}\otimes\sigma_{j})_{i\in\left[
r_{1}\right]  ,j\in\left[  r_{2}\right]  }\right)
=S(\rho_{1},\ldots,\rho_{r_{1}})\cdot S(\sigma
_{1},\ldots,\sigma_{r_{2}}),
\label{eq:mult-secrecy-meas-S}
\end{equation}
where $\rho_{1},\ldots,\rho_{r_{1}}$ and $\sigma_{1},\ldots,\sigma_{r_{2}}$
are states. This follows as a direct consequence of the additivity claims of \cite[Section~I-C-2]{konig2009operational}. Indeed, it follows from \cite[Eq.~(19)]{konig2009operational} that $\ln (rS(\rho_1, \ldots, \rho_r)^2) = H_{\max}(X|A)$ with
\begin{equation}
    H_{\max}(A|B)_\rho \coloneqq \sup_{\sigma_B} \log F(\rho_{AB}, I_A \otimes \sigma_B),
\end{equation}
where $H_{\max}$ is evaluated with respect to the classical--quantum state in \eqref{eq:cq-state-secrecy-meas}, so that additivity of $H_{\max}$ is equivalent to the multiplicativity equality in \eqref{eq:mult-secrecy-meas-S}.

Using the secrecy measure, we define another multivariate fidelity as follows.
\begin{definition}[Secrecy-based multivariate fidelity] \label{def:secrecy_based_multi}
Let $\rho_1, \ldots, \rho_r$ be quantum states.
We define the multivariate secrecy   fidelity as 
\begin{equation}
    F_S(\rho_1, \ldots, \rho_r) \coloneqq \frac{1}{r-1} \left( r  S(\rho_1, \ldots, \rho_r)^2 -1 \right),
\end{equation}
   where $S(\rho_1,\ldots,\rho_r)$ is defined in~\eqref{eq:secrecy_measure}.
\end{definition}

\begin{proposition}[Pairwise formulation for secrecy measure and secrecy-based multivariate fidelity] \label{prop:another_rep_secrecy_based}
    For
quantum states $\rho_1,\ldots, \rho_r$, the following equalities hold:
\begin{align}
S(\rho_{1},\ldots,\rho_{r}) & = \frac{1}{r}
\sup_{\left(  V^{i}\right)  _{i}}
\left(\sum_{i,j=1}^{r}\langle\phi^{\rho_{j}}
|_{RA}\left(  V_{R}^{j}\right)  ^{\dag}V_{R}^{i}|\phi^{\rho_{i}}\rangle_{RA}\right)^{1/2} , \label{eq:secrecy-as-pairwise} \\
      F_S(\rho_1, \ldots, \rho_r) & = \frac{2}{r(r-1)}\sup_{ (V^i)_{i=1}^r } 
        \sum_{i<j} \mathfrak{R}\! \left[\left\langle \phi^{\rho_j} \right|_{RA}\left(V^j_{R }\right)^\dagger V^i_{R} \left| \phi^{\rho_i} \right \rangle_{RA}\right],
        \label{eq:secrecy-fid-as-pairwise} 
\end{align}
where $(V^i)_{i=1}^r$ is a tuple of unitaries (i.e., $V^i \in \mathscr{U}$ for all $i\in[r]$) and $\left| \phi^{\rho_i} \right \rangle $ is a purification of the state $\rho_i$ for all $i \in [r]$. 
\end{proposition}

\begin{proof}
See~\cref{proof_another_rep_secrecy_based}. 
\end{proof}

\begin{remark}[Optimal unitary for secrecy measure]
\label{rem:prover-opt-unitary}
In \cite[Theorem~6]{RASW23} and as mentioned above, the quantity 
$ S(\rho_1, \ldots, \rho_r)^2$ was given an operational interpretation as the maximum success probability with which a prover could convince a verifier that the states $\rho_1, \ldots, \rho_r$ are similar, by means of a particular quantum interactive proof given there.
    In \cite[Eqs.~(A70)--(A76)]{RASW23}, it was shown that  the following equality holds:
\begin{equation} \label{eq:equivalent_S_2}
      S(\rho_1, \ldots, \rho_r)^2 = \sup_{P_{T'RF \to T^{''}F'} \in \mathscr{U} } \frac{1}{r^2} \sum_{i,j=1}^r \left\langle \phi^{\rho_j} \right|_{RA}\left(P^j_{R \to F'}\right)^\dagger P^i_{R \to F'} \left| \phi^{\rho_i} \right \rangle_{RA},
\end{equation}
 where 
    \begin{equation}
        P^i_{R \to F'} \coloneqq \langle i|_{T^{''}} P_{T'RF \to T^{''}F'}|i \rangle_{T'} |0 \rangle_F ,
    \end{equation}
    with $ P_{T'RF \to T^{''}F'}$  a unitary applied by the prover, and $\left| \phi^{\rho_i} \right \rangle $ is a purification of the state $\rho_i$ for all $i \in [r]$. 

    It follows from our proof in \cref{proof_another_rep_secrecy_based} that it is in fact optimal for the prover to perform a controlled unitary of the form $P_{T'RF \to T^{''}F'} = \sum_i |i\rangle_{T''} \langle i|_{T'} \otimes V^i_{RF\to F'}$, where each $V^i_{RF\to F'}$ is a unitary.
\end{remark}

By observing that the reference system for pure states is a trivial one-dimensional system and each unitary acting on the reference system reduces to a complex phase in this case, we conclude from~\eqref{eq:secrecy-fid-as-pairwise} the following formula for the secrecy-based multivariate fidelity of pure states.
\begin{corollary}
    [Secrecy-based multivariate fidelity for pure states]
    \label{Prop:pure_states_secrecy_F}
For  a tuple of pure states, $(  |\psi_{1}\rangle\!\langle\psi_{1}|,\ldots,|\psi_{r}\rangle
\!\langle\psi_{r}|)  $,
\begin{equation}
F_{S}(|\psi_{1}\rangle\!\langle\psi_{1}|,\ldots,|\psi
_{r}\rangle\!\langle\psi_{r}|)=
\frac{2}{r\left(  r-1\right)  }\sup_{\substack{\phi_{i}\in[0,2\pi],\\ \forall i \in [r]}}\left\{  \sum
_{i<j}\mathfrak{R}[e^{i (\phi_j - \phi_i)}\langle\psi_{i}|\psi_{j}\rangle]  \right\}  .
\end{equation}

\end{corollary}

In the definition of the secrecy-based multivariate fidelity, the secrecy measure in~\eqref{eq:secrecy_measure} is defined in terms of the Uhlmann fidelity. By replacing the Uhlmann fidelity therein with the Holevo fidelity, the resulting Holevo secrecy measure is a function of the average pairwise Holevo fidelity, as stated in \cref{rem:holevo-pairwise-secrecy}.

\begin{remark}[Average pairwise Holevo fidelity as a secrecy-based multivariate fidelity]
\label{rem:holevo-pairwise-secrecy}
For quantum states $\rho_1, \ldots, \rho_r$,  a secrecy measure based on the Holevo fidelity is defined as
\begin{equation}\label{eq:secrecy_measure_holevo}
    S_H(\rho_1, \ldots, \rho_r) \coloneqq  \sup_{\sigma  \in \mathscr{D}}\frac{1}{r}\sum_{i=1}^{r}F_H(  \rho_{i}, \sigma) =\sup_{\sigma \in \mathscr{D}} F_H(\rho_{XA}, \rho_X \otimes \sigma),
\end{equation}
where $\rho_{XA}\coloneqq \sum_{i=1}^{r} \frac{1}{r}|i\rangle\!\langle i|_{X}\otimes\rho_{i}$.
With the use of~\eqref{eq:relation_to_fidelity_d_1/2} in \cref{proof:SDP_fidelity_bounds_with_pairwise}, together with the fact that 
\begin{equation}
    \inf_{\sigma_{A}}{D}_{\frac{1}{2}}(\rho_{XA}\Vert\rho_{X}
\otimes\sigma_{A})  = -2 \ln S_H(\rho_1, \ldots, \rho_r),
\end{equation}
where $D_{1/2}$ is the Petz--R\'enyi relative entropy of order $1/2$ defined in~\eqref{eq:petz renyi},
the following equality holds:
\begin{equation}
    F_H(\rho_1, \ldots, \rho_r) = \frac{1}{r-1} \left( r  S_H(\rho_1, \ldots, \rho_r)^2 -1 \right),
\end{equation}
thus relating this secrecy measure to the average pairwise Holevo fidelity $F_H(\rho_1, \ldots, \rho_r)$.
Similar to \cref{prop:Multiplicativity-S-SDP}, the secrecy measure $S_H$ is multiplicative in the following sense:
\begin{equation}
S_H\!\left( (\rho_{i}\otimes\sigma_{j})_{i\in\left[
r_{1}\right]  ,j\in\left[  r_{2}\right]  }\right)
=S_H(\rho_{1},\ldots,\rho_{r_{1}})\cdot S_H(\sigma
_{1},\ldots,\sigma_{r_{2}}),
\label{eq:mult-secrecy-meas-S-H}
\end{equation}
where $\rho_{1},\ldots,\rho_{r_{1}}$ and $\sigma_{1},\ldots,\sigma_{r_{2}}$
are states. This follows as a direct consequence of the explicit form for $S_H$ given in \eqref{eq:ln-SH-explicit-form}.
\end{remark}

\subsubsection{Uniform continuity bound for secrecy-based multivariate fidelity}

\begin{proposition}[Uniform continuity] \label{prop:uniform_cont_secrecy_multi_F}
Let $\rho_1$, \ldots, $\rho_r$, $\sigma_1$, \ldots, $\sigma_r$ be quantum states, and let $\varepsilon\geq 0  $ be such that
\begin{equation}
    \frac{1}{r} \sum_{i=1}^r d_{B}(\rho_{i},\sigma_{i})\leq\varepsilon,
\end{equation}
where the Bures distance $d_{B}$ is defined in~\eqref{eq:bures-dist-def}.
Then,
\begin{equation}
    \left| F_S(\rho_1, \ldots, \rho_r) - F_S(\sigma_1, \ldots, \sigma_2)  \right| \leq 2 \sqrt{2}\left(\frac{ r \  }{r-1}\right)\varepsilon.
\end{equation}
\end{proposition}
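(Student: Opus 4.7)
The plan is to reduce the bound to an estimate on $|S(\rho_1,\ldots,\rho_r) - S(\sigma_1,\ldots,\sigma_r)|$ via the algebraic relation $F_S = \frac{1}{r-1}(r S^2 - 1)$, and then bound the difference of the two secrecy measures using their supremum representations together with the elementary continuity of bivariate Uhlmann fidelity with respect to the Bures distance.

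First I would compute
\begin{equation}
    |F_S(\rho_1,\ldots,\rho_r) - F_S(\sigma_1,\ldots,\sigma_r)| = \frac{r}{r-1}\bigl|S(\rho_1,\ldots,\rho_r)^2 - S(\sigma_1,\ldots,\sigma_r)^2\bigr|,
\end{equation}
and then factor the right-hand side as $\frac{r}{r-1}|S(\rho) - S(\sigma)|\,(S(\rho) + S(\sigma))$. Since $0 \leq S \leq 1$ (because $F \leq 1$), the second factor is at most $2$, giving
\begin{equation}
    |F_S(\rho_1,\ldots,\rho_r) - F_S(\sigma_1,\ldots,\sigma_r)| \leq \frac{2r}{r-1}\,|S(\rho_1,\ldots,\rho_r) - S(\sigma_1,\ldots,\sigma_r)|.
\end{equation}

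Next I would use the representation $S(\rho_1,\ldots,\rho_r) = \sup_{\tau} \frac{1}{r}\sum_{i=1}^r F(\rho_i,\tau)$ from~\eqref{eq:secrecy_measure}, together with the standard fact that $|\sup f - \sup g| \leq \sup |f - g|$, to obtain
\begin{equation}
    |S(\rho_1,\ldots,\rho_r) - S(\sigma_1,\ldots,\sigma_r)| \leq \sup_{\tau}\frac{1}{r}\sum_{i=1}^r |F(\rho_i,\tau) - F(\sigma_i,\tau)|.
\end{equation}
Then, for each $i$ and each reference state $\tau$, I would invoke the identity $F(\omega_1,\omega_2) = 1 - \tfrac{1}{2} d_B(\omega_1,\omega_2)^2$ and the factorization $|a^2 - b^2| = |a-b|(a+b)$ to write
\begin{equation}
    |F(\rho_i,\tau) - F(\sigma_i,\tau)| = \tfrac{1}{2}\,|d_B(\rho_i,\tau) - d_B(\sigma_i,\tau)|\,(d_B(\rho_i,\tau) + d_B(\sigma_i,\tau)).
\end{equation}
The reverse triangle inequality for the Bures metric (which is a bona fide metric, by~\cite{Hel67b,Bur69}) gives $|d_B(\rho_i,\tau) - d_B(\sigma_i,\tau)| \leq d_B(\rho_i,\sigma_i)$, and the uniform bound $d_B \leq \sqrt{2}$ bounds the second factor by $2\sqrt{2}$, yielding $|F(\rho_i,\tau) - F(\sigma_i,\tau)| \leq \sqrt{2}\,d_B(\rho_i,\sigma_i)$ uniformly in $\tau$.

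Combining the displays and invoking the hypothesis $\frac{1}{r}\sum_{i=1}^r d_B(\rho_i,\sigma_i) \leq \varepsilon$ gives
\begin{equation}
    |S(\rho_1,\ldots,\rho_r) - S(\sigma_1,\ldots,\sigma_r)| \leq \frac{\sqrt{2}}{r}\sum_{i=1}^r d_B(\rho_i,\sigma_i) \leq \sqrt{2}\,\varepsilon,
\end{equation}
and substituting this into the first display of the plan yields the stated bound $2\sqrt{2}\,\tfrac{r}{r-1}\,\varepsilon$. The proof is essentially routine once the reduction $F_S \mapsto S^2 \mapsto S$ is in place; the only delicate point is keeping track of the factor of two arising from the difference-of-squares identity and the supremum bound $S + S \leq 2$, which is exactly what matches the constant $2\sqrt{2}$ in the conclusion, and which is completely parallel to the reasoning used to prove the average pairwise continuity bound in \Cref{Prop:uniform_cont_average_pairwise}.
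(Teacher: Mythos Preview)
Your proof is correct and follows essentially the same approach as the paper's own proof: reduce to $|S^2-S^2|$, factor and use $S\leq 1$ to pass to $|S-S|$, bound the difference of suprema by the supremum of differences, and then estimate each $|F(\rho_i,\tau)-F(\sigma_i,\tau)|$ via the Bures-distance identity, the reverse triangle inequality, and the bound $d_B\leq\sqrt{2}$. The steps and constants match exactly.
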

\begin{proof}
    See \Cref{proof:uniform_cont_secrecy_multi_F}.
\end{proof}

\subsubsection{Properties of secrecy-based multivariate fidelity}

Next we prove that secrecy-based multivariate fidelity also satisfies the desirable properties of a multivariate fidelity from \cref{def:properties_multi_fidelity}.

\begin{theorem}[Properties of secrecy-based multivariate fidelity] \label{thm:properties_secrecy_multi_F}
    The secrecy-based multivariate fidelity satisfies all the desired properties of a multivariate fidelity listed in \cref{def:properties_multi_fidelity}. For commuting states, it reduces to the classical average pairwise fidelity.
\end{theorem}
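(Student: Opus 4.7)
The plan is to lift each listed property from the secrecy measure $S(\rho_1,\ldots,\rho_r)$ defined in~\eqref{eq:secrecy_measure} to $F_S$ via the relation $F_S=(rS^2-1)/(r-1)$, which is strictly increasing in $S^2$ (and hence in $S$, since $S\ge 0$). So the strategy is to prove the analogous statement for $S$ (or $S^2$) using known properties of the Uhlmann fidelity, and then observe that each statement transfers to $F_S$ by substitution. This reduces the theorem to a handful of short computations involving the bivariate fidelity.

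For the commuting reduction, substitute commuting $\rho_i$ and restrict $\sigma$ to the common eigenbasis (optimal by data processing under the corresponding dephasing channel). Cauchy--Schwarz then gives
\begin{equation}
[S(\rho_1,\ldots,\rho_r)]^2=\sum_x\left(\tfrac{1}{r}\sum_i\sqrt{\rho_i(x)}\right)^{\!2}=\tfrac{1}{r}+\tfrac{2}{r^2}\sum_{i<j}F(\rho_i,\rho_j),
\end{equation}
and plugging into the definition of $F_S$ yields exactly the average pairwise fidelity $\tfrac{2}{r(r-1)}\sum_{i<j}F(\rho_i,\rho_j)$. Symmetry of $F_S$ is inherited from the manifest symmetry of $\sum_i F(\rho_i,\sigma)$. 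Faithfulness follows from faithfulness of the Uhlmann fidelity: $F_S=1 \Leftrightarrow S=1$, and (by compactness) $S=1$ forces some $\sigma$ with $F(\rho_i,\sigma)=1$ for every $i$, equivalent to $\rho_1=\cdots=\rho_r$. Data processing is direct: for any channel $\cN$ and any $\sigma$,
\begin{equation}
\tfrac{1}{r}\sum_i F(\rho_i,\sigma)\le \tfrac{1}{r}\sum_i F(\cN(\rho_i),\cN(\sigma))\le S(\cN(\rho_1),\ldots,\cN(\rho_r));
\end{equation}
taking the supremum over $\sigma$ gives $S(\rho_1,\ldots,\rho_r)\le S(\cN(\rho_1),\ldots,\cN(\rho_r))$, which transfers to $F_S$.

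For the direct-sum property, data processing under the classical-register dephasing applied to $\sigma$ shows that the optimizer may be assumed of the form $\sigma=\sum_x q(x)|x\rangle\!\langle x|\otimes\sigma^x$; the direct-sum property of bivariate Uhlmann fidelity then gives
\begin{equation}
S(\bar\rho_1,\ldots,\bar\rho_r)=\sup_{q,(\sigma^x)_x}\sum_x\sqrt{p(x)q(x)}\,\tfrac{1}{r}\sum_i F(\rho_i^x,\sigma^x).
\end{equation}
For each $x$, optimizing over $\sigma^x$ yields $S(\rho_1^x,\ldots,\rho_r^x)$, and a second Cauchy--Schwarz application over $q$ gives $[S(\bar\rho_1,\ldots,\bar\rho_r)]^2=\sum_x p(x)[S(\rho_1^x,\ldots,\rho_r^x)]^2$. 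Substituting into $F_S=(rS^2-1)/(r-1)$ (and using $\sum_x p(x)=1$) produces $\sum_x p(x)F_S(\rho_1^x,\ldots,\rho_r^x)$, as required.

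The main obstacle is the orthogonality property, since the optimal $\sigma$ in $S$ need not be orthogonal to any $\rho_i$. For the forward direction, if $\rho_i\rho_j=0$ for all $i\ne j$ and $P_i$ denotes the support projection of $\rho_i$ (so the $P_i$ are mutually orthogonal), then $\sqrt{\rho_i}=\sqrt{\rho_i}P_i$ yields
\begin{equation}
F(\rho_i,\sigma)=\|\sqrt{\rho_i}P_i\sqrt{\sigma}\|_1\le\|\sqrt{\rho_i}\|_2\,\|P_i\sqrt{\sigma}\|_2=\sqrt{\Tr[P_i\sigma]}
\end{equation}
by Hilbert--Schmidt Cauchy--Schwarz; averaging and applying Cauchy--Schwarz once more with $\sum_i\Tr[P_i\sigma]\le 1$ gives $\tfrac{1}{r}\sum_i F(\rho_i,\sigma)\le 1/\sqrt{r}$, while the choice $\sigma=\tfrac{1}{r}\sum_i\rho_i$ (for which $\sqrt{\sigma}=\tfrac{1}{\sqrt{r}}\sum_i\sqrt{\rho_i}$ by pairwise orthogonality) attains equality. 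Hence $S=1/\sqrt{r}$ and $F_S=0$. For the converse, I will use the sandwich inequality $F_H\le F_S$ from \cref{thm:SDP_fidelity_pairwise_upper_and_lower}: $F_S=0$ forces the average pairwise Holevo fidelity to vanish, so $F_H(\rho_i,\rho_j)=\Tr[\sqrt{\rho_i}\sqrt{\rho_j}]=0$ for every $i<j$; since $\Tr[\sigma^{1/4}\rho^{1/2}\sigma^{1/4}]=F_H(\rho,\sigma)$ is the trace of the PSD operator $(\rho^{1/4}\sigma^{1/4})^\dagger(\rho^{1/4}\sigma^{1/4})$, vanishing of this trace forces $\rho^{1/4}\sigma^{1/4}=0$, and hence $\rho_i\rho_j=0$.
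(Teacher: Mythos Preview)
Your proof is correct, and the overall scheme—transferring properties from $S$ to $F_S$ via the strictly increasing map $S^2\mapsto (rS^2-1)/(r-1)$—matches the paper. Where you differ is in how you verify the individual properties. The paper leans heavily on the sandwich inequality $F_H\le F_S\le F_U$ from \cref{thm:SDP_fidelity_pairwise_upper_and_lower}: since $F_H$ and $F_U$ both reduce to the average pairwise classical fidelity for commuting states and both satisfy faithfulness and orthogonality, $F_S$ inherits all three by being squeezed. You instead compute the commuting case directly (the Cauchy--Schwarz identification of the optimal $\sigma$), prove faithfulness by compactness, and prove the forward orthogonality implication via the Hilbert--Schmidt bound $F(\rho_i,\sigma)\le\sqrt{\Tr[P_i\sigma]}$; you only invoke the sandwich (the inequality $F_H\le F_S$) for the converse of orthogonality, which is also how the paper gets it. For the direct-sum property, your argument is genuinely cleaner: after dephasing $\sigma$ and using the bivariate direct-sum formula, you optimise over the $\sigma^x$ and then over $q$ via Cauchy--Schwarz, obtaining the equality $S(\bar\rho_1,\ldots,\bar\rho_r)^2=\sum_x p(x)S_x^2$ in one stroke. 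The paper establishes the $\le$ direction the same way but proves the $\ge$ direction separately by passing through the purification/unitary formulation of $S^2$ from \cref{rem:Equivalent_form_F_S} and building a controlled unitary. Your route avoids that machinery entirely and is more self-contained; the paper's route is shorter only because the sandwich theorem has already been established.
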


\begin{proof}
  See \cref{proof:Properties_secrecy_fidelity}.
\end{proof}
\medskip

Similar to the average pairwise fidelity and multivariate SDP fidelity, the secrecy-based multivariate fidelity $F_S$ also satisfies the following coarse-graining property. 
\begin{proposition}[Coarse-graining property] \label{prop:coarse_graining_secrecy_F}
   Let $r,m \in \mathbb{N}$, and let $\rho_1,\ldots, \rho_{r+m}$ be quantum states. Then the following inequality holds: 
    \begin{equation}
        F_{S}(\rho_1,\ldots, \rho_r) \leq \frac{(r+m)(r+m-1)}{r(r-1)} F_{S}(\rho_1,\ldots,\rho_r, \dots, \rho_{r+m}).
    \end{equation}  
\end{proposition}
\begin{proof}
    See \cref{proof:coarse_graining_secrecy_F}.
\end{proof}

\begin{remark}[Super-multiplicativity]
    Through numerical calculations, we found that there exists a tuple of states for which super-multiplicativity does not hold for the secrecy-based multivariate fidelity, i.e., 
  $ \left(F_{S}(\rho_1, \ldots, \rho_r) \right)^n >  F_{S}\!\left(\rho_1^{\otimes n}, \ldots, \rho_r^{\otimes n} \right) $ for some $r$ and $(\rho_i)_{i=1}^r$. The same example that violates super-multiplicativity of SDP multivariate fidelity, as mentioned in \cref{rem:super_multiplic_SDP}, also violates the super-multiplicativity of secrecy-based multivariate fidelity. 
\end{remark}

\subsection{Relations between proposed generalizations}

In this section, we present some inequalities relating the average pairwise Holevo fidelity, the secrecy-based multivariate fidelity, the multivariate SDP fidelity, and the average pairwise Uhlmann fidelity, demonstrating that they are ordered.

\begin{theorem}[Inequalities relating multivariate fidelities] \label{thm:SDP_fidelity_pairwise_upper_and_lower}
For a tuple $(\rho_i)_{i=1}^r$ of states, the following inequalities hold for  the average pairwise Holevo fidelity $F_H$, the secrecy-based multivariate fidelity $F_S$, 
the multivariate SDP fidelity $F_{\operatorname{SDP}}$,   and the 
average pairwise Uhlmann fidelity $F_U$:
\begin{multline}\label{eq:bounds_SDP_fidelity}
F_H(\rho_1, \ldots, \rho_r) \leq F_S(\rho_1, \ldots, \rho_r) \leq F_{\operatorname{SDP}}(\rho_1, \ldots, \rho_r) \\ \leq F_U(\rho_1, \ldots, \rho_r) \leq \sqrt{F_H(\rho_1, \ldots, \rho_r)}.
\end{multline}  
\end{theorem}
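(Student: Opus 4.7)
I would approach the four inequalities in turn, exploiting the SDP and variational characterizations established earlier in the paper.

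For $F_H \leq F_S$, I would use the equivalent formulation of $[S]^2$ in \eqref{eq:equivalent_S_2}. Choosing canonical purifications $|\phi^{\rho_x}\rangle = (\sqrt{\rho_x} \otimes I)|\Gamma\rangle$ with $|\Gamma\rangle = \sum_k |k\rangle|k\rangle$ makes $\langle\phi^{\rho_x}|\phi^{\rho_y}\rangle = \Tr[\sqrt{\rho_x}\sqrt{\rho_y}] = F_H(\rho_x,\rho_y)$. I would then take the unitary $P_{T'RF\to T''F'}$ to be $\big(\sum_x |x\rangle\!\langle x|_{T''T'}\big)\otimes W_{RF \to F'}$ for a unitary $W$ with compatible dimensions; this forces $P^x_{R\to F'}$ to be a single isometry $V$ independent of $x$, so $(P^x)^\dagger P^y = I_R$. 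Substituting into \eqref{eq:equivalent_S_2} yields $[S]^2 \geq \tfrac{1}{r^2}\sum_{x,y}F_H(\rho_x,\rho_y) = \tfrac{1}{r}+\tfrac{r-1}{r}F_H(\rho_1,\ldots,\rho_r)$, which rearranges to $F_S = \frac{r[S]^2-1}{r-1} \geq F_H$.

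For $F_S \leq F_{\operatorname{SDP}}$, I would keep the same formulation of $[S]^2$ and split the double sum into diagonal ($x=y$) and off-diagonal ($x\neq y$) parts. Since each $P^x$ is a contraction, the diagonal terms satisfy $\sum_x \langle\phi^{\rho_x}|(P^x)^\dagger P^x \otimes I|\phi^{\rho_x}\rangle \leq \sum_x \Tr[\rho_x] = r$. For the off-diagonal sum, given any tuple of contractions $(P^x)_{x=1}^r$, I would define
\begin{equation}
K := I_r \otimes I_R + \sum_{x \neq y} |x\rangle\!\langle y| \otimes (P^x)^\dagger P^y,
\end{equation}
and show it is PSD by writing it as the Gram matrix $\sum_{x,y}|x\rangle\!\langle y|\otimes (P^x)^\dagger P^y \geq 0$ plus the PSD correction $\sum_x |x\rangle\!\langle x|\otimes (I_R - (P^x)^\dagger P^x)$. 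Since $K$ has diagonal blocks $I_R$, it is feasible for the $K^\star$-representation of \Cref{prop:another_formulation_multi_SDP_fid}, so $\sum_{x\neq y}\langle\phi^{\rho_x}|(P^x)^\dagger P^y\otimes I|\phi^{\rho_y}\rangle \leq r(r-1)F_{\operatorname{SDP}}$. Combining the two bounds yields $r^2[S]^2 \leq r + r(r-1)F_{\operatorname{SDP}}$, which rearranges to $F_S \leq F_{\operatorname{SDP}}$.

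For $F_{\operatorname{SDP}}\leq F_U$, I would use the dual SDP in \Cref{prop:dual_SDP_multi}. Any feasible tuple $(X_{ij})_{i\neq j}$ satisfies $\sum_i |i\rangle\!\langle i|\otimes \rho_i + \sum_{i\neq j}|i\rangle\!\langle j|\otimes X_{ij}\geq 0$, so extracting the principal $2\times 2$ submatrix on the indices $\{i,j\}$ gives the bivariate constraint $\begin{bmatrix}\rho_i & X_{ij}\\ X_{ij}^\dagger & \rho_j\end{bmatrix}\geq 0$ appearing in the Uhlmann dual SDP~\eqref{eq:bivariate_fid_dual}. Hence $\mathfrak{R}[\Tr[X_{ij}]]\leq F(\rho_i,\rho_j)$; summing over $i<j$ and comparing with \Cref{def:average_pairwise_z_fidelity} gives $F_{\operatorname{SDP}}\leq F_U$. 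Finally, for $F_U\leq \sqrt{F_H}$, I would first prove the bivariate version $F(\rho,\sigma)^2 \leq F_H(\rho,\sigma)$ via H\"older's inequality with exponents $(4,2,4)$: factoring $\sqrt{\rho}\sqrt{\sigma}=\rho^{1/4}(\rho^{1/4}\sigma^{1/4})\sigma^{1/4}$,
\begin{equation}
\|\sqrt{\rho}\sqrt{\sigma}\|_1 \leq \|\rho^{1/4}\|_4\,\|\rho^{1/4}\sigma^{1/4}\|_2\,\|\sigma^{1/4}\|_4 = \sqrt{F_H(\rho,\sigma)},
\end{equation}
since $\|\rho^{1/4}\|_4^4 = \Tr[\rho] = 1$ and $\|\rho^{1/4}\sigma^{1/4}\|_2^2 = \Tr[\sigma^{1/4}\sqrt{\rho}\sigma^{1/4}] = F_H(\rho,\sigma)$ by cyclicity. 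Jensen's inequality applied to $t\mapsto t^2$ then upgrades this to $F_U^2 = \big(\tfrac{2}{r(r-1)}\sum_{i<j}F(\rho_i,\rho_j)\big)^2 \leq \tfrac{2}{r(r-1)}\sum_{i<j}F(\rho_i,\rho_j)^2 \leq \tfrac{2}{r(r-1)}\sum_{i<j}F_H(\rho_i,\rho_j) = F_H$. The hardest step will be $F_S \leq F_{\operatorname{SDP}}$, where the feasible $K$ must be carefully constructed from an arbitrary tuple of contractions and then shown to separate the contributions of the diagonal and off-diagonal pieces; this step relies essentially on the alternative formulation in \Cref{prop:another_formulation_multi_SDP_fid}.
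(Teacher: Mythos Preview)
Your proof is correct. For the middle two inequalities ($F_S \leq F_{\operatorname{SDP}}$ and $F_{\operatorname{SDP}} \leq F_U$) your argument is essentially identical to the paper's: split \eqref{eq:equivalent_S_2} into diagonal and off-diagonal pieces and build a feasible $K$ from the contractions $(P^x)$, then for the upper bound extract $2\times 2$ principal blocks from the dual feasible point and invoke \eqref{eq:bivariate_fid_dual}.

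Where you differ is at the two ends of the chain. For $F_H \leq F_S$ the paper goes through R\'enyi divergences: it uses $\widetilde{D}_{1/2}\leq D_{1/2}$ together with the closed-form evaluation of $\inf_\sigma D_{1/2}(\rho_{XA}\Vert \rho_X\otimes\sigma)$ from \cite{gupta2015multiplicativity} to reach the same numerical inequality. Your route---plugging a single $x$-independent isometry into \eqref{eq:equivalent_S_2} so that $(P^x)^\dagger P^y = I_R$ and the overlaps collapse to $\Tr[\sqrt{\rho_x}\sqrt{\rho_y}]$---is more elementary and entirely self-contained, at the cost of not exhibiting the connection to Holevo-type quantities that the paper's computation reveals (and later reuses in \Cref{Cor:Holevo_and_Oveloh}). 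For $F_U \leq \sqrt{F_H}$ the paper cites \cite[Theorem~6]{audenaert2008asymptotic} for the bivariate bound and then applies concavity of the square root; your three-factor H\"older argument with exponents $(4,2,4)$ gives the same bivariate inequality directly, and your Jensen step (convexity of $t\mapsto t^2$) is just the dual phrasing of the paper's concavity step. Both alternative arguments are clean and arguably preferable if one wants to avoid external citations.
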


\begin{proof}
    See \Cref{proof:SDP_fidelity_bounds_with_pairwise}. 
\end{proof}
\medskip 

An immediate implication of \cref{thm:SDP_fidelity_pairwise_upper_and_lower} is the following ordering of the secrecy measures from \cref{rem:holevo-pairwise-secrecy}, \eqref{eq:secrecy_measure}, and \eqref{eq:def-S-SDP-secrecy}:
\begin{equation}
    S_H(\rho_1, \ldots, \rho_r) \leq S(\rho_1, \ldots, \rho_r) \leq S_{\operatorname{SDP}}(\rho_1, \ldots, \rho_r) \leq S_U(\rho_{1},\ldots,\rho_{r}),
\end{equation}
where
\begin{equation}
S_{U}(\rho_{1},\ldots,\rho_{r})\coloneqq \sqrt{\frac{\left( r-1\right)
F_{U}(\rho_{1},\ldots,\rho_{r})+1}{r}}.
\end{equation}

In \cref{thm:SDP_fidelity_pairwise_upper_and_lower} we showed that multivariate SDP fidelity is  bounded from above by the average pairwise Uhlmann fidelity. Next, we show that it is also  bounded from below by a function of pairwise Uhlmann fidelities (note that it is not an average).
\begin{proposition}[Lower bound on multivariate SDP fidelity]\label{prop:lower_bound_multi_SDP}
    The multivariate SDP fidelity is bounded from below by a multiple of the maximum of sums of pairwise fidelities of the given states. More precisely, given quantum states $\rho_1,\ldots, \rho_r$, we have
  \begin{equation}
        \frac{2}{r(r-1)} \sup_{\pi \in S_r} \sum_{i=1}^{\lfloor \frac{r}{2}\rfloor } F(\rho_{\pi(i)}, \rho_{\pi(i+\lfloor r/2 \rfloor)})  
    \leq F_{\operatorname{SDP}}(\rho_1, \ldots, \rho_r).\label{eq:sdp-fid-lower-bound}
  \end{equation}
\end{proposition}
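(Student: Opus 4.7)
The plan is to use the pairwise formulation of the multivariate SDP fidelity displayed in \eqref{eq:SDP_fidelity_with_pairwise_structure}, which by \Cref{prop:another_formulation_multi_SDP_fid} coincides with $F_{\operatorname{SDP}}(\rho_1,\ldots,\rho_r)$, and for each permutation $\pi \in S_r$ to exhibit an explicit feasible block operator $K$ whose objective value is exactly $\sum_{i=1}^{\lfloor r/2 \rfloor} F(\rho_{\pi(i)}, \rho_{\pi(i+\lfloor r/2\rfloor)})$. Since $F_{\operatorname{SDP}}$ is the supremum of the objective over all such feasible $K$, this yields the desired inequality after maximizing over $\pi$.

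Fix $\pi \in S_r$ and form the matching $M_\pi \coloneqq \{(\pi(i), \pi(i+\lfloor r/2 \rfloor)) : 1 \leq i \leq \lfloor r/2 \rfloor\}$, a collection of $\lfloor r/2 \rfloor$ pairwise-disjoint index pairs drawn from $[r]$. For each $(a,b) \in M_\pi$, Uhlmann's theorem \eqref{eq:uhlmann} furnishes a unitary $U_{ab}$ on the purifying reference system such that, after absorbing a global phase, $F(\rho_a, \rho_b) = \mathfrak{R}\!\left[\langle \phi^{\rho_a}|U_{ab} \otimes I |\phi^{\rho_b}\rangle\right]$. Define the candidate operator $K$ by $K_{ii} = I_d$ for all $i \in [r]$, $K_{ab} = U_{ab}$ and $K_{ba} = U_{ab}^\dagger$ for $(a,b) \in M_\pi$, and $K_{ij} = 0$ for every remaining off-diagonal index pair.

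The main (mildly) non-trivial step is verifying $K \geq 0$. Because the off-diagonal entries only couple matched indices, after reordering the $r$-dimensional classical register $K$ is block-diagonal with $\lfloor r/2 \rfloor$ blocks of the form $\bigl(\begin{smallmatrix} I_d & U_{ab} \\ U_{ab}^\dagger & I_d \end{smallmatrix}\bigr)$, together with one additional $d \times d$ block equal to $I_d$ when $r$ is odd. Each $2 \times 2$ block is PSD by the Schur-complement criterion, since $I_d - U_{ab}^\dagger U_{ab} = 0$ by unitarity. Equivalently, for any $|v\rangle = \sum_i |i\rangle |w_i\rangle$, the unitarity of each $U_{ab}$ gives
\begin{equation}
\langle v | K | v \rangle = \sum_{i \text{ unmatched}} \|w_i\|^2 + \sum_{(a,b) \in M_\pi} \bigl( \|w_a\|^2 + \|w_b\|^2 + 2\,\mathfrak{R}\!\left[ \langle w_a | U_{ab} | w_b \rangle \right] \bigr) \geq 0,
\end{equation}
because each bracketed summand is at least $\bigl( \|w_a\| - \|w_b\| \bigr)^2 \geq 0$ by Cauchy--Schwarz.

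Since $K$ is feasible in \eqref{eq:SDP_fidelity_with_pairwise_structure} and $K_{ij} = 0$ off the matching, the objective collapses to
\begin{equation}
\sum_{i<j} \mathfrak{R}\!\left[ \langle \phi^{\rho_i} | K_{ij} \otimes I | \phi^{\rho_j} \rangle \right] = \sum_{(a,b) \in M_\pi} F(\rho_a, \rho_b) = \sum_{i=1}^{\lfloor r/2 \rfloor} F\!\left( \rho_{\pi(i)}, \rho_{\pi(i+\lfloor r/2 \rfloor)} \right)
\end{equation}
by the defining property of $U_{ab}$. Hence $F_{\operatorname{SDP}}(\rho_1,\ldots,\rho_r) \geq \frac{2}{r(r-1)} \sum_{i=1}^{\lfloor r/2 \rfloor} F(\rho_{\pi(i)}, \rho_{\pi(i+\lfloor r/2 \rfloor)})$, and taking the supremum over $\pi \in S_r$ yields \eqref{eq:sdp-fid-lower-bound}. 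The only genuine obstacle is the positivity verification above; everything else is bookkeeping dictated by the matching structure of $K$.
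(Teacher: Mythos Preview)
Your proof is correct. Both your argument and the paper's exploit the same block-diagonal trick, but through different supremum formulations of $F_{\operatorname{SDP}}$. The paper works directly with the dual SDP~\eqref{eq:multivariate_fid_sdp_dual}: for each matched pair it takes a feasible $Z_i$ from the bivariate dual~\eqref{eq:bivariate_fid_dual}, sets $X_{ij}=Z_i$ on matched pairs and zero elsewhere, and observes that the resulting block matrix is positive semidefinite because its nonzero blocks are exactly the $2\times 2$ blocks $\bigl(\begin{smallmatrix}\rho_{\pi(i)} & Z_i\\ Z_i^\dagger & \rho_{\pi(i+\lfloor r/2\rfloor)}\end{smallmatrix}\bigr)$. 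You instead pass through the $K^\star$-representation~\eqref{eq:SDP_fidelity_with_pairwise_structure} (which requires \Cref{prop:another_formulation_multi_SDP_fid}) and build the feasible point from Uhlmann unitaries~\eqref{eq:uhlmann} rather than from bivariate SDP optima. The paper's route is slightly more economical in that it only needs the elementary dual of \Cref{prop:dual_SDP_multi}, whereas yours invokes the heavier equivalence of \Cref{prop:another_formulation_multi_SDP_fid}; on the other hand, your construction makes the saturating block $\bigl(\begin{smallmatrix} I & U_{ab}\\ U_{ab}^\dagger & I\end{smallmatrix}\bigr)$ completely explicit and ties the lower bound directly to Uhlmann's theorem.
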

\begin{proof}
    See \Cref{Sec:Proof_lower_bound_multi_SDP}.
\end{proof}

\subsection{Minimal and maximal extensions of multivariate classical fidelities}

One way to construct a multivariate quantum fidelity is by optimizing a classical multivariate fidelity, and there are at least two ways of doing so. The first way is to perform a measurement of the $r$ quantum states, resulting in $r$ probability distributions, and then to minimize a classical multivariate fidelity of the $r$ distributions over all possible measurements. The measurement is also called a quantum-to-classical transformation. The second way is to begin with $r$ probability distributions and act on them with a preparation channel that realizes the $r$ quantum states, and then to maximize a classical multivariate fidelity over all possible preparation channels. A preparation channel is also known as a classical-to-quantum transformation.
The former approach leads to a quantity that we call a maximal extension of multivariate fidelity (or measured multivariate fidelity), and the latter leads to a minimal extension of multivariate fidelity (or preparation multivariate fidelity).

These concepts have been introduced and studied extensively in prior work. The measured relative entropy was defined in~\cite{donald1986relative,hiai1991proper,P09} and its extension to Petz-R\'enyi divergences  in~\cite[Eqs.~(3.116)--(3.117)]{F96} (see also~\cite{Berta2015OnEntropies,mosonyi2022test}). The preparation relative entropy was defined in~\cite{matsumoto2010reverse_relent} and shown therein to be equal to the Belavkin--Staszewski relative entropy from~\cite{Belavkin1982}. 
The preparation fidelity and more general $f$-divergences were defined in~\cite{matsumoto2010reverse,Matsumoto2018} and studied further in~\cite{HM17,hiai19,katariya2021geometric}.
See also~\cite{gour2020optimal} for maximal and minimal extensions of resource measures, including generalized fidelity for sub-normalized states. Most recently, maximal and minimal extensions of multivariate Chernoff divergence were defined in~\cite[Section~V-A]{mnw2023}.

We  define the maximal and minimal extensions of the classical multivariate fidelity as follows.

\begin{definition}[Maximal extension of multivariate fidelity]
\label{def:maximal_multi_fidelity}
We define a maximal extension of multivariate fidelity (also called a measured multivariate fidelity) as 
\begin{equation}
   \overline{ \mathbf{F}}(\rho_1, \ldots, \rho_r) \coloneqq \inf_{( \Lambda_x)_x}  \mathbf{F}\!\left(\Lambda(\rho_1), \ldots, \Lambda(\rho_r)\right)
\end{equation}
where $\mathbf{F}$ is a multivariate fidelity that reduces to one of the multivariate classical fidelities given in \cref{def:Matusita_multi_classical}, \cref{def:classical_average_fidelity}, or \cref{def:average_k_wise_classical}
and $\Lambda(\rho_i)\coloneqq \sum_x \operatorname{Tr}[\Lambda_x \rho]|x\rangle\!\langle x|$ is the measurement channel associated with the POVM $( \Lambda_x)_x$ (see also~\eqref{eq:measurement_channel}).   
\end{definition}

\begin{definition}[Minimal extension of multivariate fidelity]
\label{def:minimal_multi_F}
    We define a minimal extension of multivariate fidelity (also called a preparation multivariate fidelity) as 
    \begin{equation}
        \underline{ \mathbf{F}}(\rho_1, \ldots, \rho_r) \coloneqq \sup_{\substack{\cP \in \operatorname{CPTP}, \, \omega_i \in \mathscr{D}_c \forall i \in [r]}} \left\{\mathbf{F}\!\left( \omega_1, \ldots, \omega_r \right) : \cP(\omega_i) =\rho_i \,\forall i \in [r]\right \},
    \end{equation}
where $\mathbf{F}$ is  a multivariate fidelity that reduces to one of the multivariate classical fidelities given in \cref{def:Matusita_multi_classical}, \cref{def:classical_average_fidelity}, or \cref{def:average_k_wise_classical} and $\cP$ is a classical-quantum channel satisfying $\cP(\omega_i) =\rho_i$ for all $i\in [r]$ and $\mathscr{D}_c$ denotes the set of all states diagonal in the standard basis (so that $\mathscr{D}_c$ is a set of commuting states).
\end{definition}

Note that, by the above definitions and data processing of multivariate fidelity (\cref{def:properties_multi_fidelity}), we arrive at the following inequalities.
\begin{proposition}[Inequalities between maximal and minimal extensions]\label{prop:ineq_maximal_minimal_multi}
   For $\rho_1, \ldots, \rho_r$ quantum states we have 
   \begin{equation}
     \underline{ \mathbf{F}}(\rho_1, \ldots, \rho_r) \leq \mathbf{F}(\rho_1,\ldots, \rho_r) \leq  \overline{ \mathbf{F}}(\rho_1, \ldots, \rho_r),
\end{equation}
where $\underline{ \mathbf{F}}$ and $\overline{ \mathbf{F}}$ are defined in \cref{def:minimal_multi_F} and \cref{def:maximal_multi_fidelity}, respectively, for a multivariate fidelity $\mathbf{F}$.
\end{proposition}
\begin{proof}
    Proofs for these inequalities follow the same approach used in \cite[Appendices~E \&~F]{mnw2023} along with the fact that both the maximal and minimal extensions also satisfy the data-processing inequality.
\end{proof}

\begin{remark}[Bivariate setting] \label{rem:bivariate_minimal}
In the bivariate setting, the maximal extension of the classical fidelity is equal to the Uhlmann fidelity, and the minimal extension is equal to the geometric fidelity, as defined in~\eqref{eq:geometric_fidelity}:
By \cite[Theorem~1]{matsumoto2010reverse} we have 
\begin{equation}
    \underline{ \mathbf{F}}(\rho,\sigma) \leq \mathbf{F}(\rho, \sigma) \leq \overline{ \mathbf{F}}(\rho,\sigma)= F(\rho,\sigma), 
    \label{eq:ordering-prep-meas-fids}
\end{equation}
where the equality follows from~\eqref{eq:bivariate_measured}.
Furthermore, \cite[Remark~3]{matsumoto2010reverse} shows that 
\begin{equation}\label{eq:geometric_fidelity}
     \underline{ \mathbf{F}}(\rho,\sigma)= \inf_{\varepsilon >0 } \Tr[ \rho(\varepsilon) \# \sigma(\varepsilon)],
\end{equation}
where $A \# B$ denotes the geometric mean of the positive definite operators $A$ and $B$:
\begin{equation}
    A \# B \coloneqq A^{1/2}\left( A^{-1/2} B A^{-1/2} \right)^{1/2} A^{1/2}, 
\end{equation}
$\rho(\varepsilon) \coloneqq \rho + \varepsilon I$,
and $\sigma(\varepsilon) \coloneqq \sigma + \varepsilon I$.
Due to this formulation, the minimal extension of the classical bivariate fidelity is also called the geometric fidelity. In~\cite{cree2020fidelity}, the geometric fidelity was called the Matsumoto fidelity, and several of its properties were established by means of semi-definite programming.
\end{remark}

\subsubsection{Measured multivariate fidelity}

In this section, we consider various properties satisfied by the measured multivariate fidelity defined in \cref{def:maximal_multi_fidelity}.

\begin{proposition}[Properties of measured multivariate fidelity]\label{prop:properties_measured_multi_G}
If the underlying classical multivariate fidelity satisfies data processing, symmetry, faithfulness, 
and the direct-sum property, then the maximal extension satisfies data processing, symmetry, faithfulness, 
and the direct-sum property.

Furthermore, if the underlying classical multivariate fidelity is the average pairwise fidelity in \cref{def:classical_average_fidelity}, then the maximal extension also satisfies orthogonality. If the underlying classical multivariate fidelity is the Matusita multivariate fidelity from \cref{def:Matusita_multi_classical} or an average $k$-wise fidelity from \cref{def:average_k_wise_classical}, the maximal extension satisfies  weak orthogonality only.
\end{proposition}

\begin{proof}
    See \cref{proof:properties_measured_multivariate_G}.
\end{proof}
\medskip

By choosing the underlying classical multivariate fidelity to be the average pairwise fidelity, here we define measured average pairwise fidelity.

\begin{definition}[Measured average pairwise fidelity] \label{def:measured_multivariate}
    For states $\rho_1, \ldots, \rho_r$, the measured average pairwise fidelity is defined as 
\begin{equation}
    F_\mathrm{M}(\rho_1, \ldots, \rho_r)\coloneqq  \frac{2}{r(r-1)} \inf_{( \Lambda_x)_x} 
     \sum_{i <j} F\big( \Lambda(\rho_i), \Lambda(\rho_j) \big)
\end{equation}
where $\Lambda(\rho_i)\coloneqq \sum_x \operatorname{Tr}[\Lambda_x \rho_i]|x\rangle\!\langle x|$ is the measurement channel induced by applying the POVM $( \Lambda_x)_x$ on the state $\rho_i$ (see also~\eqref{eq:measurement_channel}).
\end{definition}

Note that \cref{def:measured_multivariate} is the largest quantum generalization of the average  pairwise classical fidelity that satisfies data processing.

When $\mathbf{F}$ is chosen to be a multivariate fidelity that generalizes the average pairwise classical fidelity, we have that
\begin{equation}
    \overline{\mathbf{F}}(\rho_1, \ldots, \rho_r) = F_{\mathrm{M}}(\rho_1,\ldots, \rho_r). \label{eq:measured_equals_avgPairwiseF}
\end{equation}
Similarly, when $\mathbf{F}$ is chosen to be a multivariate fidelity that generalizes the Matusita multivariate fidelity, we have that 
\begin{equation}
    \overline{\mathbf{F}}(\rho_1, \ldots, \rho_r) = \inf_{(\Lambda_x)_x} F_r\!\left( \Lambda(\rho_1), \ldots, \Lambda(\rho_r) \right).
\end{equation}
The next proposition generalizes \cref{prop:r_root_to_commuting_fidelity} to the case of measured multivariate fidelities. 

\begin{proposition}[Measured Matusita fidelity and average pairwise fidelity]
The following inequality holds: 
  \begin{equation}
      F_{\mathrm{M}}(\rho_1,\ldots, \rho_r) \geq \inf_{(\Lambda_x)_x} F_r\!\left( \Lambda(\rho_1), \ldots, \Lambda(\rho_r) \right),
  \end{equation}  
where the Matusita multivariate fidelity $F_r$ is defined in~\eqref{eq:Matusita_fid_commuting}.
\end{proposition}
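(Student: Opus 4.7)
The plan is to deduce this inequality directly from Proposition \ref{prop:r_root_to_commuting_fidelity} by applying it pointwise, for each fixed POVM, to the classical (commuting) distributions obtained by measurement, and then taking the infimum over POVMs.

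More concretely, I would proceed as follows. First, fix an arbitrary POVM $(\Lambda_x)_x$ and consider the measured outputs $\Lambda(\rho_1), \ldots, \Lambda(\rho_r)$, which are all diagonal in the same basis $\{|x\rangle\!\langle x|\}_x$ and therefore commute. Proposition \ref{prop:r_root_to_commuting_fidelity}, applied to this commuting tuple, yields
\begin{equation}
\frac{2}{r(r-1)} \sum_{i<j} F\bigl(\Lambda(\rho_i), \Lambda(\rho_j)\bigr) \;\geq\; F_r\bigl(\Lambda(\rho_1), \ldots, \Lambda(\rho_r)\bigr).
\end{equation}
Second, bound the right-hand side from below by $\inf_{(\Lambda'_x)_x} F_r(\Lambda'(\rho_1), \ldots, \Lambda'(\rho_r))$, which is independent of $(\Lambda_x)_x$. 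Finally, since the resulting inequality
\begin{equation}
\frac{2}{r(r-1)} \sum_{i<j} F\bigl(\Lambda(\rho_i), \Lambda(\rho_j)\bigr) \;\geq\; \inf_{(\Lambda'_x)_x} F_r\bigl(\Lambda'(\rho_1), \ldots, \Lambda'(\rho_r)\bigr)
\end{equation}
holds for every POVM $(\Lambda_x)_x$, one may take the infimum over $(\Lambda_x)_x$ on the left, which by Definition \ref{def:measured_multivariate} equals $F_{\mathrm{M}}(\rho_1, \ldots, \rho_r)$. This yields the claimed bound.

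There is essentially no obstacle: once Proposition \ref{prop:r_root_to_commuting_fidelity} is in hand, it applies verbatim to the measured tuple because measurement outputs commute, and the structural fact that $\inf$ commutes favorably with a uniform lower bound does the rest. The only subtlety to note in writing the proof is to emphasize that the measurement channel produces classical states diagonal in a common basis, so that the commuting-case hypothesis of Proposition \ref{prop:r_root_to_commuting_fidelity} is satisfied for every choice of POVM. No further use of data processing, duality, or continuity is needed.
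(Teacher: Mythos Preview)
Your proposal is correct and follows essentially the same approach as the paper: apply Proposition~\ref{prop:r_root_to_commuting_fidelity} to the commuting tuple $\Lambda(\rho_1),\ldots,\Lambda(\rho_r)$ for each fixed measurement channel $\Lambda$, then take the infimum over all measurements. The paper states this in one line; your version simply spells out the intermediate step of lower-bounding by the infimum before passing to the infimum on the left.
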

\begin{proof}
    The proof of the first statement follows from applying \cref{prop:r_root_to_commuting_fidelity} for each measurement channel $\Lambda$ and then taking the infimum over all measurement channels. 
\end{proof}

With the use of uniform continuity of the average pairwise Uhlmann fidelity and the multivariate SDP fidelity in \cref{Prop:uniform_cont_average_pairwise} and \cref{thm:uniform_cont_SDP_F}, respectively, we establish  uniform continuity of the measured average pairwise fidelity as the following corollary:
\begin{corollary}[Uniform continuity of measured average pairwise fidelity]\label{cor:uniform_cont_measured_multi_avg}\ \\
    Let $\rho_1, \ldots, \rho_r, \sigma_1, \ldots, \sigma_r$ be quantum states, and let $\varepsilon\geq 0$ be such that $\frac{1}{r}\sum_{i=1}^r d_{B}(\rho_{i},\sigma_{i})\leq\varepsilon$.
Then,
\begin{equation}
    \left\vert F_\mathrm{M} \! \left(  \rho_{1},\ldots,\rho_{r}\right)-F_\mathrm{M} \!\left(  \sigma_{1},\ldots,\sigma_{r}\right)\right\vert 
\leq 2\sqrt{2}  \varepsilon. 
\end{equation}
Furthermore, if $\varepsilon\in [0,1]$ is such that $ \frac{1}{r}\sum_{i=1}^{r}F(\rho_{i}
,\sigma_{i})\geq1-\varepsilon,$
then
\begin{equation}
  \left\vert F_\mathrm{M} \! \left(  \rho_{1},\ldots,\rho_{r}\right)-F_\mathrm{M} \!\left(  \sigma_{1},\ldots,\sigma_{r}\right)\right\vert  \leq
\frac{r}{r-1}\sqrt{\varepsilon\left(  2-\varepsilon\right)  }.
\end{equation}
\end{corollary}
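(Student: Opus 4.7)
The plan is to reduce both claimed bounds to their classical counterparts by invoking the fact that measurement maps the problem into the commuting setting, combined with a standard stability-of-infimum argument. Fix an arbitrary POVM $(\Lambda_x)_x$ with associated measurement channel $\Lambda$. The tuples $(\Lambda(\rho_i))_i$ and $(\Lambda(\sigma_i))_i$ are then classical (commuting) states, on which the average pairwise Uhlmann fidelity $F_U$ and the multivariate SDP fidelity $F_{\operatorname{SDP}}$ both coincide with the classical average pairwise fidelity (by \Cref{Prop:properties_average_pairwise_z} and \Cref{thm:properties_SDP_fidelity}, respectively). In particular, $F_M$ can be written as an infimum over POVMs of either of these quantities evaluated at the measured states.

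For the first bound, I would use data processing of the Bures distance $d_B$, which follows directly from data processing of the Uhlmann fidelity (\cref{prop:properties_bivariate_F}): for every $i$, $d_B(\Lambda(\rho_i),\Lambda(\sigma_i)) \leq d_B(\rho_i,\sigma_i)$, so the hypothesis $\frac{1}{r}\sum_i d_B(\rho_i,\sigma_i)\leq \varepsilon$ transfers to the measured states. Applying \Cref{Prop:uniform_cont_average_pairwise} to the commuting tuples $(\Lambda(\rho_i))_i$ and $(\Lambda(\sigma_i))_i$ then gives
\begin{equation}
\bigl|F_U(\Lambda(\rho_1),\ldots,\Lambda(\rho_r))-F_U(\Lambda(\sigma_1),\ldots,\Lambda(\sigma_r))\bigr| \leq 2\sqrt{2}\,\varepsilon,
\end{equation}
with the same constant independent of the choice of $\Lambda$.

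For the second bound, I would use data processing of the Uhlmann fidelity itself: $F(\Lambda(\rho_i),\Lambda(\sigma_i))\geq F(\rho_i,\sigma_i)$, so $\frac{1}{r}\sum_i F(\Lambda(\rho_i),\Lambda(\sigma_i))\geq 1-\varepsilon$. Applying \Cref{thm:uniform_cont_SDP_F} to the resulting commuting tuples, and using that $F_{\operatorname{SDP}}$ equals the classical average pairwise fidelity on commuting states, yields
\begin{equation}
\bigl|F_{\operatorname{SDP}}(\Lambda(\rho_1),\ldots,\Lambda(\rho_r))-F_{\operatorname{SDP}}(\Lambda(\sigma_1),\ldots,\Lambda(\sigma_r))\bigr| \leq \tfrac{r}{r-1}\sqrt{\varepsilon(2-\varepsilon)},
\end{equation}
again uniformly in $\Lambda$.

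To close out both estimates, I would invoke the elementary fact that if $|a_\Lambda - b_\Lambda|\leq C$ for every $\Lambda$, then $|\inf_\Lambda a_\Lambda - \inf_\Lambda b_\Lambda| \leq C$: from $a_\Lambda \leq b_\Lambda + C$ one gets $\inf_\Lambda a_\Lambda \leq \inf_\Lambda b_\Lambda + C$, and symmetrically. Taking infima over POVMs on both sides of each of the two displayed inequalities therefore produces exactly the two desired bounds on $|F_M(\rho_1,\ldots,\rho_r) - F_M(\sigma_1,\ldots,\sigma_r)|$. There is no real obstacle here; the only nontrivial ingredient is ensuring that the hypothesis on the inputs is preserved under measurement, which is precisely the content of data processing for $d_B$ and $F$, respectively.
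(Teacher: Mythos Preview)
Your proposal is correct and follows essentially the same approach as the paper: transfer the hypotheses to the measured states via data processing of $d_B$ (respectively $F$), apply \Cref{Prop:uniform_cont_average_pairwise} (respectively \Cref{thm:uniform_cont_SDP_F}) on the resulting commuting tuples where all the relevant multivariate fidelities coincide with the classical average pairwise fidelity, and then pass to the infimum over POVMs. The paper writes the last step as a one-sided $\sup$--$\inf$ manipulation rather than invoking the general ``stability of infima'' lemma, but this is purely cosmetic.
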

\begin{proof}
    For $F_\mathrm{M} \! \left(  \rho_{1},\ldots,\rho_{r}\right)$,
    the underlying classical multivariate fidelity is the average pairwise fidelity. Considering the reduction to the classical case for the quantum generalizations we have proposed, the following equalities hold for every measurement channel $\Lambda$:
    \begin{equation}\label{eq:equality_classical}
        F\!\left( \Lambda(\rho_1), \ldots, \Lambda(\rho_r)\right)=  F_U\!\left( \Lambda(\rho_1), \ldots, \Lambda(\rho_r)\right)=  F_{\operatorname{SDP}} \!\left( \Lambda(\rho_1), \ldots, \Lambda(\rho_r)\right). 
    \end{equation}
Since the fidelity and Bures distance satisfy the data-processing inequality, we also have the following implications:
\begin{align}
    & \frac{1}{r}\sum_{i=1}^r d_{B}(\rho_{i},\sigma_{i})\leq\varepsilon \implies \frac{1}{r}\sum_{i=1}^r d_{B}\!\left( \Lambda(\rho_{i}),\Lambda(\sigma_{i}) \right)\leq\varepsilon, \\ 
    & \frac{1}{r}\sum_{i=1}^{r}F(\rho_{i}
,\sigma_{i})\geq1-\varepsilon \implies \frac{1}{r}\sum_{i=1}^{r}F\!\left(\Lambda(\rho_{i})
,\Lambda(\sigma_{i}) \right)\geq1-\varepsilon.
\end{align}
Now consider that
\begin{align}
     & F_\mathrm{M} \! \left(  \rho_{1},\ldots,\rho_{r}\right) -  F_\mathrm{M} \! \left(  \sigma_{1},\ldots,\sigma_{r}\right) \notag \\
     & =  \inf_{( \Lambda_x)_x}   F\!\left( \Lambda(\rho_1), \ldots, \Lambda(\rho_r)\right) -    \inf_{( \Lambda'_x)_x} F\!\left( \Lambda'(\sigma_1), \ldots, \Lambda'(\sigma_r)\right) \\ 
   & =  \sup_{( \Lambda'_x)_x}\left( \inf_{( \Lambda_x)_x}   F\!\left( \Lambda(\rho_1), \ldots, \Lambda(\rho_r)\right) -  F\!\left( \Lambda'(\sigma_1), \ldots, \Lambda'(\sigma_r)\right)  \right) \\
   & \leq \sup_{( \Lambda'_x)_x}\left(    F\!\left( \Lambda'(\rho_1), \ldots, \Lambda'(\rho_r)\right) -  F\!\left( \Lambda'(\sigma_1), \ldots, \Lambda'(\sigma_r)\right)  \right).
\end{align}
At this point, we conclude the two inequalities by applying \cref{Prop:uniform_cont_average_pairwise} and \cref{thm:uniform_cont_SDP_F} along with~\eqref{eq:equality_classical}.
\end{proof}

\begin{remark}[Uniform continuity of measured Matusita multivariate fidelity]
    Using a similar proof technique as given in the proof of \cref{cor:uniform_cont_measured_multi_avg} and using uniform continuity of the Matusita multivariate fidelity in \cref{prop:unif-cont-Matusita}, it is possible to obtain a uniform continuity bound for the measured Matusita multivariate fidelity as well, under the assumption that $\frac{1}{r} \left( \sum_{i=1}^r d_B(\rho_i,\sigma_i)^2\right) \leq \varepsilon$.
\end{remark} 

The following proposition establishes inequalities between the measured average pairwise fidelity and the average pairwise Uhlmann fidelity. 
\begin{proposition}[Average pairwise measured and Uhlmann fidelities] \label{prop:measured_unmeasured_ineq}
The following inequality holds for every tuple $(\rho_1,\dots,\rho_r)$ of states:
\begin{equation}
    F_\mathrm{M}(\rho_1, \ldots, \rho_r) \geq  F_{U}(\rho_1, \ldots, \rho_r). 
    \label{eq:F_M-to-F_U}
\end{equation}   
\end{proposition}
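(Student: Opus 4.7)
The plan is to reduce the inequality to the well-known data-processing inequality for the bivariate Uhlmann fidelity, applied pairwise. The bivariate Uhlmann fidelity obeys $F(\rho,\sigma) \leq F(\mathcal{N}(\rho), \mathcal{N}(\sigma))$ for every quantum channel $\mathcal{N}$ (property (ii) of \cref{prop:properties_bivariate_F}); equivalently, using the measured representation in~\eqref{eq:bivariate_measured}, $F(\rho,\sigma) = \inf_{(\Lambda_x)_x} F(\Lambda(\rho), \Lambda(\sigma))$, so for every measurement channel $\Lambda$ we have $F(\Lambda(\rho_i), \Lambda(\rho_j)) \geq F(\rho_i, \rho_j)$.

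First, I would fix an arbitrary POVM $(\Lambda_x)_x$ with associated measurement channel $\Lambda$, and apply the above pairwise inequality to every pair $(i,j)$ with $i < j$. Summing these pairwise inequalities and multiplying by $\frac{2}{r(r-1)}$ yields
\begin{equation}
\frac{2}{r(r-1)} \sum_{i<j} F\bigl(\Lambda(\rho_i), \Lambda(\rho_j)\bigr) \;\geq\; \frac{2}{r(r-1)} \sum_{i<j} F(\rho_i,\rho_j) \;=\; F_U(\rho_1,\ldots,\rho_r),
\end{equation}
where the right-hand side does not depend on $\Lambda$.

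Finally, taking the infimum over all POVMs on the left-hand side, by \cref{def:measured_multivariate} this infimum equals $F_\mathrm{M}(\rho_1,\ldots,\rho_r)$, giving~\eqref{eq:F_M-to-F_U}. There is no genuine obstacle here: the inequality is really just the statement that averaging preserves a termwise inequality, combined with bivariate data processing. The only subtle point worth flagging is that the infimum on the left is taken over a single common measurement channel applied to all $r$ states (not independent channels per pair), but this only makes the left-hand side larger, so the argument goes through unchanged.
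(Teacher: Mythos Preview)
Your proof is correct and takes essentially the same approach as the paper: both arguments reduce to the pairwise inequality $F(\Lambda(\rho_i),\Lambda(\rho_j)) \geq F(\rho_i,\rho_j)$ (via data processing or equivalently the Fuchs--Caves measured representation~\eqref{eq:bivariate_measured}), then sum and take the infimum over the common measurement. The paper phrases this as ``pulling the infimum inside the sum,'' but the content is identical to what you wrote.
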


\begin{proof}
    Consider the following:
\begin{align}
    F_\mathrm{M}(\rho_1, \ldots, \rho_r) 
    &=\inf_{( \Lambda_x)_x} \frac{2}{r(r-1)} \sum_{i <j} F\big( \Lambda(\rho_i), \Lambda(\rho_j) \big) \\ 
    &\geq \frac{2}{r(r-1)} \sum_{i <j} \inf_{( \Lambda_x)_x} F\big( \Lambda(\rho_i), \Lambda(\rho_j) \big) \\
    &= \frac{2}{r(r-1)} \sum_{i <j} F(\rho_i,\rho_j),
\end{align}
where the first inequality follows from placing the infimum inside the sum, the second equality follows from the Fuchs--Caves characterization of Uhlmann fidelity in terms of the measured bivariate fidelity (see~\eqref{eq:bivariate_measured}). 
\end{proof}

\begin{remark}[Multivariate SDP fidelity and the measured version]
    When $r=2$, the Uhlmann fidelity is equal to the measured fidelity, as stated in~\eqref{eq:bivariate_measured}. However, when $r>2$, through numerical evaluations, it follows that there exists a tuple of states satisfying 
    $F_\mathrm{M}(\rho_1, \ldots, \rho_r) > F_{\operatorname{SDP}}(\rho_1,\ldots, \rho_r)$. This is numerically validated by showing that $F_U \geq F_{\operatorname{SDP}}$ in~\eqref{eq:bounds_SDP_fidelity} is a strict inequality for some tuples of states and then combining with~\eqref{eq:F_M-to-F_U}. One such tuple of states (up to numerical approximations of MATLAB)  
    is the following:
    \begin{align}
       \rho_1 &= \begin{bmatrix}
0.3465+0.0000i  & -0.2036+0.2643i \\
-0.2036-0.2643i &0.6535+0.0000i
\end{bmatrix}, \\
\rho_2 &= \begin{bmatrix}
    0.6546+0.0000i & -0.3308-0.2297i \\ -0.3308+0.2297i & 0.3454+0.0000i
\end{bmatrix},\\
\rho_3 &=\begin{bmatrix}
    0.6169+0.0000i & 0.0327-0.0321i \\  0.0327+0.0321i & 0.3831+0.0000i
\end{bmatrix}.
    \end{align}
The MATLAB code used to find the given counterexample is available as arXiv ancillary files along with the arXiv posting of this paper.

Furthermore, it is interesting to determine whether there exists a tuple of states for which $F_M(\rho_1, \ldots, \rho_r) > F_U(\rho_1,\ldots, \rho_r)$---we leave this as an open question.
\end{remark}

\subsubsection{Preparation multivariate fidelity}

The preparation multivariate fidelity, as outlined in \cref{def:minimal_multi_F}, is also known as the minimal extension of multivariate fidelity. Next, we discuss various properties satisfied by the preparation multivariate fidelity.

\begin{proposition}[Properties of preparation multivariate fidelity] \label{Prop:properties_minimal_extension}
The preparation multivariate fidelity in \cref{def:minimal_multi_F} satisfies data processing, symmetry, faithfulness, and the direct-sum property, as listed  in \cref{def:properties_multi_fidelity}.

Furthermore, if the states are orthogonal to each other, then $ \underline{ \mathbf{F}}(\rho_1, \ldots, \rho_r) =0$ (weak orthogonality).
\end{proposition}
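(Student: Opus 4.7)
The plan is to establish each of the five properties directly from the definition by invoking the corresponding property of the underlying classical multivariate fidelity $\mathbf{F}$ (which satisfies the desiderata in part~(i) of \Cref{def:properties_multi_fidelity}, together with data processing and the direct-sum property). Data processing, symmetry, faithfulness, and weak orthogonality all admit short arguments, while the direct-sum property requires a separate construction for each of the two inequalities and will be the main obstacle.

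For data processing, given a channel $\cN$, any feasible tuple $(\omega_1,\ldots,\omega_r,\cP)$ for $(\rho_1,\ldots,\rho_r)$ yields the feasible tuple $(\omega_1,\ldots,\omega_r,\cN\circ\cP)$ for $(\cN(\rho_1),\ldots,\cN(\rho_r))$ with the same objective value; taking suprema gives \eqref{eq:data-proc-property}. Symmetry follows by relabelling $(\omega_i,\rho_i) \mapsto (\omega_{\pi(i)},\rho_{\pi(i)})$ inside the supremum and using symmetry of $\mathbf{F}$. For faithfulness, when $\rho_1=\cdots=\rho_r=\rho$ I pick any classical state $\omega$ and the replacer channel $\cP(\cdot) = \Tr[\cdot]\,\rho$, for which $\cP(\omega)=\rho$ and $\mathbf{F}(\omega,\ldots,\omega)=1$, so $\underline{\mathbf{F}}=1$; the converse follows from $\underline{\mathbf{F}}\leq \mathbf{F}\leq 1$ (the first inequality being a consequence of data processing of $\mathbf{F}$ applied to $\cP$) and faithfulness of $\mathbf{F}$. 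For weak orthogonality, fix any feasible $(\omega_i,\cP)$ and write $\omega_i=\sum_z q(z|i)|z\rangle\!\langle z|$ and $\sigma_z \coloneqq \cP(|z\rangle\!\langle z|)$; the hypothesis $\rho_i\rho_j=0$ for $i\neq j$ yields $0=\Tr[\rho_i\rho_j]=\sum_{z,z'}q(z|i)q(z'|j)\Tr[\sigma_z\sigma_{z'}]$, and since each summand is non-negative and the $z=z'$ contribution is $q(z|i)q(z|j)\Tr[\sigma_z^2]$, we conclude $q(z|i)q(z|j)=0$ for every $z$, so $\omega_i\omega_j=0$. Orthogonality of the classical $\mathbf{F}$ then forces $\mathbf{F}(\omega_1,\ldots,\omega_r)=0$ for every feasible choice.

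The direct-sum property is the main technical point. Let $\widetilde\rho_i \coloneqq \sum_x p(x)|x\rangle\!\langle x|_X\otimes \rho_i^x$. The inequality $\underline{\mathbf{F}}(\widetilde\rho_1,\ldots,\widetilde\rho_r) \geq \sum_x p(x)\,\underline{\mathbf{F}}(\rho_1^x,\ldots,\rho_r^x)$ is constructive: given near-optimizers $(\omega_i^x,\cP_x)$ for each $x$, I assemble the classical state $\omega_i \coloneqq \sum_x p(x)|x\rangle\!\langle x|\otimes \omega_i^x$ on an extended register and the block-diagonal channel that acts as $\cP_x$ on the $x$-th classical block, then invoke the direct-sum property of $\mathbf{F}$ itself. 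For the reverse inequality, fix any feasible $(\omega_i,\cP)$ for $(\widetilde\rho_1,\ldots,\widetilde\rho_r)$; post-composing $\cP$ with the completely dephasing channel on $X$ does not alter $\cP(\omega_i)$ (which is already diagonal in $X$), so I may assume $\cP(|z\rangle\!\langle z|) = \sum_x |x\rangle\!\langle x|\otimes \tau_z^x$. Setting $s(z,x)\coloneqq \Tr[\tau_z^x]$, I define $\omega_i^x \coloneqq p(x)^{-1}\sum_z q(z|i)\,s(z,x)|z\rangle\!\langle z|$ and $\cP_x(|z\rangle\!\langle z|) \coloneqq \tau_z^x/s(z,x)$ (with arbitrary value when $s(z,x)=0$, which is irrelevant as $\omega_i^x$ puts no weight there), and a short calculation using $\sum_z q(z|i)s(z,x) = p(x)$ verifies $\cP_x(\omega_i^x)=\rho_i^x$. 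The key step, and the one requiring care, is that the classical-to-classical channel
\begin{equation}
    \cC(|z\rangle\!\langle z|)\coloneqq \sum_x s(z,x)|x\rangle\!\langle x|\otimes |z\rangle\!\langle z|
\end{equation}
satisfies $\cC(\omega_i)=\sum_x p(x)|x\rangle\!\langle x|\otimes \omega_i^x$; applying data processing of $\mathbf{F}$ under $\cC$ followed by its direct-sum property yields
\begin{equation}
    \mathbf{F}(\omega_1,\ldots,\omega_r) \leq \sum_x p(x)\,\mathbf{F}(\omega_1^x,\ldots,\omega_r^x) \leq \sum_x p(x)\,\underline{\mathbf{F}}(\rho_1^x,\ldots,\rho_r^x),
\end{equation}
and taking the supremum over feasible $(\omega_i,\cP)$ on the left-hand side completes the argument.
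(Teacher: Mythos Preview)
Your proof is correct and, for data processing, symmetry, faithfulness, and the direct-sum property, it follows essentially the same route as the paper. In particular, your direct-sum argument is a notational repackaging of the paper's: your $s(z,x)$ is the paper's conditional $r(x|y)$, your $\omega_i^x$ encodes the paper's Bayes-type identity $s_i(y|x)p(x)=q_i(y)r(x|y)$, and your channel $\cC$ is exactly the classical channel the paper applies before invoking the direct-sum property of the underlying $\mathbf{F}$.

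The one place you genuinely diverge is weak orthogonality. The paper argues indirectly: it invokes a \emph{quantum} multivariate fidelity $\mathbf{F}$ extending the classical one (e.g., the average pairwise $z$-fidelity or the log-Euclidean fidelity), uses the sandwich $\underline{\mathbf{F}}\leq\mathbf{F}$, and then appeals to the orthogonality property of that quantum extension to conclude $\mathbf{F}(\rho_1,\ldots,\rho_r)=0$. Your argument is more self-contained: you show directly that any feasible classical preparation $(\omega_i)_i$ must itself be pairwise orthogonal, by expanding $0=\Tr[\rho_i\rho_j]$ and observing that the diagonal contributions $q(z|i)q(z|j)\Tr[\sigma_z^2]$ force $q(z|i)q(z|j)=0$. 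This avoids any appeal to a quantum extension and uses only the one-sided orthogonality of the \emph{classical} $\mathbf{F}$ (which holds for the average pairwise, Matusita, and $k$-wise fidelities alike). The paper's route is shorter once those quantum extensions are in hand, but yours is cleaner in isolation and would work even in a setting where no well-behaved quantum extension had yet been constructed.
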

\begin{proof}
    See \cref{proof:properties_minimal}.
\end{proof}

\begin{remark}[Lack of uniform continuity]
The minimal extension of the multivariate fidelity does not satisfy uniform continuity in general. To see this, consider the following counterexample: Fix $r=2$  and choose 
\begin{equation}
    \rho_1 = \rho_2= \sigma_1= |0 \rangle\!\langle 0|  \quad \textnormal{and}  \quad \sigma_2= |\psi\rangle\!\langle \psi|,
\end{equation}
where $|\psi\rangle \coloneqq \sqrt{1-\varepsilon} |0 \rangle  + \sqrt{\varepsilon} |1\rangle$ for $\varepsilon>0$.
Then, we obtain the following by using the properties of Uhlmann fidelity from \cref{prop:properties_bivariate_F}:
\begin{align}
    F(\rho_1,\sigma_1) & =1 ,\\
    F(\rho_2, \sigma_2) &= \sqrt{1-\varepsilon} ,\\ 
   \underline{ \mathbf{F}}(\rho_1, \rho_2) &=1 ,\\
   \underline{\mathbf{F}}(\sigma_1, \sigma_2) &= 0,
\end{align}
where the third equality follows by the faithfulness of the preparation fidelity and the last equality by the fact that the minimal extension of the fidelity between two pure states  evaluates to zero~\cite[Eq.~(23)]{matsumoto2010reverse} (see also \cite[Lemma~3.17 and Eq.~(3.59)]{cree2020fidelity}).
With the above evaluations we see that  $\frac{1}{2} \left( F(\rho_1,\sigma_1) + F(\rho_2,\sigma_2) \right) \geq 1- \varepsilon$ and yet 
\begin{equation}
    | \underline{\mathbf{F}}(\rho_1, \rho_2) - \underline{\mathbf{F}}(\sigma_1, \sigma_2) |=1,
\end{equation}
implying that the preparation multivariate fidelity does not satisfy uniform continuity in general. 
\end{remark}

\subsection{Multivariate log-Euclidean fidelity} 

\label{subS:loog_eucledian_F}

In this section, we recall the definition of multivariate log-Euclidean divergences~\cite{mosonyi2022geometric,mnw2023}, we show how one possible quantum Matusita fidelity is a special case, and we give operational interpretations of them in terms of quantum hypothesis testing with an arbitrarily varying null hypothesis, specifically making use of the main result of~\cite{Notzel_2014}. 

\begin{definition}[Multivariate log-Euclidean divergence]
\label{def:multi-var-log-euclid}
    Let $\rho_{1},\ldots,\rho_{r}$ be positive definite states, and let $s=\left(  s_{1},\ldots,s_{r}\right)  $ be a probability distribution.
    The multivariate
log-Euclidean divergence is defined as
\begin{equation}
H_{s}(\rho_{1},\ldots,\rho_{r})\coloneqq -\ln\operatorname{Tr}\!\left[  \exp\!\left(
\sum_{i=1}^{r}{s_{i}}\ln\rho_{i}\right)  \right]  .
\label{eq:multi-log-euc-PD-states}
\end{equation}
For general states, the multivariate log-Euclidean divergence is defined as
\begin{equation}
    H_{s}(\rho_{1},\ldots,\rho_{r}) \coloneqq \sup_{\varepsilon > 0 } H_{s}(\rho_{1}(\varepsilon),\ldots,\rho_{r}(\varepsilon)) = \lim_{\varepsilon \to 0^+ } H_{s}(\rho_{1}(\varepsilon),\ldots,\rho_{r}(\varepsilon)),
    \label{eq:multi-log-euc-extend-all-states}
\end{equation}
where $\rho_i(\varepsilon) \coloneqq \rho_i + \varepsilon I$ for all $i\in[r]$ and the rightmost equality follows from operator monotonicity of the logarithm and monotonicity of $(\cdot)\to \operatorname{Tr}[\exp(\cdot)]$. The argument of the trace in~\eqref{eq:multi-log-euc-PD-states} is known as the log-Euclidean Frechet mean \cite[Section~3.4]{arsigny2007geometric}.
\end{definition}

We show next that multivariate log-Euclidean divergence satisfies additivity. 
\begin{proposition}[Additivity of log-Euclidean divergence]\label{prop:additivity_log_Euclidean_divergence}
Let $s=\left(  s_{1},\ldots,s_{r}\right)  $ be a probability distribution,
and let $\rho_{1},\ldots,\rho_{r}$ and $\sigma_{1},\ldots,\sigma_{r}$ be states.
The multivariate log-Euclidean divergence is additive. That is,
\begin{equation}
H_{s}\!\left(  \rho_{1}\otimes\sigma_{1},\ldots,\rho_{r}\otimes\sigma
_{r}\right)  =H_{s}\!\left(  \rho_{1},\ldots,\rho_{r}\right)  +H_{s}\!\left(
\sigma_{1},\ldots,\sigma_{r}\right).
\end{equation}
\end{proposition}
\begin{proof}
    See~\cref{proof:additivity_log_eucli_divergence}.
\end{proof}

For simplicity, in most of what follows we focus on positive definite states and note that all of the statements can be extended as in~\eqref{eq:multi-log-euc-extend-all-states}.

Now we define the multivariate log-Euclidean fidelity, which extends~\eqref{eq:Matusita_fid_commuting} to the setting of general quantum states.

\begin{definition}[Multivariate log-Euclidean fidelity] \label{def:multi_log_eucli_fidelity}
Let $(\rho_i)_{i=1}^r$ be a tuple of states. We define the multivariate log-Euclidean fidelity for positive definite states as
    \begin{equation}
F^{\flat}_r(\rho_{1},\ldots,\rho_{r})\coloneqq \operatorname{Tr}\!\left[  \exp\!\left(
\frac{1}{r}\sum_{i=1}^{r}\ln\rho_{i}\right)  \right]  \label{eq:mult_log-Euc_fid_definition},
\end{equation}
and for the general case as
\begin{equation}
    F^{\flat}_r(\rho_{1},\ldots,\rho_{r})\coloneqq \inf_{\varepsilon >0 } F^{\flat}_r(\rho_{1}(\varepsilon),\ldots,\rho_{r}(\varepsilon)) = \lim_{\varepsilon \to 0^+ } F^{\flat}_r(\rho_{1}(\varepsilon),\ldots,\rho_{r}(\varepsilon)),
\end{equation}
where $\rho_{i}(\varepsilon)\coloneqq \rho_i + \varepsilon I$.
\end{definition}

Note that \cref{def:multi_log_eucli_fidelity} reduces to~\eqref{eq:log-euc-fid-limit} for the bivariate setting.

\begin{remark} \label{rem:special_cases_log_Euclidean}
   When $s$ is uniform (i.e., $s_i=1/r$ for all $i \in [r]$),  the multivariate log-Euclidean divergence is equal to the negative logarithm of the multivariate log-Euclidean fidelity in \cref{def:multi_log_eucli_fidelity}:
\begin{equation} \label{eq:quantum_Matusita_fidelity_log_Eucl}
-\ln \! \left(F^{\flat}_r(\rho_{1},\ldots,\rho_{r}) \right) = -\ln\operatorname{Tr}\!\left[  \exp\!\left(  \sum_{i=1}^{r}\frac{1}{r}\ln\rho_{i}
\right)  \right].
\end{equation}
For the special case of commuting states, the multivariate log-Euclidean divergence reduces to the negative logarithm of the Hellinger transform from~\eqref{eq:hellinger-transform-def}:
\begin{equation}
H_{s}(\rho_{1},\ldots,\rho_{r})= -\ln\operatorname{Tr}\!\left[  \prod
\limits_{i=1}^{r}\rho_{i}^{s_{i}}\right].  
\end{equation}
\end{remark}

\subsubsection{Connection to oveloh information}
The generalized Holevo information of a tuple of states $(
\rho_{i})  _{i}$ with uniform prior distribution is defined as
\begin{equation}\label{eq:generalized_Holevo}
\inf_{\sigma_{A} \in \mathscr{D}}\boldsymbol{D}(\rho_{XA}\Vert\rho_{X}\otimes\sigma_{A}),
\end{equation}
where $\boldsymbol{D}$ is a generalized divergence that satisfies data processing under quantum channels~\cite{SW12,leditzky2018approaches},
\begin{equation} \label{eq:rho_XB}
\rho_{XA}\coloneqq \frac{1}{r}\sum_{i=1}^{r}|i\rangle\!\langle i|_{X}\otimes\rho_{i},
\end{equation}
{and $\rho_X \coloneqq \Tr_A[\rho_{XA}]$.}

Fix $\alpha \in (0,1) \cup (1, \infty)$. The Petz--R\'enyi relative entropy of a state $\rho$ and a PSD operator~$\sigma$ is defined as
\cite{P85,P86}
\begin{equation}
D_\alpha(\rho\Vert \sigma) \coloneqq
\begin{cases}\frac{1}{\alpha-1} \ln\Tr[ \rho^\alpha \sigma^{1- \alpha}], & \mbox{if } \alpha \in (0,1) \lor
(\alpha > 1 \land \supp(\rho)\subseteq\supp(\sigma))\\
+\infty, & \mbox{otherwise}
\end{cases}\label{eq:petz renyi}
\end{equation}
It is a generalized divergence for $\alpha \in [0,1) \cup (1,2]$~\cite{P86}.
Taking the limit $\alpha\to 1$ gives the quantum relative entropy defined as ~\cite{umegaki1962conditional}
\begin{equation}
\label{eq:q-rel-entr}
    D(\rho\| \sigma) \coloneqq 
        \begin{cases}
            \Tr[ \rho ( \log \rho -\log \sigma)], & \operatorname{supp}(\rho) \subseteq 
                \operatorname{supp}(\sigma), \\
            +\infty, & \text{otherwise}.
        \end{cases}
\end{equation}

In the spirit of the definition of \textit{lautum information} (reverse of mutual information)~\cite{Lautum}, we define the \textit{oveloh information} (reverse of Holevo information): 
\begin{equation}
\mathscr{O}(X;A)_\rho \coloneqq 
\inf_{\sigma\in\mathscr{D}}D(\rho_{X}\otimes\sigma\Vert\rho_{XA}) =\inf_{\sigma\in\mathscr{D}}\frac{1}{r}\sum_{i=1}^{r}D(\sigma\Vert\rho
_{i}),
\label{eq:oveloh-info}
\end{equation}
where $\rho_{XA}$ is given in~\eqref{eq:rho_XB}.
Also note that oveloh information is a special case of log-Euclidean divergence with the choice of $s$ being a uniform distribution (see~\eqref{eq:hellinger-to-optimized-rel-ent}). 
To this end, 
$\exp\left[- \mathscr{O}(X;A)_\rho \right]$ is precisely the  multivariate log-Euclidean fidelity in \cref{def:multi_log_eucli_fidelity}: 
\begin{equation} \label{eq:quantum_Matusita_oveloh}
    F^{\flat}_r (\rho_1, \ldots, \rho_r)= \exp\left[- \mathscr{O}(X;A)_\rho \right] =\exp\!\left(-\inf_{\sigma\in\mathscr{D}}\frac{1}{r}\sum_{i=1}^{r}D(\sigma\Vert\rho
_{i})\right).
\end{equation}

With this connection, we establish a relationship between Holevo and oveloh information as given below.

\begin{corollary}[Holevo and oveloh information] \label{Cor:Holevo_and_Oveloh}
    Let $(\rho_i)_i$ be a tuple of 
    states. Then the following inequality holds: 
    \begin{equation}
        \exp\!\left[ - \inf_{\sigma \in\mathscr{D}} {D}_{\frac{1}{2}}(\rho_{XA}\Vert\rho
_{X} \otimes \sigma) \right] \geq \frac{1}{r} + \frac{r-1}{r} \exp\!\left[  -\mathscr{O}(X;A)_\rho\right],
    \end{equation}
    where $\rho_{XA}$ is given in~\eqref{eq:rho_XB} and 
    ${D}_{1/2}(\cdot\| \cdot )$ is the Petz--R\'enyi relative entropy of order $1/2$ in~\eqref{eq:petz renyi}. 
\end{corollary}
\begin{proof}

By using~\eqref{eq:quantum_Matusita_oveloh} together with~\eqref{eq:oveloh-info}, we have 
\begin{align}
   \exp\!\left[  -\inf_{\sigma\in\mathscr{D}}D(\rho_{X}\otimes\sigma\Vert\rho
_{XA})\right]   
& = F^{\flat}_r(\rho_1, \ldots, \rho_r) \\ 
& \leq \frac{2}{r(r-1)} \sum_{i<j} F_\flat(\rho_i, \rho_j), 
\end{align}
where the inequality follows from \cref{prop:upper_multi_log_eucli}.
Then, by recalling that $z$-fidelities are antimonotone  for $z \geq 1/2$ together with~\eqref{eq:log-euc-fid-limit}, we have $ F_\flat(\rho,\sigma) \leq F_H(\rho,\sigma)$ leading to 
\begin{equation}
    \exp\!\left[  -\inf_{\sigma\in\mathscr{D}}D(\rho_{X}\otimes\sigma\Vert\rho
_{XA})\right]    \leq \frac{2}{r(r-1)} \sum_{i<j} F_H(\rho_i, \rho_j).\label{eq:upper_bound_flat_hol}
\end{equation}
Also, using~\eqref{eq:relation_to_fidelity_d_1/2}, we obtain the following: 
\begin{equation}
 \exp\!\left[  -\inf_{\sigma\in\mathscr{D}}D_{\frac{1}{2}}(\rho_{XA}\Vert\rho_X \otimes \sigma
)\right]   = \frac{1}{r} +\frac{r-1}{r} \left[ \frac{2}{r(r-1)} \sum_{i<j} F_H(\rho_i, \rho_j)\right]. 
\end{equation}
Combining~\eqref{eq:upper_bound_flat_hol} and the previous equality 
completes the proof. 
\end{proof}

\medskip

Note that the inequality stated above is tighter than that which one would obtain by using the facts that 
\begin{align}
\inf_{\sigma \in\mathscr{D}} {D}_{\frac{1}{2}}(\rho_{XA}\Vert\rho
_{X} \otimes \sigma) &= \inf_{\sigma \in\mathscr{D}} {D}_{\frac{1}{2}}(\rho
_{X} \otimes \sigma\Vert \rho_{XA}) , \\
\inf_{\sigma \in\mathscr{D}} {D}_{\frac{1}{2}}(\rho
_{X} \otimes \sigma\Vert \rho_{XA}) &\leq \inf_{\sigma \in\mathscr{D}} {D}(\rho
_{X} \otimes \sigma\Vert \rho_{XA}).
\end{align}

\subsubsection{Properties of log-Euclidean fidelity}

\begin{theorem}[Properties of multivariate log-Euclidean fidelity] \label{thm:properties_quantum_Matusita_multi}
     The multivariate log-Euclidean fidelity in \cref{def:multi_log_eucli_fidelity} satisfies several desirable properties of a multivariate fidelity listed in \cref{def:properties_multi_fidelity}, including reduction to classical Matusita multivariate fidelity in~\eqref{eq:Matusita_fid_commuting}, data processing, symmetry, faithfulness, direct-sum property, and continuity.

     If all states are orthogonal to each other, i.e., $\rho_i \rho_j=0$ for $i \neq j, \ i,j  \in [r]$, then $ F^{\flat}_r (\rho_1, \ldots, \rho_r)=0$ (weak orthogonality). However, $ F^{\flat}_r (\rho_1, \ldots, \rho_r)=0$ does not imply that the states are orthogonal to each other.
\end{theorem}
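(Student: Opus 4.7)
The plan is to establish each listed property in turn, leveraging the identity $-\ln F^{\flat}_r(\rho_1,\ldots,\rho_r) = H_{s}(\rho_1,\ldots,\rho_r)$ with $s = (1/r,\ldots,1/r)$ whenever facts about the multivariate log-Euclidean divergence are more convenient. Throughout, I will first work with positive definite states, where the matrix logarithm is well defined, and then extend to general states by the limit in~\eqref{eq:multi-log-euc-extend-all-states} and a monotonicity argument in $\varepsilon$.

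First I would verify reduction, symmetry, and the direct-sum property, all of which are elementary. For commuting $\rho_1,\ldots,\rho_r$, the operator $\frac{1}{r}\sum_i \ln \rho_i$ is diagonal in their common eigenbasis with eigenvalues $\frac{1}{r}\sum_i \ln \rho_i(x)$, so that $\Tr[\exp(\cdot)] = \sum_x \prod_i \rho_i(x)^{1/r}$, matching~\eqref{eq:Matusita_fid_commuting}. Symmetry is immediate since $\sum_i \ln \rho_i$ is invariant under permutations of the summands. For the direct-sum property, the matrix logarithm preserves the block structure of a classical-quantum state: $\ln\!\big(\sum_x p(x)|x\rangle\!\langle x|\otimes \rho_i^x\big) = \sum_x |x\rangle\!\langle x|\otimes\big(\ln p(x)\, I + \ln \rho_i^x\big)$. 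Averaging over $i$, exponentiating block-by-block, and taking the trace reproduces~\eqref{eq:CQequality_multi_general}.

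Next I would address data processing, faithfulness, and weak orthogonality. Data processing for $F^\flat_r$ is equivalent to data processing for $H_{s}$ with uniform $s$, which is established in~\cite{mosonyi2022geometric,mnw2023}. For faithfulness, the bound $F^\flat_r \leq 1$ is equivalent to $H_s \geq 0$; one route is to use a multivariate Golden--Thompson--type trace inequality together with H\"older's inequality to obtain $\Tr[\exp(\sum_i s_i \ln \rho_i)] \leq \prod_i (\Tr[\rho_i])^{s_i} = 1$. Equality at the identical-states point is direct from $\exp(\ln\rho)=\rho$; the converse uses the equality condition in the underlying trace inequality together with strict operator concavity of $\ln$. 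For weak orthogonality, pick a basis adapted to the mutually orthogonal supports: for each basis vector, the eigenvalue of $\ln \rho_i(\varepsilon)$ equals $\ln\varepsilon$ whenever that vector is orthogonal to $\operatorname{supp}(\rho_i)$, so every diagonal entry of $\frac{1}{r}\sum_i \ln \rho_i(\varepsilon)$ picks up at least $\frac{r-1}{r}\ln\varepsilon$; the trace of the exponential is therefore $O(\varepsilon^{(r-1)/r})$ and vanishes as $\varepsilon\to 0^+$. The failure of the converse reduces to the commuting counterexample~\eqref{eq:counterexample-orthogonality-Matusita} via the classical reduction already proved.

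The main obstacle I anticipate is uniform continuity. The difficulty is that the matrix logarithm is not Lipschitz uniformly in a neighborhood of zero, so one cannot simply apply a perturbation bound to $\ln\rho_i$. The plan is to work with the regularized quantities $\rho_i(\varepsilon)$ introduced in \Cref{def:multi_log_eucli_fidelity}, where the matrix logarithm has a Fr\'echet derivative bounded in terms of $1/\varepsilon$, and then combine this with the operator-monotone behavior of $(\cdot)\mapsto \Tr[\exp(\cdot)]$ to bound the deviation of $F^\flat_r$ by a modulus depending only on a distance between the tuples (Bures or trace distance). A separate, potentially cleaner route is to exploit the operational interpretation via quantum hypothesis testing with an arbitrarily varying null hypothesis (\Cref{Cor:log_Euclidean_operational}) and derive the bound from the continuity of the relevant Stein-type error exponent. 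Either way, the technical heart is to obtain a bound that is uniform over the regularization parameter so that the limit $\varepsilon\to 0^+$ preserves the continuity estimate.
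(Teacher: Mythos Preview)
Your treatment of reduction, symmetry, and the direct-sum property matches the paper's essentially verbatim. The substantive differences are in data processing, faithfulness, weak orthogonality, and continuity, and they cut both ways.

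For data processing and faithfulness, the paper takes a cleaner route than yours by leaning on the variational identity~\eqref{eq:quantum_Matusita_oveloh}, namely $F^{\flat}_r(\rho_1,\ldots,\rho_r)=\exp\!\big(-\inf_{\sigma}\frac{1}{r}\sum_i D(\sigma\Vert\rho_i)\big)$. Data processing then falls out of data processing for $D$ (apply the channel inside the infimum and enlarge the feasible set), and faithfulness is immediate: nonnegativity of $D$ gives $F^{\flat}_r\leq 1$, and $F^{\flat}_r=1$ forces the infimum to vanish, hence $D(\sigma\Vert\rho_i)=0$ for some $\sigma$ and all $i$, so $\rho_i=\sigma$ for all $i$. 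Your proposed route via a multivariate Golden--Thompson inequality plus H\"older is more delicate than it looks (the naive $r$-fold Golden--Thompson bound fails for $r\geq 3$), and the equality analysis you gesture at is vague; the variational formula sidesteps all of this.

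Your weak-orthogonality argument via the block structure of $\ln\rho_i(\varepsilon)$ is correct and arguably more elementary than the paper's, which instead observes that mutual orthogonality makes the map $\rho_i\mapsto|i\rangle\!\langle i|$ reversible, applies data processing to reduce to $F^{\flat}_r(|1\rangle\!\langle 1|,\ldots,|r\rangle\!\langle r|)$, and then shows no $\sigma$ can have support inside every $\operatorname{supp}(|i\rangle\!\langle i|)$.

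On continuity you are overshooting what the paper actually proves. Despite the word ``uniform'' in the theorem statement, the paper's proof only establishes continuity at tuples of \emph{invertible} states, by the one-line observation that $\exp$ and $\ln$ are continuous on positive definite operators. Your Route~1 (regularize and use a $1/\varepsilon$ Lipschitz bound on $\ln$) cannot yield a bound uniform in $\varepsilon$, so the limit $\varepsilon\to 0^+$ will not survive; your Route~2 via the hypothesis-testing interpretation is not developed. Neither is needed to match the paper, and you should not expect a uniform continuity bound of the type proved for $F_{\operatorname{SDP}}$ or $F_S$ to hold here without additional work.
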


\begin{proof}
    See \cref{proof:properties_quantum_Matusita_multi} for a proof of the properties.
    As a counterexample for one direction of the orthogonality property mentioned above, consider the diagonal states formed by the probability vectors given in~\eqref{eq:counterexample-orthogonality-Matusita}. 
\end{proof}
\medskip 

Additionally, the multivariate log-Euclidean fidelity is multiplicative, as a consequence of \cref{prop:additivity_log_Euclidean_divergence}. 
\begin{corollary}[Multiplicativity of log-Euclidean fidelity]
    Let $\rho_1,\ldots,\rho_r$ and $\sigma_1, \ldots, \sigma_r$ be states. The multivariate log-Euclidean fidelity satisfies multiplicativity; i.e., 
    \begin{equation}
         F^{\flat}_r(\rho_1 \otimes \sigma_1, \ldots, \rho_r \otimes \sigma_r) = F^{\flat}_r(\rho_1, \ldots, \rho_r)  F^{\flat}_r(\sigma_1, \ldots, \sigma_r).
    \end{equation}
\end{corollary}

\begin{proof}
The proof follows by substituting  $s$ to be uniform (i.e., $s_i=1/r$ for all $i \in [r]$) in~\cref{prop:additivity_log_Euclidean_divergence} together with~\eqref{eq:quantum_Matusita_fidelity_log_Eucl}.   
\end{proof}

Next, we establish a quantum generalization of the inequality from \cref{prop:r_root_to_commuting_fidelity}.

\begin{proposition}[Multivariate log-Euclidean fidelity upper bound] \label{prop:upper_multi_log_eucli}
Let $(\rho_i)_{i=1}^r$ be a tuple of states. Then, we have
\begin{equation}
    F^{\flat}_r(\rho_1, \ldots, \rho_r) \leq \frac{2}{r(r-1)} \sum_{i<j} F_{\flat}(\rho_i,\rho_j).
\end{equation}   
\end{proposition}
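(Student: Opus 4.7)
The plan is to lift the classical arithmetic-geometric mean argument from \Cref{prop:r_root_to_commuting_fidelity} to the quantum setting, replacing AM-GM on eigenvalues by convexity of the map $H \mapsto \operatorname{Tr}[\exp(H)]$ on Hermitian matrices (a standard fact; see, e.g., Peierls' inequality). I first treat the positive definite case and then pass to the general case by the $\varepsilon \to 0^+$ limiting prescription in \Cref{def:multi_log_eucli_fidelity}.

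Assume $\rho_1, \ldots, \rho_r$ are positive definite and set $A_i \coloneqq \ln \rho_i$ for $i \in [r]$. The crucial algebraic identity is that the ``log-Euclidean Frechet mean'' of all $r$ states can be written as a convex combination of the bivariate log-Euclidean Frechet means over the $\binom{r}{2}$ pairs. Indeed, each index appears in exactly $r-1$ of the pairs $\{i,j\}$ with $i<j$, so
\begin{equation}
\frac{2}{r(r-1)} \sum_{i<j} \frac{A_i + A_j}{2} \;=\; \frac{1}{r(r-1)} \sum_{i<j} (A_i + A_j) \;=\; \frac{1}{r} \sum_{i=1}^{r} A_i,
\end{equation}
and the coefficients $\tfrac{2}{r(r-1)}$ over the $\binom{r}{2}$ pairs sum to one.

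Now apply the convexity of the trace-exponential functional $H \mapsto \operatorname{Tr}[\exp(H)]$ on the space of Hermitian operators. Together with the identity above, this yields
\begin{align}
F^{\flat}_r(\rho_1, \ldots, \rho_r)
&= \operatorname{Tr}\!\left[\exp\!\left(\frac{2}{r(r-1)} \sum_{i<j} \frac{A_i + A_j}{2}\right)\right] \\
&\leq \frac{2}{r(r-1)} \sum_{i<j} \operatorname{Tr}\!\left[\exp\!\left(\frac{A_i + A_j}{2}\right)\right] \\
&= \frac{2}{r(r-1)} \sum_{i<j} F_{\flat}(\rho_i, \rho_j),
\end{align}
where the final equality uses the positive definiteness of the states together with the closed-form expression for the bivariate log-Euclidean fidelity from~\eqref{eq:log-euc-fid-limit} (without the $\varepsilon$-regularization, since the states are invertible). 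This establishes the inequality for positive definite states.

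For the general case, replace each $\rho_i$ by $\rho_i(\varepsilon) = \rho_i + \varepsilon I$, apply the positive definite bound just proved to $\rho_1(\varepsilon), \ldots, \rho_r(\varepsilon)$, and take $\varepsilon \to 0^+$. The left-hand side converges to $F^{\flat}_r(\rho_1, \ldots, \rho_r)$ by \Cref{def:multi_log_eucli_fidelity}, while each bivariate summand on the right converges to $F_{\flat}(\rho_i, \rho_j)$ by~\eqref{eq:log-euc-fid-limit}, giving the claimed inequality in full generality. There is no real ``hard step'' here: once one spots that the average of the bivariate Frechet means equals the full Frechet mean, convexity of $\operatorname{Tr}[\exp(\cdot)]$ does all the work.
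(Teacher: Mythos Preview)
Your proof is correct and takes a genuinely different route from the paper. The paper works through the variational representation~\eqref{eq:quantum_Matusita_oveloh}, writing each log-Euclidean fidelity as $\exp\bigl(-\inf_{\omega}\frac{1}{k}\sum D(\omega\Vert\rho_{i})\bigr)$; it then applies convexity of the \emph{scalar} exponential to pull the average of the bivariate terms inside, followed by the inequality $\inf_{\omega}\sum(\cdot)\geq\sum\inf_{\omega}(\cdot)$ and a counting argument to collapse the double sum. You instead stay with the closed-form trace-exponential, observe the purely algebraic identity $\frac{1}{r}\sum_i A_i=\frac{2}{r(r-1)}\sum_{i<j}\frac{A_i+A_j}{2}$, and invoke convexity of the \emph{operator} functional $H\mapsto\operatorname{Tr}[\exp(H)]$ on Hermitians. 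Your argument is more direct---it avoids the relative-entropy detour and the infimum swap entirely---while the paper's approach has the advantage that it transparently generalizes to the $k$-wise ordering in \Cref{prop:avg-k-wise-log-Euc-ordered}, where the variational form makes the combinatorics uniform across all~$k$. Both limiting arguments for non-invertible states are the same.
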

\begin{proof}
For positive definite states $\rho_{1}$, \ldots, $\rho_{r}$, consider that
\begin{align}
& \frac{2}{r\left(  r-1\right)  }\sum_{i<j}F_{\flat}(\rho_{i},\rho
_{j})\nonumber\\
& =\frac{2}{r\left(  r-1\right)  }\sum_{i<j}\exp\!\left(  -\inf_{\omega
\in\mathscr{D}}\left[  \frac{1}{2}D(\omega\Vert\rho_{i})+\frac{1}{2}
D(\omega\Vert\rho_{j})\right]  \right)  \\
& \geq\exp\!\left(  -\frac{2}{r\left(  r-1\right)  }\sum_{i<j}\inf_{\omega
\in\mathscr{D}}\left[  \frac{1}{2}D(\omega\Vert\rho_{i})+\frac{1}{2}
D(\omega\Vert\rho_{j})\right]  \right)  \\
& \geq\exp\!\left(  -\inf_{\omega\in\mathscr{D}}\frac{2}{r\left(  r-1\right)
}\sum_{i<j}\left[  \frac{1}{2}D(\omega\Vert\rho_{i})+\frac{1}{2}D(\omega
\Vert\rho_{j})\right]  \right)  \\
& =\exp\!\left(  -\inf_{\omega\in\mathscr{D}}\frac{1}{r}\sum_{i=1}^{r}
D(\omega\Vert\rho_{i})\right)  \\
& =F^{\flat}_r(\rho_{1},\ldots,\rho_{r}),
\end{align}
where the first equality follows from~\eqref{eq:quantum_Matusita_oveloh}; the first inequality  from convexity of the exponential
function; the second inequality follows because
\begin{multline}
\inf_{\omega\in\mathscr{D}}\frac{2}{r\left(  r-1\right)  }\sum_{i<j}\left[
\frac{1}{2}D(\omega\Vert\rho_{i})+\frac{1}{2}D(\omega\Vert\rho_{j})\right]  \\ 
\geq\frac{2}{r\left(  r-1\right)  }\sum_{i<j}\inf_{\omega\in\mathscr{D}
}\left[  \frac{1}{2}D(\omega\Vert\rho_{i})+\frac{1}{2}D(\omega\Vert\rho
_{j})\right];
\end{multline}
the penultimate equality follows from a counting argument, and the final equality
follows from~\eqref{eq:quantum_Matusita_oveloh}. The general case follows by a limiting argument.
\end{proof}

\subsubsection{Operational interpretation of multivariate log-Euclidean divergences}

\label{sec:op-int-mult-log-euc-fid}

The following formula is known (see~\cite{mosonyi2022geometric}, and \cite[Appendix~J]{mnw2023}, as well as \cite[Eq.~(47)]{bhatia2019matrix} for a related observation):
\begin{equation}
H_{s}(\rho_{1},\ldots,\rho_{r})=\inf_{\omega\in\mathscr{D}}\sum_{i=1}^{r}
s_{i}D(\omega\Vert\rho_{i})=\inf_{\omega\in\mathscr{D}}D(\pi(s)\otimes
\omega\Vert\rho(s)),\label{eq:hellinger-to-optimized-rel-ent}
\end{equation}
where $\mathscr{D}$ is the set of density matrices and the quantum relative entropy is given in~\eqref{eq:q-rel-entr}.
The second equality
follows from the direct-sum property of quantum relative entropy \cite[Eq.~(7.2.27)]{khatri2020principles}, along with
the following definitions:
\begin{equation}
\pi(s)   \coloneqq \sum_{i=1}^{r}s_{i}|i\rangle\!\langle i|,\qquad 
\rho(s)   \coloneqq \sum_{i=1}^{r}s_{i}|i\rangle\!\langle i|\otimes\rho_{i}.
\end{equation}

Now we present the hypothesis testing setting and result that provides an operational interpretation of the multivariate log-Euclidean divergence. The setting was considered in a general way in~\cite{Notzel_2014} and is  known as quantum hypothesis
testing with arbitrarily varying null hypothesis and simple alternative hypothesis. 
The null hypothesis for our case is that a state  is chosen adversarially from the following set:
\begin{equation}
\Omega^{(n)}_s\coloneqq \left\{  \bigotimes\limits_{j=1}^{n}\left(  \pi(s)\otimes
\omega_{j}\right)  :\omega_{j}\in\mathscr{D\ }\, \, \forall j\in\left[  n\right]
\right\}  .
\end{equation}
This set is an example of what is called an arbitrarily varying source in~\cite{Notzel_2014}.
The alternative hypothesis is that the state $\rho(s)^{\otimes n}$ is chosen.
The receiver of the unknown state is allowed to perform a measurement $\left\{  M^{(n)},I^{\otimes
n}-M^{(n)}\right\}  $ acting on all $n$ systems in order to figure out which
is the case, such that $0\leq M^{(n)}\leq I^{\otimes n}$. The outcome
$M^{(n)}$ is associated with choosing the null hypothesis, and the outcome
$I^{\otimes n}-M^{(n)}$ is associated with choosing the alternative
hypothesis. Formally, the type~I and type~II error probabilities are
respectively defined in this setting as
\begin{align}
\alpha(M^{(n)})  & \coloneqq \sup_{\sigma^{(n)}\in\Omega^{(n)}_s}\operatorname{Tr}
\!\left[  \left(  I^{\otimes n}-M^{(n)}\right)  \sigma^{(n)}\right]  ,\\
\beta(M^{(n)})  & \coloneqq \operatorname{Tr}[M^{(n)}\rho(s)^{\otimes n}].
\end{align}
The fact that the type~I\ error probability is maximized over arbitrary
$\sigma^{(n)}\in\Omega^{(n)}_s$ is what makes the
hypothesis testing scenario adversarial, as the state $\sigma^{(n)}$ is
selected from $\Omega^{(n)}_s$ in such a way, for a given measurement operator
$M^{(n)}$ {(i.e., $0\leq M^{(n)}\leq I^{\otimes n}$)}, to make the type~I error probability as large as possible. For
$\varepsilon\in\left[  0,1\right]  $ and $n \in \mathbb{N}$, the  optimal non-asymptotic type~II error exponent is
defined as
\begin{equation}
\frac{1}{n}D_{H}^{\varepsilon}(\Omega^{(n)}_s\Vert\rho(s)^{\otimes n}
)\coloneqq -\frac{1}{n}\ln\!\left(  \inf_{M^{(n)}\in\mathcal{M}^{(n)}}\left\{
\beta(M^{(n)}):\alpha(M^{(n)})\leq\varepsilon\right\}  \right)  ,
\label{eq:hypo-test-div-mod}
\end{equation}
where $\mathcal{M}^{(n)}$ denotes the set of all measurement operators:
\begin{equation}
\mathcal{M}^{(n)} \coloneqq \left \{ M^{(n)}: 0\leq M^{(n)}\leq I^{\otimes n}\right\}.
\end{equation}

\begin{corollary} \label{Cor:log_Euclidean_operational}
Let $s$ be a probability distribution on $[r]$, and let $\rho_1, \ldots, \rho_r$ be  states. For all $\varepsilon\in\left(  0,1\right)  $, the following equality holds:
\begin{equation}
\lim_{n\rightarrow\infty}\frac{1}{n}D_{H}^{\varepsilon}(\Omega^{(n)}_s\Vert
\rho(s)^{\otimes n})=H_{s}(\rho_{1},\ldots,\rho_{r}),
\end{equation}
where $\frac{1}{n} D_{H}^{\varepsilon}(\Omega^{(n)}_s\Vert
\rho(s)^{\otimes n})$ is defined in~\eqref{eq:hypo-test-div-mod} and the multivariate log-Euclidean divergence in \cref{def:multi-var-log-euclid}.
\end{corollary}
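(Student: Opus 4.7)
The plan is to reduce the statement to the quantum Stein's lemma for arbitrarily varying null hypothesis established in \cite{Notzel_2014}, and then convert the resulting variational formula into the multivariate log-Euclidean divergence by means of the identity~\eqref{eq:hellinger-to-optimized-rel-ent}.

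First, I would observe that the set $\Omega^{(n)}_s$ has exactly the product structure required by the setting of~\cite{Notzel_2014}: namely, $\Omega^{(n)}_s = \{\bigotimes_{i=1}^n \sigma_i : \sigma_i \in \Omega^{(1)}_s\}$, where the single-copy arbitrarily varying source is
\begin{equation}
\Omega^{(1)}_s = \left\{ \pi(s) \otimes \omega : \omega \in \mathscr{D} \right\}.
\end{equation}
Since $\pi(s)$ is a fixed classical state and $\omega$ ranges over the (convex, compact) set of density operators, $\Omega^{(1)}_s$ is itself convex and compact, which is the regularity assumption needed to invoke the asymptotic result.

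Next, I would apply the main theorem of~\cite{Notzel_2014}, which asserts that for any $\varepsilon \in (0,1)$, the optimal type-II error exponent in testing an arbitrarily varying null hypothesis of the above product form against a fixed i.i.d.\ alternative $\rho(s)^{\otimes n}$ equals the worst-case single-copy relative entropy:
\begin{equation}
\lim_{n\to\infty}\frac{1}{n}D_{H}^{\varepsilon}(\Omega^{(n)}_s\Vert\rho(s)^{\otimes n}) \;=\; \inf_{\sigma \in \Omega^{(1)}_s} D(\sigma \Vert \rho(s)) \;=\; \inf_{\omega \in \mathscr{D}} D(\pi(s)\otimes \omega \Vert \rho(s)).
\end{equation}
Finally, combining this with the identity in~\eqref{eq:hellinger-to-optimized-rel-ent}, which states that $\inf_{\omega \in \mathscr{D}} D(\pi(s)\otimes \omega \Vert \rho(s)) = H_s(\rho_1,\ldots,\rho_r)$, yields the claimed equality.

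The main obstacle, and the only non-routine step, is verifying that the hypothesis testing setting here genuinely matches the one analyzed in~\cite{Notzel_2014}: one needs to check that the definitions of type-I and type-II errors, the supremum-over-adversary formulation of $\alpha(M^{(n)})$, and the i.i.d.\ product structure of $\Omega^{(n)}_s$ coincide with the framework therein, and that the strong converse and direct parts of Stein's lemma in that framework hold for all $\varepsilon \in (0,1)$. The extension to states that are not necessarily positive definite poses no additional difficulty, by the limiting definition~\eqref{eq:multi-log-euc-extend-all-states} combined with continuity of $D_H^\varepsilon$ in the alternative.
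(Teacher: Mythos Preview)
Your proposal is correct and follows essentially the same route as the paper: apply the main result of~\cite{Notzel_2014} to obtain the optimal asymptotic exponent as $\inf_{\omega\in\mathscr{D}} D(\pi(s)\otimes\omega\Vert\rho(s))$, and then invoke the identity~\eqref{eq:hellinger-to-optimized-rel-ent} to rewrite this as $H_s(\rho_1,\ldots,\rho_r)$. Your additional remarks on verifying the product structure of $\Omega^{(n)}_s$ and the convexity/compactness of $\Omega^{(1)}_s$ are a useful elaboration, but the core argument is identical.
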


\begin{proof}
    The proof follows by applying \cite[Theorem~1]{Notzel_2014} and the identity in
\eqref{eq:hellinger-to-optimized-rel-ent}.
\end{proof}
\medskip

Thus, we obtain a fundamental operational meaning for $H_{s}(\rho_{1}
,\ldots,\rho_{r})$ in this hypothesis testing scenario, as the optimal asymptotic type~II error exponent. By picking $s$ to be
uniform, we obtain an operational meaning for the negative logarithm of the  multivariate log-Euclidean fidelity in~\eqref{eq:quantum_Matusita_fidelity_log_Eucl}.

\subsection{Average $k$-wise log-Euclidean fidelities} \label{subS:average_k_log_eucli_F}

As a generalization of the ideas presented in \Cref{sec:avg-classical-k-wise-fids} for the classical case, we define the average
$k$-wise log-Euclidean fidelities as follows:

\begin{definition}
[Average $k$-wise log-Euclidean fidelity]\label{def:average_k_log_Eucl}
For $r\in\left\{  3,4,\ldots\right\}
$, let $\rho_{1},\ldots,\rho_{r}$ be quantum states. For $k\in\left\{
2,3,\ldots,r\right\}  $, define the average $k$-wise log-Euclidean fidelity as
follows:
\begin{equation}
F_{k,r}^{\flat}(\rho_{1},\ldots,\rho_{r}):= \binom{r}{k}^{-1}\sum
_{i_{1}<\cdots<i_{k}}F_{k}^{\flat}(\rho_{i_{1}},\ldots,\rho_{i_{k}}).
\end{equation}

\end{definition}

We can establish an ordering relationship between them, generalizing that
found in \Cref{prop:ineqs-avg-k-wise-classical} for the classical case. The proof below is similar to previous ones, combining
elements of the proofs of \Cref{prop:ineqs-avg-k-wise-classical} and \Cref{prop:upper_multi_log_eucli}:

\begin{proposition}[Ordering of average $k$-wise log-Euclidean fidelities]
\label{prop:avg-k-wise-log-Euc-ordered}
For $r\in\left\{  3,4,\ldots\right\}  $, let $\rho_{1},\ldots,\rho_{r}$ be
quantum states. Then the following inequalities hold:
\begin{equation}
F_{2,r}^{\flat}\geq F_{3,r}^{\flat}\geq\cdots\geq F_{r-1,r}^{\flat}\geq
F_{r,r}^{\flat},
\end{equation}
where, for brevity, we have suppressed the dependence of $F_{k,r}^{\flat}$ on
$\rho_{1},\ldots,\rho_{r}$.
\end{proposition}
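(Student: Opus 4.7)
The plan is to combine the variational representation of the log-Euclidean fidelity from~\eqref{eq:quantum_Matusita_oveloh}, namely
\begin{equation}
F_{k}^{\flat}(\rho_{i_{1}},\ldots,\rho_{i_{k}}) = \exp\!\left(-\inf_{\omega \in \mathscr{D}} \frac{1}{k}\sum_{j=1}^{k} D(\omega\Vert \rho_{i_{j}})\right),
\end{equation}
with convexity of $\exp(-\cdot)$ and a double-counting argument on subsets. Concretely, I would first prove the local refinement
\begin{equation}
\frac{1}{k+1}\sum_{i\in S} F_{k}^{\flat}(\{\rho_{j}\}_{j\in S\setminus\{i\}}) \geq F_{k+1}^{\flat}(\{\rho_{j}\}_{j\in S})
\end{equation}
for every $(k+1)$-subset $S\subseteq[r]$, and then average over $S$ to conclude $F_{k,r}^{\flat} \geq F_{k+1,r}^{\flat}$.

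For the local inequality, the route mirrors the proof of \Cref{prop:upper_multi_log_eucli}. First, Jensen's inequality applied to the convex function $x\mapsto e^{-x}$ gives
\begin{equation}
\frac{1}{k+1}\sum_{i\in S} \exp\!\left(-\inf_{\omega} \frac{1}{k}\sum_{j\in S\setminus\{i\}} D(\omega\Vert \rho_{j})\right) \geq \exp\!\left(-\frac{1}{k+1}\sum_{i\in S}\inf_{\omega}\frac{1}{k}\sum_{j\in S\setminus\{i\}} D(\omega\Vert \rho_{j})\right).
\end{equation}
Next, using that a sum of infima is dominated by the infimum of the sum, and counting that each index $j\in S$ lies in $S\setminus\{i\}$ for exactly $k$ choices of $i\in S$, the exponent simplifies:
\begin{equation}
\frac{1}{k+1}\sum_{i\in S}\inf_{\omega}\frac{1}{k}\sum_{j\in S\setminus\{i\}} D(\omega\Vert \rho_{j}) \leq \inf_{\omega} \frac{1}{k(k+1)}\sum_{i\in S}\sum_{j\in S\setminus\{i\}} D(\omega\Vert \rho_{j}) = \inf_{\omega}\frac{1}{k+1}\sum_{j\in S} D(\omega\Vert \rho_{j}),
\end{equation}
and the right-hand side of this is precisely $-\ln F_{k+1}^{\flat}(\{\rho_{j}\}_{j\in S})$.

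To finish, I would sum the local inequality over all $\binom{r}{k+1}$ subsets $S$ of size $k+1$. On the left, for each $k$-subset $T$, the value $F_{k}^{\flat}(T)$ appears as $F_{k}^{\flat}(S\setminus\{i\})$ for exactly $r-k$ pairs $(S,i)$, namely those with $S=T\cup\{i\}$ for some $i\in[r]\setminus T$. Thus
\begin{equation}
\frac{1}{\binom{r}{k+1}}\sum_{|S|=k+1} F_{k+1}^{\flat}(S) \leq \frac{r-k}{(k+1)\binom{r}{k+1}}\sum_{|T|=k} F_{k}^{\flat}(T),
\end{equation}
and the binomial identity $(k+1)\binom{r}{k+1}=(r-k)\binom{r}{k}$ collapses the constant to $1/\binom{r}{k}$, yielding $F_{k+1,r}^{\flat}\leq F_{k,r}^{\flat}$. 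The general (not necessarily positive definite) case follows from the $\varepsilon\to 0^{+}$ limiting prescription in \Cref{def:multi_log_eucli_fidelity}. The only subtle point, and thus the main obstacle to articulate cleanly, is keeping the two interchange-of-infimum-and-summation steps consistent with the right direction of Jensen's inequality; once the combinatorial identity for the index counting is in place, the rest is a book-keeping exercise.
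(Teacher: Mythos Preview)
Your proposal is correct and follows essentially the same route as the paper: the local refinement via the variational formula~\eqref{eq:quantum_Matusita_oveloh}, Jensen's inequality for $e^{-x}$, the swap of the sum of infima for the infimum of the sum, and the subset double-counting are exactly the ingredients the paper uses (with the cosmetic difference that the paper indexes the step as $k-1\to k$ rather than $k\to k+1$). Your explicit mention of the $\varepsilon\to 0^{+}$ limit for non-invertible states is a welcome addition that the paper leaves implicit.
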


\begin{proof}
See \Cref{sec:avg-k-wise-log-Euc-ordered}.
\end{proof}

\subsection{Multivariate geometric fidelities}

\label{sec:multiv-geo-fids}

In this section, we propose multivariate generalizations of the geometric fidelity defined below in~\eqref{eq:geometric_fidelity_def}, by following the same procedures used in generalizing Uhlmann fidelity (c.f., \cref{def:average_pairwise_z_fidelity} and \cref{def:multivar-fid}). We also show that they satisfy several desired properties of a multivariate fidelity, as stated in \cref{def:properties_multi_fidelity}.

Recall the geometric fidelity, which is defined in~\eqref{eq:geometric_fidelity} as 
\begin{equation}
\label{eq:geometric_fidelity_def}
     F_G(\rho,\sigma) \coloneqq \inf_{\varepsilon >0 } \Tr[ \rho(\varepsilon) \# \sigma(\varepsilon)],
\end{equation}
where $A \# B$ denotes the geometric mean of the positive definite operators $A$ and $B$, as defined in~\eqref{eq:geometric-mean}, and we set
$\rho(\varepsilon) \coloneqq \rho + \varepsilon I$
and $\sigma(\varepsilon) \coloneqq \sigma + \varepsilon I$. 
The geometric fidelity (also known as Matsumoto fidelity) satisfies reduction to classical fidelity in~\eqref{eq:commuting_bivariate} for commuting states, symmetry, data processing, faithfulness, and direct-sum property, similar to the Uhlmann and Holevo fidelities. However, it does not satisfy both directions of the orthogonality property. The direction it satisfies is that $F_G(\rho,\sigma) =0$ for a pair of orthogonal states (i.e., satisfying $\rho\sigma = 0$). For proofs of these properties, refer to \cite[Section~8]{matsumoto2010reverse} and \cite[Section~3]{cree2020fidelity}.

We define the average pairwise geometric fidelity as follows: 
\begin{definition}[Average pairwise geometric fidelity] \label{def:avg_geometric}
For quantum states $\rho_1, \ldots, \rho_r$, 
   the average pairwise geometric fidelity is defined as 
   \begin{equation}
F_G(\rho_1, \ldots, \rho_r) \coloneqq \frac{2}{r(r-1)} \sum_{i <j} F_G(\rho_i, \rho_j).
   \end{equation}  
\end{definition}

\begin{proposition}[Properties of average pairwise geometric fidelity]
    The average pairwise geometric fidelity in~\cref{def:avg_geometric} satisfies reduction to average pairwise classical fidelity for commuting states, data processing, symmetry, faithfulness, and the direct-sum property, as stated in \cref{def:properties_multi_fidelity}. 
    In addition, $F_G(\rho_1, \ldots, \rho_r)$ satisfies super-multiplicativity and coarse-graining. 
\end{proposition}

\begin{proof}
    A proof follows from the fact that the bivariate geometric fidelity satisfies the said properties (see~\cite[Section~3]{cree2020fidelity}).

    Super-multiplicativity and coarse-graining follow from similar proof lines as given for \cref{prop:super_multiplicative_average_pairwise} and \cref{prop:coarse_graining_avg_pairwise}, along with the fact that $F_G\!\left(\rho^{\otimes n}, \sigma^{\otimes n}\right) =  F_G(\rho,\sigma)^n$ as stated in  \cite[Proposition~7]{matsumoto2010reverse}.
\end{proof}

The bivariate geometric fidelity has the following SDP formulation, which is given in \cite[Section~6]{matsumoto2014quantum} and \cite[Eq.~(1.11) and Lemma~3.20]{cree2020fidelity}:
\begin{align}
    F_G(\rho,\sigma) & =\sup_{W \in \operatorname{Herm}} \left\{ \Tr[W] 
     :   \begin{bmatrix}
        \rho & W \\
        W & \sigma
        \end{bmatrix} \geq 0\right\}.\\
        & = \frac{1}{2} \inf_{Y,Z,X} \left\{\Tr[Y \rho] + \Tr[Z\sigma] : \begin{bmatrix}
        Y & X^\dag \\
        X & Z
        \end{bmatrix} \geq 0, \, \frac{X+X^\dag}{2} = -I\right\}.
\end{align}
The last equality follows from the substitution $X \to -X$ when compared to \cite[Lemma~3.20]{cree2020fidelity}, as well as the fact that
\begin{equation}
    \begin{bmatrix}
        Y & X^\dag \\
        X & Z
        \end{bmatrix} \geq 0 \qquad \Longleftrightarrow \qquad \begin{bmatrix}
        Y & -X^\dag \\
       - X & Z
        \end{bmatrix} \geq 0.
\end{equation}
Indeed a multivariate generalization of the SDP above, which is similar to the multivariate SDP fidelity in~\cref{def:multivar-fid}, is as follows:

\begin{definition}[Multivariate SDP geometric fidelity] \label{def:SDP_geometric_multi}
For quantum states $\rho_1, \ldots, \rho_r$, 
  the multivariate SDP geometric fidelity is defined as 
    \begin{multline} F_{\operatorname{SDP}}^G(\rho_1, \ldots, \rho_r)  \coloneqq  \\ \frac{2}{r (r-1)}\sup_{\substack{(X_{ij})_{i \neq j} \textnormal{ s.t.} \\ X_{ji} =X_{ij} \in \operatorname{Herm}} } \left \{  \sum_{i <j}  \Tr[X_{ij}]: \sum_{i=1}^r |i\rangle\!\langle i| \otimes \rho_i + \sum_{i \neq j} |i\rangle\!\langle j| \otimes X_{ij} \geq 0 \right \}.
    \label{eq:SDP-multi-geometric-def}
\end{multline}
\end{definition}

\begin{proposition}[Dual of multivariate SDP geometric fidelity] \label{prop:dual_SDP_multi_geo}
The multivariate SDP geometric fidelity in \cref{def:SDP_geometric_multi} has the following dual formulation: 
      \begin{multline}
F^G_{\operatorname{SDP}}(\rho_1, \ldots, \rho_r)= \\
         \frac{1}{r (r-1)}\inf_{Y_{ij}}\left\{  \sum_{i}\operatorname{Tr}[Y_{ii}\rho_{i}]:\sum
_{i,j}|i\rangle\!\langle j|\otimes Y_{ij}\geq0,\ \frac{Y_{ij}+Y_{ji}^{\dag
}}{2}=-I_d\ \forall i\neq j\right\}.         \label{eq:multivariate_fid_sdp_dual_geo}
      \end{multline}
\end{proposition}

\begin{proof}
    See \Cref{Sec:Proof_dual_SDP_geo}.
\end{proof}

The following inequalities hold:
\begin{equation} F_{\operatorname{SDP}}^G(\rho_1, \ldots, \rho_r) \leq F_G(\rho_1, \ldots, \rho_r) \leq F_H(\rho_1, \ldots, \rho_r).
     \label{eq:ineq-FGSDP-FG-FH}
\end{equation}
The first inequality follows from a similar proof strategy used to prove the third inequality stated  in~\cref{thm:SDP_fidelity_pairwise_upper_and_lower} (i.e., $F_{\operatorname{SDP}}(\rho_1, \ldots, \rho_r) \leq F_U(\rho_1, \ldots, \rho_r)$). The second inequality follows from definitions of the average pairwise fidelities and because $F_G(\rho,\sigma) \leq F_H(\rho,\sigma)$ for all states $\rho$ and $\sigma$, as recalled in \eqref{eq:ordering-prep-meas-fids} and \eqref{eq:geometric_fidelity}. \cref{thm:SDP_fidelity_pairwise_upper_and_lower} then allows us to conclude that 
\begin{equation}
F_{\operatorname{SDP}}^G\leq F_G \leq F_H \leq F_S \leq F_{\operatorname{SDP}} \leq F_U.
\label{eq:upper_geometric_SDP}
\end{equation}
    
\begin{remark}[Multivariate geometric fidelities for pure states]
 For a tuple of orthogonal states $\left(|\phi_i\rangle\!\langle \phi_i| \right)_{i=1}^r $, we have that 
 \begin{equation}
    F_{\operatorname{SDP}}^G\!\left( |\phi_1\rangle\!\langle \phi_1|, \ldots, |\phi_r\rangle\!\langle \phi_r|\right) = F_G\!\left( |\phi_1\rangle\!\langle \phi_1|, \ldots, |\phi_r\rangle\!\langle \phi_r|\right) =0.
 \end{equation}
 This follows due to the fact that for two pure states $F_G\!\left( |\phi_1\rangle\!\langle \phi_1|, |\phi_2\rangle\!\langle \phi_2|\right) =0$ (~\cite[Eq.~(23)]{matsumoto2010reverse}) together with~\eqref{eq:ineq-FGSDP-FG-FH}.
\end{remark}

\begin{proposition}[Properties of multivariate SDP geometric fidelity]
\label{prop:SDP-geo-props}
    The multivariate SDP geometric fidelity in~\cref{def:SDP_geometric_multi} satisfies reduction to average pairwise classical fidelity for commuting states, data processing, symmetry, faithfulness, non-negativity, weak orthogonality (i.e., if $\rho_i\rho_j=0$ for all $i \neq j \in [r]$, then $F_{\operatorname{SDP}}^G(\rho_1, \ldots, \rho_r)=0$), and the direct-sum property.
    In addition multivariate SDP geometric fidelity is super-multiplicative: 
    \begin{align}
     \left( F^G_{\operatorname{SDP}}(\rho_1, \ldots, \rho_r)\right)^n &\leq F^G_{\operatorname{SDP}}\!\left(\rho_1^{\otimes n}, \ldots, \rho_r^{\otimes n}\right).  
    \end{align}
\end{proposition}

\begin{proof}
    See \cref{app:SDP-geo-props}.
\end{proof}

Similar to what we did previously in  \eqref{eq:def-S-SDP-secrecy}, \eqref{eq:secrecy_measure}, and \cref{rem:holevo-pairwise-secrecy}, we can define a secrecy measure
based on the multivariate SDP\ geometric fidelity, which itself has an
SDP\ representation. Let us define the secrecy measure $S^{G}
_{\operatorname{SDP}}$ as follows:
\begin{equation}
S^{G}_{\operatorname{SDP}}(\rho_{1},\ldots,\rho_{r})\coloneqq 
\sqrt{\frac{\left(  r-1\right)  F^{G}_{\operatorname{SDP}}(\rho_{1},\ldots,\rho
_{r})+1}{r}}. \label{eq:def-mult-secrecy-geo}
\end{equation}
One might think that this secrecy measure should be multiplicative in the sense shown in \cref{prop:Multiplicativity-S-SDP}, \eqref{eq:mult-secrecy-meas-S}, and \eqref{eq:mult-secrecy-meas-S-H}. However, we are only able to prove that it is super-multiplicative, and we suspect, based on the form of the dual SDP in \eqref{eq:secrecy-measure-G-dual} below, that sub-multiplicativity does not hold in general.

We first determine an alternative form for the multivariate SDP\ geometric
fidelity, which is similar in spirit to the $K^{\star}$ representation from \cref{def:K-star-rep}:

\begin{proposition}
\label{prop:mult-geo-SDP-G-alt}
Given $r$ states $\rho_{1},\ldots,\rho_{r}$ of dimension $d$, the multivariate
SDP\ geometric fidelity $F^{G}_{\operatorname{SDP}}(\rho_{1},\ldots,\rho_{r})$
can be written as follows:
\begin{equation}
F^{G}_{\operatorname{SDP}}(\rho_{1},\ldots,\rho_{r})=\frac{1}{r\left(
r-1\right)  }\sup_{X\geq0}\left\{
\begin{array}
[c]{c}
\operatorname{Tr}[\left(  |+\rangle\!\langle+|\otimes I_d\right)  X]-r:\\
\left(  \Delta_{r}\otimes\operatorname{id}_{d}\right)  \left(  X\right)
=\sum_{i=1}^{r}|i\rangle\!\langle i|\otimes\rho_{i},\\
\left(  T_r\otimes\operatorname{id}_d\right)  \left(  X\right)  =X
\end{array}
\right\}  , \label{eq:mult-geo-SDP-G-alt}
\end{equation}
where the vector $|+\rangle$, the completely dephasing channel $\Delta_{r}$,
and the transpose map $T_r$ are defined as
\begin{equation} 
|+\rangle   \coloneqq \sum_{i=1}^{r}|i\rangle,
\qquad
\Delta_{r}(\cdot)   \coloneqq \sum_{i=1}^{r}|i\rangle\!\langle
i|(\cdot)|i\rangle\! \langle i|,\qquad
T_r(\cdot)    \coloneqq \sum_{i,j=1}^{r}|i\rangle\!\langle j|(\cdot
)|i\rangle\!\langle j|.
\end{equation}

\end{proposition}

\begin{proof}
    See \cref{app:mult-geo-SDP-G-alt}.
\end{proof}

The quantity $S^{G}_{\operatorname{SDP}}$ has the following SDP\ representation:

\begin{proposition}
\label{prop:secrecy-meas-geo-SDP-SDP-dual}
Using the same notation from \cref{prop:mult-geo-SDP-G-alt}, the square of the secrecy measure
$S^{G}_{\operatorname{SDP}}(\rho_{1},\ldots,\rho_{r})$ can be written as
follows:
\begin{align}
 & S^{G}_{\operatorname{SDP}}(\rho_{1},\ldots,\rho_{r})^{2} \notag \\ 
&  =\frac{1}{r^{2}}\sup_{X\geq0}\left\{
\begin{array}
[c]{c}
\operatorname{Tr}[\left(  |+\rangle\!\langle+|\otimes I_d\right)  X]:\\
\left(  \Delta_{r}\otimes\operatorname{id}_{d}\right)  \left(  X\right)
=\sum_{i=1}^{r}|i\rangle\!\langle i|\otimes\rho_{i},\\
\left(  T_r\otimes\operatorname{id}_d\right)  \left(  X\right)  =X
\end{array}
\right\} \label{eq:secrecy-measure-G-primal}\\
&  =\frac{1}{r^{2}} \inf_{Y,Z\in\operatorname{Herm}}\left\{
\begin{array}
[c]{c}
\operatorname{Tr}\!\left[  Y\left(  \sum_{i=1}^{r}|i\rangle\!\langle
i|\otimes\rho_{i}\right)  \right]  :\\
|+\rangle\!\langle+|\otimes I_d +\left(  T_r\otimes\operatorname{id}_d\right)
\left(  Z\right)  \leq\left(  \Delta_{r}\otimes\operatorname{id}_{d}\right)
\left(  Y\right)  +Z
\end{array}
\right\}  . \label{eq:secrecy-measure-G-dual}
\end{align}

\end{proposition}

\begin{proof}
    See \cref{app:secrecy-meas-geo-SDP-SDP-dual}.
\end{proof}

\begin{proposition}
\label{prop:multi-geo-submult}
For states $\rho_{1},\ldots,\rho_{r_{1}}$ and $\sigma_{1},\ldots,\sigma
_{r_{2}}$, the secrecy measure $S_{\operatorname{SDP}}^{G}$ is
super-multiplicative in the following sense:
\begin{equation}
S_{\operatorname{SDP}}^{G}(\rho_{1},\ldots,\rho_{r_{1}})\cdot
S_{\operatorname{SDP}}^{G}(\sigma_{1},\ldots,\sigma_{r_{2}})\leq
S_{\operatorname{SDP}}^{G}((\rho_{i}\otimes\sigma_{j})_{i\in\left[
r_{1}\right]  ,j\in\left[  r_{2}\right]  }).\label{eq:multi-geo-submult}
\end{equation}

\end{proposition}

\begin{proof}
    See \cref{app:multi-geo-submult}.
\end{proof}

Due to the presence of the terms $Z$ and $\left(  T_r\otimes\operatorname{id}_d\right)
\left(  Z\right)$ in the constraint in \eqref{eq:secrecy-measure-G-dual}, it is not clear whether submultiplicativity of $S_{\operatorname{SDP}}^{G}$ holds. That is, it is unclear that the conventional strategy of taking the tensor products $Y_1 \otimes Y_2$ and $Z_1 \otimes Z_2$, where $Y_1$ and $Z_1$ are feasible for $S_{\operatorname{SDP}}^{G}(\rho_{1},\ldots,\rho_{r_{1}})$ and  $Y_2$ and $Z_2$ are feasible for $S_{\operatorname{SDP}}^{G}(\sigma_{1},\ldots,\sigma_{r_{2}})$, will work here. This is because these tensor products are not necessarily feasible for $S_{\operatorname{SDP}}^{G}((\rho_{i}\otimes\sigma_{j})_{i\in\left[
r_{1}\right]  ,j\in\left[  r_{2}\right]  })$. So it remains an open question to determine whether submultiplicativity of $S_{\operatorname{SDP}}^{G}$ holds.

\section{Concluding remarks and future directions} \label{Sec:Conclusion}

In this paper, we extended bivariate fidelities to multivariate fidelities, which are measures characterizing the similarity between multiple quantum states. In the classical case, we analysed the Matusita multivariate fidelity~\cite{Matusita1967notion}, average pairwise fidelity, and average $k$-wise fidelities. We proposed several variants for the quantum case that reduce to the average pairwise fidelity for commuting states, namely, average pairwise $z$-fidelities for $z \geq 1/2$, the multivariate SDP fidelity, inspired by the SDP of the Uhlmann fidelity, and the secrecy-based multivariate fidelity inspired by the secrecy measure proposed in ~\cite{konig2009operational}. We analysed their mathematical properties, such as data processing, faithfulness, uniform continuity bounds, and connections between these  variants. To this end, we proved that multivariate SDP fidelity and secrecy-based multivariate fidelity are sandwiched between the average pairwise $z$-fidelities at $z=1/2$ and $z=1$. 
We also explored a quantum extension of Matusita multivariate fidelity, which we called as multivariate log-Euclidean fidelity while analysing its properties and providing an operational interpretation to this in terms of quantum hypothesis testing with an arbitrarily varying null hypothesis. Finally, we generalized Matsumoto's geometric fidelity to the multivariate case in two different ways, and we established several of their properties.

Open questions for future work include the following (in no particular order):
\begin{enumerate}
    \item 
    What are similarity measures  for a tuple of channels, which generalize multivariate fidelity of states? One approach towards this is to extend the SDP of channel fidelity \cite[Proposition~55]{katariya2021geometric} from two channels to multiple channels, by using the same approach we followed here when defining multivariate SDP fidelity  (\cref{def:multivar-fid}). In particular, we can define multivariate channel SDP fidelity for a tuple of channels $\left( \cN_i \right)_{i=1}^r$ (each one of those is a CPTP map from $\mathscr{L}_A$ to $\mathscr{L}_B$) as follows:
\begin{align}
     & F_{\operatorname{SDP}}\!\left(\cN_1, \ldots, \cN_r \right) \notag \\
     & \coloneqq  \inf_{\rho_{RA} \in \mathscr{D}_{RA}}  F_{\operatorname{SDP}}\!\left( \cN_1(\rho_{RA}), \ldots, \cN_r(\rho_{RA}) \right)  \\
     & = \frac{1}{r (r-1)} \inf_{\substack{Y_{RB}^i \geq 0 \ \forall  i \in [r], \\ \rho_R \in \mathscr{D}}} \left \{
     \begin{array}{c}
     \sum_{i=1}^r \Tr\!\left[ Y_{RB}^i   \Gamma_{RB}^{\cN_i} \right] : \\
     \sum_{i=1}^r |i\rangle\!\langle i| \otimes Y_{RB}^i \geq  \sum_{i \neq j} |i\rangle\!\langle j| \otimes \rho_R \otimes I_B
     \end{array}\right\},
\end{align}
where $ \Gamma_{RB}^{\cN_i}$ is the Choi operator of channel $\cN_i$ for all $i\in[r]$ (i.e., $\Gamma_{RB}^{\cN_i} \coloneqq \sum_{k,\ell} |k\rangle\!\langle \ell| \otimes \mathcal{N}_i(|k\rangle\!\langle \ell|)$). The proof of the equality in the last line follows from the same reasoning used to prove \cite[Proposition~55]{katariya2021geometric}.

Furthermore, similar to the average pairwise $z$-fidelity for states in \cref{def:average_pairwise_z_fidelity}, we can define average pairwise channel $z$-fidelity as follows for $z \geq 1/2$:
\begin{align}
    F_z(\cN_1, \ldots, \cN_r) 
   & \coloneqq \inf_{\rho_{RA} \in \mathscr{D}_{RA}}  F_{z}\!\left( \cN_1(\rho_{RA}), \ldots, \cN_r(\rho_{RA}) \right) \\ 
   & = \inf_{\rho_{RA} \in \mathscr{D}_{RA}} \frac{2}{r(r-1)} \sum_{i <j} F_z (\cN_i(\rho_{RA}),\cN_{j}(\rho_{RA})).
\end{align}

One could also define multivariate geometric channel fidelities, generalizing the approach for states from \cref{sec:multiv-geo-fids}.

\item Are there information-theoretic tasks that provide operational interpretations to quantum generalizations of multivariate fidelity, including average pairwise $z$-fidelities and multivariate SDP fidelity? We already noted interpretations for the multivariate log-Euclidean fidelity in \cref{sec:op-int-mult-log-euc-fid} and for the secrecy-based multivariate fidelity in \cite[Theorem~6]{RASW23}. 

\item 
Is $\sqrt{2(1-F_z(\rho,\sigma))}$ a distance measure for $z \in (1/2,1)\cup(1,\infty)$? 
 For PSD matrices $A$ and $B$, this question refers to evaluating whether $\sqrt{ \Tr[A+B]- 2F_z(A,B)}$ is a distance measure. Knowing this in turn would provide uniform continuity bounds for average pairwise $z$-fidelities, as  \cref{Prop:uniform_cont_average_pairwise} does for the special cases $z \in \{1/2,1\}$. 

\item Another question that we pose is related to multivariate generalizations of the Holevo fidelity. It was shown in \cite[page 14]{matsumoto2014quantum} that the Holevo fidelity has the following SDP characterization:
\begin{equation}
    F_H(\rho,\sigma) = \sup_{C\geq 0} \left\{ \langle \Gamma | C |\Gamma\rangle: \begin{bmatrix}
        \rho \otimes I& C \\
        C & I \otimes \sigma^T
        \end{bmatrix} \geq 0\right\},
        \label{eq:holevo-fid-sdp}
\end{equation}
where $|\Gamma \rangle \coloneqq \sum_i |i\rangle|i\rangle$.
This characterization follows because an optimal choice for $C$ is the matrix geometric mean of the commuting operators $\rho \otimes I$ and $I \otimes \sigma^T$ (see, e.g., \cite[Theorem~4.1.3]{bhatia2009positive}), which is simply $(\rho \otimes \sigma^T)^{1/2}$, and because $\langle \Gamma | (\rho \otimes \sigma^T)^{1/2} |\Gamma\rangle = \operatorname{Tr}[\rho^{1/2}\sigma^{1/2}]$. The dual of the SDP in~\eqref{eq:holevo-fid-sdp} is given by
\begin{equation}
\inf_{\substack{Y,Z\geq0, \\ W\in\mathscr{L}}}\left\{
\operatorname{Tr}[Y(  \rho\otimes I)  ]+\operatorname{Tr}[Z(
I\otimes\sigma^{T})  ]:
W+W^{\dag}\geq|\Gamma\rangle\!\langle\Gamma|,
\begin{bmatrix}
Y & W\\
W^{\dag} & Z
\end{bmatrix}
\geq0
\right\} .
\label{eq:holevo-fid-sdp-dual}
\end{equation}
However, it is unclear how to generalize the constructions in~\eqref{eq:holevo-fid-sdp} and~\eqref{eq:holevo-fid-sdp-dual} to multiple states, given that it makes use of properties specific to bipartite systems. Interestingly, the constructions in~\eqref{eq:multivariate_fid_sdp_dual} and~\eqref{eq:SDP-multi-geometric-def} do not encounter this difficulty. 

\item Is there a possibility of
finding analytical or alternative variational expressions for minimal and maximal extensions of multivariate classical fidelity in \cref{def:minimal_multi_F} and \cref{def:maximal_multi_fidelity}, respectively?
\end{enumerate}

\section*{Acknowledgements}

We thank Komal Malik and Kaiyuan Ji for helpful discussions. TN  acknowledges support from the NSF under grant no.~2329662. HKM  acknowledges support from the NSF under grant no.~2304816 and AFRL under agreement no.~FA8750-23-2-0031.
FL acknowledges support from the Department of Mathematics at UIUC for a research visit to Cornell University, as well as the hospitality of Mark Wilde's research group during that stay.
MMW acknowledges support from the NSF under grants 2329662, 2315398, 1907615, 2304816. 

This material is based on research
sponsored by Air Force Research Laboratory under agreement number
FA8750-23-2-0031. The U.S.~Government is authorized to reproduce and
distribute reprints for Governmental purposes notwithstanding any copyright
notation thereon. The views and conclusions contained herein are those of the
authors and should not be interpreted as necessarily representing the official
policies or endorsements, either expressed or implied, of Air Force Research
Laboratory or the U.S.~Government. 

\bibliographystyle{alpha}
 \bibliography{reference}

\appendix

 \section{\hspace{17mm}
(Proofs)}
\label{Sec:Proofs}

In the following appendices, we present technical proofs of the results that are stated in the main text of the paper. 

\subsection{Proof of Equation~(\ref{eq:log-euc-fid-limit}) (Log-Euclidean fidelity as limit of $z$-fidelity)}

\label{sec:proof-log-euclid-fid}

For $\varepsilon>0$, define $\rho(\varepsilon)\coloneqq
\rho+\varepsilon I$ and $\sigma(\varepsilon)\coloneqq
\sigma+\varepsilon I$.
Observe that 
\begin{equation}
    F_z(\rho,\sigma) = \lim_{\varepsilon \to 0^+} F_z(\rho(\varepsilon),\sigma(\varepsilon)) = \inf_{\varepsilon >0} F_z(\rho(\varepsilon),\sigma(\varepsilon)),
\end{equation}
which follows from the operator monotonicity of the function $x^p$ for $p\in(0,1]$ and monotonicity of $(\cdot) \to \operatorname{Tr}[(\cdot)^z]$ for all $z\geq 1/2$. Now recalling that the $z$-fidelities are antimonotone for $z\geq 1/2$ \cite[Proposition~6]{lin2015investigating} and continuous in $z$, we conclude that
\begin{align}
    \lim_{z\to \infty} F_z(\rho,\sigma) & = 
    \inf_{z \geq \frac{1}{2}} F_z(\rho,\sigma) \\
    & = \inf_{z \geq \frac{1}{2}} \inf_{\varepsilon >0} F_z(\rho(\varepsilon),\sigma(\varepsilon)) \\
    & = \inf_{\varepsilon >0} \inf_{z \geq \frac{1}{2}}  F_z(\rho(\varepsilon),\sigma(\varepsilon)) \\
    & = \inf_{\varepsilon >0} \lim_{z \to \infty}  F_z(\rho(\varepsilon),\sigma(\varepsilon)) \\
    & = \inf_{\varepsilon >0} \operatorname{Tr}\!\left[\exp\!\left(\frac{1}{2} (\ln \rho(\varepsilon) + \ln \sigma(\varepsilon)\right)\right].
\end{align}
The last equality follows from an application of the Lie--Trotter product formula. Similar observations were made in \cite[Section~4]{audenaert2015alpha} using the fact that for operators $A$ and $B$, $\lim_{z \to \infty} \left(\exp(A/z) \exp(B/z)\right)^z = \exp(A+B)$ with the choice $A= \frac{1}{2} \ln \rho(\varepsilon)$ and $B=\frac{1}{2} \ln \sigma(\varepsilon)$.

The last equality in~\eqref{eq:log-euc-fid-limit} follows from operator monotonicity of the logarithm function and monotonicity of $(\cdot) \to \operatorname{Tr}[\exp(\cdot)]$.

\subsection{Proof of Proposition~\ref{prop:unif-cont-Matusita} (Uniform continuity of Matusita fidelity)}

\label{sec:proof-unif-cont-Matusita}

Consider that
\begin{align}
&  \left\vert F_{r}(p_{1},\ldots,p_{r})-F_{r}(q_{1},\ldots,q_{r})\right\vert
\nonumber\\
&  =\left\vert \sum_{x}\left(  p_{1}(x)\cdots p_{r}(x)\right)  ^{\frac{1}{r}
}-\sum_{x}\left(  q_{1}(x)\cdots q_{r}(x)\right)  ^{\frac{1}{r}}\right\vert \\
&  =\left\vert \sum_{x}\left[  \left(  p_{1}(x)\cdots p_{r}(x)\right)
^{\frac{1}{r}}-\left(  q_{1}(x)\cdots q_{r}(x)\right)  ^{\frac{1}{r}}\right]
\right\vert \\
&  =\left\vert \sum_{i=1}^{r}\sum_{x}\left[  \left(  p_{1}(x)\cdots
p_{i-1}(x)\right)  ^{\frac{1}{r}}\left(  p_{i}(x)^{\frac{1}{r}}-q_{i}
(x)^{\frac{1}{r}}\right)  \left(  q_{i+1}(x)\cdots q_{r}(x)\right)  ^{\frac
{1}{r}}\right]  \right\vert \\
&  \leq\sum_{i=1}^{r}\left\vert \sum_{x}\left[  \left(  p_{1}(x)\cdots
p_{i-1}(x)\right)  ^{\frac{1}{r}}\left(  p_{i}(x)^{\frac{1}{r}}-q_{i}
(x)^{\frac{1}{r}}\right)  \left(  q_{i+1}(x)\cdots q_{r}(x)\right)  ^{\frac
{1}{r}}\right]  \right\vert \\
&  \leq\sum_{i=1}^{r}\left\Vert \left(  p_{1}\cdots p_{i-1}q_{i+1}\cdots
q_{r}\right)  ^{\frac{1}{r}}\right\Vert _{\frac{r}{r-1}}\left\Vert p_{i}
^{\frac{1}{r}}-q_{i}^{\frac{1}{r}}\right\Vert _{r}\\
&  =\sum_{i=1}^{r}F_{r-1}(p_{1},\ldots,p_{i-1},q_{i+1},\ldots,q_{r}
)^{\frac{r-1}{r}}\left\Vert p_{i}^{\frac{1}{r}}-q_{i}^{\frac{1}{r}
}\right\Vert _{r}\\
&  \leq\sum_{i=1}^{r}\left\Vert p_{i}^{\frac{1}{r}}-q_{i}^{\frac{1}{r}
}\right\Vert _{r}\\
&  =r\left(  \frac{1}{r}\sum_{i=1}^{r}\left(  \left\Vert p_{i}^{\frac{1}{r}
}-q_{i}^{\frac{1}{r}}\right\Vert _{r}^{r}\right)  ^{\frac{1}{r}}\right)  \\
&  \leq r\left(  \frac{1}{r}\sum_{i=1}^{r}\left\Vert p_{i}^{\frac{1}{r}}
-q_{i}^{\frac{1}{r}}\right\Vert _{r}^{r}\right)  ^{\frac{1}{r}}\\
&  \leq r\left(  \frac{1}{r}\sum_{i=1}^{r}\left\Vert \sqrt{p_{i}}-\sqrt{q_{i}}\right\Vert
_{2}^2\right)  ^{\frac{1}{r}}\\
&  \leq r \varepsilon  ^{\frac{1}{r}}.
\end{align}
The third equality follows from rewriting as a telescoping sum. The first
inequality follows from the triangle inequality. The second inequality follows
from H\"older's inequality with parameters $r,s\geq1$ satisfying $\frac{1}
{r}+\frac{1}{s}=1$, so that $s=\frac{r}{r-1}$. The fourth equality follows from
applying definitions, and the third inequality follows because $F_{r}\leq1$ for
all probability distributions and for every integer $r\geq2$.
The fourth inequality follows from the concavity of function $x^{1/r}$ for $r\geq 2$ for all $x \geq 0$. To see the penultimate
inequality, consider 
that the following inequality holds for all distributions $p$ and $q$ by using \cite[Theorem~5]{Matusita1967notion}:
\begin{equation}
\left\Vert p^{\frac{1}{r}}-q^{\frac{1}{r}}\right\Vert _{r}^{r}=\sum
_{x}\left\vert p(x)^{\frac{1}{r}}-q(x)^{\frac{1}{r}}\right\vert ^{r}\leq
\sum_{x}\left( \sqrt{p(x)}-\sqrt{q(x)}\right)^2 =\left\Vert \sqrt{p}-\sqrt{q}\right\Vert _{2}^2.
\end{equation}
Finally, the last inequality follows from the assumption $\frac{1}{r}\left(  \sum_{i=1}^{r} d_H (p_i, q_i)^2 \right)  \leq\varepsilon$, along with~\eqref{eq:Helinger_classical}.

\subsection{Proof of Proposition~\ref{prop:ineqs-avg-k-wise-classical} (Inequalities relating classical average $k$-wise fidelities)}

\label{sec:proof-avg-k-wise-order-classical}

It suffices to prove the following inequality:
\begin{equation}
F_{k-1,r}(\rho_{1},\ldots,\rho_{r})\geq F_{k,r}(\rho_{1},\ldots,\rho_{r}),
\end{equation}
for all $k\in\left\{  3,\ldots,r\right\}  $. So we prove this one, using a generalization of the approach used in the proof of \Cref{prop:r_root_to_commuting_fidelity}.
Let $S_{k-1}(i_{1},\ldots,i_{k})$ denote all size-($k-1$) subsets of
$i_{1},\ldots,i_{k}$, of which there are $k$ of them. For example, if $r=7$,
$k=4$, and $i_{1}=2$, $i_{2}=4$, $i_{3}=5$, $i_{4}=7$, then
\begin{equation}
S_{k-1}(i_{1},i_{2},i_{3},i_{4})=S_{3}(2,4,5,7)=\left\{  \left\{
2,4,5\right\}  ,\left\{  2,4,7\right\}  ,\left\{  2,5,7\right\}
,\{4,5,7\}\right\}  .
\end{equation}
Note that each symbol $i_{j}$ appears in exactly $k-1$ elements of
$S_{k-1}(i_{1},\ldots,i_{k})$. Then consider that
\begin{align}
F_{k}(\rho_{i_{1}},\ldots,\rho_{i_{k}}) &  =\sum_{x}\left(  \rho_{i_{1}}(x)\cdots
\rho_{i_{k}}(x)\right)  ^{\frac{1}{k}}\\
&  =\sum_{x}\left(  \prod\limits_{t\in S_{k-1}(i_{1},\ldots,i_{k})}\left[
\rho_{t(1)}(x)\cdots \rho_{t(k-1)}(x)\right]  ^{\frac{1}{k-1}}\right)  ^{\frac{1}
{k}}\label{eq:reorder-product-k-1} \\
&  \leq\sum_{x}\frac{1}{k}\sum_{t\in S_{k-1}(i_{1},\ldots,i_{k})}\left[
\rho_{t(1)}(x)\cdots \rho_{t(k-1)}(x)\right]  ^{\frac{1}{k-1}}\\
&  =\frac{1}{k}\sum_{t\in S_{k-1}(i_{1},\ldots,i_{k})}\sum_{x}\left[
\rho_{t(1)}(x)\cdots \rho_{t(k-1)}(x)\right]  ^{\frac{1}{k-1}}\\
&  =\frac{1}{k}\sum_{t\in S_{k-1}(i_{1},\ldots,i_{k})}F_{k-1}(\rho_{t(1)}
,\ldots,\rho_{t(k-1)}),
\end{align}
where the inequality follows as a consequence of the inequality of arithmetic
and geometric means. Now consider that
\begin{align}
F_{k,r}(\rho_1, \ldots, \rho_r) & =  \frac{1}{\binom{r}{k}}\sum_{i_{1}<\cdots<i_{k}}F_{k}(\rho_{i_{1}}
,\ldots,\rho_{i_{k}})\nonumber\\
&  \leq\frac{1}{\binom{r}{k}}\sum_{i_{1}<\cdots<i_{k}}\frac{1}{k}\sum_{t\in
S_{k-1}(i_{1},\ldots,i_{k})}F_{k-1}(\rho_{t(1)},\ldots,\rho_{t(k-1)}) \label{eq:avg-k-wise-final-steps}\\
&  =\frac{k-1!r-k!}{r!}\sum_{i_{1}<\cdots<i_{k}}\sum_{t\in S_{k-1}
(i_{1},\ldots,i_{k})}F_{k-1}(\rho_{t(1)},\ldots,\rho_{t(k-1)})\\
&  =\frac{k-1!r-k!}{r!}\left(  r-k+1\right)  \sum_{i_{1}<\cdots<i_{k-1}
}F_{k-1}(\rho_{i_{1}},\ldots,\rho_{i_{k-1}})\\
&  =\frac{1}{\binom{r}{k-1}}\sum_{i_{1}<\cdots<i_{k-1}}F_{k-1}(\rho_{i_{1}
},\ldots,\rho_{i_{k-1}})\\
& = F_{k-1,r}(\rho_1, \ldots, \rho_r).
\label{eq:avg-k-wise-final-steps-final}
\end{align}
The second equality follows because the sum in the third line includes each
distinct term $r-k+1$ times. Indeed, the sum in the third line arises by
picking size-$k$ subsets from $\left[  r\right]  $, and from each of these
size-$k$ subsets, picking size-$\left(  k-1\right)  $ subsets, thus leading to
a total number of subsets given by $\binom{r}{k}\binom{k}{k-1}$. However,
picking subsets in this way leads to $\binom{r}{k-1}$ distinct subsets, each
occurring with multiplicity $\frac{\binom{r}{k}\binom{k}{k-1}}{\binom{r}{k-1}
}=r-k+1$.

\subsection{Proof of Proposition~\ref{Prop:uniform_cont_average_pairwise} (Uniform continuity of average pairwise fidelities)}

\label{Proof:uniform_cont_average_pairwise}

Define the shorthand $\mathcal{S}_{\rho} \coloneqq (\rho_1,\ldots, \rho_r)$ and $\mathcal{S}_{\sigma} \coloneqq (\sigma_1,\ldots, \sigma_r)$.
Consider that
\begin{align}
F_{U}(\mathcal{S}_{\rho})-F_{U}(\mathcal{S}_{\sigma}) &  =\frac{2}{r\left(
r-1\right)  }\sum_{i<j}F(\rho_{i},\rho_{j})-\frac{2}{r\left(  r-1\right)
}\sum_{i<j}F(\sigma_{i},\sigma_{j})\\
&  =\frac{2}{r\left(  r-1\right)  }\sum_{i<j}\left[  F(\rho_{i},\rho
_{j})-F(\sigma_{i},\sigma_{j})\right]. \label{eq:Uhlmann-average-fidelity-difference-two-sets-quantum-states}
\end{align}
Then
\begin{align}
&  2\left[  F(\rho_{i},\rho_{j})-F(\sigma_{i},\sigma_{j})\right]
\nonumber\\
&  =2\left[  1-F(\sigma_{i},\sigma_{j})-\left(  1-F(\rho_{i},\rho
_{j})\right)  \right]  \\
&  =2\left[  \sqrt{1-F(\sigma_{i},\sigma_{j})}\sqrt{1-F(\sigma
_{i},\sigma_{j})}-\sqrt{1-F(\rho_{i},\rho_{j})}\sqrt{1-F(\rho_{i}
,\rho_{j})}\right]  \\
&  =d_{B}(\sigma_{i},\sigma_{j})d_{B}(\sigma_{i},\sigma_{j})-d_{B}(\rho
_{i},\rho_{j})d_{B}(\rho_{i},\rho_{j})\\
&  =d_{B}(\sigma_{i},\sigma_{j})d_{B}(\sigma_{i},\sigma_{j})-d_{B}(\sigma
_{i},\sigma_{j})d_{B}(\rho_{i},\rho_{j})\nonumber\\
&  \qquad+d_{B}(\sigma_{i},\sigma_{j})d_{B}(\rho_{i},\rho_{j})-d_{B}(\rho
_{i},\rho_{j})d_{B}(\rho_{i},\rho_{j})\\
&  =\left[  d_{B}(\sigma_{i},\sigma_{j})+d_{B}(\rho_{i},\rho_{j})\right]
\left[  d_{B}(\sigma_{i},\sigma_{j})-d_{B}(\rho_{i},\rho_{j})\right]  \\
&  \leq\left[  d_{B}(\sigma_{i},\sigma_{j})+d_{B}(\rho_{i},\rho_{j})\right]\times\notag\\
& \qquad \left[  d_{B}(\sigma_{i},\rho_{i})+d_{B}(\rho_{i},\rho_{j})+d_{B}(\rho
_{j},\sigma_{j})-d_{B}(\rho_{i},\rho_{j})\right]  \\
&  =\left[  d_{B}(\sigma_{i},\sigma_{j})+d_{B}(\rho_{i},\rho_{j})\right]
\left[  d_{B}(\sigma_{i},\rho_{i})+d_{B}(\rho_{j},\sigma_{j})\right]  \\
&  \leq 2\sqrt{2} \left[  d_{B}(\sigma_{i},\rho_{i})+d_{B}(\rho_{j},\sigma_{j})\right] \\
& = 2\sqrt{2} \left[  d_{B}(\rho_{i},\sigma_{i})+d_{B}(\rho_{j},\sigma_{j})\right],
\end{align}
where the first inequality follows from the triangular inequality of the Bures distance and the second inequality because $d_B(\omega,\tau) \leq \sqrt{2}$ for all quantum states $\omega$ and $\tau$.
This implies that
\begin{equation}
F(\rho_{i},\rho_{j})-F(\sigma_{i},\sigma_{j})\leq \sqrt{2} \left[  d_{B}(\rho_{i},\sigma_{i})+d_{B}(\rho_{j},\sigma_{j})\right]. \label{eq:fidelity-difference-two-sets-quantum-states}
\end{equation}
Substituting~\eqref{eq:fidelity-difference-two-sets-quantum-states} into~\eqref{eq:Uhlmann-average-fidelity-difference-two-sets-quantum-states} and using the fact that
\begin{equation}
    \sum_{i<j}  \left[  d_{B}(\rho_{i},\sigma_{i})+d_{B}(\rho_{j},\sigma_{j})\right] = (r-1) \sum_{i=1}^r d_{B}(\rho_{i},\sigma_{i})
\end{equation}
gives
\begin{align}
  F_{U}(\mathcal{S}_{\rho})-F_{U}(\mathcal{S}_{\sigma})  
 &  \leq \sqrt{2} \left(\frac{2}{r\left(  r-1\right)  }\right) \sum_{i<j}  \left[  d_{B}(\rho_{i},\sigma_{i})+d_{B}(\rho_{j},\sigma_{j})\right] \\
 & = 2\sqrt{2} \left(\frac{1}{r} \sum_{i=1}^r d_{B}(\rho_{i},\sigma_{i})\right) \\
 & \leq 2\sqrt{2} \varepsilon,
 \end{align}
 where the last inequality follows from the assumption that $\frac{1}{r} \sum_{i=1}^r d_{B}(\rho_{i},\sigma_{i})\leq \varepsilon$.
 This concludes the proof of the inequality in~\eqref{eq:Uhlmann-average-fidelity-difference-mod-two-sets-quantum-states-2}.

A similar proof works for the inequality in 
\eqref{eq:holevo-pairwise-unif-cont-2} by replacing all instances of Uhlmann fidelity and Bures distance  with Holevo fidelity and quantum Hellinger distance, respectively.

\subsection{Proof of Proposition~\ref{prop:dual_SDP_multi} (Dual of multivariate SDP fidelity)} 

\label{Sec:Proof_dual_SDP}
Recall that a standard primal SDP and its dual are given as follows (cf.~\cite[Definition~2.20]{khatri2020principles}): 
\begin{align}
\texttt{PRIMAL} \coloneqq &\inf_{Y\geq0}\left\{  \operatorname{Tr}[BY]:\Phi^{\dag}(Y)\geq A\right\}, \label{eq:sdp_primal} \\
\texttt{DUAL} \coloneqq &\sup_{X\geq0}\left\{  \operatorname{Tr}[AX]:\Phi(X)\leq B\right\},\label{eq:sdp_dual}
\end{align}
where $A$ and $B$ are Hermitian matrices and $\Phi$ is a Hermiticity-preserving superoperator. It is always true that $\texttt{PRIMAL} \geq \texttt{DUAL}$, and  strong duality is said to hold in the case of equality. A sufficient condition for strong duality to hold is that there exists a feasible point for the \texttt{DUAL} and a \textit{strictly feasible} point for the \texttt{PRIMAL}; i.e., the latter meaning that there exists a $Y$ such that $Y>0$ and $\Phi^{\dagger}(Y) > A$. This is known as Slater's condition.
In what follows, we identify the $A, B$, and $\Phi^{\dagger}$ from the multivariate SDP formulation~\eqref{eq:multivar-fid} to derive its dual, and we then argue that strong duality holds using Slater's theorem.

Comparing the primal SDP~\eqref{eq:sdp_primal} to the multivariate SDP fidelity~\eqref{eq:multivar-fid} without considering the normalization term $r(r-1)$, 
we make the following choices of $A, B$, and $\Phi^{\dagger}$: given $Y=\sum_{i,j=1}^r |i\rangle\!\langle j| \otimes Y_{ij}$, define
\begin{align}
    A & \coloneqq \sum_{i,j\in [r]: i \neq j} |i\rangle\!\langle j| \otimes I , \\
    B & \coloneqq \sum_{i=1}^r |i\rangle\!\langle i| \otimes \rho_i, \\
    \Phi^\dagger(Y) & \coloneqq \sum_{i=1}^r |i\rangle\!\langle i| \otimes Y_{ii}.
\end{align}
Thus, given any Hermitian matrix $X=\sum_{i,j=1}^r |i\rangle\!\langle j| \otimes X_{ij}$ and using the relation $\Tr\!\left[ Y \Phi(X)  \right] = \Tr\!\left[ \Phi^\dagger(Y) X   \right]$, we obtain
\begin{equation}
    \Phi(X)= \sum_{i=1}^r |i\rangle\!\langle i| \otimes X_{ii}.
\end{equation}
Also, we have 
\begin{equation} \label{eq:objective_function}
    \Tr[AX]= \sum_{i \neq j} \Tr[X_{ij}]= 2 \sum_{i <j}\mathfrak{R}\left[ \Tr[X_{ij}]\right].
\end{equation}
The \texttt{DUAL} condition $\Phi(X) \leq B$ is given by
\begin{align}
  & \sum_{i=1}^r |i\rangle\!\langle i| \otimes X_{ii}  \leq \sum_{i=1}^r |i\rangle\!\langle i| \otimes \rho_i,
\end{align}
which, by adding $\sum_{i \neq j} |i\rangle\!\langle i| \otimes X_{ij}$ to both sides,  evaluates to
\begin{align}
    0 \leq  X\leq \sum_{i=1}^r |i\rangle\!\langle i| \otimes \rho_i + \sum_{i \neq j} |i\rangle\!\langle i| \otimes X_{ij}. \label{eq:constraint}
\end{align}
Using the relations~\eqref{eq:objective_function} and~\eqref{eq:constraint} in the \texttt{DUAL} form and the fact that $X$ is Hermitian gives the desired dual formulation~\eqref{eq:multivariate_fid_sdp_dual}.

We use Slater's condition to show that strong duality holds, which requires finding a strictly feasible point in the SDP formulation~\eqref{eq:multivar-fid} and a feasible point in the dual formulation~\eqref{eq:multivariate_fid_sdp_dual}. The positive definite matrix $Y=\sum_{i=1}^r |i\rangle\!\langle i| \otimes r I$ is a strictly feasible point because $\Phi^{\dagger}(Y)-A>0$. To see this, consider that
\begin{align}
   \Phi^{\dagger}(Y)-A &= \sum_{i=1}^r |i\rangle\!\langle i| \otimes rI - \sum_{i \neq j} |i\rangle\!\langle j| \otimes I 
   \label{eq:multi-SDP-slater-choice} \\
   &= \sum_{i=1}^r (r+1) |i\rangle\!\langle i| \otimes I - \sum_{i, j=1}^r |i\rangle\!\langle j| \otimes I \\
   &=\big((r+1) I - uu^T \big) \otimes I,
\end{align}
where $u$ is the $r$-column vector of all ones. The matrix $(r+1) I - uu^T$ is positive definite because its (distinct) eigenvalues are $1$ and $r+1$. We thus conclude that $\Phi^{\dagger}(Y)-A>0$. A feasible point for~\eqref{eq:multivariate_fid_sdp_dual} consists of simply picking $X_{ij}= 0 $ for all~$i,j$. 

\subsection{Proof of Theorem~\ref{thm:uniform_cont_SDP_K} (Uniform continuity of $K^\star$-representation)}

\label{Proof:uniform_cont_SDP_F}

Recall by definition that 
\begin{equation}
F_{K^\star}(\rho_1, \ldots, \rho_r)=\frac{1}{r-1}\sup_{K\geq0}\left\{
\begin{array}
[c]{c}
\langle\psi^{\rho}|K\otimes I_d|\psi^{\rho}\rangle-1:\\
K=I_{r}\otimes I_{d}+\sum_{i\neq j}|i\rangle\!\langle j|\otimes K_{ij}\geq0,
\end{array}
\right\}  ,\label{eq:K-formulation}
\end{equation}
where $|\psi^{\rho}\rangle=\frac{1}{\sqrt{r}}\sum_{i=1}^{r}|i\rangle|\phi^{\rho_{i}
}\rangle$ and $|\phi^{\rho_{i}}\rangle$ is an arbitrary purification of $\rho_{i}$.
Fix
\begin{equation}
K=I_{r}\otimes I_{d}+\sum_{i\neq j}|i\rangle\!\langle j|\otimes K_{ij}
\geq0.    
\end{equation}
Also fix
\begin{align}
|\psi^{\rho}\rangle & =\frac{1}{\sqrt{r}}\sum_{i=1}^{r}|i\rangle|\phi
^{\rho_{i}}\rangle,\\
|\psi^{\sigma}\rangle & =\frac{1}{\sqrt{r}}\sum_{i=1}^{r}|i\rangle
|\phi^{\sigma_{i}}\rangle,
\end{align}
where these purifications are those that achieve the (root)\ fidelity, so that
$\langle\phi^{\sigma_{i}}|\phi^{\rho_{i}}\rangle=F(\rho_{i},\sigma_{i})$.
H\"older's inequality implies that
\begin{align}
& \left\vert \langle\psi^{\rho}|K\otimes I_d|\psi^{\rho}\rangle-\langle
\psi^{\sigma}|K\otimes I_d|\psi^{\sigma}\rangle\right\vert \nonumber\\
& =\left\vert \operatorname{Tr}[\left(  |\psi^{\rho}\rangle\!\langle\psi^{\rho
}|-|\psi^{\sigma}\rangle\!\langle\psi^{\sigma}|\right)  K\otimes I_d]\right\vert
\\
& \leq\left\Vert |\psi^{\rho}\rangle\!\langle\psi^{\rho}|-|\psi^{\sigma}
\rangle\!\langle\psi^{\sigma}|\right\Vert _{1}\left\Vert K\otimes I_d\right\Vert
_{\infty}\\
& =\left\Vert |\psi^{\rho}\rangle\!\langle\psi^{\rho}|-|\psi^{\sigma}
\rangle\!\langle\psi^{\sigma}|\right\Vert _{1}\left\Vert K\right\Vert _{\infty
}\\
& \leq\left\Vert |\psi^{\rho}\rangle\!\langle\psi^{\rho}|-|\psi^{\sigma}
\rangle\!\langle\psi^{\sigma}|\right\Vert _{1}\left\Vert r\left(  I_{r}\otimes
I_{d}\right)  \right\Vert _{\infty}\\
& =r\left\Vert |\psi^{\rho}\rangle\!\langle\psi^{\rho}|-|\psi^{\sigma}
\rangle\!\langle\psi^{\sigma}|\right\Vert _{1}\\
& =r\sqrt{1-\left\vert \langle\psi^{\sigma}|\psi^{\rho}\rangle\right\vert
^{2}}\\
& =r\sqrt{1-\left\vert \frac{1}{r}\sum_{i=1}^{r}\langle\phi^{\sigma_{i}}
|\phi^{\rho_{i}}\rangle\right\vert ^{2}}\\
& =r\sqrt{1-\left(  \frac{1}{r}\sum_{i=1}^{r}F(\rho_{i},\sigma_{i})\right)
^{2}}\\
& \leq r\sqrt{1-\left(  1-\varepsilon\right)  ^{2}}\\
& =r\sqrt{\varepsilon\left(  2-\varepsilon\right)  }.\label{eq:k-representation-difference-upper-bound}
\end{align}
The second to last inequality follows because 
\begin{align}
  I_{r}\otimes I_{d}&= \sum
_{i=1}^r\left(  |i\rangle\!\langle i|\otimes I_d\right)  K\left(  |i\rangle\!\langle
i|\otimes I_d\right)  \\
& =\frac{1}{r}\sum_{k=0}^{r-1}\left(  Z(k)\otimes I_d\right)  K\left(
Z(k)\otimes I_d\right)  ^{\dag}\\
& \geq\frac{1}{r}K,
\end{align}
where $Z(k)\coloneqq \sum_{\ell=0}^{r-1}e^{2\pi i\ell k/r}|\ell\rangle\!\langle\ell|$ is
the Heisenberg phase shift operator. The last inequality follows from
antimonotonicity of the function $\sqrt{1-x^{2}}$ on the interval $x\in\left[
0,1\right]  $ and the fact that $\frac{1}{r}\sum_{i=1}^{r}F(\rho_{i}
,\sigma_{i})\geq1-\varepsilon$ by assumption (see~\eqref{eq:cont-SDP-assumption}).
From~\eqref{eq:k-representation-difference-upper-bound} we thus get
\begin{align}
& \frac{1}{r-1}\left(  \langle\psi^{\rho}|K\otimes I_d|\psi^{\rho}\rangle
-1\right)    \notag \\
& \leq\frac{1}{r-1}\left(  \langle\psi^{\sigma}|K\otimes
I_d|\psi^{\sigma}\rangle-1\right)  +\frac{r}{r-1}\sqrt{\varepsilon\left(
2-\varepsilon\right)  }\\
& \leq F_{K^\star}(\sigma_1, \ldots, \sigma_r)+\frac{r}{r-1}\sqrt{\varepsilon\left(
2-\varepsilon\right)  },
\end{align}
where the last inequality follows from the expression in
\eqref{eq:K-formulation}. Since the inequality holds for all $K$ satisfying
the condition, we conclude that
\begin{equation}
F_{K^\star}(\rho_1, \ldots, \rho_r)\leq F_{K^\star}(\sigma_1, \ldots, \sigma_r)+\frac{r}{r-1}\sqrt
{\varepsilon\left(  2-\varepsilon\right)  },
\end{equation}
after applying~\eqref{eq:K-formulation} once again. 
The other inequality
follows from the same reasoning, but instead swapping $\rho_1, \ldots, \rho_r$
and $\sigma_1, \ldots, \sigma_r$.

\subsection{Monotonicity of multivariate SDP fidelity}

For every tuple of states $\rho_1, \ldots , \rho_r $ (not necessarily invertible) and for all $\varepsilon > 0$, all states of the following form are invertible:
\begin{align}\label{eq:epsilon_states}
    \rho_i^{(\varepsilon)} \coloneqq \dfrac{1}{1+\varepsilon} \left(\rho_i +  \frac{\varepsilon}{d} I \right).
\end{align}
As stated below, the multivariate SDP fidelity $F_{\operatorname{SDP}}(\rho_1^{(\varepsilon)}, \ldots, \rho_r^{(\varepsilon)})$ is monotonic as a function of $\varepsilon$, up to a multiplicative factor.
\begin{lemma}[Monotonicity]\label{prop:monotonicity}
    Let $\rho_1,\ldots, \rho_r$ be  quantum states, and fix $\varepsilon_1$ and $\varepsilon_2$ such that $0\leq \varepsilon_1 \leq  \varepsilon_2$. Then 
    \begin{equation}
(1+\varepsilon_1)F_{\operatorname{SDP}}\!\left(\rho_1^{(\varepsilon_1)}, \ldots, \rho_r^{(\varepsilon_1)}\right)  \leq (1+\varepsilon_2)F_{\operatorname{SDP}}\!\left(\rho_1^{(\varepsilon_2)}, \ldots, \rho_r^{(\varepsilon_2)}\right).\label{eq:monotonicity-sdp-fidelity}
    \end{equation}
\end{lemma}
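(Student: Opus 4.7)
The plan is to reduce the claimed inequality to a monotonicity statement of $F_{\operatorname{SDP}}$ under the pointwise PSD ordering of its arguments, and then to read off that monotonicity directly from the dual SDP in \Cref{prop:dual_SDP_multi}.

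First, I would use the scaling property~\eqref{eq:scaling_multi_PSD}, valid for tuples of PSD operators, to rewrite both sides in a comparable form. Since $(1+\varepsilon)\rho_i^{(\varepsilon)} = \rho_i + \tfrac{\varepsilon}{d} I$, applying~\eqref{eq:scaling_multi_PSD} with $c = 1+\varepsilon$ gives
\begin{equation}
(1+\varepsilon)\, F_{\operatorname{SDP}}\!\left(\rho_1^{(\varepsilon)}, \ldots, \rho_r^{(\varepsilon)}\right) = F_{\operatorname{SDP}}\!\left(\rho_1 + \tfrac{\varepsilon}{d} I, \ldots, \rho_r + \tfrac{\varepsilon}{d} I\right).
\end{equation}
Therefore the desired inequality~\eqref{eq:monotonicity-sdp-fidelity} is equivalent to
\begin{equation}
F_{\operatorname{SDP}}\!\left(\rho_1 + \tfrac{\varepsilon_1}{d} I, \ldots, \rho_r + \tfrac{\varepsilon_1}{d} I\right) \leq F_{\operatorname{SDP}}\!\left(\rho_1 + \tfrac{\varepsilon_2}{d} I, \ldots, \rho_r + \tfrac{\varepsilon_2}{d} I\right).
\end{equation}

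Next, I would establish the following general monotonicity: if $A_i, A'_i$ are PSD operators with $A'_i \geq A_i$ for every $i \in [r]$, then $F_{\operatorname{SDP}}(A_1,\ldots,A_r) \leq F_{\operatorname{SDP}}(A'_1,\ldots,A'_r)$. This falls out of the dual formulation~\eqref{eq:multivariate_fid_sdp_dual}: for any feasible tuple $(X_{ij})_{i \neq j}$ of the dual at $(A_1,\ldots,A_r)$, we have
\begin{equation}
\sum_{i=1}^r |i\rangle\!\langle i| \otimes A'_i + \sum_{i \neq j} |i\rangle\!\langle j| \otimes X_{ij} = \underbrace{\sum_{i=1}^r |i\rangle\!\langle i| \otimes A_i + \sum_{i \neq j} |i\rangle\!\langle j| \otimes X_{ij}}_{\geq 0} + \sum_{i=1}^r |i\rangle\!\langle i| \otimes (A'_i - A_i) \geq 0,
\end{equation}
since the last sum is PSD by hypothesis. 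Hence $(X_{ij})_{i \neq j}$ is also feasible for the dual at $(A'_1,\ldots,A'_r)$, which yields the same objective value. Taking the supremum over all feasible $(X_{ij})_{i\neq j}$ gives the claimed monotonicity.

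Finally, I apply this monotonicity with $A_i = \rho_i + \tfrac{\varepsilon_1}{d} I$ and $A'_i = \rho_i + \tfrac{\varepsilon_2}{d} I$, which satisfy $A'_i - A_i = \tfrac{\varepsilon_2 - \varepsilon_1}{d} I \geq 0$ because $\varepsilon_1 \leq \varepsilon_2$, yielding~\eqref{eq:monotonicity-sdp-fidelity}. There is no real obstacle here: the only point requiring care is ensuring one works with the PSD version of $F_{\operatorname{SDP}}$ so that~\eqref{eq:scaling_multi_PSD} and the dual formulation still apply, which is automatic since the arguments $\rho_i + \tfrac{\varepsilon}{d} I$ are positive definite for $\varepsilon > 0$ and the case $\varepsilon_1 = 0$ is handled by taking the limit.
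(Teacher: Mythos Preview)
Your proof is correct but takes a different route from the paper. The paper works directly with the primal SDP~\eqref{eq:multivar-fid}: since the feasible set $\{(Y_i)_i : \sum_i |i\rangle\!\langle i|\otimes Y_i \geq \sum_{i\neq j}|i\rangle\!\langle j|\otimes I,\ Y_i\geq 0\}$ does not depend on the states, any feasible $(Y_i)_i$ for the $\varepsilon_2$ problem is automatically feasible for the $\varepsilon_1$ problem, and one simply uses $\operatorname{Tr}[Y_i(\rho_i+\varepsilon_2 I/d)] \geq \operatorname{Tr}[Y_i(\rho_i+\varepsilon_1 I/d)]$ from $Y_i\geq 0$. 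You instead first invoke the scaling property~\eqref{eq:scaling_multi_PSD} to strip off the normalization, and then argue monotonicity under the PSD ordering via the dual~\eqref{eq:multivariate_fid_sdp_dual}, where increasing the diagonal blocks enlarges the feasible region while leaving the objective unchanged. The two arguments are mirror images of one another (primal: same feasible set, monotone objective; dual: monotone feasible set, same objective). Your version has the advantage of isolating the general fact that $F_{\operatorname{SDP}}$ is monotone in each argument under the PSD order, which is a useful standalone observation; the paper's version is slightly more self-contained in that it does not appeal to the separately stated scaling and dual results. Incidentally, the limit argument you mention for $\varepsilon_1=0$ is unnecessary: $\rho_i$ is already PSD and the dual formulation (with strong duality via Slater) applies verbatim.
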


\begin{proof}
     Let $(Y_i)_{i=1}^r$ be a candidate tuple for the SDP for $F_{\operatorname{SDP}}\!\left(\rho_1^{(\varepsilon_2)}, \ldots, \rho_r^{(\varepsilon_2)}\right)$.
Then
\begin{align}
    \frac{1}{r(r-1)} \sum_{i=1}^n \Tr\! \left[ Y_i \rho_i^{(\varepsilon_2)} \right] 
    &= \frac{1}{r(r-1)(1+\varepsilon_2)} \sum_{i=1}^n \Tr\! \left[ Y_i (\rho_i + \varepsilon_2 I/d) \right]  \\ 
    & \geq \frac{1}{r(r-1)(1+\varepsilon_2) } \sum_{i=1}^n \Tr\! \left[ Y_i (\rho_i + \varepsilon_1 I/d) \right] \\ 
    & =\frac{(1+ \varepsilon_1)}{r(r-1)(1+ \varepsilon_2)}  \sum_{i=1}^n \Tr\! \left[ Y_i \rho_i^{(\varepsilon_1)} \right] \\ 
    &\geq \frac{(1+ \varepsilon_1)}{(1+ \varepsilon_2)}  F\!\left(\rho_1^{(\varepsilon_1)}, \ldots, \rho_r^{(\varepsilon_1)}\right).
\end{align}
The first inequality holds because $\varepsilon_1 \leq  \varepsilon_2$, and the last inequality follows from  \cref{def:multivar-fid}. 
By optimizing the left-hand side of the above over all $(Y_i)_{i=1}^r$ satisfying the constraints in the SDP for $F_{\operatorname{SDP}}\!\left(\rho_1^{(\varepsilon_2)}, \ldots, \rho_r^{(\varepsilon_2)}\right)$, and then rearranging the terms, we arrive at the desired inequality~\eqref{eq:monotonicity-sdp-fidelity}.
\end{proof}

\subsection{Proof of Proposition~\ref{prop:another_formulation_multi_SDP_fid} (Alternative formulation of multivariate SDP fidelity)} 

\label{Sec:proof_another_formulation_multi}

The  proof of \Cref{prop:another_formulation_multi_SDP_fid} is structured in four steps as follows: 
In the first step we show an equivalent formulation for the $K^\star$-representation in~\eqref{eq:K_star_F}.
This helps us in the second step to prove that $F_{K^\star}(\rho_1, \ldots, \rho_r) \leq F_{\operatorname{SDP}}(\rho_1, \ldots, \rho_r)$ for general states. 
In the third step, we show that $F_{K^\star}(\rho_1, \ldots, \rho_r) = F_{\operatorname{SDP}}(\rho_1, \ldots, \rho_r)$ for invertible states. 
Finally, in the fourth step, with the assistance of \cref{prop:monotonicity} and \cref{thm:uniform_cont_SDP_K} among other things, we establish the equivalence~\eqref{eq:sdp_fidelity_K_star_equivalence} for the general setting. 

\medskip

\noindent\underline{\textbf{Step~1}: Equivalent formulation for $K^\star$-representation}
\medskip 

Define 
\begin{multline}\label{eq:multivariateFidelity_another_eq_int}
        F_{\operatorname{Int}}(\rho_1, \ldots, \rho_r)  
        \coloneqq \\ \frac{2}{r(r-1)} \sup_{K_{ji}=K_{ij}^\dagger } \left\{  \sum_{i<j} \mathfrak{R}\!\left[ \Tr\!\left[ \rho_i^{1/2}K_{ij} \rho_j^{1/2}\right]\right]: I_r \otimes I_d+ \sum_{i \neq j} |i\rangle\!\langle j| \otimes K_{ij} \geq 0 \right\} .
    \end{multline} 
We first show that $F_{\operatorname{Int}}(\rho_1, \ldots, \rho_r) = F_{K^\star}(\rho_1, \ldots, \rho_r)$. 
For all $i\in [r]$, define the canonical purification of $\rho_i$ as 
\begin{equation}
    | \phi^{\rho_i} \rangle =(\rho_i^{1/2} \otimes I_d) | \Gamma \rangle,
\end{equation}
where $|\Gamma\rangle \coloneqq  \sum_{i=1}^d |i\rangle\otimes |i\rangle$ is the unnormalized maximally entangled vector and $d$ denotes the dimension of the Hilbert space on which each $\rho_i$ is defined.
With this, we can write each term in the objective function as 
\begin{equation}
    \mathfrak{R}\!\left[ \Tr\!\left[ \rho_i^{1/2}K_{ij} \rho_j^{1/2}\right]\right]= \mathfrak{R}\!\left[ \langle\phi^{\rho_i}| K_{ij} \otimes I_d | \phi^{\rho_j} \rangle \right].
\end{equation}
Recalling the definition of $K$ from~\eqref{eq:multivariateFidelity_another_eq_int}
\begin{equation} \label{eq:K_matrix_recall}
    K \coloneqq I_r \otimes I_d+ \sum_{i \neq j} |i\rangle\!\langle j| \otimes K_{ij} ,
\end{equation}
and considering that $K\geq 0$,
we also obtain that
\begin{equation}
    \frac{1}{\sqrt{r}} \left( \sum_{k=1}^r \langle k| \langle \phi^{\rho_k}| \right)\left(   I_r \otimes I_d \otimes I_d+ \sum_{i \neq j=1}^r  |i \rangle\! \langle j| \otimes K_{ij} \otimes I_d\right) \frac{1}{\sqrt{r}} \left( \sum_{\ell=1}^r |\ell \rangle  |\phi^{\rho_\ell} \rangle \right) \geq 0
\end{equation}
and 
\begin{align}
  &\frac{1}{\sqrt{r}} \left( \sum_{k=1}^r \langle k| \langle \phi^{\rho_k}| \right)\left(   I_r \otimes I_d \otimes I_d+ \sum_{i \neq j=1}^r |i \rangle\! \langle j| \otimes K_{ij} \otimes I_d\right) \frac{1}{\sqrt{r}} \left( \sum_{\ell=1}^r |\ell \rangle  |\phi^{\rho_\ell} \rangle \right)  \notag \\
  &= 1 + \frac{1}{r} \left(\sum_{i \neq j=1}^r \langle\phi^{\rho_i}| K_{ij} \otimes I_d | \phi^{\rho_j} \rangle  \right) \\
  &= 1 + \frac{2}{r} \left(\sum_{i < j} \mathfrak{R}\left[\langle\phi^{\rho_i}| K_{ij} \otimes I_d | \phi^{\rho_j} \rangle\right]  \right).
\end{align}
The above evaluates to the following equality: 
\begin{equation}\label{eq:pairwise_relation_fidelity}
\sum_{i < j} \mathfrak{R}\left[\langle\phi^{\rho_i}| K_{ij} \otimes I_d | \phi^{\rho_j} \rangle\right] = \frac{r}{2} \left( \langle \psi | K \otimes I_d | \psi \rangle -1\right),
\end{equation}
where 
\begin{equation}
     | \psi\rangle = \frac{1}{\sqrt{r}} \sum_{i=1}^r |i \rangle | \phi^{\rho_i} \rangle.
\end{equation}
Then, for the canonical purification of each $\rho_i$, we deduce that the objective function of the optimizations in both $F_{\operatorname{Int}}(\rho_1, \ldots, \rho_r)$ and $F_{K^\star}(\rho_1, \ldots, \rho_r)$ are equivalent. 

For any other purification, recall that all purifications are related by an isometry. 
Let 
\begin{equation}
    |\Phi^{\rho_i}\rangle = ( U_i \otimes I_d )|\phi^{\rho_i}\rangle,
\end{equation}
where $U_i$ is an isometry for all $i \in [r]$.
Then, the objective function inside the SDP (i.e., without considering the constants) evaluates to 
\begin{equation}
\sum_{i < j} \mathfrak{R}\left[\langle\Phi^{\rho_i}| K_{ij} \otimes I_d | \Phi^{\rho_j} \rangle\right] = \sum_{i < j} \mathfrak{R}\left[\langle\phi^{\rho_i}| U_i^\dagger K_{ij} U_j\otimes I_d | \phi^{\rho_j} \rangle\right].
\end{equation}
Also note that the matrix formed by replacing $K_{ij}$ in~\eqref{eq:K_matrix_recall} with $U_i^\dagger K_{ij} U_j$ also satisfies the constraint due to: 
\begin{equation}
    \sum_{i=1}^r |i \rangle\!\langle i| \otimes U_i^\dagger  \left(I_r \otimes I_d+ \sum_{i \neq j} |i\rangle\!\langle j| \otimes K_{ij}  \right) \sum_{k=1}^r |k \rangle\!\langle k| \otimes U_k \geq 0,
\end{equation}
which follows from the fact that $U_i^\dagger U_i =I$.
Then, the previous arguments in the canonical purification follow for all purifications of the considered states. We then conclude that 
\begin{align}
 & F_{K^\star}(\rho_1, \ldots, \rho_r)=  F_{\operatorname{Int}}(\rho_1, \ldots, \rho_r)  \\
 &=
 \frac{2}{r(r-1)} \sup_{K_{ji}=K_{ij}^\dagger } \left\{  \sum_{i<j} \mathfrak{R}\!\left[ \Tr\!\left[ \rho_i^{1/2}K_{ij} \rho_j^{1/2}\right]\right]: I_r \otimes I_d+ \sum_{i \neq j} |i\rangle\!\langle j| \otimes K_{ij} \geq 0 \right\} \label{eq:multivariateFidelity_another_eq}.
\end{align}

\noindent\underline{\textbf{Step~2}: $F_{K^\star}(\rho_1,\ldots, \rho_r)\leq F_{\operatorname{SDP}}(\rho_1, \ldots, \rho_r)$ for general states} 
\medskip

Consider the constraint: $I_r \otimes I_d+ \sum_{i \neq j} |i\rangle\!\langle j| \otimes K_{ij} \geq 0 $ in~\eqref{eq:multivariateFidelity_another_eq}. 
By left and right multiplying the above inequality by 
$\sum_{i=1}^r |i\rangle\!\langle i| \otimes \rho_i^{1/2}$, we obtain 
\begin{equation}
    \sum_{i=1}^r |i\rangle\!\langle i| \otimes \rho_i + \sum_{i \neq j} |i\rangle\!\langle j| \otimes \rho_i ^{1/2} K_{ij}   \rho_j^{1/2} \geq 0.
\end{equation}
Then referring to the dual SDP of multivariate SDP fidelity in \cref{prop:dual_SDP_multi}, we see that a feasible point in the optimization~\eqref{eq:multivariate_fid_sdp_dual} is given by $X_{ij}= \rho_i ^{1/2} K_{ij}   \rho_j^{1/2} $ for $i\neq j$.
With that we have:
\begin{equation}
   \frac{2}{r(r-1)}   \sum_{i<j} \mathfrak{R}\!\left[ \Tr\!\left[ \rho_i^{1/2}K_{ij} \rho_j^{1/2}\right]\right] \leq F_{\operatorname{SDP}}(\rho_1,\ldots, \rho_r).
\end{equation}
Then, optimizing over all $K$ satisfying the required constraint, we arrive at the claim:
\begin{equation}\label{eq:inequality_K_SDP_general}
    F_{K^\star}(\rho_1,\ldots,\rho_r) \leq F_{\operatorname{SDP}}(\rho_1,\ldots, \rho_r).
\end{equation}

\bigskip
\noindent\underline{\textbf{Step~3}: $F_{K^\star}(\rho_1,\ldots,\rho_r) =F_{\operatorname{SDP}}(\rho_1,\ldots, \rho_r)$ for invertible states}
\medskip 

Consider an arbitrary feasible point in the dual SDP~\eqref{eq:multivariate_fid_sdp_dual} of multivariate SDP fidelity:
\begin{equation}
     \sum_{i=1}^r |i\rangle\!\langle i| \otimes \rho_i + \sum_{i \neq j} |i\rangle\!\langle j| \otimes X_{ij} \geq 0.
\end{equation}
Using the assumption that $\rho_i$  is invertible for all $i \in [ r]$, by left and right multiplying the above inequality by the positive definite matrix
$\sum_{i=1}^r |i\rangle\!\langle i| \otimes \rho_i^{-1/2}$, we obtain 
\begin{equation}
    \sum_{i=1}^r |i\rangle\!\langle i| \otimes I_d + \sum_{i \neq j} |i\rangle\!\langle j| \otimes  \rho_i^{-1/2} X_{ij} \rho_j^{-1/2}   \geq 0.
\end{equation}
This implies that the matrices $K_{ij}=\rho_i^{-1/2} X_{ij} \rho_j^{-1/2}$ for $i \neq j$ form a feasible point for the optimization in~\eqref{eq:multivariateFidelity_another_eq}.
With this choice of a feasible point, we get 
\begin{equation}
     \frac{2}{r(r-1)}   \sum_{i<j} \mathfrak{R}\!\left[ \Tr\!\left[ X_{ij}\right]\right] \leq F_{K^\star}(\rho_1,\ldots, \rho_r).
\end{equation}
Since the above inequality holds for an arbitrary feasible point in~\eqref{eq:multivariate_fid_sdp_dual}, it follows by the definition that
\begin{equation}\label{eq:sdp_fidelity_upper_bound_kstar_invertible}
      F_{\operatorname{SDP}}(\rho_1,\ldots, \rho_r) \leq   F_{K^\star}(\rho_1,\ldots,\rho_r).
\end{equation}
Combining~\eqref{eq:sdp_fidelity_upper_bound_kstar_invertible} with~\eqref{eq:inequality_K_SDP_general}, we get
\begin{equation}\label{eq:equality_K_SDP_invertible}
    F_{\operatorname{SDP}}(\rho_1,\ldots, \rho_r) =   F_{K^\star}(\rho_1,\ldots,\rho_r)
\end{equation}
for invertibles states.

\bigskip
\noindent\underline{\textbf{Step~4}: Extending to non-invertible states}
\medskip 

For all $i\in [r]$ and $\varepsilon > 0 $,
\begin{equation}
     \rho_i^{(\varepsilon)} \coloneqq \dfrac{1}{1+\varepsilon} \left(\rho_i +  \frac{\varepsilon}{d} I \right)
\end{equation}
is an invertible state. Then, we have that 
\begin{equation}
  \frac{1}{2} \left  \| \rho_i - \rho_i^{(\varepsilon)}  \right\|_1 = \frac{\varepsilon}{2(1+\varepsilon)} \left\| \rho_i - \frac{I}{d} \right\|_1 \leq \frac{ \varepsilon}{1+\varepsilon} \eqqcolon \varepsilon',
\end{equation}
leading to $ F\!\left(\rho_i, \rho_i^{(\varepsilon)}\right) \geq 1- \varepsilon' $. 
Then by \cref{thm:uniform_cont_SDP_K}, we get
\begin{equation} \label{eq:continuity_k_represent}
    \left| F_{K^\star}(\rho_1, \ldots, \rho_r) -F_{K^\star}\!\left( \rho_1^{(\varepsilon)}, \ldots, \rho_r^{(\varepsilon)} \right)   \right| \leq \frac{r}{r-1} \sqrt{\varepsilon'(2-\varepsilon')}.
\end{equation}

Using \cref{prop:monotonicity} with the choice $\varepsilon_1=0$ and $\varepsilon_2=\varepsilon >0$, we have that 
\begin{align}
F_{\operatorname{SDP}}\!\left(\rho_1, \ldots, \rho_r \right)  &\leq (1+\varepsilon)F_{\operatorname{SDP}}\!\left(\rho_1^{(\varepsilon)}, \ldots, \rho_r^{(\varepsilon)}\right)\\ 
& \leq F_{\operatorname{SDP}}\!\left(\rho_1^{(\varepsilon)}, \ldots, \rho_r^{(\varepsilon)}\right) + \varepsilon, \label{eq:otherside_bound_SDP_U}
\end{align}
where the last inequality holds because $F_{\operatorname{SDP}}\!\left(\rho_1^{(\varepsilon)}, \ldots, \rho_r^{(\varepsilon)}\right) \leq 1$.
Then consider that
\begin{align}
F_{\operatorname{SDP}}\!\left(\rho_1^{(\varepsilon)}, \ldots, \rho_r^{(\varepsilon)}\right)&= F_{K^\star}\!\left(\rho_1^{(\varepsilon)}, \ldots, \rho_r^{(\varepsilon)}\right)\\ 
& \leq F_{K^\star}(\rho_1, \ldots, \rho_r) +\frac{r}{r-1} \sqrt{\varepsilon'(2-\varepsilon')} \\ 
& \leq F_{\operatorname{SDP}}\!\left(\rho_1, \ldots, \rho_r \right) + \frac{r}{r-1} \sqrt{\varepsilon'(2-\varepsilon')}, \label{eq:onesided_bound_SDP_u}
\end{align}
where the first equality follows from~\eqref{eq:equality_K_SDP_invertible}; the second inequality from~\eqref{eq:continuity_k_represent}; and finally the third inequality from~\eqref{eq:inequality_K_SDP_general} for general states. 

Combining~\eqref{eq:otherside_bound_SDP_U} and~\eqref{eq:onesided_bound_SDP_u}, we arrive at 
\begin{align}
    \left| F_{\operatorname{SDP}}(\rho_1, \ldots, \rho_r) -F_{\operatorname{SDP}}\!\left( \rho_1^{(\varepsilon)}, \ldots, \rho_r^{(\varepsilon)} \right)   \right| &\leq \max\!\left\{\varepsilon, \frac{r}{r-1} \sqrt{\varepsilon'(2-\varepsilon')} \right\}\\
    & \leq 2 \sqrt{\varepsilon'(2-\varepsilon')} \\
    & = 2 \sqrt{\frac{ \varepsilon}{1+\varepsilon}\left(2-\frac{ \varepsilon}{1+\varepsilon}\right)}.
\label{eq:continuity_SDP}
\end{align}

With the established machinery, consider 
\begin{align}
    &\left| F_{\operatorname{SDP}}(\rho_1, \ldots, \rho_r) -  F_{K^\star}(\rho_1, \ldots, \rho_r) \right|  \cr
    &\leq  \left| F_{\operatorname{SDP}}(\rho_1, \ldots, \rho_r) -F_{\operatorname{SDP}}\!\left( \rho_1^{(\varepsilon)}, \ldots, \rho_r^{(\varepsilon)} \right)   \right|     \notag \\
    & \quad \quad + \left|F_{\operatorname{SDP}}\!\left( \rho_1^{(\varepsilon)}, \ldots, \rho_r^{(\varepsilon)} \right)  -F_{K^\star}\!\left( \rho_1^{(\varepsilon)}, \ldots, \rho_r^{(\varepsilon)} \right)   \right| \notag \\ & \quad \quad  + \left| F_{K^\star}(\rho_1, \ldots, \rho_r) -F_{K^\star}\!\left( \rho_1^{(\varepsilon)}, \ldots, \rho_r^{(\varepsilon)} \right)   \right|\\
    &\leq  2 \sqrt{\frac{ \varepsilon}{1+\varepsilon}\left(2-\frac{ \varepsilon}{1+\varepsilon}\right)} + 0 + \frac{r}{r-1} \sqrt{\frac{ \varepsilon}{1+\varepsilon}\left(2-\frac{ \varepsilon}{1+\varepsilon}\right)},
\end{align}
where the first inequality follows from triangular inequality; the second inequality from using~\eqref{eq:continuity_k_represent} and~\eqref{eq:continuity_SDP} along with the fact that $F_{\operatorname{SDP}}\!\left( \rho_1^{(\varepsilon)}, \ldots, \rho_r^{(\varepsilon)} \right) =F_{K^\star}\!\left( \rho_1^{(\varepsilon)}, \ldots, \rho_r^{(\varepsilon)} \right)$ due to~\eqref{eq:equality_K_SDP_invertible}.

Finally, by taking limit ${\varepsilon} \to 0$ in the above inequality, we conclude that 
\begin{equation}
    \left| F_{\operatorname{SDP}}(\rho_1, \ldots, \rho_r) -  F_{K^\star}(\rho_1, \ldots, \rho_r) \right| =0,
\end{equation}
which completes the proof for the general case.

\subsection{Proof of Theorem~\ref{thm:properties_SDP_fidelity} (Properties of multivariate SDP fidelity)} 

\label{Sec:proof_properties}

Note that multivariate SDP fidelity satisfies reduction to classical average pairwise fidelity, faithfulness, and orthogonality by \cref{thm:SDP_fidelity_pairwise_upper_and_lower} and the fact that average pairwise fidelities satisfy these properties as stated in \cref{Prop:properties_average_pairwise_z}.

\medskip

\noindent\underline{Data processing:}
Let $\cN: \mathscr{L}(\cH_A) \to \mathscr{L}(\cH_{B})$ be a channel, where the dimension of $\cH_A$ and $\cH_B$ be $d_A$ and $d_B$, respectively. 
    For $i \in [r]$, let $Y_i$  be such that 
    $\sum_{i=1}^r |i\rangle\!\langle i| \otimes Y_i \geq \sum_{i \neq j} |i\rangle\!\langle j| \otimes I_{d_B} $. 
    Consider that
    \begin{equation}
        \frac{1}{r(r-1)} \sum_{i=1}^r \Tr\!\left[ Y_i \cN(\rho_i) \right] = \frac{1}{r(r-1)} \sum_{i=1}^r \Tr\!\left[ \cN^\dagger(Y_i) \rho_i \right].
    \end{equation}
Then, due to $\cN$ being completely positive, along with $\cN$ being trace-preserving (so that $\cN^\dagger(I_{d_B}) =I_{d_A}$),  we have that
\begin{equation}
    0 \leq \sum_{i=1}^r |i\rangle\!\langle i| \otimes \cN^\dagger(Y_i) - \sum_{i \neq j} |i\rangle\!\langle j| \otimes \cN^\dagger(I_{d_B}) = \sum_{i=1}^r |i\rangle\!\langle i| \otimes \cN^\dagger(Y_i) - \sum_{i \neq j} |i\rangle\!\langle j| \otimes I_{d_A}   .
\end{equation}
With that, the tuple $(\cN^\dagger(Y_i) )_{i=1}^r$ is a candidate for $F_{\operatorname{SDP}}(\rho_1, \ldots, \rho_r)$. Thus, 
\begin{equation}
    F_{\operatorname{SDP}}(\rho_1, \ldots, \rho_r) \leq  \frac{1}{r(r-1)} \sum_{i=1}^r \Tr\!\left[ Y_i \cN(\rho_i) \right]. 
\end{equation}
This holds for all $(Y_i)_{i=1}^r$ satisfying $\sum_{i=1}^r |i\rangle\!\langle i| \otimes Y_i \geq \sum_{i \neq j} |i\rangle\!\langle j| \otimes I_{d_A} $. Finally, taking the infimum over all such $(Y_i)_{i=1}^r$ concludes the proof.

\begin{remark}
    We note that the proof for data processing given above holds not just for completely positive, trace-preserving maps, but more generally for $r$-positive, trace-preserving maps.
\end{remark}

\medskip 

\noindent\underline{Symmetry:} This follows directly from the definition of the SDP fidelity.

\medskip

\noindent\underline{Direct-sum property:}
    Let $( Y_i^x)_{i=1}^r$ for all $x \in \cX$ satisfy 
    $ \sum_{i=1}^r |i\rangle\!\langle i| \otimes Y^x_i - \sum_{i \neq j} |i\rangle\!\langle j| \otimes I \geq 0$.
    Consider that
    \begin{align}
       & \frac{1}{r (r-1)} \sum_{x \in \cX} p(x)  \sum_{i=1}^r \Tr[ Y_i^x \rho_i^x] \notag \\ 
        & =  \frac{1}{r(r-1)} \sum_{i=1}^r  \Tr\! \left[\sum_{x \in \cX} p(x)  Y_i^x \rho_i^x \right]  \\ 
        &= \frac{1}{r(r-1)} \sum_{i=1}^r  \Tr\! \left[\left(\sum_{x' \in \cX} |x' \rangle\! \langle x'| \otimes Y_i^{x'} \right)  \sum_{x \in \cX} p(x)   |x \rangle\!\langle x | \otimes \rho_i^x \right].\label{eq:classical-quantum-state-trace-relation}
    \end{align}
    From the conditions satisfied by $\left(Y_i^x\right)_{i=1}^r$, we also get 
    \begin{align}
       \sum_{i=1}^r |i\rangle\!\langle i|  \otimes |x \rangle\!\langle x| \otimes Y^x_i  - \sum_{i \neq j} |i\rangle\!\langle j| \otimes |x \rangle\!\langle x|\otimes I_d  \geq 0. 
    \end{align}
    By summing over $x \in \cX$, we get
     \begin{align}
       \sum_{i=1}^r |i\rangle\!\langle i| \otimes \sum_{x \in \cX } |x \rangle\!\langle x| \otimes Y^x_i   - \sum_{i \neq j} |i\rangle\!\langle j| \otimes  I_\mathcal{|X|}  \otimes I_d \geq 0. 
    \end{align}
    Then, $\left ( \sum_{x \in \cX } |x \rangle\!\langle x| \otimes Y^x_i\right )_{i=1}^r$ forms a feasible point for the optimization in \sloppy $F_{\operatorname{SDP}}\! \left( \sum_{x \in \cX} p(x) |x\rangle\!\langle x| \otimes \rho_1^x, \ldots, \sum_{x \in \cX} p(x)  |x\rangle\!\langle x| \otimes \rho_r^x\right)$.
    Using the relation~\eqref{eq:classical-quantum-state-trace-relation}, this leads to
\begin{multline}
F_{\operatorname{SDP}}\! \left( \sum_{x \in \cX} p(x) |x\rangle\!\langle x| \otimes \rho_1^x, \ldots, \sum_{x \in \cX} p(x)  |x\rangle\!\langle x| \otimes \rho_r^x\right) \\
     \leq \frac{1}{r (r-1)} \sum_{x \in \cX} p(x)  \sum_{i=1}^r \Tr[ Y_i^x \rho_i^x]. 
\end{multline}
Since the last inequality holds for all $\left(Y_i^x\right)_{i=1}^r$ and for all $x \in \cX$, this also holds for  the optimal ones achieving $ F(\rho_1^x, \ldots, \rho_r^x)$.
We thus get 
 \begin{equation}\label{eq:forward}
         F_{\operatorname{SDP}}\! \left( \sum_{x \in \cX} p(x) |x\rangle\!\langle x| \otimes \rho_1^x, \ldots, \sum_{x \in \cX} p(x)  |x\rangle\!\langle x| \otimes \rho_r^x\right) \leq  \sum_{x \in \cX} p(x)  F_{\operatorname{SDP}}(\rho_1^x, \ldots, \rho_r^x).
    \end{equation}

To prove the reverse direction, 
let $\left( Z_i\right)_{i=1}^r$ satisfy 
    $ \sum_{i=1}^r |i\rangle\!\langle i| \otimes Z_i - \sum_{i \neq j} |i\rangle\!\langle j|  \otimes I_\mathcal{|X|} \otimes I_d \geq 0$. Then
   \begin{multline} \label{eq:different_forms_with_tensor}
       \frac{1}{r(r-1)} \sum_{i=1}^r  \Tr\! \left[ Z_i  \sum_{x \in \cX} p(x)   |x \rangle\!\langle x | \otimes \rho_i^x \right] \\ = \sum_{x \in \cX} p(x)  \frac{1}{r(r-1)} \sum_{i=1}^r  \Tr\! \left[ (\langle x| \otimes I_d  ) Z_i ( | x\rangle \otimes I_d ) \rho_i^x \right].
   \end{multline}
From the conditions satisfied by  $\left( Z_i\right)_{i=1}^r$, we also get for all $x \in \cX$ that
    \begin{multline}
        \sum_{i=1}^r |i\rangle\!\langle i|  \otimes (\langle x| \otimes I_d  ) Z_i ( | x\rangle \otimes I_d )   - \sum_{i \neq j} |i\rangle\!\langle j| \otimes  \langle x|x\rangle \otimes I_d  \geq 0 \\ \Longrightarrow \qquad \sum_{i=1}^r |i\rangle\!\langle i|  \otimes (\langle x| \otimes I_d  ) Z_i ( | x\rangle \otimes I_d )   - \sum_{i \neq j} |i\rangle\!\langle j| \otimes   I_d  \geq 0.
    \end{multline}
Then $ \left( (\langle x| \otimes I_d  ) Z_i ( | x\rangle \otimes I_d )\right)_{i=1}^r $ is a candidate for $ F_{\operatorname{SDP}}(\rho_1^x, \ldots, \rho_r^x)$, leading to 
\begin{equation}
    F_{\operatorname{SDP}}(\rho_1^x, \ldots, \rho_r^x) \leq  \frac{1}{r(r-1)} \sum_{i=1}^r  \Tr\! \left[  (\langle x| \otimes I_d  ) Z_i ( | x\rangle \otimes I_d )\rho_i^x \right].
\end{equation}
Summing over all $x \in \cX$ together with~\eqref{eq:different_forms_with_tensor} leads to
\begin{equation}
    \sum_{x \in \cX} p(x) F_{\operatorname{SDP}}(\rho_1^x, \ldots, \rho_r^x) \leq  \frac{1}{r(r-1)} \sum_{i=1}^r  \Tr\! \left[ Z_i  \sum_{x \in \cX} p(x)   |x \rangle\!\langle x | \otimes \rho_i^x \right].
\end{equation}
Since the above inequality also holds for an optimal choice of $(Z_i)_{i=1}^r$, we arrive at
 \begin{equation}\label{eq:reverse}
         \sum_{x \in \cX} p(x)  F_{\operatorname{SDP}}(\rho_1^x, \ldots, \rho_r^x) \leq  F_{\operatorname{SDP}}\! \left( \sum_{x \in \cX} p(x) |x\rangle\!\langle x| \otimes \rho_1^x, \ldots, \sum_{x \in \cX} p(x)  |x\rangle\!\langle x| \otimes \rho_r^x\right)  .
    \end{equation}
  Lastly, by combining~\eqref{eq:forward}  and~\eqref{eq:reverse}, we conclude the proof of~\eqref{eq:CQequality}. 

\medskip 
\noindent\underline{Joint concavity:}
Using the direct-sum property (property 6 in \cref{thm:properties_SDP_fidelity}), we have
    \begin{equation}
         F_{\operatorname{SDP}}\! \left( \sum_{x \in \cX} p(x) |x\rangle\!\langle x| \otimes \rho_1^x, \ldots, \sum_{x \in \cX} p(x)  |x\rangle\!\langle x| \otimes \rho_r^x\right) = \sum_{x \in \cX} p(x)  F_{\operatorname{SDP}}(\rho_1^x, \ldots, \rho_r^x).
    \end{equation}
    Then, using the above and applying data processing  with respect to the partial trace channel (i.e., taking a partial trace over the classical system) to the left-hand side concludes the proof.

\subsection{Proof of Proposition~\ref{prop:coarse_graining} (Coarse-graining property of SDP fidelity)} \label{Proof:coarse_graining}
    Let $(Y_i)_{i=1}^{r+m}$ be such that 
    \begin{equation}
        \sum_{i=1}^{r +m}|i\rangle\!\langle i| \otimes Y_i \geq  \sum_{i \neq j} |i\rangle\!\langle j| \otimes I.
        \label{eq:full-r-and-m-constraint}
    \end{equation}
    Then that choice is a possible candidates for the SDP of $ F_{\operatorname{SDP}}(\rho_1,\ldots,\rho_r, \dots, \rho_{r+m})$ with the extension of~\eqref{eq:multivar-fid} to $r+m$ states. Consider 
    \begin{align}
          \sum_{i=1}^{r+m}\Tr[Y_i \rho_i] & \geq \sum_{i=1}^{r}\Tr[Y_i \rho_i]  \\
          &  \geq r(r-1)  F_{\operatorname{SDP}}(\rho_1, \dots, \rho_r)
    \end{align}
    where the first inequality follows from $Y_i \geq 0$ for all $i \in [r+m]$, and the last inequality from the fact that $(Y_i)_{i=1}^r$ ( $r$ elements from the tuple $(Y_i)_{i=1}^{r+m}$) forms a possible candidate for the SDP of $ F_{\operatorname{SDP}}(\rho_1,\ldots, \rho_r)$ in~\eqref{eq:multivar-fid} due to~\eqref{eq:full-r-and-m-constraint} implying the following operator inequality holds:
    \begin{equation}
        \sum_{i=1}^r |i\rangle\!\langle i| \otimes Y_i \geq  \sum_{i \neq j} |i\rangle\!\langle j| \otimes I.
    \end{equation} We conclude the proof by infimizing the LHS above over all possible candidates $(Y_i)_{i=1}^{r+m}$ satisfying the constraints, which renders 
    \begin{equation}
        (r+m) (r+m-1)  F_{\operatorname{SDP}}(\rho_1,\ldots,\rho_r, \dots, \rho_{r+m}) \geq r(r-1)  F_{\operatorname{SDP}}(\rho_1, \dots, \rho_r).
    \end{equation}

\subsection{Proof of Proposition~\ref{prop:SDP-secrecy-SDP} (SDPs for the secrecy measure~$S_{\operatorname{SDP}}$)}

\label{app:SDP-secrecy-SDP}

Starting from the $K^{\star}$ representation in \cref{def:K-star-rep}, observe that the
constraint there is equivalent to $\left(  \Delta_{r}\otimes\operatorname{id}
_{d}\right)  \left(  K\right)  =I_{r} \otimes I_{d}$. Also, observe that
\begin{equation}
S_{\operatorname{SDP}}(\rho_{1},\ldots,\rho_{r})^{2}=\frac{1}{r}\left[  \left(
r-1\right)  F_{\operatorname{SDP}}(\rho_{1},\ldots,\rho_{r})+1\right]  .
\end{equation}
The equality in \eqref{eq:SDP-secrecy-measure} then follows after some simple
algebra and recalling that $|\varphi\rangle$ above and $|\psi\rangle$ in \eqref{eq:K_star_F}
are related as $|\psi\rangle=\frac{1}{\sqrt{r}}|\varphi\rangle$. The equality
in \eqref{eq:SDP-secrecy-measure-SDP-dual} follows from SDP duality theory. Recall $\omega\coloneqq \operatorname{Tr}_{S}[|\varphi\rangle\!\langle\varphi|_{XRS}]$ and 
consider that
\begin{align}
& \sup_{K\geq0}\left\{  \langle\varphi|K\otimes I_{d}|\varphi\rangle:\left(
\Delta_{r}\otimes\operatorname{id}_{d}\right)  \left(  K\right)
=I_{r} \otimes I_d \right\}  \nonumber\\
& =\sup_{K\geq0}\left\{  \operatorname{Tr}[K\omega]:\left(  \Delta_{r}
\otimes\operatorname{id}_{d}\right)  \left(  K\right)  =I_{r} \otimes I_{d} \right\}  \\
& =\sup_{K\geq0}\left\{  \operatorname{Tr}[K\omega]+\inf_{Y\in\operatorname{Herm}
}\operatorname{Tr}[Y\left(  I_r \otimes I_d-\left(  \Delta_{r}\otimes\operatorname{id}
_{d}\right)  \left(  K\right)  \right)  ]\right\}  \\
& =\sup_{K\geq0}\inf_{Y\in\operatorname{Herm}}\left\{  \operatorname{Tr}
[K\omega]+\operatorname{Tr}[Y\left(  I_r \otimes I_d-\left(  \Delta_{r}\otimes
\operatorname{id}_{d}\right)  \left(  K\right)  \right)  ]\right\}  \\
& \leq\inf_{Y\in\operatorname{Herm}}\sup_{K\geq0}\left\{  \operatorname{Tr}
[K\omega]+\operatorname{Tr}[Y\left(  I_r \otimes I_d-\left(  \Delta_{r}\otimes
\operatorname{id}_{d}\right)  \left(  K\right)  \right)  ]\right\}  \\
& =\inf_{Y\in\operatorname{Herm}}\sup_{K\geq0}\left\{  \operatorname{Tr}
[Y]+\operatorname{Tr}[K\omega]-\operatorname{Tr}[\left(  \Delta_{r}
\otimes\operatorname{id}_{d}\right)  (Y)K]\right\}  \\
& =\inf_{Y\in\operatorname{Herm}}\left\{  \operatorname{Tr}[Y]+\sup_{K\geq
0}\operatorname{Tr}[K\left(  \omega-\left(  \Delta_{r}\otimes\operatorname{id}
_{d}\right)  (Y)\right)  ]\right\}  \\
& =\inf_{Y\in\operatorname{Herm}}\left\{  \operatorname{Tr}[Y]:\left(  \Delta
_{r}\otimes\operatorname{id}_{d}\right)  (Y)\geq\omega\right\}  .
\end{align}
Strong duality holds by picking $K=I_{r} \otimes I_d$ in the primal (feasible)\ and
$Y=2\lambda_{\max}(\omega)I$ in the dual (strictly feasible).

\subsection{Proof of Proposition~\ref{prop:Multiplicativity-S-SDP} (Multiplicativity of the secrecy measure $S_{\operatorname{SDP}}$)}

\label{app:Multiplicativity-S-SDP}

This follows by using SDP duality. Let $d_{1}$ denote the dimension of the
states $\rho_{1},\ldots,\rho_{r_{1}}$ and $d_{2}$ the dimension of the states
$\sigma_{1},\ldots,\sigma_{r_{2}}$. Define
\begin{align}
|\varphi_{1}\rangle & \coloneqq \sum_{i=1}^{r_{1}}|i\rangle|\psi^{\rho_{i}}\rangle,\\
|\varphi_{2}\rangle & \coloneqq \sum_{j=1}^{r_{2}}|j\rangle|\psi^{\sigma_{j}}\rangle,
\end{align}
where $|\psi^{\rho_{i}}\rangle$ purifies $\rho_{i}$ and $|\psi^{\sigma_{j}
}\rangle$ purifies $\sigma_{j}$ for all $i \in {[r_1]}$ and $j \in {[r_2]}$. Then
\begin{equation}
|\varphi_{1}\rangle|\varphi_{2}\rangle\simeq|\varphi\rangle\coloneqq \sum_{i=1}
^{r_{1}}\sum_{j=1}^{r_{2}}|i\rangle|j\rangle|\psi^{\rho_{i}}\rangle
|\psi^{\sigma_{j}}\rangle
\end{equation}
is a choice of $|\varphi\rangle$ for the tuple $\left(  \rho_{i}\otimes
\sigma_{j}\right)  _{ij}$ (where we note that $\simeq$ indicates we have
permuted some systems). Let $K_{i}$ satisfy $K_{i}\geq0$ and $\left(
\Delta_{r_{i}}\otimes\operatorname{id}_{d_i}\right)  \left(  K_{i}\right)
=I_{r_{i}} \otimes I_{d_i}$, for $i\in\left\{  1,2\right\}  $. Then it follows that
$K_{1}\otimes K_{2}$ is feasible for $S_{\operatorname{SDP}}(\left(  \rho_{i}
\otimes\sigma_{j}\right)  _{ij})$ because
\begin{align}
K_{1}\otimes K_{2}  & \geq0,\\
\left(  \Delta_{r_{1}}\otimes\operatorname{id}_{d_{1}}\otimes\Delta_{r_{2}
}\otimes\operatorname{id}_{d_{2}}\right)  \left(  K_{1}\otimes K_{2}\right)
& =I_{r_{1}}\otimes I_{d_{1}} \otimes I_{r_{2}}\otimes I_{d_{2}},
\end{align}
and we can think of $\Delta_{r_{1}}\otimes\Delta_{r_{2}}\simeq\Delta
_{r_{1}r_{2}}$ and $\operatorname{id}_{d_{1}}\otimes\operatorname{id}_{d_{2}
}\simeq\operatorname{id}_{d_{1}d_{2}}$. Furthermore,
\begin{align}
& \frac{1}{r_{1}^{2}}\langle\varphi_{1}|K_{1}\otimes I_{d_1}|\varphi_{1}
\rangle\cdot\frac{1}{r_{2}^{2}}\langle\varphi_{2}|K_{2}\otimes I_{d_2}
|\varphi_{2}\rangle\nonumber\\
& =\frac{1}{\left(  r_{1}r_{2}\right)  ^{2}}\langle\varphi|K_{1}\otimes
K_{2}\otimes I_{d_1}\otimes I_{d_2}|\varphi\rangle\\
& \leq S_{\operatorname{SDP}}(\left(  \rho_{i}\otimes\sigma_{j}\right)  _{ij})^{2}.
\end{align}
Since the inequality holds for all feasible $K_{1}$ and $K_{2}$, we conclude
that
\begin{equation}
S_{\operatorname{SDP}}(\rho_{1},\ldots,\rho_{r_{1}})^{2}\cdot S_{\operatorname{SDP}}
(\sigma_{1},\ldots,\sigma_{r_{2}})^{2}\leq S_{\operatorname{SDP}}(\left(  \rho
_{i}\otimes\sigma_{j}\right)  _{ij})^{2}
,\label{eq:primal-multiplica-SDP-secrecy}
\end{equation}
which constitutes one direction of the proof.

For the other direction, we use the dual SDP\ in
\eqref{eq:SDP-secrecy-measure-SDP-dual}. Let $Y_{i}$ be Hermitian and satisfy
\begin{equation}
\left(  \Delta_{r_{i}}\otimes\operatorname{id}_{d_{i}}\right)  (Y_{i}
)\geq\omega_{i},
\end{equation}
where $\omega_{i}\coloneqq \operatorname{Tr}_{S_{i}}[|\varphi_{i}\rangle\!\langle
\varphi_{i}|_{X_{i}R_{i}S_{i}}]$, for $i\in\left\{  1,2\right\}  $. So $Y_{1}$
is feasible for $S_{\operatorname{SDP}}(\rho_{1},\ldots,\rho_{r_{1}})$ and $Y_{2}$ for
$S_{\operatorname{SDP}}(\sigma_{1},\ldots,\sigma_{r_{2}})$. Then, since $\omega
_{i}\geq0$, it follows that $\left(  \Delta_{r_{i}}\otimes\operatorname{id}
_{d_{i}}\right)  (Y_{i})\geq0$, for $i\in\left\{  1,2\right\}  $. Thus,
$Y_{1}\otimes Y_{2}$ is Hermitian and
\begin{align}
\omega_{1}\otimes\omega_{2}  & \leq\left(  \Delta_{r_{1}}\otimes
\operatorname{id}_{d_{1}}\right)  (Y_{1})\otimes\omega_{2}\\
& \leq\left(  \Delta_{r_{1}}\otimes\operatorname{id}_{d_{1}}\right)
(Y_{1})\otimes\left(  \Delta_{r_{1}}\otimes\operatorname{id}_{d_{1}}\right)
(Y_{2})\\
& =\left(  \Delta_{r_{1}}\otimes\operatorname{id}_{d_{1}}\otimes\Delta_{r_{1}
}\otimes\operatorname{id}_{d_{1}}\right)  (Y_{1}\otimes Y_{2}).
\end{align}
Thus, $Y_{1}\otimes Y_{2}$ is feasible for the dual SDP\ of $S_{\text{SDP}
}(\left(  \rho_{i}\otimes\sigma_{j}\right)  _{ij})$ in
\eqref{eq:SDP-secrecy-measure-SDP-dual}. As such, we conclude that
\begin{align}
\frac{1}{r_{1}^{2}}\operatorname{Tr}[Y_{1}]\cdot\frac{1}{r_{2}^{2}
}\operatorname{Tr}[Y_{2}]  & =\frac{1}{\left(  r_{1}r_{2}\right)  ^{2}
}\operatorname{Tr}[Y_{1}\otimes Y_{2}]\\
& \geq S_{\operatorname{SDP}}(\left(  \rho_{i}\otimes\sigma_{j}\right)  _{ij})^{2}.
\end{align}
Since the inequality holds for all feasible $Y_{1}$ and $Y_{2}$, we conclude
that
\begin{equation}
S_{\operatorname{SDP}}(\rho_{1},\ldots,\rho_{r_{1}})^{2}\cdot S_{\operatorname{SDP}}
(\sigma_{1},\ldots,\sigma_{r_{2}})^{2}\geq S_{\operatorname{SDP}}(\left(  \rho
_{i}\otimes\sigma_{j}\right)  _{ij})^{2}.
\end{equation}
Combining with \eqref{eq:primal-multiplica-SDP-secrecy}, this concludes the proof.

\subsection{Proof of Proposition~\ref{Prop:pure_states_SDP_F} (Multivariate SDP fidelity for pure states)}

\label{Proof:pure_states_SDP_F}

Consider the general formula for arbitrary states $\left(  \rho_{1}
,\ldots,\rho_{r}\right)  $ from \cref{prop:another_formulation_multi_SDP_fid}:
\begin{multline}
F_{\operatorname{SDP}}(\rho_{1},\ldots,\rho_{r})=\label{eq:K-form-SDP-fid}\\
\frac{1}{r-1}\sup_{K\geq0}\left\{  \langle\psi|K\otimes I_{d}|\psi
\rangle-1:K=\sum_{i,j=1}^{r}|i\rangle\!\langle j|\otimes K_{ij},\ K_{ii}
=I_{d}\ \forall i\in\left[  r\right]  \right\}  ,
\end{multline}
where $|\psi\rangle=\frac{1}{\sqrt{r}}\sum_{i=1}^{r}|i\rangle|\psi^{\rho_{i}
}\rangle$ and $|\psi^{\rho_{i}}\rangle$ is a purification of $\rho_{i}$. In
the case that each $\rho_{i}$ is pure, the purification system for each state
can be taken to be a trivial one-dimensional system. So then each matrix
$K_{ij}$ reduces to a scalar $k_{ij} \in \mathbb{C}$, and the constraints in
\eqref{eq:K-form-SDP-fid} reduce to those in~\eqref{eq:K-form-SDP-fid-pure}.
Then
\begin{align}
& \langle\psi|K\otimes I_{d}|\psi\rangle\nonumber\\
& =\left(  \frac{1}{\sqrt{r}}\sum_{i^{\prime\prime}=1}^{r}\langle
i^{\prime\prime}|\langle\psi_{i^{\prime\prime}}|\right)  \left(  \sum
_{i,j=1}^{r}k_{ij}|i\rangle\!\langle j|\otimes I_{d}\right)  \left(  \frac
{1}{\sqrt{r}}\sum_{i^{\prime}=1}^{r}|i^{\prime}\rangle|\psi_{i^{\prime}
}\rangle\right)  \\
& =\frac{1}{r}\sum_{i^{\prime\prime},i,j,i^{\prime}=1}^{r}k_{ij}\langle
i^{\prime\prime}|i\rangle\!\langle j|i^{\prime}\rangle\ \langle\psi
_{i^{\prime\prime}}|\psi_{i^{\prime}}\rangle\\
& =\frac{1}{r}\sum_{i,j=1}^{r}k_{ij}\langle\psi_{i}|\psi_{j}\rangle\\
& =\frac{1}{r}\sum_{i=1}^{r}k_{ii}\langle\psi_{i}|\psi_{i}\rangle+\frac{1}
{r}\sum_{i,j=1:i\neq j}^{r}k_{ij}\langle\psi_{i}|\psi_{j}\rangle\\
& =1+\frac{2}{r}\sum_{i<j}\mathfrak{R}[k_{ij}\langle\psi_{i}|\psi
_{j}\rangle].\label{eq:K_star_calculation_pure_states}
\end{align}
Substituting~\eqref{eq:K_star_calculation_pure_states} into the objective function in~\eqref{eq:K-form-SDP-fid} and
simplifying then gives~\eqref{eq:K-form-SDP-fid-pure}.

\subsection{Proof of Equation \eqref{eq:dual-alt-sdp-secrecy} (Dual SDP for secrecy measure~$S$)}

\label{app:dual-alt-sdp-secrecy}

We derive the dual SDP\ by means of the Lagrange multiplier method. Define the
shorthand
\begin{equation}\label{eq:S_i_Lagrange}
S_{i}\equiv
\begin{bmatrix}
Y_{i} & W_{i}\\
W_{i}^{\dag} & Z_{i}
\end{bmatrix}
,
\end{equation}
and consider that
\begin{align}
& \sup_{\substack{X_{1},\ldots,X_{r}\in\mathcal{L},\\\sigma\geq0}}\left\{
\sum_{i=1}^{r}\operatorname{Tr}[X_i]+\operatorname{Tr}[X_i^{\dag}
]:\operatorname{Tr}[\sigma]=1,\
\begin{bmatrix}
\rho_{i} & X_{i}\\
X_{i}^{\dag} & \sigma
\end{bmatrix}
\geq0\ \forall i\in\left[  r\right]  \right\}  \nonumber\\
& =\sup_{\substack{X_{1},\ldots,X_{r}\in\mathcal{L},\\\sigma\geq0}}\left\{
\begin{array}
[c]{c}
\sum_{i=1}^{r}2\,\mathfrak{R}[\operatorname{Tr}[X_i]]\\
+\inf_{\substack{W_{i},Y_{i},Z_{i}\in\mathcal{L}\ \forall i,\\S_{i}
\geq0 \ \forall i, \ \lambda\in\mathbb{R}}}\left\{
\begin{array}
[c]{c}
\lambda\left(  1-\operatorname{Tr}[\sigma]\right)  \\
+\sum_{i=1}^{r}\operatorname{Tr}\!\left[
\begin{bmatrix}
Y_{i} & W_{i}\\
W_{i}^{\dag} & Z_{i}
\end{bmatrix}
\begin{bmatrix}
\rho_{i} & X_{i}\\
X_{i}^{\dag} & \sigma
\end{bmatrix}
\right]
\end{array}
\right\}
\end{array}
\right\}  \\
& =\sup_{\substack{X_{1},\ldots,X_{r}\in\mathcal{L},\\\sigma\geq0}
}\inf_{\substack{W_{i},Y_{i},Z_{i}\in\mathcal{L}\ \forall i,\\S_{i} 
\geq0 \ \forall i, \ ,\lambda\in\mathbb{R}}}\left\{
\begin{array}
[c]{c}
\sum_{i=1}^{r}2\,\mathfrak{R}[\operatorname{Tr}[X_i]]+\lambda\left(
1-\operatorname{Tr}[\sigma]\right)  \\
+\sum_{i=1}^{r}\operatorname{Tr}\!\left[
\begin{bmatrix}
Y_{i} & W_{i}\\
W_{i}^{\dag} & Z_{i}
\end{bmatrix}
\begin{bmatrix}
\rho_{i} & X_{i}\\
X_{i}^{\dag} & \sigma
\end{bmatrix}
\right]
\end{array}
\right\}  \\
& =\sup_{\substack{X_{1},\ldots,X_{r}\in\mathcal{L},\\\sigma\geq0}
}\inf_{\substack{W_{i},Y_{i},Z_{i}\in\mathcal{L}\ \forall i,\\S_{i}
\geq0 \ \forall i, \ ,\lambda\in\mathbb{R}}}\left\{
\begin{array}
[c]{c}
\sum_{i=1}^{r}2\,\mathfrak{R}[\operatorname{Tr}[X_i]]+\lambda\left(
1-\operatorname{Tr}[\sigma]\right)  \\
+\sum_{i=1}^{r}\operatorname{Tr}[Y_{i}\rho_{i}]+2\,\mathfrak{R}
[\operatorname{Tr}[W_{i}X_{i}^{\dag}]]+\operatorname{Tr}[Z_{i}\sigma]
\end{array}
\right\}  \\
& =\sup_{\substack{X_{1},\ldots,X_{r}\in\mathcal{L},\\\sigma\geq0}
}\inf_{\substack{W_{i},Y_{i},Z_{i}\in\mathcal{L}\ \forall i,\\S_{i}
\geq0 \ \forall i, \ ,\lambda\in\mathbb{R}}}\left\{
\begin{array}
[c]{c}
\sum_{i=1}^{r}2\,\mathfrak{R}[\operatorname{Tr}[X_i]]+\lambda\left(
1-\operatorname{Tr}[\sigma]\right)  \\
+\sum_{i=1}^{r}\operatorname{Tr}[Y_{i}\rho_{i}]+2\,\mathfrak{R}
[\operatorname{Tr}[W_{i}X_{i}^{\dag}]]+\operatorname{Tr}[Z_{i}\sigma]
\end{array}
\right\}  \\
& \leq\inf_{\substack{W_{i},Y_{i},Z_{i}\in\mathcal{L}\ \forall i,\\S_{i}
\geq0 \ \forall i, \ ,\lambda\in\mathbb{R}}}\sup_{\substack{X_{1},\ldots,X_{r}\in
\mathcal{L},\\\sigma\geq0}}\left\{
\begin{array}
[c]{c}
\sum_{i=1}^{r}2\,\mathfrak{R}[\operatorname{Tr}[X_i]]+\lambda\left(
1-\operatorname{Tr}[\sigma]\right)  \\
+\sum_{i=1}^{r}\operatorname{Tr}[Y_{i}\rho_{i}]+2\,\mathfrak{R}
[\operatorname{Tr}[W_{i}X_{i}^{\dag}]]+\operatorname{Tr}[Z_{i}\sigma]
\end{array}
\right\}  \\
& =\inf_{\substack{W_{i},Y_{i},Z_{i}\in\mathcal{L}\ \forall i,\\S_{i}
\geq0,\lambda\in\mathbb{R}}}\left\{
\begin{array}
[c]{c}
\sum_{i=1}^{r}\operatorname{Tr}[Y_{i}\rho_{i}]+\lambda\\
+\sup_{\substack{X_{1},\ldots,X_{r}\in\mathcal{L},\\\sigma\geq0}}\left\{
\begin{array}
[c]{c}
\sum_{i=1}^{r}2\,\mathfrak{R}[\operatorname{Tr}[\left(  W_{i}+I\right)
X_{i}^{\dag}]]\\
+\operatorname{Tr}\!\left[  \left(  \sum_{i=1}^{r}Z_{i}-\lambda I\right)
\sigma\right]
\end{array}
\right\}
\end{array}
\right\}  \\
& =\inf_{\substack{Y_{i},Z_{i}\in\mathcal{L}\ \forall i,\\\lambda\in
\mathbb{R}}}\left\{  \sum_{i=1}^{r}\operatorname{Tr}[Y_{i}\rho_{i}
]+\lambda:\sum_{i=1}^{r}Z_{i}\leq\lambda I,\
\begin{bmatrix}
Y_{i} & -I\\
-I & Z_{i}
\end{bmatrix}
\geq0\ \forall i\in\left[  r\right]  \right\}  \\
& =\inf_{\substack{Y_{i},Z_{i}\geq0\ \forall i,\\\lambda\geq0}}\left\{
\sum_{i=1}^{r}\operatorname{Tr}[Y_{i}\rho_{i}]+\lambda:\sum_{i=1}^{r}Z_{i}
\leq\lambda I,\
\begin{bmatrix}
Y_{i} & -I\\
-I & Z_{i}
\end{bmatrix}
\geq0\ \forall i\in\left[  r\right]  \right\}  .
\end{align}
The penultimate equality follows because
\begin{equation}
    \sup_{\substack{X_{1},\ldots,X_{r}\in\mathcal{L}}}
\sum_{i=1}^{r}2\,\mathfrak{R}[\operatorname{Tr}[\left(  W_{i}+I\right)
X_{i}^{\dag}]] = 
\begin{cases}
    0 & \text{ if } W_i = -I \ \forall i \in [r] \\
    +\infty & \text{ else}
\end{cases},
\end{equation}
and
\begin{equation}
    \sup_{\sigma \geq 0}  \operatorname{Tr}\!\left[  \left(  \sum_{i=1}^{r}Z_{i}-\lambda I\right)
\sigma\right] = 
\begin{cases}
    0 & \text{ if } \sum_{i=1}^{r}Z_{i}
\leq\lambda I \\
+\infty & \text{ else }
\end{cases}.
\end{equation}
Strong duality follows by applying Slater's theorem while picking $X_{i}=0$
and $\sigma=I/d$ in the primal (feasible choices), and $Y_{i}=Z_{i}=2I$ and
$\lambda=3r$ in the dual (strictly feasible choices).

\subsection{Proof of Proposition~\ref{prop:another_rep_secrecy_based} (Another representation of secrecy-based multivariate fidelity) }
\label{proof_another_rep_secrecy_based}

The proof follows ideas presented in  \cite[Eqs.~(A70)--(A76)]{RASW23} but makes some further observations beyond those presented there.
Our goal is to rewrite the following secrecy measure:
\begin{equation}
{S(\rho_1, \ldots, \rho_r)^2} = \sup_{\sigma\in\mathcal{D}} {F(\rho_{XA},\rho_{X}\otimes\sigma)^2},
\end{equation}
where $\rho_{XA}$ is defined in \eqref{eq:cq-state-secrecy-meas}.
For all $i\in\left[  r\right]  $, let $|\phi^{\rho_{i}}\rangle_{RA}$ be a
purification of $\rho_{i}$. Let $|\psi^{\sigma}\rangle_{RA}$ be a purification
of $\sigma$. Then it follows that
\begin{equation}
\frac{1}{\sqrt{r}}\sum_{i=1}^{r}|i\rangle_{X}|i\rangle_{R^{\prime}}|\phi
^{\rho_{i}}\rangle_{RA}
\end{equation}
is a purification of $\rho_{XA}$ and
\begin{equation}
|\Phi\rangle_{XR^{\prime}}|\psi^{\sigma}\rangle_{RA}
\end{equation}
is a purification of $\rho_{X}\otimes\sigma$, where $|\Phi\rangle_{XR^{\prime
}}\coloneqq \frac{1}{\sqrt{r}}\sum_{i=1}^{r}|i\rangle_{X}|i\rangle_{R^{\prime}}$. It
follows that
\begin{align}
& \sup_{\sigma\in\mathcal{D}} {F(\rho_{XA},\rho_{X}\otimes\sigma
)^2}\nonumber\\
& =\sup_{\sigma\in\mathcal{D},U\in\mathscr{U}}\left\vert \langle
\Phi|_{XR^{\prime}}\langle\psi^{\sigma}|_{RA}U_{R^{\prime}R}\frac{1}{\sqrt{r}
}\sum_{i=1}^{r}|i\rangle_{X}|i\rangle_{R^{\prime}}|\phi^{\rho_{i}}\rangle
_{RA}\right\vert ^{2} \label{eq:secrecy-fidelity-uhlmann}\\
& =\sup_{|\psi^{\sigma}\rangle,U\in\mathscr{U}}\left\vert \langle
\Phi|_{XR^{\prime}}\langle\psi^{\sigma}|_{RA}U_{R^{\prime}R}\frac{1}{\sqrt{r}
}\sum_{i=1}^{r}|i\rangle_{X}|i\rangle_{R^{\prime}}|\phi^{\rho_{i}}\rangle
_{RA}\right\vert ^{2}\\
& =\sup_{U\in\mathscr{U}}\left\Vert \langle\Phi|_{XR^{\prime}}U_{R^{\prime}
R}\frac{1}{\sqrt{r}}\sum_{i=1}^{r}|i\rangle_{X}|i\rangle_{R^{\prime}}
|\phi^{\rho_{i}}\rangle_{RA}\right\Vert _{2}^{2}
\end{align}
The first equality follows from Uhlmann's theorem~\cite{uhlmann1976transition}. The second equality follows
because the optimization over every state $\sigma$ and every unitary
$U_{R^{\prime}R}$ is equivalent to optimizing over every state vector
$|\psi^{\sigma}\rangle_{RA}$ and unitary $U_{R^{\prime}R}$. The third equality
follows from the variational characterization of the Euclidean norm (i.e., $\left\| |\phi \rangle \right\|_2^2 = \sup_{|\psi\rangle} | \langle\psi|\phi\rangle|^2$, where $|\psi\rangle$ is a state vector). Now
defining
\begin{equation}
P_{R}^{i}\coloneqq \langle i|_{R^{\prime}}U_{R^{\prime}R}|i\rangle_{R^{\prime}},
\end{equation}
consider that
\begin{align}
& \sup_{U\in\mathscr{U}}\left\Vert \langle\Phi|_{XR^{\prime}}U_{R^{\prime}
R}\frac{1}{\sqrt{r}}\sum_{i=1}^{r}|i\rangle_{X}|i\rangle_{R^{\prime}}
|\phi^{\rho_{i}}\rangle_{RA}\right\Vert _{2}^{2}\nonumber\\
& =\sup_{U\in\mathscr{U}}\left\Vert \frac{1}{\sqrt{r}}\sum_{i^{\prime}=1}
^{r}\langle i^{\prime}|_{X}\langle i^{\prime}|_{R^{\prime}}U_{R^{\prime}
R}\frac{1}{\sqrt{r}}\sum_{i=1}^{r}|i\rangle_{X}|i\rangle_{R^{\prime}}
|\phi^{\rho_{i}}\rangle_{RA}\right\Vert _{2}^{2}\\
& =\frac{1}{r^{2}}\sup_{U\in\mathscr{U}}\left\Vert \sum_{i^{\prime},i=1}
^{r}\langle i^{\prime}|i\rangle_{X}\langle i^{\prime}|_{R^{\prime}
}U_{R^{\prime}R}|i\rangle_{R^{\prime}}|\phi^{\rho_{i}}\rangle_{RA}\right\Vert
_{2}^{2}\\
& =\frac{1}{r^{2}}\sup_{U\in\mathscr{U}}\left\Vert \sum_{i=1}^{r}\langle
i|_{R^{\prime}}U_{R^{\prime}R}|i\rangle_{R^{\prime}}|\phi^{\rho_{i}}
\rangle_{RA}\right\Vert _{2}^{2}\\
& =\frac{1}{r^{2}}\sup_{U\in\mathscr{U}}\left\Vert \sum_{i=1}^{r}P_{R}
^{i}|\phi^{\rho_{i}}\rangle_{RA}\right\Vert _{2}^{2}\\
& \leq\frac{1}{r^{2}}\sup_{P_{R}^{i}:\left\Vert P_{R}^{i}\right\Vert _{\infty
}\leq1}\sum_{i,j=1}^{r}\langle\phi^{\rho_{j}}|_{RA}\left(  P_{R}^{j}\right)
^{\dag}P_{R}^{i}|\phi^{\rho_{i}}\rangle_{RA}.
\end{align}
Now let us apply~\cref{lem:assist_uniform_mix_unitaries} stated below,  
which says that every contraction can be
written as a uniform convex combination of two unitaries, so that $P_{R}
^{i}=\frac{1}{2}\left(  V_{R}^{i,0}+V_{R}^{i,1}\right)  $, where $V_{R}^{i,0}$
and $V_{R}^{i,1}$ are unitaries for all $i\in\left[  r\right]  $. Then
\begin{align}
 & \sum_{i,j=1}^{r}\langle\phi^{\rho_{j}}|_{RA}\left(  P_{R}^{j}\right)
^{\dag}P_{R}^{i}|\phi^{\rho_{i}}\rangle_{RA}\notag \\
& =\frac{1}{4}\sum_{i,j=1}^{r}\langle\phi^{\rho_{j}}|_{RA}\left(  \sum
_{k\in\left\{  0,1\right\}  }V_{R}^{j,k}\right)  ^{\dag}\sum_{\ell\in\left\{
0,1\right\}  }V_{R}^{i,\ell}|\phi^{\rho_{i}}\rangle_{RA}\\
& =\frac{1}{4}\sum_{i,j=1}^{r}\sum_{k,\ell\in\left\{  0,1\right\}  }
\langle\phi^{\rho_{j}}|_{RA}\left(  V_{R}^{j,k}\right)  ^{\dag}V_{R}^{i,\ell
}|\phi^{\rho_{i}}\rangle_{RA}\\
& =\frac{1}{4}\sum_{k,\ell\in\left\{  0,1\right\}  }\sum_{i,j=1}^{r}
\langle\phi^{\rho_{j}}|_{RA}\left(  V_{R}^{j,k}\right)  ^{\dag}V_{R}^{i,\ell
}|\phi^{\rho_{i}}\rangle_{RA}.
\end{align}
Now define
\begin{align}
|\psi\rangle_{XRA}  & \coloneqq \sum_{i=1}^{r}|i\rangle_X|\phi^{\rho_{i}}\rangle_{RA},\\
W^{k}  & \coloneqq \sum_{j=1}^{r}\langle j|_{X}\otimes V_{R}^{j,k},
\end{align}
and observe that
\begin{align}
& \frac{1}{4}\sum_{k,\ell\in\left\{  0,1\right\}  }\sum_{i,j=1}^{r}\langle
\phi^{\rho_{j}}|_{RA}\left(  V_{R}^{j,k}\right)  ^{\dag}V_{R}^{i,\ell}
|\phi^{\rho_{i}}\rangle_{RA}  \notag \\
& =\frac{1}{4}\sum_{k,\ell\in\left\{  0,1\right\}  }\langle\psi|_{XRA}
\left(W^{k
}\right)^\dag W^{\ell}|\psi\rangle_{XRA}\\
& \leq\frac{1}{2}\sum_{k\in\left\{  0,1\right\}  }\langle\psi|_{XRA}\left(W^{k
}\right)^\dag W^{k}|\psi\rangle_{XRA}\\
& \leq\sup_{k\in\left\{  0,1\right\}  }\langle\psi|_{XRA}\left(W^{k
}\right)^\dag W^{k}
|\psi\rangle_{XRA}\\
& =\sup_{k\in\left\{  0,1\right\}  }\sum_{i,j=1}^{r}\langle\phi^{\rho_{j}
}|_{RA}\left(  V_{R}^{j,k}\right)  ^{\dag}V_{R}^{i,k}|\phi^{\rho_{i}}
\rangle_{RA}\\
& \leq\sup_{\left(  V^{i}\right)  _{i}}\sum_{i,j=1}^{r}\langle\phi^{\rho_{j}
}|_{RA}\left(  V_{R}^{j}\right)  ^{\dag}V_{R}^{i}|\phi^{\rho_{i}}\rangle_{RA}.
\end{align}
The first inequality follows because
\begin{multline}
\left(  W^{0}-W^{1}\right)  ^{\dag}\left(  W^{0}-W^{1}\right)    \geq 0   \\ \quad \Leftrightarrow\quad  \left(  W^{0}\right)  ^{\dag}W^{0}+\left(  W^{1}\right)
^{\dag}W^{1}   \geq\left(  W^{0}\right)  ^{\dag}W^{1}+\left(  W^{1}\right)
^{\dag}W^{0},
\end{multline}
which implies the operator inequality
\begin{equation}
\sum_{k,\ell\in\left\{  0,1\right\}  }\left(W^{k }\right)^\dag W^{\ell}\leq2\sum_{k\in\left\{
0,1\right\}  }\left(W^{k }\right)^\dag W^{k}.
\end{equation}
The second inequality follows because the average does not exceed the maximum. The last inequality follows by optimizing over every tuple $(V^i_R)_{i=1}^r$ of unitaries.
Thus, 
\begin{equation}
	\sup_{\sigma\in\mathcal{D}}{F(\rho_{XA},\rho_{X}\otimes\sigma)^{2}}\leq \frac{1}{r^2}
	\sup_{\left(  V^{i}\right)  _{i}}\sum_{i,j=1}^{r}\langle\phi^{\rho_{j}}
	|_{RA}\left(  V_{R}^{j}\right)  ^{\dag}V_{R}^{i}|\phi^{\rho_{i}}\rangle_{RA}, \label{eq:secrecy-sup-upper-bound}
\end{equation}
and equality follows by picking $U_{R^{\prime}R}$ in \eqref{eq:secrecy-fidelity-uhlmann} to be a controlled unitary of
the form
\begin{equation}
	U_{R^{\prime}R}=\sum_{i=1}^{r}|i\rangle\!\langle i|_{R^{\prime}}\otimes
	V_{R}^{i},
\end{equation}
with the tuple of unitaries $(V^i_R)_{i=1}^r$ achieving the supremum on the right-hand side of \eqref{eq:secrecy-sup-upper-bound},
implying the statement made in \cref{rem:prover-opt-unitary}.

In summary, we showed that
\begin{equation}
S(\rho_1, \ldots, \rho_r)^2 = \sup_{\sigma\in\mathcal{D}}{F(\rho_{XA},\rho_{X}\otimes\sigma)^{2}}= \frac{1}{r^2}
\sup_{\left(  V^{i}\right)  _{i}}\sum_{i,j=1}^{r}\langle\phi^{\rho_{j}}
|_{RA}\left(  V_{R}^{j}\right)  ^{\dag}V_{R}^{i}|\phi^{\rho_{i}}\rangle_{RA},
\end{equation}
which gives~\eqref{eq:secrecy-as-pairwise} after taking a square root. 

To prove \eqref{eq:secrecy-fid-as-pairwise}, observe that
\begin{align}
& S(\rho_1, \ldots, \rho_r)^2 \notag \\
    & = \frac{1}{r^2}
\sup_{\left(  V^{i}\right)  _{i}}
\sum_{i,j=1}^{r}\langle\phi^{\rho_{j}}
|_{RA}\left(  V_{R}^{j}\right)  ^{\dag}V_{R}^{i}|\phi^{\rho_{i}}\rangle_{RA} \notag \\
& = \frac{1}{r^2}
\sup_{\left(  V^{i}\right)  _{i}}
\left(
\sum_{i=1}^{r}\langle\phi^{\rho_{i}}
|_{RA}\left(  V_{R}^{i}\right)  ^{\dag}V_{R}^{i}|\phi^{\rho_{i}}\rangle_{RA}
+2
\sum_{i<j}\mathfrak{R}\!\left[\langle\phi^{\rho_{j}}
|_{RA}\left(  V_{R}^{j}\right)  ^{\dag}V_{R}^{i}|\phi^{\rho_{i}}\rangle_{RA}
\right]\right) \\
& = \frac{1}{r^2}
\sup_{\left(  V^{i}\right)  _{i}}
\left(
r
+2
\sum_{i<j}\mathfrak{R}\!\left[\langle\phi^{\rho_{j}}
|_{RA}\left(  V_{R}^{j}\right)  ^{\dag}V_{R}^{i}|\phi^{\rho_{i}}\rangle_{RA}
\right]\right) \\
& = 
\frac{1}{r}
+\frac{2}{r^2}
\sup_{\left(  V^{i}\right)  _{i}}
\sum_{i<j}\mathfrak{R}\!\left[\langle\phi^{\rho_{j}}
|_{RA}\left(  V_{R}^{j}\right)  ^{\dag}V_{R}^{i}|\phi^{\rho_{i}}\rangle_{RA}
\right].
\end{align}
Combining this with the definition of $F_S$ in \cref{def:secrecy_based_multi} and performing some algebra then concludes the proof.

\begin{lemma} \label{lem:assist_uniform_mix_unitaries}
Suppose that the square matrix $A$ is a contraction, that is, $\left \|A\right \|_{\infty} \leq 1$. Then it can be written as a uniform mixture of two unitaries: $A = (U_1 + U_2)/2$, where $U_1$ and $U_2$ are unitaries.
\end{lemma}

\begin{proof} The basic construction was given in the proof of~\cite[Theorem~1]{unitaries_convex_combinations85}. We provide a proof for completeness. Under the assumptions stated,
$A$ has a polar decomposition as $A=UM$, where $U$ is a unitary and $M$ is a PSD matrix with eigenvalues between zero and one. Pick the following unitaries:
$
    V_1 \coloneqq M + i( I-M^2)^{1/2} $ and  $V_2\coloneqq  V_1^\dagger$.
With that choice we have that $M= (V_1 +V_2)/2$, and so 
$
    A=\frac{1}{2}\left( U V_1 + U V_2\right),
$
concluding the proof that $A$ can be written as a uniform mixture of two unitaries.
\end{proof}

\subsection{Proof of Proposition~\ref{prop:uniform_cont_secrecy_multi_F} (Uniform continuity of secrecy-based multivariate fidelity)} \label{proof:uniform_cont_secrecy_multi_F}
    
    By \cref{def:secrecy_based_multi},
    \begin{align}
       &  \left | F_S(\rho_1, \ldots, \rho_r) - F_S(\sigma_1, \ldots, \sigma_2) \right | \notag \\
       &= \frac{r}{r-1} \left| S(\rho_1, \ldots, \rho_r )^2 - S(\sigma_1, \ldots, \sigma_r )^2  \right|
       \label{eq:F_S-bound-1} \\
       &=  \frac{r}{r-1} \left| S(\rho_1, \ldots, \rho_r ) + S(\rho_1, \ldots, \rho_r )\right| \left| S(\rho_1, \ldots, \rho_r )- S(\rho_1, \ldots, \rho_r ) \right|\\
       & \leq  \frac{2r}{r-1} \left| S(\rho_1, \ldots, \rho_r )- S(\rho_1, \ldots, \rho_r ) \right|,
       \label{eq:F_S-bound-last} 
    \end{align}
    where the last inequality follows because $S(\rho_1, \ldots,\rho_r) \leq 1$ for every tuple $(\rho_i)_{i=1}^r$ of states.

    By using~\eqref{eq:secrecy_measure}, we have that
    \begin{align}
        \left| S(\rho_1, \ldots, \rho_r )- S(\sigma_1, \ldots, \sigma_r ) \right| & \leq \sup_{\omega \in \mathscr{D}} \left|  \frac{1}{r} \sum_{i=1}^r \left( F(\rho_i, \omega) - F(\sigma_i, \omega) \right)\right| \\
         & \leq \sup_{\omega \in \mathscr{D}} \frac{1}{r} \sum_{i=1}^r  \left|  F(\rho_i, \omega) - F(\sigma_i, \omega)\right| 
    \end{align}
For an arbitrary $\omega \in \mathscr{D}$, consider that
\begin{align}
     |F(\rho_i,\omega) - F(\sigma_i,\omega) |
    & =\left | \left[ 1-F(\sigma_i,\omega)\right] - \left[ 1-F(\rho_i,\omega)\right] \right | \\
    &=\frac{1}{2} \left| d_B(\sigma_i, \omega)^2 - d_B(\rho_i, \omega)^2 \right| \\ 
    &=\frac{1}{2} \left[ d_B(\sigma_i, \omega)+ d_B(\rho_i, \omega) \right] \left|d_B(\sigma_i, \omega) -d_B(\rho_i, \omega) \right| \\ 
    & \leq \frac{1}{2}\left[ d_B(\sigma_i, \omega)+ d_B(\rho_i, \omega) \right] d_B(\sigma_i, \rho_i) \\
    & \leq \sqrt{2} d_B(\sigma_i, \rho_i),
\end{align}
  where the penultimate inequality follows from triangular inequality of the Bures distance and the last inequality because the Bures distance is bounded by $\sqrt{2}$.
Together with that, we have 
\begin{equation}
     \left| S(\rho_1, \ldots, \rho_r )- S(\sigma_1, \ldots, \sigma_r ) \right| \leq \frac{\sqrt{2}}{r} \sum_{i=1}^r d_B(\sigma_i, \rho_i) \leq \sqrt{2} \varepsilon,
     \label{eq:final-S-cont-bound}
\end{equation}
  where the last inequality follows from the assumption $\frac{1}{r}\sum_{i=1}^r d_B(\rho_i,\sigma_i) \leq  \varepsilon$. Finally combining~\eqref{eq:F_S-bound-1}--\eqref{eq:F_S-bound-last} and~\eqref{eq:final-S-cont-bound}, we conclude that 
  \begin{equation}
       |F_S(\rho_1, \ldots, \rho_r) - F_S(\sigma_1, \ldots, \sigma_2)|  \leq \frac{r}{r-1} 2 \sqrt{2} \varepsilon,
  \end{equation}
  completing the proof.

\subsection{Proof of Theorem~\ref{thm:properties_secrecy_multi_F} (Properties of secrecy-based multivariate fidelity)}
\label{proof:Properties_secrecy_fidelity}

Note that the secrecy-based multivariate fidelity satisfies reduction to classical average pairwise fidelity, faithfulness, and orthogonality by \cref{thm:SDP_fidelity_pairwise_upper_and_lower} and the fact that the average pairwise Holevo and Uhlmann fidelities satisfy these properties, as stated in \cref{Prop:properties_average_pairwise_z}.

\medskip
\noindent \underline{Symmetry:} This follows by the definition of the secrecy measure in~\eqref{eq:secrecy_measure}. 

\bigskip

\noindent \underline{Data processing:}
    Consider that
    \begin{align}
      \sup_{\sigma  \in \mathscr{D}}\frac{1}{r}\sum_{i=1}^{r}F(  \rho_{i}, \sigma) & \leq   \sup_{\sigma  \in \mathscr{D}}\frac{1}{r}\sum_{i=1}^{r}F\!\left( \cN( \rho_{i}), \cN(\sigma) \right) \\
      & \leq  \sup_{\sigma'  \in \mathscr{D}}\frac{1}{r}\sum_{i=1}^{r}F\!\left( \cN( \rho_{i}), \sigma' \right),
    \end{align}
    where the first inequality follows from data processing of the Uhlmann fidelity (\cref{prop:properties_bivariate_F}, Property 2); and the last inequality by supremizing over a larger set. 
    With that we arrive at 
     \begin{equation}
            S(\rho_1, \ldots, \rho_r) \leq S\!\left( \cN(\rho_1), \ldots, \cN(\rho_r) \right).
        \end{equation}
By recalling \cref{def:secrecy_based_multi}, we conclude that 
 \begin{equation}
            F_S(\rho_1, \ldots, \rho_r) \leq F_S\!\left( \cN(\rho_1), \ldots, \cN(\rho_r) \right).
        \end{equation}

\bigskip 
\noindent \underline{Direct-sum property:}
 A purification of the state $\sum_{x \in \cX} p(x) |x\rangle\!\langle x| \otimes \rho_i^x$ is 
\begin{equation}\label{eq:purification_alltogether_i}
      |\Psi^i\rangle_{X\bar{X}RA} \coloneqq  \sum_x \sqrt{p(x)} |x\rangle_X\otimes |x\rangle_{\bar{X}} \otimes | \Phi^{\rho_i^x}\rangle_{RA},
    \end{equation}
    where $| \Phi^{\rho_i^x}\rangle$ is a purification of the state $\rho_i^x$. 
For all $x \in \cX$, let $(V^{x,i}_{R})_{i=1}^r$ be a tuple of unitaries  achieving the optimum value of $S\!\left(\rho_1^x, \ldots, \rho_r^x \right)^2$ related to the tuple of states $\left( \rho_{i}^x \right)_{i}$ in~\eqref{eq:secrecy-as-pairwise}.
That is, 
\begin{equation}\label{eq:optimum_value_S_2}
    S\!\left(\rho_1^x, \ldots, \rho_r^x \right)^2= \frac{1}{r^2}  \sum_{i,j=1}^r  \left\langle \Phi^{\rho_i^x} \right|_{RA}\left(V^{x,i}_{R}\right)^\dagger  V^{x,j}_{R} \left| \Phi^{\rho_j^x} \right \rangle_{RA},
\end{equation}
Then, choose the controlled unitary 
\begin{equation}\label{eq:controlled_unitary_choice}
    V^i_{\bar{X}R } \coloneqq  \sum_x  |x\rangle\!\langle x|_{\bar{X}} \otimes  V^{x,i}_{R } .
\end{equation}
With the operators defined, consider that 
\begin{align}
    &\frac{1}{r^2} \sum_{i,j=1}^r \left\langle \Psi^{i} \right|_{X\bar{X}RA}\left(V^i_{\bar{X}R}\right)^\dagger V^j_{\bar{X}R} \left| \Psi^{j} \right \rangle_{X\bar{X}RA} \notag \\
    &= \frac{1}{r^2} \sum_{i,j=1}^r \left(\sum_x \sqrt{p(x)} \langle x|_X \otimes\langle x|_{\bar{X}} \otimes \langle \Phi^{\rho_i^x}|_{RA}\right) \left( I_X\otimes \sum_{x'}  |x'\rangle\!\langle x'|_{\bar{X}} \otimes  \left(V^{x',i}_{R}  \right)^\dagger \right)  \times \notag \\
    & \qquad\qquad\left( I_X \otimes \sum_{x''}  |x''\rangle\!\langle x''|_{\bar{X}} \otimes   V^{x'',j}_{R}\right) \left(\sum_{y} \sqrt{p(y)} |y\rangle_X \otimes |y\rangle_{\bar{X}} \otimes | \Phi^{\rho_i^y}\rangle_{RA} \right)   \\ 
    &= \frac{1}{r^2} \sum_{i,j=1}^r \sum_x p(x) \left\langle \Phi^{\rho_i^x} \right|_{RA}\left(V^{x,i}_{R}\right)^\dagger  V^{x,j}_{R} \left| \Phi^{\rho_j^x} \right \rangle_{RA}\\
    &=\sum_x p(x)\frac{1}{r^2} \sum_{i,j=1}^r  \left\langle \Phi^{\rho_i^x} \right|_{RA}\left(V^{x,i}_{R}\right)^\dagger  V^{x,j}_{R} \left| \Phi^{\rho_j^x} \right \rangle_{RA} \\
    &= \sum_x p(x) S(\rho_1^x, \ldots, \rho_r^x)^2,
\end{align}
where the first equality follows by substituting the definitions in~\eqref{eq:purification_alltogether_i} and~\eqref{eq:controlled_unitary_choice}, and the last equality from~\eqref{eq:optimum_value_S_2}.

This shows that the controlled unitary in~\eqref{eq:controlled_unitary_choice} is a possible candidate for the unitary in the optimization of $ S\!\left(\sum_{x \in \cX} p(x) |x\rangle\!\langle x| \otimes \rho_1^x, \ldots, \sum_{x \in \cX} p(x)  |x\rangle\!\langle x| \otimes \rho_r^x\right)^2$. 
With that, we conclude the following inequality:
\begin{equation}\label{eq:one_sided_concavity}
      S\!\left(\sum\nolimits_{x \in \cX} p(x) |x\rangle\!\langle x| \otimes \rho_1^x, \ldots, \sum\nolimits_{x \in \cX} p(x)  |x\rangle\!\langle x| \otimes \rho_r^x\right)^2 \\
         \geq  \sum_{x \in \cX} p(x) S(\rho_1^x, \ldots, \rho_r^x)^2.  
\end{equation}

Now we prove the opposite inequality. Let $\Delta(\cdot)\coloneqq \sum_{x}
|x\rangle\!\langle x|(\cdot)|x\rangle\!\langle x|$ denote the dephasing channel.
Recall that
\begin{multline}
S\!\left(  \sum\nolimits_{x}p(x)|x\rangle\!\langle x|\otimes\rho_{1}^{x},\ldots
,\sum\nolimits_{x}p(x)|x\rangle\!\langle x|\otimes\rho_{r}^{x}\right)^2  \\
=\left[  \sup_{\sigma_{XA}}\frac{1}{r}\sum_{i=1}^{r}F\!\left(  \sum
_{x}p(x)|x\rangle\!\langle x|\otimes\rho_{i}^{x},\sigma_{XA}\right)  \right]
^{2}.
\end{multline}
Let $\sigma_{XA}$ be an arbitrary state, and define the probability
distribution $q$ and the tuple $\left(  \sigma^{x}\right)  _{x}$ of states in terms of
\begin{equation}
\left(  \Delta\otimes\operatorname{id}\right)  \left(  \sigma_{XA}\right)
=\sum_{x}q(x)|x\rangle\!\langle x|\otimes\sigma^{x},
\end{equation}
where $\operatorname{id}$ is the identity channel. 
Then
\begin{align}
& \frac{1}{r}\sum_{i=1}^{r}F\!\left(  \sum_{x}p(x)|x\rangle\!\langle x|\otimes
\rho_{i}^{x},\sigma_{XA}\right)  \nonumber\\
& \leq\frac{1}{r}\sum_{i=1}^{r}F\!\left(  \left(  \Delta\otimes\operatorname{id}
\right)  \left(  \sum_{x}p(x)|x\rangle\!\langle x|\otimes\rho_{i}^{x}\right)
,\left(  \Delta\otimes\operatorname{id}\right)  \left(  \sigma_{XA}\right)
\right)  \\
& =\frac{1}{r}\sum_{i=1}^{r}F\!\left(  \sum_{x}p(x)|x\rangle\!\langle
x|\otimes\rho_{i}^{x},\sum_{x}q(x)|x\rangle\!\langle x|\otimes\sigma^{x}\right)
\\
& =\frac{1}{r}\sum_{i=1}^{r}\sum_{x}\sqrt{p(x)q(x)}F\!\left(  \rho_{i}
^{x},\sigma^{x}\right)  \\
& =\sum_{x}\sqrt{p(x)q(x)}\frac{1}{r}\sum_{i=1}^{r}F\!\left(  \rho_{i}
^{x},\sigma^{x}\right)  .
\end{align}
Now applying the Cauchy--Schwarz inequality, we conclude that
\begin{align}
& \sum_{x}\sqrt{p(x)q(x)}\frac{1}{r}\sum_{i=1}^{r}F\!\left(  \rho_{i}^{x}
,\sigma^{x}\right) \notag  \\
& \leq\sqrt{\sum_{x}q(x)}\sqrt{\sum_{x}p(x)\left[  \frac{1}{r}\sum_{i=1}
^{r}F\!\left(  \rho_{i}^{x},\sigma^{x}\right)  \right]  ^{2}}\\
& =\sqrt{\sum_{x}p(x)\left[  \frac{1}{r}\sum_{i=1}^{r}F\!\left(  \rho_{i}
^{x},\sigma^{x}\right)  \right]  ^{2}}\\
& \leq\sqrt{\sum_{x}p(x)S(\rho_{1}^{x},\ldots,\rho_{r}^{x})^{2}}.
\end{align}
We have thus shown that the following inequality holds for every state $\sigma_{XA}$:
\begin{equation}
\left[  \frac{1}{r}\sum_{i=1}^{r}F\!\left(  \sum_{x}p(x)|x\rangle\!\langle
x|\otimes\rho_{i}^{x},\sigma_{XA}\right)  \right]  ^{2}\leq\sum_{x}
p(x)S(\rho_{1}^{x},\ldots,\rho_{r}^{x})^{2}.
\end{equation}
By taking the optimization over every state $\sigma_{XA}$, we conclude that
\begin{equation} \label{eq:two_sided_con}
S\!\left(  \sum\nolimits_{x}p(x)|x\rangle\!\langle x|\otimes\rho_{1}^{x},\ldots
,\sum\nolimits_{x}p(x)|x\rangle\!\langle x|\otimes\rho_{r}^{x}\right)^{2}  \leq\sum
_{x}p(x)S(\rho_{1}^{x},\ldots,\rho_{r}^{x})^{2}.
\end{equation}
Note that equality is achieved by picking $\sigma_{XA} = \sum_x p(x) |x\rangle\!\langle x| \otimes \sigma^x$, where $\sigma^x$ is an optimal choice for $S(\rho_{1}^{x},\ldots,\rho_{r}^{x})^{2}$.

Combining~\eqref{eq:one_sided_concavity} and~\eqref{eq:two_sided_con}, it follows that $S(\cdot)^2$ satisfies the direct-sum property. 
Then, applying the direct-sum property and the fact that $p(x)$ is a probability distribution, we find that
\begin{align}
& rS\!\left(  \sum\nolimits_{x}p(x)|x\rangle\!\langle x|\otimes\rho_{1}^{x},\ldots
,\sum\nolimits_{x}p(x)|x\rangle\!\langle x|\otimes\rho_{r}^{x}\right)^{2} -1 \notag \\
& = r\left(\sum
_{x}p(x)S(\rho_{1}^{x},\ldots,\rho_{r}^{x})^{2}\right)-1\\
& = \sum
_{x}p(x)\left(rS(\rho_{1}^{x},\ldots,\rho_{r}^{x})^{2}-1\right).
\end{align}
Finally, by recalling~\cref{def:secrecy_based_multi} and dividing the above by $(r-1)$, we conclude the proof of the direct-sum property of $F_S$.

\subsection{Proof of Proposition~\ref{prop:coarse_graining_secrecy_F} (Coarse-graining property of secrecy-based fidelity)} \label{proof:coarse_graining_secrecy_F}

Let $(V^i_R)_{i=1}^r$ be a tuple of unitaries, and let $|\phi^{\rho_i} \rangle_{RA}$ be a purification of $\rho_i$ for all $i \in [r+m]$. Choose $V^{r+1}_R$ to be a unitary such that 
\begin{multline}
 \sum_{i=1}^{r} \mathfrak{R}\! \left[\left\langle \phi^{\rho_{r+1}} \right|_{RA}\left(V^{r+1}_{R }\right)^\dagger V^i_{R} \left| \phi^{\rho_i} \right \rangle_{RA}\right]  \\ =  \mathfrak{R}\! \left[\left\langle \phi^{\rho_{r+1}} \right|_{RA}\left(V^{r+1}_{R }\right)^\dagger \left(  \sum_{i=1}^{r} V^i_{R} \left| \phi^{\rho_i} \right \rangle_{RA}\right)\right]  \geq 0,
\end{multline}
which is possible by adjusting the unitary $V^{r+1}_R$ with a global phase.
With this choice, we find that 
\begin{align}
    & \sum_{1\leq i < j\leq r}\mathfrak{R}\! \left[\left\langle \phi^{\rho_j} \right|_{RA}\left(V^j_{R }\right)^\dagger V^i_{R} \left| \phi^{\rho_i} \right \rangle_{RA}\right]  \notag \\
    & \leq \sum_{1\leq i < j\leq r}\mathfrak{R}\! \left[\left\langle \phi^{\rho_j} \right|_{RA}\left(V^j_{R }\right)^\dagger V^i_{R} \left| \phi^{\rho_i} \right \rangle_{RA}\right] + \sum_{i=1}^{r} \mathfrak{R}\! \left[\left\langle \phi^{\rho_{r+1}} \right|_{RA}\left(V^{r+1}_{R }\right)^\dagger V^i_{R} \left| \phi^{\rho_i} \right \rangle_{RA}\right] \\ 
    &= \sum_{1\leq i < j\leq r+1} \mathfrak{R}\! \left[\left\langle \phi^{\rho_j} \right|_{RA}\left(V^j_{R }\right)^\dagger V^i_{R} \left| \phi^{\rho_i} \right \rangle_{RA}\right].
\end{align}
Following the same approach, we can successively choose unitaries in the tuple $(V^k_R)_{k=r+2}^{r+m}$  such that the following inequality holds for all $k \geq r+2$:
\begin{equation}
 \sum_{i=1}^{k-1} \mathfrak{R}\! \left[\left\langle \phi^{\rho_k} \right|_{RA}\left(V^k_{R }\right)^\dagger V^i_{R} \left| \phi^{\rho_i} \right \rangle_{RA}\right] = \mathfrak{R}\! \left[\left\langle \phi^{\rho_k} \right|_{RA}\left(V^k_{R }\right)^\dagger\left(\sum_{i=1}^{k-1}  V^i_{R} \left| \phi^{\rho_i} \right \rangle_{RA}\right)\right] \geq 0.
\end{equation}
This selection process leads to the following inequality holding:
\begin{equation}
    \sum_{1\leq i < j\leq r}\mathfrak{R}\! \left[\left\langle \phi^{\rho_j} \right|_{RA}\left(V^j_{R }\right)^\dagger V^i_{R} \left| \phi^{\rho_i} \right \rangle_{RA}\right] \leq \sum_{1\leq i < j\leq r + m} \mathfrak{R}\! \left[\left\langle \phi^{\rho_j} \right|_{RA}\left(V^j_{R }\right)^\dagger V^i_{R} \left| \phi^{\rho_i} \right \rangle_{RA}\right].
\end{equation}
Then  bounding the right-hand side from above by taking the supremum over tuples $(V^i_R)_{i=1}^{r+m}$ of unitaries and using~\eqref{eq:secrecy-fid-as-pairwise}, we have that
\begin{multline}
     \sum_{1\leq i < j\leq r}\mathfrak{R}\! \left[\left\langle \phi^{\rho_j} \right|_{RA}\left(V^j_{R }\right)^\dagger V^i_{R} \left| \phi^{\rho_i} \right \rangle_{RA}\right] \\  \leq \frac{(r+m)(r+m-1)}{2} F_S(\rho_1,\ldots, \rho_r, \ldots, \rho_{r+m}).
\end{multline}
Taking the supremum of the left-hand side of this inequality over tuples $(V^k_R)_{k=1}^r$ of unitaries and using~\eqref{eq:secrecy-fid-as-pairwise}  again, we arrive at 
\begin{equation}
\frac{r(r-1)}{2} F_S(\rho_1,\ldots, \rho_r)  \leq  \frac{(r+m)(r+m-1)}{2} F_S(\rho_1,\ldots, \rho_r, \ldots, \rho_{r+m}),
\end{equation}
from which the claim of the proposition follows.

\subsection{Proof of Theorem~\ref{thm:SDP_fidelity_pairwise_upper_and_lower} (Relations between multivariate fidelities)} 

\label{proof:SDP_fidelity_bounds_with_pairwise}

We prove the chain of inequalities in the theorem statement from left to right. 
\bigskip

\noindent \underline{Proof of $F_{H}(\rho_1, \ldots,\rho_r) \leq F_S(\rho_1, \ldots, \rho_r)$:}
\medskip

Recall the Petz~\cite{P86} and sandwiched~\cite{muller2013quantum,wilde2014strong}
R\'enyi relative entropies, defined for $\alpha \in (0,1)\cup(1,\infty)$ respectively as follows:
\begin{align}
D_{\alpha}(\rho\Vert\sigma)  & \coloneqq \frac{1}{\alpha-1}\ln\operatorname{Tr}
\left[\rho^{\alpha}\sigma^{1-\alpha}\right],\\
\widetilde{D}_{\alpha}(\rho\Vert\sigma)  & \coloneqq \frac{1}{\alpha-1}\ln
\operatorname{Tr}\!\left[\left(  \sigma^{\left(  1-\alpha\right)  /2\alpha}\rho
\sigma^{\left(  1-\alpha\right)  /2\alpha}\right)  ^{\alpha}\right].
\end{align}
Also note that $\widetilde{D}_{\frac{1}{2}}(\rho\Vert\sigma) = -2 \ln F(\rho,\sigma)$ and ${D}_{\frac{1}{2}}(\rho\Vert\sigma) = -2 \ln F_H(\rho,\sigma)$.
The following inequality is known for $\alpha\in\left(  0,1\right)  $ from \cite[Lemma~3]{datta2014limit}:
\begin{equation}
\widetilde{D}_{\alpha}(\rho\Vert
\sigma)\leq D_{\alpha}(\rho\Vert\sigma).\label{eq:petz-sandwiched-ineqs}
\end{equation}
Let
\begin{equation}
\rho_{XA}\coloneqq \frac{1}{r}\sum_{i=1}^{r}|i\rangle\!\langle i|_{X}\otimes\rho_{i}.
\end{equation}
Plugging into~\eqref{eq:petz-sandwiched-ineqs}, taking  infima with respect to a state
$\sigma_A$, and setting $\alpha=1/2$ then gives
\begin{equation}
\inf_{\sigma_{A}}\widetilde{D}_{\frac{1}{2}}(\rho
_{XA}\Vert\rho_{X}\otimes\sigma_{A})\leq\inf_{\sigma_{A}}D_{\frac{1}{2}}
(\rho_{XA}\Vert\rho_{X}\otimes\sigma_{A}).\label{eq:fidelity-holevo-infos}
\end{equation}
Consider that
\begin{align}
\label{eq:sandwiched-to-secrecy-1}
\inf_{\sigma_{A}}\widetilde{D}_{\frac{1}{2}}(\rho_{XA}\Vert\rho_{X}
\otimes\sigma_{A})  & =\inf_{\sigma_{A}}\left[  -2\ln F\!\left(  \rho_{XA}
,\rho_{X}\otimes\sigma_{A}\right)  \right]  \\
& =-2\ln\sup_{\sigma_{A}}F\!\left(  \rho_{XA},\rho_{X}\otimes\sigma_{A}\right)
\\
& =-2\ln\sup_{\sigma}\frac{1}{r}\sum_{i=1}^{r}F(  \rho_{i},\sigma) \\
& = -2 \ln S(\rho_1, \ldots, \rho_r)
\label{eq:sandwiched-to-secrecy-last}
.
\end{align}
For the penultimate equality, we used the direct-sum property of the fidelity in~\eqref{eq:CQequality}.

Now applying  \cite[Corollary~8]{gupta2015multiplicativity} (which holds for $\alpha
\in(0,1)$), consider that
\begin{align}
& \inf_{\sigma_{A}}D_{\frac{1}{2}}(\rho_{XA}\Vert\rho_{X}\otimes\sigma
_{A})\nonumber\\
& =-\ln\operatorname{Tr}\!\left[  \left(  \operatorname{Tr}_{X}[\rho_{X}
^{1/2}\rho_{XA}^{1/2}]\right)  ^{2}\right] \label{eq:ln-SH-explicit-form} \\
& =-\ln\operatorname{Tr}\!\left[  \left(  \operatorname{Tr}_{X}\!\left[  \left(
\sum_{i}\frac{1}{r}|i\rangle\!\langle i|\otimes I_d\right)  ^{1/2}\left(  \sum
_{j}\frac{1}{r}|j\rangle\!\langle j|\otimes\rho_{j}\right)  ^{1/2}\right]
\right)  ^{2}\right]  \\
& =-\ln\operatorname{Tr}\!\left[  \left(  \operatorname{Tr}_{X}\!\left[  \sum
_{i}\frac{1}{r}|i\rangle\!\langle i|\otimes\rho_{i}^{1/2}\right]  \right)
^{2}\right]  \\
& =-\ln\operatorname{Tr}\!\left[  \left(  \sum_{i}\frac{1}{r}\rho_{i}
^{1/2}\right)  ^{2}\right]  \\
& =-\ln\operatorname{Tr}\!\left[  \left(  \sum_{i}\frac{1}{r}\rho_{i}
^{1/2}\right)  \left(  \sum_{j}\frac{1}{r}\rho_{j}^{1/2}\right)  \right]  \\
& =-\ln\operatorname{Tr}\!\left[  \frac{1}{r^{2}}\sum_{i}\rho_{i}+\frac{1}
{r^{2}}\sum_{i\neq j}\rho_{i}^{1/2}\rho_{j}^{1/2}\right]  \\
& =-\ln\!\left(  \frac{1}{r}+\frac{1}{r^{2}}\sum_{i\neq j}\operatorname{Tr}
\!\left[  \rho_{i}^{1/2}\rho_{j}^{1/2}\right]  \right)  \\
& =-\ln\!\left(  \frac{1}{r}+\frac{2}{r^{2}}\sum_{i<j}F_{H}(\rho_{i},\rho
_{j})\right)  \\
& =-\ln\!\left(  \frac{1}{r}+\frac{r-1}{r}\left[  \frac{2}{r\left(  r-1\right)
}\sum_{i<j}F_{H}(\rho_{i},\rho_{j})\right]  \right)  \\
& =-\ln\!\left(  \frac{1}{r}+\frac{r-1}{r}F_{H}\!\left(\rho_1, \ldots, \rho_r \right)\right)  . \label{eq:relation_to_fidelity_d_1/2}
\end{align}
Combining~\eqref{eq:fidelity-holevo-infos},~\eqref{eq:sandwiched-to-secrecy-last}, and~\eqref{eq:relation_to_fidelity_d_1/2}, we get
\begin{align}
-2\ln S(\rho_1, \ldots, \rho_r)  
& \leq-\ln\!\left(  \frac{1}{r}+\frac{r-1}{r}F_{H}\!\left(\rho_1, \ldots, \rho_r\right)\right)  ,
\end{align}
which is the same as
\begin{align}
S(\rho_1, \ldots, \rho_r)^{2} 
 \geq\frac{1}{r}+\frac{r-1}{r}F_{H}\!\left(\rho_1, \ldots, \rho_r\right).
 \end{align}
 Then, by rearranging the terms in the above inequality, we arrive at the required bound.

 We also provide an alternative proof for this with the use of~\eqref{eq:secrecy-fid-as-pairwise}. Choose $V^{i}_R =I$  for all $i \in [r]$ and purifications to be canonical purifications where $|\phi^{\rho_i}\rangle = \left(\rho_i^{1/2} \otimes I \right) | \Gamma \rangle $, where 
 $| \Gamma \rangle \coloneqq \sum_{i} |i \rangle |i\rangle$ for all $i \in [r]$. 
 With that choice
 \begin{align}
    \mathfrak{R}\! \left[\left\langle \phi^{\rho_j} \right|_{RA}\left(V^j_{R }\right)^\dagger V^i_{R} \left| \phi^{\rho_i} \right \rangle_{RA}\right] \notag 
   & =\left\langle \Gamma \middle |   \left(\rho_i^{1/2} \otimes I \right) \left(\rho_j^{1/2} \otimes I \right) \middle| \Gamma \right\rangle \\
   & =\Tr\!\left[\rho_i^{1/2}\rho_j^{1/2} \right].
 \end{align}
 Since this choice is a feasible candidate for the optimization in~\eqref{eq:secrecy-fid-as-pairwise}, 
we arrive at
 \begin{equation}
     \frac{2}{r(r-1)}\sum_{i <j} \Tr\!\left[\rho_i^{1/2}\rho_j^{1/2} \right] \leq F_S(\rho_1, \ldots, \rho_r),
 \end{equation}
 which is the desired inequality.

\medskip
\noindent \underline{Proof of $F_{S}(\rho_1, \ldots,\rho_r) \leq F_{\operatorname{SDP}}(\rho_1, \ldots, \rho_r)$:}

\medskip

For a tuple of unitaries $\left(V^{i}_R\right)_{i=1}^r$, define
\begin{equation}
    M \coloneqq \sum_{i} \langle i| \otimes  V^i_{R}.
\end{equation}
Then
\begin{equation}
    \sum_{i,j} |i\rangle\!\langle j| \otimes \left(V^i_{R } \right)^\dag V^j_{R} = M^\dag M\geq0.
\end{equation}
Using the fact that $V^i_R$ is a unitary, leading to $\left(V^i_{R } \right)^\dag V^i_{R}=I_d $ for all $i \in [r]$, we see that
\begin{equation}
    I_r \otimes I_d+ \sum_{i \neq j} |i\rangle\!\langle j| \otimes \left(V^i_{R} \right)^\dag V^j_{R } = \sum_{i,j} |i\rangle\!\langle j| \otimes \left(V^i_{R } \right)^\dag V^j_{R} \geq 0 .
\end{equation}
Then the choice $K_{ij}= \left(V^i_{R} \right)^\dag V^j_{R }$ for $i\neq j \in [r]$ and $K_{ii}=I_d$ for $i \in [r]$ is a possible candidate for the optimization for $F_{\operatorname{SDP}}(\rho_1,\ldots, \rho_r)$ in~\eqref{eq:SDP_fidelity_with_pairwise_structure}. To this end, we have 
\begin{equation}
    \frac{2}{r(r-1)}\sum_{i<j} \mathfrak{R}\! \left[\left\langle \phi^{\rho_j} \right|_{RA}\left(V^j_{R }\right)^\dagger V^i_{R} \left| \phi^{\rho_i} \right \rangle_{RA}\right] \leq F_{\operatorname{SDP}}(\rho_1,\ldots, \rho_r).
\end{equation}
Supremizing over $\left(V^{i}_R\right)_{i=1}^r$ and 
recalling~\eqref{eq:secrecy-fid-as-pairwise}, the desired inequality follows:
\begin{equation}
     F_S(\rho_1, \ldots, \rho_r)  \leq F_{\operatorname{SDP}}(\rho_1,\ldots, \rho_r).
\end{equation}

\medskip

\noindent\underline{Proof of $F_{\operatorname{SDP}}(\rho_1, \ldots,\rho_r) \leq F_U(\rho_1, \ldots, \rho_r)$}: 

\medskip

  Suppose that $X_{ij}$ satisfies $X_{ji}=X_{ij}^\dagger$, along with the following inequality:
    \begin{equation}
        \sum_{i=1}^r |i\rangle\!\langle i| \otimes \rho_i + \sum_{i \neq j} |i\rangle\!\langle j| \otimes X_{ij} \geq 0.
    \end{equation}
 Then for each pair of states $\rho_i,\rho_j$ such that $i < j$, we also have 
 \begin{equation}
     \begin{bmatrix}
\rho_i & X_{ij} \\
X_{ij}^\dagger & \rho_j
\end{bmatrix} \geq 0.
 \end{equation}
 Using this together with the dual SDP of Uhlmann fidelity given in~\eqref{eq:bivariate_fid_dual}, we get 
 \begin{equation}
     2 \sum_{i <j} \mathfrak{R}\left[ \Tr[X_{ij}]\right] \leq 2 \sum_{i <j}  F(\rho_i, \rho_j).
 \end{equation}
 Taking the supremum over all $(X_{ij})_{ij}$ satisfying the inequality
 $\sum_{i=1}^r |i\rangle\!\langle i| \otimes \rho_i + \sum_{i \neq j} |i\rangle\!\langle j| \otimes X_{ij} \geq 0$, we obtain 
 \begin{equation}
     r(r-1) F_{\operatorname{SDP}}(\rho_1, \ldots, \rho_r) \leq  2 \sum_{i <j}^r  F(\rho_i, \rho_j),
 \end{equation}
 which implies the desired inequality after a rearrangement of terms.

 \medskip 
 \noindent\underline{Proof of $F_U(\rho_1,\ldots, \rho_r) \leq \sqrt{F_H(\rho_1,\ldots,\rho_r)}$:}

\medskip
 
Notice that for states $\rho$ and $\sigma$, we have
\begin{equation}F(\rho,\sigma) \leq \sqrt{F_H(\rho,\sigma)},
\end{equation}
which follows by adapting \cite[Theorem~6]{audenaert2008asymptotic} with the choice $s=1/2$, $A=\rho$, and $B=\sigma$. 
This gives 
\begin{align}
F_U(\rho_1,\ldots, \rho_r) 
    &= \frac{2}{r\left(  r-1\right)  }\sum_{i<j}F(\rho_{i},\rho_{j}) \\
    &\leq \frac{2}{r\left(  r-1\right)  }\sum_{i<j}\sqrt{F_H(\rho_{i},\rho_{j})}  \\
    &\leq \sqrt{ \frac{2}{r\left(  r-1\right)  }\sum_{i<j}F_H(\rho_{i},\rho_{j}) } \\
    &= \sqrt{F_H(\rho_1,\ldots,\rho_r)},
\end{align}
where the last inequality holds due to concavity of the square root function.

\subsection{Proof of Proposition~\ref{prop:lower_bound_multi_SDP} (Lower bound on multivariate SDP fidelity)} \label{Sec:Proof_lower_bound_multi_SDP}
 Fix an arbitrary permutation $\pi \in S_r$. For all $1 \leq i \leq \lfloor r/2 \rfloor$, we have from~\eqref{eq:multivariate_fid_sdp_dual} that
    \begin{align}
        F(\rho_{\pi(i)}, \rho_{\pi(i+\lfloor r/2 \rfloor)}) &= \sup_{Z_i} \left \{   \mathfrak{R}\!\left[ \Tr[Z_i]\right]: \begin{bmatrix}
        \rho_{\pi(i)} & Z_i \\
        Z_i^\dagger & \rho_{\pi(i+\lfloor r/2 \rfloor)}
        \end{bmatrix} \geq 0 \right \}. \label{eq:sdp-dual-bivariate-fidelity}
    \end{align}
    Consider arbitrary operators $Z_i$ satisfying the constraint of~\eqref{eq:sdp-dual-bivariate-fidelity}:
    \begin{equation}\label{eq:psd-rho-pi-i}
        \begin{bmatrix}
        \rho_{\pi(i)} & Z_i \\
        Z_i^\dagger & \rho_{\pi(i+\lfloor r/2 \rfloor)}
        \end{bmatrix} \geq 0, \qquad 1 \leq i \leq \lfloor r/2 \rfloor.
    \end{equation}
    Define operators $X_{ij}$ for $1 \leq i \neq j \leq r$ as
    \begin{align}
        X_{ij} \coloneqq 
        \begin{cases}
            Z_i & \text{if } 1\leq i \leq \lfloor r/2 \rfloor \text{ and } j=i+\lfloor r/2 \rfloor, \\
            Z_j^{\dagger} & \text{else if } i=j+\lfloor r/2 \rfloor \text{ and } 1\leq j \leq \lfloor r/2 \rfloor, \\
            0 & \text{otherwise},
        \end{cases}
    \end{align}
    where $0$ denotes the zero operator.
    The positive semi-definiteness of the matrices in~\eqref{eq:psd-rho-pi-i} implies that 
    \begin{align}
        \sum_{i=1}^r |i\rangle\!\langle i| \otimes \rho_{\pi(i)} + \sum_{i \neq j} |i\rangle\!\langle j| \otimes X_{ij} \geq 0.
    \end{align}
    Hence, the SDP dual~\eqref{eq:multivariate_fid_sdp_dual} and the permutation symmetry of the multivariable SDP fidelity yield
    \begin{align}
        F_{\operatorname{SDP}}(\rho_1, \ldots, \rho_r) 
            &\geq \frac{2}{ r(r-1)}\sum_{1 \leq i <j \leq r} \mathfrak{R}\!\left[ \Tr[X_{ij}]\right] \\
            &=\frac{2}{ r(r-1)} \sum_{i=1}^{\lfloor \frac{r}{2} \rfloor} \mathfrak{R}\!\left[\Tr[Z_i] \right].\label{eq:mult-fid-lower-bd-bivariate-objective-function}
    \end{align}
    Combining~\eqref{eq:sdp-dual-bivariate-fidelity} and~\eqref{eq:mult-fid-lower-bd-bivariate-objective-function}, we get 
\begin{equation}
    F_{\operatorname{SDP}}(\rho_1, \ldots, \rho_r) \geq \frac{2}{ r(r-1)} \sum_{i=1}^{\lfloor \frac{r}{2} \rfloor} F(\rho_{\pi(i)}, \rho_{\pi(i)+\lfloor \frac{r}{2} \rfloor}).
\end{equation}
Since the above inequality holds for arbitrary $\pi \in S_r$, the desired inequality~\eqref{eq:sdp-fid-lower-bound} follows.

\subsection{Proof of Proposition~\ref{prop:properties_measured_multi_G} (Properties of measured multivariate fidelity)}\label{proof:properties_measured_multivariate_G}
\noindent\underline{Symmetry:} 
Symmetry of the measured multivariate fidelity follows easily from the symmetry satisfied by the underlying classical multivariate fidelity. 

\medskip
\noindent\underline{Data processing:} 
Let $\Lambda'$ be a measurement channel, and let $\cN$ be a channel. Then
consider that 
\begin{equation}
    \mathbf{F}\!\left( \Lambda'\big(\cN(\rho_1)\big), \ldots, \Lambda'\big(\cN(\rho_r)\big)\right) \geq \inf_{( \Lambda_x)_x}  \mathbf{F}\!\left(\Lambda(\rho_1), \ldots, \Lambda(\rho_r)\right)
\end{equation}
due to the composed measurement channel $\Lambda' \circ \cN$ being a possible choice for a measurement channel acting on $\rho_1, \ldots, \rho_r$.
Infimizing over all measurement channels  $\Lambda'$, we arrive at 
\begin{equation}
 \inf_{( \Lambda'_x)_x}  \mathbf{F}\!\left( \Lambda'\big(\cN(\rho_1)\big), \ldots, \Lambda'\big(\cN(\rho_r)\big)\right) \geq \inf_{( \Lambda_x)_x}  \mathbf{F}\!\left(\Lambda(\rho_1), \ldots, \Lambda(\rho_r)\right),
\end{equation}
proving the data processing property. 

\medskip
\noindent\underline{Faithfulness:}
When $\rho_1=\cdots = \rho_r$, it follows that $\Lambda(\rho_1)=\cdots = \Lambda(\rho_r)$ for every measurement channel $\Lambda$. By the faithfulness of the underlying classical multivariate fidelity, we conclude that $\mathbf{F}(\Lambda(\rho_1), \ldots, \Lambda(\rho_r))=1$ for every measurement channel $\Lambda$, implying that
\begin{equation}
    \overline{\mathbf{F}}(\rho_1, \ldots, \rho_r)=1.
\end{equation}

To prove the reverse implication, suppose that $ \overline{\mathbf{F}}(\rho_1, \ldots, \rho_r) =1$. Due to the fact that the underlying classical multivariate fidelity is less than or equal to one, under all measurement channels, we have that 
$ \mathbf{F}\!\left(\Lambda(\rho_1), \ldots, \Lambda(\rho_r)\right)=1$. This includes tomographically complete measurements, which lead to distributions that are in one-to-one correspondence with the underlying density operators (i.e., there is a linear invertible map relating the resulting distributions and the original density operators).
From the faithfulness of the underlying classical multivariate fidelities, the distributions are the same, and due to the aforementioned bijection, the states are also the same. That is, if $\Tr[M_x(\rho_i-\rho_j)]=0$ for every measurement operator $M_x$ in a tomographically complete measurement, then $\rho_i=\rho_j$.

\medskip 
\noindent\underline{Direct-sum property:}
For all $i\in\left[  r\right]  $, define
\begin{equation}
\rho_{XA}^{i}\coloneqq \sum_{x}p(x)|x\rangle\!\langle x|\otimes\rho_{i}^{x},
\end{equation}
where $\left(  p(x)\right)  _{x}$ is a probability distribution, $\left\{
|x\rangle\right\}  _{x}$ is an orthonormal basis, and each $\rho_{i}^{x}$ is a
state.

Defining the measurement channels
\begin{align}
\Lambda(\omega_{XA})  & \coloneqq \sum_{z}\operatorname{Tr}[\Lambda_{z}\omega
_{XA}]|z\rangle\!\langle z|,\\
\Lambda^{x}(\tau_{A})  & \coloneqq \sum_{y}\operatorname{Tr}[\Lambda_{y}^{x}\tau
_{A}]|y\rangle\!\langle y|,\\
\widetilde{\Lambda}(\omega_{XA})  & \coloneqq \sum_{x,y}\operatorname{Tr}\!\left[
\left(  |x\rangle\!\langle x|\otimes\Lambda_{y}^{x}\right)  \omega_{XA}\right]
|x\rangle\!\langle x|\otimes|y\rangle\!\langle y|,
\end{align}
consider that
\begin{align}
&  \overline{\mathbf{F}}(\rho_{XA}^{1},\ldots,\rho_{XA}^{r})\nonumber\\
& =\inf_{\Lambda}\mathbf{F}(\Lambda(\rho_{XA}^{1}),\ldots,\Lambda(\rho_{XA}^{r}))\\
& \leq\inf_{\widetilde{\Lambda}}\mathbf{F}(\widetilde{\Lambda}(\rho_{XA}^{1}
),\ldots,\widetilde{\Lambda}(\rho_{XA}^{r}))\\
& =\inf_{\widetilde{\Lambda}}\mathbf{F}\!\left(  \sum_{x}p(x)|x\rangle\!\langle
x|\otimes\Lambda^{x}(\rho_{1}^{x}),\ldots,\sum_{x}p(x)|x\rangle\!\langle
x|\otimes\Lambda^{x}(\rho_{r}^{x})\right)  \\
& =\inf_{\widetilde{\Lambda}}\sum_{x}p(x)\mathbf{F}\!\left(  \Lambda^{x}(\rho_{1}
^{x}),\ldots,\Lambda^{x}(\rho_{r}^{x})\right)  \\
& =\sum_{x}p(x)\inf_{\Lambda^{x}}\mathbf{F}\!\left(  \Lambda^{x}(\rho_{1}^{x}
),\ldots,\Lambda^{x}(\rho_{r}^{x})\right)  \\
& =\sum_{x}p(x)\overline{\mathbf{F}}(\rho_{1}^{x},\ldots,\rho_{r}^{x}) \label{eq:forward_DS_M}.
\end{align}
The first inequality follows because $\widetilde{\Lambda}$ is a special kind
of measurement. The third equality follows from the direct-sum property of the
underlying classical multivariate fidelity.

For the opposite inequality, consider that
\begin{align}
& \overline{\mathbf{F}}(\rho_{XA}^{1},\ldots,\rho_{XA}^{r})\nonumber\\
& =\inf_{\Lambda}\mathbf{F}(\Lambda(\rho_{XA}^{1}),\ldots,\Lambda(\rho_{XA}^{r}))\\
& =\inf_{\Lambda}\mathbf{F}\!\left(  \sum_{x}p(x)\Lambda\left(  |x\rangle\!\langle
x|\otimes\rho_{1}^{x}\right)  ,\ldots,\sum_{x}p(x)\Lambda\left(
|x\rangle\!\langle x|\otimes\rho_{r}^{x}\right)  \right)  \\
& \geq\inf_{\Lambda}\sum_{x}p(x)\mathbf{F}\!\left(  \Lambda\left(  |x\rangle\!\langle
x|\otimes\rho_{1}^{x}\right)  ,\ldots,\Lambda\left(  |x\rangle\!\langle
x|\otimes\rho_{r}^{x}\right)  \right)  \\
& \geq\sum_{x}p(x) \overline{\mathbf{F}}(\rho_{1}^{x},\ldots,\rho_{r}^{x}). \label{eq:backward_DS_M}
\end{align}
The second equality follows from linearity of the measurement channel $\Lambda
$. The first inequality follows because the underlying classical fidelities
are jointly concave (as a consequence of the data-processing inequality and the direct-sum property). The final inequality follows because tensoring in the
state $|x\rangle\!\langle x|$ and performing the measurement $\Lambda$ is a
particular kind of measurement channel for the tuple $\left(  \rho_{1}
^{x},\ldots,\rho_{r}^{x}\right)  $, and so the resulting fidelity cannot be
smaller than $ \overline{\mathbf{F}}(\rho_{1}^{x},\ldots,\rho_{r}^{x})$, which is defined as
the infimum over all measurement channels.

By combining~\eqref{eq:forward_DS_M} and~\eqref{eq:backward_DS_M}, we conclude the proof.

\medskip 
\noindent\underline{Orthogonality:}
Suppose that $\rho_i \rho_j =0$ for all $i\neq j \in [r]$.  Then there exists a measurement to distinguish these states perfectly, leading to orthogonal distributions. Then all of the three underlying multivariate classical fidelities are equal to zero, thus proving weak orthogonality of the maximal extension with respect to all three underlying multivariate classical fidelities.

To prove the reverse implication, suppose that the underlying classical multivariate fidelity is the average pairwise fidelity and that $\overline{\mathbf{F}}(\rho_1, \ldots, \rho_r) =0$. Then applying \cref{prop:measured_unmeasured_ineq} gives that
\begin{equation}
    0= \overline{\mathbf{F}}(\rho_1, \ldots, \rho_r) \geq F_{U}(\rho_1, \ldots, \rho_r).
\end{equation}
By orthogonality of the average pairwise Uhlmann fidelity (\cref{Prop:properties_average_pairwise_z}), it follows  that $\rho_i \rho_j =0$ for all $i\neq j \in [r]$, concluding the proof of orthogonality when the underlying classical fidelity is the average pairwise fidelity.

\subsection{Proof of Proposition~\ref{Prop:properties_minimal_extension} (Properties of minimal extension of multivariate fidelity)}

\label{proof:properties_minimal}

Recall that
\begin{equation}
        \underline{ \mathbf{F}}(\rho_1, \ldots, \rho_r) \coloneqq \sup_{\substack{\cP \in \operatorname{CPTP}, \, \omega_i \in \mathscr{D}_c \forall i \in [r]}} \left\{\mathbf{F}\!\left( \omega_1, \ldots, \omega_r \right) : \cP(\omega_i) =\rho_i \,\forall i \in [r]\right \},
    \end{equation}
where $\cP$ is a classical-quantum channel with $\cP(\omega_i) =\rho_i$.

Symmetry follows by the symmetry of the underlying classical multivariate fidelity.

\medskip
\noindent\underline{Data processing:}
Let $\cN$ be a quantum channel. 
Let $\omega_i \in \mathscr{D}_c$ and let $\mathcal{P}$ be a preparation channel such that $\cP(\omega_i)= \rho_i$. This also leads to 
$\cN\big(\cP(\omega_i)\big)= \cN(\rho_i)$, thus providing a feasible preparation channel for $\cN(\rho_i)$, resulting from $\cN \circ \cP$.
So we have 
\begin{equation}
   \mathbf{F}\!\left( \omega_i, \ldots, \omega_r \right)  \leq \underline{ \mathbf{F}}\!\left(\cN(\rho_1), \ldots, \cN(\rho_r) \right).
\end{equation}
Then, supremizing over all $\cP$ and $\omega_i$, we conclude that 
\begin{equation}
  \underline{\mathbf{F}}\!\left(\rho_1, \ldots, \rho_r \right) \leq   \underline{\mathbf{F}}\!\left(\cN(\rho_1), \ldots, \cN(\rho_r) \right).
\end{equation}

\medskip 
\noindent\underline{Faithfulness:}
Suppose that $ \underline{\mathbf{F}}\!\left(\rho_1, \ldots, \rho_r \right) =1$. Then, we also have that ${\overline{\mathbf{F}}}\!\left(\rho_1, \ldots, \rho_r \right) =1$ from~\cref{prop:ineq_maximal_minimal_multi}. With the faithfulness property of maximal extension of multivariate fidelity in~\cref{prop:properties_measured_multi_G}, we find that $\rho_i=\rho_j$ for all $i,j \in [r]$.

To prove the reverse implication, suppose that $\rho_i=\rho_j =\rho$ for all $i,j \in [r]$.  Then we can set $\omega_i \coloneqq |0 \rangle\!\langle 0|$ for all $i\in[r]$, and a simple preparation channel consists of tracing out the input and replacing with $\rho$. The fidelity of the common state $\omega_i = |0 \rangle\!\langle 0|$ is equal to one, due to the faithfulness of the classical multivariate fidelity. Since the preparation fidelity involves a supremum over all preparations and the underlying fidelity cannot exceed one, this proves that $ \underline{\mathbf{F}}\!\left(\rho_1, \ldots, \rho_r \right) =1$.

\medskip 
\noindent\underline{Direct-sum property:}
For all $i\in\left[  r\right]  $, define
\begin{equation}
\rho_{XA}^{i}\coloneqq\sum_{x}p(x)|x\rangle\!\langle x|\otimes\rho_{i}^{x}, \label{eq:rho_XA_def}
\end{equation}
where $\left(  p(x)\right)  _{x}$ is a probability distribution, $\left\{
|x\rangle\right\}  _{x}$ is an orthonormal basis, and each $\rho_{i}^{x}$ is a
state.

Define the preparation channel $\mathcal{P}^{x}$ and the state $\omega_{i}
^{x}\in\mathscr{D}_{c}$ on a system~$S$, such that
\begin{equation}
\mathcal{P}^{x}(\omega_{i}^{x})=\rho_{i}^{x}\ \forall i\in\left[  r\right]  .
\end{equation}
Defining
\begin{align}
\mathcal{P}^{\prime}(\tau_{XS}) &  \coloneqq\sum_{x}|x\rangle\!\langle
x|\otimes\mathcal{P}^{x}(\langle x|_{X}\tau_{XS}|x\rangle_{X}),\\
\omega_{i}^{\prime} &  \coloneqq\sum_{x}p(x)|x\rangle\!\langle x|\otimes
\omega_{i}^{x},
\end{align}
we then find that, for all $i\in\left[  r\right]  $,
\begin{equation}
\mathcal{P}^{\prime}\left(  \omega_{i}^{\prime}\right)  =\sum_{x}
p(x)|x\rangle\!\langle x|\otimes\mathcal{P}^{x}(\omega_{i}^{x})=\rho_{XA}^{i}.
\end{equation}
Then consider that
\begin{align}
&  \underline{\mathbf{F}}(\rho_{XA}^{1},\ldots,\rho_{XA}^{r})\nonumber\\
&  =\sup_{\substack{\mathcal{P},\\\omega_{1},\ldots,\omega_{r}\in
\mathscr{D}_{c}}}\left\{  \mathbf{F}(\omega_{1},\ldots,\omega_{r}
):\mathcal{P}(\omega_{i})=\rho_{XA}^{i}\ \forall i\in\left[  r\right]
\right\}  \\
&  \geq\sup_{\substack{\mathcal{P}^{\prime},\\\omega_{1}^{\prime}
,\ldots,\omega_{r}^{\prime}\in\mathscr{D}_{c}}}\left\{  \mathbf{F}(\omega
_{1}^{\prime},\ldots,\omega_{r}^{\prime}):\mathcal{P}^{\prime}(\omega
_{i}^{\prime})=\rho_{XA}^{i}\ \forall i\in\left[  r\right]  \right\}  \\
&  =\sup_{\substack{\mathcal{P}^{\prime},\\\omega_{1}^{\prime},\ldots
,\omega_{r}^{\prime}\in\mathscr{D}_{c}}}\left\{
\begin{array}
[c]{c}
\mathbf{F\!}\left(  \sum_{x}p(x)|x\rangle\!\langle x|\otimes\omega_{1}
^{x},\ldots,\sum_{x}p(x)|x\rangle\!\langle x|\otimes\omega_{r}^{x}\right)  :\\
\mathcal{P}^{\prime}(\omega_{i}^{\prime})=\rho_{XA}^{i}\ \forall i\in\left[
r\right]
\end{array}
\right\}  \\
&  =\sup_{\substack{\mathcal{P}^{\prime},\\\omega_{1}^{\prime},\ldots
,\omega_{r}^{\prime}\in\mathscr{D}_{c}}}\left\{  \sum_{x}p(x)\mathbf{F}\!\left(
\omega_{1}^{x},\ldots,\omega_{r}^{x}\right)  :\mathcal{P}^{\prime}(\omega
_{i}^{\prime})=\rho_{XA}^{i}\ \forall i\in\left[  r\right]  \right\}  \\
&  =\sum_{x}p(x)\sup_{\substack{\mathcal{P}^{x},\\\omega_{1}^{x},\ldots
,\omega_{r}^{x}\in\mathscr{D}_{c}}}\left\{  \mathbf{F}\!\left(  \omega_{1}
^{x},\ldots,\omega_{r}^{x}\right)  :\mathcal{P}^{x}(\omega_{i}^{x})=\rho
_{i}^{x}\ \forall i\in\left[  r\right]  \right\}  \\
&  =\sum_{x}p(x)\underline{\mathbf{F}}(\rho_{1}^{x},\ldots,\rho_{r}^{x}),
\end{align}
where the third equality follows from the direct-sum property of the underlying classical multivariate fidelity.

Now we prove the following inequality
\begin{equation}
\underline{\mathbf{F}}(\rho_{XA}^{1},\ldots,\rho_{XA}^{r})\leq\sum
_{x}p(x)\underline{\mathbf{F}}(\rho_{1}^{x},\ldots,\rho_{r}^{x}
).\label{eq:other-prep-ineq}
\end{equation}
For a probability distribution $\left(  q_{i}(y)\right)  _{y}$, define the
state
\begin{equation}
\omega_{i}\coloneqq\sum_{y}q_{i}(y)|y\rangle\!\langle y|,
\end{equation}
for all $i\in\left[  r\right]  $. Then consider that the action of a general
preparation channel $\mathcal{P}$ is as follows
\begin{equation}
\mathcal{P}(\omega_{i})=\rho_{XA}^{i}\ \ \forall i\in\left[  r\right]  .
\end{equation}
Now define
\begin{equation}
\sigma_{XA}^{y} \coloneqq \mathcal{P}(|y\rangle\!\langle y|) ,
\end{equation}
so that
\begin{equation}
\sum_{y}q_{i}(y)\sigma_{XA}^{y}=\rho_{XA}^{i}\ \ \forall i\in\left[  r\right]
.
\end{equation}
Then from the action of the completely dephasing channel $\Delta$, define the
probability distribution $\left(  t(x|y)\right)  _{x}$ and tuple $\left(
\sigma_{A}^{x,y}\right)  _{x}$ of states as
\begin{equation}
\left(  \Delta\otimes\operatorname{id}\right)  (\sigma_{XA}^{y})=\sum
_{x}t(x|y)|x\rangle\!\langle x|\otimes\sigma_{A}^{x,y},
\end{equation}
and consider that
\begin{align}
\rho_{XA}^{i} &  =\left(  \Delta\otimes\operatorname{id}\right)  \left(
\rho_{XA}^{i}\right)  \\
&  =\sum_{y}q_{i}(y)\left(  \Delta\otimes\operatorname{id}\right)
(\sigma_{XA}^{y})\\
&  =\sum_{x,y}q_{i}(y)t(x|y)|x\rangle\!\langle x|\otimes\sigma_{A}^{x,y} . \label{eq:rho_XA without conditional}
\end{align}
Now observe that, for all $i \in [r]$,
\begin{equation}
\sum_{x}p(x)|x\rangle\!\langle x|=\operatorname{Tr}_{A}[\rho_{XA}^{i}
]=\sum_{x}\left(  \sum_{y}q_{i}(y)t(x|y)\right)  |x\rangle\!\langle x|,
\end{equation}
proving that the marginal over $y$ of the joint distribution $q_{i}(y)t(x|y)$
is $p(x)$. Now let us define the conditional distribution $s_{i}(y|x)$ by
\begin{equation}
s_{i}(y|x)p(x)=q_{i}(y)t(x|y),\label{eq:bayes-id}
\end{equation}
and by considering~\eqref{eq:rho_XA without conditional} and the above equality, we note that
\begin{align}
\rho_{XA}^{i} &  =\sum_{x,y}s_{i}(y|x)p(x)|x\rangle\!\langle x|\otimes
\sigma_{A}^{x,y}\\
&  =\sum_{x}p(x)|x\rangle\!\langle x|\otimes\sum_{y}s_{i}(y|x)\sigma_{A}
^{x,y}\\
&  =\sum_{x}p(x)|x\rangle\!\langle x|\otimes\rho_{i}^{x},
\end{align}
where the last equality follows by recalling~\eqref{eq:rho_XA_def}.
This implies that 
\begin{equation}
    \rho_i^x =\sum_{y}s_{i}(y|x)\sigma_{A}
^{x,y}.
\end{equation}
Thus, conditioned on $x$, generating $y$ at random according to $s_{i}(y|x)$
and preparing the state $\sigma_{A}^{x,y}$ is a particular way of preparing
$\rho_{i}^{x}$. That is, we can define the preparation channel $\mathcal{P}
^{x}$ as
\begin{equation}
\mathcal{P}^{x}(|y\rangle\!\langle y|)=\sigma_{A}^{x,y},
\end{equation}
so that, for all $i\in\left[  r\right]  $,
\begin{equation}
\mathcal{P}^{x}\left(  \sum_{y}s_{i}(y|x)|y\rangle\!\langle y|\right)
=\sum_{y}s_{i}(y|x)\mathcal{P}^{x}\left(  |y\rangle\!\langle y|\right)
=\sum_{y}s_{i}(y|x)\sigma_{A}^{x,y}=\rho_{i}^{x}.\label{eq:prepare-each-x}
\end{equation}
Putting everything together, let $\mathcal{P}$ and $\omega_{1},\ldots
,\omega_{r}$ be a particular way of preparing $\rho_{XA}^{1},\ldots,\rho
_{XA}^{r}$, respectively. Then
\begin{align}
&  \mathbf{F}(\omega_{1},\ldots,\omega_{r})\nonumber\\
&  =\mathbf{F}\!\left(  \sum_{y}q_{1}(y)|y\rangle\!\langle y|,\ldots,\sum
_{y}q_{r}(y)|y\rangle\!\langle y|\right)  \\
&  \leq\mathbf{F}\!\left(  \sum_{x,y}q_{1}(y)t(x|y)|x\rangle\!\langle
x|\otimes|y\rangle\!\langle y|,\ldots,\sum_{x,y}q_{r}(y)t(x|y)|x\rangle
\!\langle x|\otimes|y\rangle\!\langle y|\right)  \\
&  =\mathbf{F}\!\left(  \sum_{x,y}s_{1}(y|x)p(x)|x\rangle\!\langle
x|\otimes|y\rangle\!\langle y|,\ldots,\sum_{x,y}s_{r}(y|x)p(x)|x\rangle
\!\langle x|\otimes|y\rangle\!\langle y|\right)  \\
&  =\sum_{x}p(x)\mathbf{F}\!\left(  \sum_{y}s_{1}(y|x)|y\rangle\!\langle
y|,\ldots,\sum_{x,y}s_{r}(y|x)|y\rangle\!\langle y|\right)  \\
&  \leq\sum_{x}p(x)\sup_{\substack{\mathcal{P}^{x},\\\omega_{1}^{x}
,\ldots,\omega_{r}^{x}\in\mathscr{D}_{c}}}\left\{  \mathbf{F}\!\left(
\omega_{1}^{x},\ldots,\omega_{r}^{x}\right)  :\mathcal{P}^{x}(\omega_{i}
^{x})=\rho_{i}^{x}\ \forall i\in\left[  r\right]  \right\}  \\
&  =\sum_{x}p(x)\underline{\mathbf{F}}(\rho_{1}^{x},\ldots,\rho_{r}^{x}).
\end{align}
The first inequality follows from data processing. The second equality follows
from the identity in~\eqref{eq:bayes-id}. The third equality follows from the
direct-sum property for the underlying classical multivariate fidelity. The
last inequality follows because, from $\mathcal{P}$ and $\omega_{1}
,\ldots,\omega_{r}$, we have constructed a particular method of preparing
$\rho_{1}^{x},\ldots,\rho_{r}^{x}$ from the channel $\mathcal{P}^{x}$ and the
commuting states $\sum_{y}s_{1}(y|x)|y\rangle\!\langle y|,\ldots,\sum_{y}
s_{r}(y|x)|y\rangle\!\langle y|$, as given in~\eqref{eq:prepare-each-x}. The
proof of the inequality in~\eqref{eq:other-prep-ineq} is complete after
noticing that we have proven the inequality $\mathbf{F}(\omega_{1}
,\ldots,\omega_{r})\leq\sum_{x}p(x)\underline{\mathbf{F}}(\rho_{1}^{x}
,\ldots,\rho_{r}^{x})$ for all possible preparations of the states $\rho
_{XA}^{1},\ldots,\rho_{XA}^{r}$, and so we can take a supremum over all such preparations.

\medskip 
 \noindent\underline{Weak orthogonality:} Suppose that $\rho_i\rho_j=0$ for all $i\neq j \in [r]$. 
Then by using~\cref{prop:ineq_maximal_minimal_multi} and the weak orthogonality of maximal extension in~\cref{prop:properties_measured_multi_G}, we have that 
\begin{equation}
    \underline{ \mathbf{F}}(\rho_1, \ldots, \rho_r) \leq {\overline{ \mathbf{F}}}(\rho_1, \ldots, \rho_r) =0, 
\end{equation}
which implies that $  \underline{ \mathbf{F}}(\rho_1, \ldots, \rho_r)=0$ concluding the proof.

\subsection{Proof of Proposition~\ref{prop:additivity_log_Euclidean_divergence} (Additivity of multivariate log-Euclidean divergence)} 
\label{proof:additivity_log_eucli_divergence}

For $\varepsilon>0$, define $\rho_{i}(\varepsilon)\coloneqq \rho_{i}+\varepsilon I$
and $\sigma_{i}(\varepsilon)\coloneqq \sigma_{i}+\varepsilon I$, and consider that
\begin{align}
& H_{s}\!\left(  \rho_{1}(\varepsilon),\ldots,\rho_{r}(\varepsilon)\right)
+H_{s}\!\left(  \sigma_{1}(\varepsilon),\ldots,\sigma_{r}(\varepsilon)\right)
\nonumber\\
& =-\ln\operatorname{Tr}\!\left[  \exp\!\left(  \sum_{i=1}^{r}s_{i}\ln\rho
_{i}(\varepsilon)\right)  \right]  -\ln\operatorname{Tr}\!\left[  \exp\!\left(
\sum_{i=1}^{r}s_{i}\ln\sigma_{i}(\varepsilon)\right)  \right]  \\
& =-\ln\left(  \operatorname{Tr}\!\left[  \exp\!\left(  \sum_{i=1}^{r}s_{i}\ln
\rho_{i}(\varepsilon)\right)  \right]  \operatorname{Tr}\!\left[  \exp\!\left(
\sum_{i=1}^{r}s_{i}\ln\sigma_{i}(\varepsilon)\right)  \right]  \right)  \\
& =-\ln\operatorname{Tr}\!\left[  \exp\!\left(  \sum_{i=1}^{r}s_{i}\ln\rho
_{i}(\varepsilon)\right)  \otimes\exp\!\left(  \sum_{i=1}^{r}s_{i}\ln\sigma
_{i}(\varepsilon)\right)  \right]  \\
& =-\ln\operatorname{Tr}\!\left[  \left(  \exp\!\left(  \sum_{i=1}^{r}s_{i}\ln
\rho_{i}(\varepsilon)\right)  \otimes I\right)  \left(  I\otimes\exp\!\left(
\sum_{i=1}^{r}s_{i}\ln\sigma_{i}(\varepsilon)\right)  \right)  \right]  \\
& =-\ln\operatorname{Tr}\!\left[  \exp\!\left(  \sum_{i=1}^{r}s_{i}\ln\left(
\rho_{i}(\varepsilon)\otimes I\right)  \right)  \exp\!\left(  \sum_{i=1}
^{r}s_{i}\ln\left(  I\otimes\sigma_{i}(\varepsilon)\right)  \right)  \right]
\\
& =-\ln\operatorname{Tr}\!\left[  \exp\!\left(  \sum_{i=1}^{r}s_{i}\ln\left(
\rho_{i}(\varepsilon)\otimes I\right)  +\sum_{i=1}^{r}s_{i}\ln\left(
I\otimes\sigma_{i}(\varepsilon)\right)  \right)  \right]  \\
& =-\ln\operatorname{Tr}\!\left[  \exp\!\left(  \sum_{i=1}^{r}s_{i}\ln\left(
\rho_{i}(\varepsilon)\otimes\sigma_{i}(\varepsilon)\right)  \right)  \right]
\\
& =H_{s}\!\left(  \rho_{1}(\varepsilon)\otimes\sigma_{1}(\varepsilon
),\ldots,\rho_{r}(\varepsilon)\otimes\sigma_{r}(\varepsilon)\right)  .
\end{align}
The fifth equality follows because
\begin{align}
\exp\!\left(  \sum_{i=1}^{r}s_{i}\ln\rho_{i}(\varepsilon)\right)  \otimes I  &
=\exp\!\left(  \left(  \sum_{i=1}^{r}s_{i}\ln\rho_{i}(\varepsilon)\right)
\otimes I\right)  \\
& =\exp\!\left(  \sum_{i=1}^{r}s_{i}\left(  \left[  \ln\rho_{i}(\varepsilon
)\right]  \otimes I\right)  \right)  \\
& =\exp\!\left(  \sum_{i=1}^{r}s_{i}\ln\left(  \rho_{i}(\varepsilon)\otimes
I\right)  \right)  ,
\end{align}
where we used the facts that $\exp\!\left(  A\right)  \otimes I=\exp\!\left(
A\otimes I\right)  $ for Hermitian $A$, that $s$ is a probability
distribution, and that $\ln\left(  B\right)  \otimes I=\ln\left(  B\otimes
I\right)  $ for positive definite $B$. The sixth equality follows because
$\exp\!\left(  C+D\right)  =\exp\!\left(  C\right)  \exp\!\left(  D\right)  $ when
$C$ and $D$ commute. The penultimate equality follows from $\ln( A \otimes I) + \ln(I \otimes B) = \ln(A \otimes B)$  for positive definite $A$ and $B$. By taking the limit $\varepsilon\rightarrow0^{+}$ on both
sides of the last equality, we conclude the proof.

\subsection{Proof of Theorem~\ref{thm:properties_quantum_Matusita_multi} (Properties of multivariate log-Euclidean fidelity)} 

\label{proof:properties_quantum_Matusita_multi}

  Symmetry follows by the definition of oveloh information in~\eqref{eq:oveloh-info} or simply by the fact that the sum over $i$ in \eqref{eq:mult_log-Euc_fid_definition} can be performed in any order. 

\medskip
\noindent\underline{Reduction to classical Matusita multivariate fidelity:}
By the definition of log-Euclidean fidelity in~\eqref{eq:mult_log-Euc_fid_definition}, and applying the fact that the states are commuting, we see that 
\begin{equation}\label{eq:commuting_log_Euclidean}
    F_r^\flat(\rho_1,\ldots, \rho_r)  =\operatorname{Tr}\!\left[  \prod\limits_{i=1}^{r}\rho_{i}^{\frac{1}{r}}\right].
\end{equation}
Observing that the last expression is precisely the Matusita multivariate fidelity for commuting states (recall~\eqref{eq:Matusita_fid_commuting}) then establishes the claim.

  \medskip 
  \noindent\underline{Data processing:}
  Due to the data-processing inequality for the quantum relative entropy, we have 
  \begin{align}
\inf_{\sigma\in\mathscr{D}}
\frac{1}{r}\sum_{i=1}^{r}D(\sigma\Vert\rho
_{i})  & \geq \inf_{\sigma\in\mathscr{D}}
\frac{1}{r}\sum_{i=1}^{r}D\!\left(\cN(\sigma) \Vert \cN(\rho
_{i})\right) \notag \\ 
& \geq 
\inf_{\sigma' \in\mathscr{D}}
\frac{1}{r}\sum_{i=1}^{r}D\!\left(\sigma' \Vert \cN(\rho
_{i})\right),
  \end{align}
where the last inequality follows from performing an optimization over a larger set of states.
This then leads to 
\begin{equation}
    \exp\!\left( - \inf_{\sigma\in\mathscr{D}}\frac{1}{r}\sum_{i=1}^{r}D(\sigma\Vert\rho
_{i}) \right) \leq  \exp\!\left(-  \inf_{\sigma' \in\mathscr{D}}\frac{1}{r}\sum_{i=1}^{r}D\!\left(\sigma' \Vert \cN(\rho
_{i})\right) \right),
\end{equation}
due to the monotonicity of the exponential function. Recalling \eqref{eq:quantum_Matusita_oveloh}, we conclude the proof of data processing.

\medskip 
\noindent\underline{Faithfulness:} 
If $\rho_i=\rho$ for all $i \in [r]$, the states commute, so that by~\eqref{eq:commuting_log_Euclidean},
\begin{equation}
    F_r^\flat(\rho_1,\ldots, \rho_r)  =\operatorname{Tr}\!\left[  \prod\limits_{i=1}^{r}\rho^{\frac{1}{r}}\right] = \Tr[\rho].
\end{equation}
Since $\rho$ is a state, we conclude that $ F_r^\flat(\rho_1,\ldots, \rho_r)=1$.

To prove the reverse direction, suppose that $  F_r^\flat(\rho_1,\ldots, \rho_r) =1$, so that by~\eqref{eq:quantum_Matusita_oveloh} we have
\begin{equation}
     \inf_{\sigma\in\mathscr{D}}\frac{1}{r}\sum_{i=1}^{r}D(\sigma\Vert\rho
_{i}) =0.
\end{equation}
Since the quantum relative entropy is non-negative when evaluated on states, there exists $\sigma \in \mathscr{D}$ such that $D(\sigma \Vert \rho_i) =0 $ for all $i \in [r]$. Then,  faithfulness of quantum relative entropy \cite[Proposition~7.3]{khatri2020principles} implies that $\sigma= \rho_i$ for all $i$, which is the desired implication. 

\medskip 
\noindent\underline{Direct-sum property:}
Let 
\begin{equation}
    \rho_i \coloneqq \sum_{x} p(x) |x\rangle\!\langle x|_X \otimes \rho_i^x
\end{equation}
for all $i \in [r]$.
Then, by the functional calculus, we have 
\begin{equation}
    \ln \rho_i = \sum_{x}  |x\rangle\!\langle x|_X \otimes \ln\! \left(p(x) \rho_i^x \right) = \sum_{x}  |x\rangle\!\langle x|_X \otimes \ln\! \left(p(x) \right) I_d + \sum_{x}  |x\rangle\!\langle x|_X \otimes \ln\! \left( \rho_i^x \right).
\end{equation}
Summing over all $i \in [r]$ and averaging, we get 
\begin{equation}
    \sum_{i=1}^r \frac{1}{r} \ln \rho_i = \sum_{x}  |x\rangle\!\langle x|_X \otimes \ln\! \left(p(x) \right) I_d +  \sum_{x}  |x\rangle\!\langle x|_X \otimes \sum_{i=1}^r \frac{1}{r} \ln\! \left( \rho_i^x \right).
\end{equation}
Using the fact that the two summands on the right above commute, that $\Tr\!\left[  \exp(A+B)\right]=\Tr\!\left[  \exp(A) \exp(B)\right]$ for commuting operators $A$ and $B$, and again applying the functional calculus,
we arrive at 
\begin{align}
& \operatorname{Tr}\!\left[  \exp\!\left(  \sum_{i=1}^{r}\frac{1}{r}
\ln\rho_{i}\right)  \right] \notag \\ 
&= \Tr\!\left[ \exp\!\left(\sum_{x}  |x\rangle\!\langle x|_X \otimes \ln(p(x)) I_d \right) \exp\!\left(\sum_{x'}  |x'\rangle\!\langle x'|_X \otimes  \sum_{i=1}^r \frac{1}{r} \ln\!\left( \rho_i^{x'} \right)\right)  \right] \\ 
&= \Tr\!\left[ \left(\sum_{x}  |x\rangle\!\langle x|_X \otimes p(x) I_d \right) \left(\sum_{x'}  |x'\rangle\!\langle x'|_X \otimes \exp\!\left( \sum_{i=1}^r \frac{1}{r} \ln\!\left( \rho_i^{x'}\right) \right)\right)  \right] \\ 
& = \Tr\! \left[ \sum_{x} |x\rangle\!\langle x|_X \otimes p(x) \exp\!\left( \sum_{i=1}^r \frac{1}{r} \ln\!\left( \rho_i^x\right) \right)\right] \\ 
&= \sum_x p(x) \Tr\! \left[\exp\!\left( \sum_{i=1}^r \frac{1}{r} \ln\!\left( \rho_i^x\right) \right)\right].
\end{align}
Recalling~\eqref{eq:quantum_Matusita_oveloh} together with \cref{def:multi_log_eucli_fidelity}, we conclude the proof. 

\medskip 
\noindent\underline{Weak orthogonality:} Let $\rho_i\rho_j=0$ for all $i\neq j \in [r]$. Then, the channel mapping $\rho_i \to |i \rangle\!\langle i|$ for all $i \in [r]$ is a reversible mapping. Then, due to the data processing property of the log-Euclidean multivariate fidelity, we have 
\begin{equation}
    F^{\flat}_r (\rho_1,\ldots, \rho_r) =F^{\flat}_r( |1\rangle\!\langle 1|, \ldots, |r\rangle\!\langle r|).
\end{equation}
For $ \inf_{\sigma} \frac{1}{r} \sum_{i=1}^r D(\sigma \Vert |i\rangle\!\langle i| ) < \infty$, there should be $\sigma \in \mathscr{D}$ such that $D(\sigma \Vert |i\rangle\!\langle i| ) < \infty$ for all $i \in [r]$. This indeed happens only if $\operatorname{supp}(\sigma) \subseteq \operatorname{supp}(|i\rangle\!\langle i|)$ for all $i$. Since, however,  $\{|i\rangle\!\langle i|\}$ is a collection of orthonormal projections, we cannot find $\sigma$ such that it satisfies the finiteness condition for all $i \in [r]$. This implies that 
\begin{equation}
    \inf_{\sigma} \frac{1}{r} \sum_{i=1}^r D(\sigma \Vert |i\rangle\!\langle i| )= \infty.
\end{equation}
Then, by recalling \cref{eq:quantum_Matusita_oveloh}, we conclude that 
$F^{\flat}_r(\rho_1, \ldots, \rho_r) =0$.

\medskip 
\noindent\underline{Continuity:} 
The multivariate log-Euclidean fidelity $F_\flat$ given in Definition~\ref{def:multi-var-log-euclid} is continuous at tuples $(\rho_1,\ldots, \rho_r)$ of invertible states.
This follows from the fact that the matrix exponential and logarithmic functions are continuous at positive definite operators, which implies that the multivariate log-Euclidean fidelity is a composition of continuous functions.

\subsection{Proof of Proposition~\ref{prop:avg-k-wise-log-Euc-ordered} (Inequalities relating average $k$-wise log-Euclidean fidelities)}

\label{sec:avg-k-wise-log-Euc-ordered}

Let $S_{k-1}(i_{1},\ldots,i_{k})$ denote all size-($k-1$) subsets of
$i_{1},\ldots,i_{k}$, of which there are $k$ of them. Note that each symbol
$i_{j}$ appears in exactly $k-1$ elements of $S_{k-1}(i_{1},\ldots,i_{k})$.
Then consider that
\begin{align}
&  \frac{1}{k}\sum_{t\in S_{k-1}(i_{1},\ldots,i_{k})}F_{k-1}^{\flat}
(\rho_{t(1)},\ldots,\rho_{t(k-1)}) \notag \\
&  =\frac{1}{k}\sum_{t\in S_{k-1}(i_{1},\ldots,i_{k})}\exp\!\left(
-\inf_{\omega\in\mathscr{D}}\frac{1}{k-1}\sum_{j\in\left[  k-1\right]
}D(\omega\Vert\rho_{t(j)})\right)  \\
&  \geq\exp\!\left(  -\frac{1}{k}\sum_{t\in S_{k-1}(i_{1},\ldots,i_{k})}
\inf_{\omega\in\mathscr{D}}\frac{1}{k-1}\sum_{j\in\left[  k-1\right]
}D(\omega\Vert\rho_{t(j)})\right)  \\
&  \geq\exp\!\left(  -\inf_{\omega\in\mathscr{D}}\frac{1}{k\left(  k-1\right)
}\sum_{t\in S_{k-1}(i_{1},\ldots,i_{k})}\sum_{j\in\left[  k-1\right]
}D(\omega\Vert\rho_{t(j)})\right)  \\
&  =\exp\!\left(  -\inf_{\omega\in\mathscr{D}}\frac{1}{k}\sum_{\ell\in
\{i_{1},\ldots,i_{k}\}}D(\omega\Vert\rho_{\ell})\right)  \\
&  =F_{k}^{\flat}(\rho_{i_{1}},\ldots,\rho_{i_{k}}),
\end{align}
where the first inequality follows from the convexity of the exponential
function and the second from the fact that
\begin{multline}
\frac{1}{k}\sum_{t\in S_{k-1}(i_{1},\ldots,i_{k})}\inf_{\omega\in\mathscr{D}
}\frac{1}{k-1}\sum_{j\in\left[  k-1\right]  }D(\omega\Vert\rho_{t(j)}) \\ \leq
\inf_{\omega\in\mathscr{D}}\frac{1}{k\left(  k-1\right)  }\sum_{t\in
S_{k-1}(i_{1},\ldots,i_{k})}\sum_{j\in\left[  k-1\right]  }D(\omega\Vert
\rho_{t(j)}).
\end{multline}
The penultimate equality follows because
\begin{equation}
    \frac{1}{k\left(  k-1\right)
}\sum_{t\in S_{k-1}(i_{1},\ldots,i_{k})}\sum_{j\in\left[  k-1\right]
}D(\omega\Vert\rho_{t(j)})  
  =\frac{1}{k}\sum_{\ell\in
\{i_{1},\ldots,i_{k}\}}D(\omega\Vert\rho_{\ell}),
\end{equation}
as a generalization of the justification behind~\eqref{eq:reorder-product-k-1}.
Now consider that
\begin{align}
&  \frac{1}{\binom{r}{k}}\sum_{i_{1}<\cdots<i_{k}}F_{k}^{\flat}(\rho_{i_{1}
},\ldots,\rho_{i_{k}})\nonumber\\
&  \leq\frac{1}{\binom{r}{k}}\sum_{i_{1}<\cdots<i_{k}}\frac{1}{k}\sum_{t\in
S_{k-1}(i_{1},\ldots,i_{k})}F_{k-1}^{\flat}(\rho_{t(1)},\ldots,\rho
_{t(k-1)})\\
&  =\frac{k-1!r-k!}{r!}\sum_{i_{1}<\cdots<i_{k}}\sum_{t\in S_{k-1}
(i_{1},\ldots,i_{k})}F_{k-1}^{\flat}(\rho_{t(1)},\ldots,\rho_{t(k-1)})\\
&  =\frac{k-1!r-k!}{r!}\left(  r-k+1\right)  \sum_{i_{1}<\cdots<i_{k-1}
}F_{k-1}^{\flat}(\rho_{i_{1}},\ldots,\rho_{i_{k-1}})\\
&  =\frac{1}{\binom{r}{k-1}}\sum_{i_{1}<\cdots<i_{k-1}}F_{k-1}^{\flat}
(\rho_{i_{1}},\ldots,\rho_{i_{k-1}}).
\end{align}
The reasoning for these steps is precisely the same as that given for~\eqref{eq:avg-k-wise-final-steps}--\eqref{eq:avg-k-wise-final-steps-final}.

\subsection{Proof of Proposition~\ref{prop:dual_SDP_multi_geo} (Dual of multivariate SDP geometric fidelity)}

\label{Sec:Proof_dual_SDP_geo}

We follow the Lagrange multiplier method. Consider that
\begin{align}
& \sup_{X_{ij}=X_{ji}\in\text{Herm}}\left\{  \sum_{i<j}\operatorname{Tr}
[X_{ij}]:\sum_{i}|i\rangle\!\langle i|\otimes\rho_{i}+\sum_{i\neq j}
|i\rangle\!\langle j|\otimes X_{ij}\geq0\right\}  \nonumber\\
& =\sup_{X_{ij}=X_{ji}\in\text{Herm}}\Bigg\{  \sum_{i<j}\operatorname{Tr}
[X_{ij}]+\notag \\
& \qquad \inf_{\sum_{i,j}|i\rangle\!\langle j|\otimes Y_{ij}\geq0}
\operatorname{Tr}\!\left[  \left(  \sum_{i}|i\rangle\!\langle i|\otimes\rho
_{i}+\sum_{i\neq j}|i\rangle\!\langle j|\otimes X_{ij}\right)  \sum_{i^{\prime
},j^{\prime}}|i^{\prime}\rangle\!\langle j^{\prime}|\otimes Y_{i^{\prime
}j^{\prime}}\right]  \Bigg\}  \\
& =\sup_{X_{ij}=X_{ji}\in\text{Herm}}\inf_{\sum_{i,j}|i\rangle\!\langle
j|\otimes Y_{ij}\geq0}\Bigg\{  \sum_{i<j}\operatorname{Tr}[X_{ij}
]+\notag \\
& \qquad\qquad\qquad \operatorname{Tr}\!\left[  \left(  \sum_{i}|i\rangle\!\langle i|\otimes\rho
_{i}+\sum_{i\neq j}|i\rangle\!\langle j|\otimes X_{ij}\right)  \sum_{i^{\prime
},j^{\prime}}|i^{\prime}\rangle\!\langle j^{\prime}|\otimes Y_{i^{\prime
}j^{\prime}}\right]  \Bigg\}  \\
& =\sup_{X_{ij}=X_{ji}\in\text{Herm}}\inf_{\sum_{i,j}|i\rangle\!\langle
j|\otimes Y_{ij}\geq0}\left\{  \sum_{i<j}\operatorname{Tr}[X_{ij}]+\sum
_{i}\operatorname{Tr}[Y_{ii}\rho_{i}]+\sum_{i\neq j}\operatorname{Tr}
[X_{ij}Y_{ji}]\right\}  \\
& =\sup_{X_{ij}=X_{ji}\in\text{Herm}}\inf_{\sum_{i,j}|i\rangle\!\langle
j|\otimes Y_{ij}\geq0}\left\{  \sum_{i<j}\operatorname{Tr}[X_{ij}]+\sum
_{i}\operatorname{Tr}[Y_{ii}\rho_{i}]+\sum_{i<j}\operatorname{Tr}
[X_{ij}(Y_{ij}+Y_{ji}^{\dag})]\right\}  \\
& =\sup_{X_{ij}=X_{ji}\in\text{Herm}}\inf_{\sum_{i,j}|i\rangle\!\langle
j|\otimes Y_{ij}\geq0}\left\{  \sum_{i}\operatorname{Tr}[Y_{ii}\rho_{i}
]+\sum_{i<j}\operatorname{Tr}[X_{ij}(I_d+Y_{ij}+Y_{ji}^{\dag})]\right\}  \\
& \leq\inf_{\sum_{i,j}|i\rangle\!\langle j|\otimes Y_{ij}\geq0}\sup
_{X_{ij}=X_{ji}\in\text{Herm}}\left\{  \sum_{i}\operatorname{Tr}[Y_{ii}
\rho_{i}]+\sum_{i<j}\operatorname{Tr}[X_{ij}(I_d+Y_{ij}+Y_{ji}^{\dag})]\right\}
\\
& =\inf_{\sum_{i,j}|i\rangle\!\langle j|\otimes Y_{ij}\geq0}\left\{  \sum
_{i}\operatorname{Tr}[Y_{ii}\rho_{i}]+\sup_{X_{ij}=X_{ji}\in\text{Herm}}
\sum_{i<j}\operatorname{Tr}[X_{ij}(I_d+Y_{ij}+Y_{ji}^{\dag})]\right\}  \\
& =\inf_{Y_{ij}}\left\{  \sum_{i}\operatorname{Tr}[Y_{ii}\rho_{i}]:\sum
_{i,j}|i\rangle\!\langle j|\otimes Y_{ij}\geq0,\ Y_{ij}+Y_{ji}^{\dag
}=-I_d\ \forall i\neq j\right\}
\\
& =\frac{1}{2}\inf_{Y_{ij}}\left\{  \sum_{i}\operatorname{Tr}[Y_{ii}\rho_{i}]:\sum
_{i,j}|i\rangle\!\langle j|\otimes Y_{ij}\geq0,\ \frac{1}{2}(Y_{ij}+Y_{ji}^{\dag
})=-I_d\ \forall i\neq j\right\}.
\end{align}
Strong duality holds by making similar choices discussed around \eqref{eq:multi-SDP-slater-choice}. To be clear, we pick $Y_{ii} = r I$ and $Y_{ij} = Y_{ji} = -I$ for $i\neq j$, and the analysis starting at \eqref{eq:multi-SDP-slater-choice} establishes that $\sum
_{i,j}|i\rangle\!\langle j|\otimes Y_{ij} > 0$. For the other SDP, we pick $X_{ij} = 0$ for $i \neq j$.

\subsection{Proof of \cref{prop:SDP-geo-props} (Properties of multivariate SDP geometric fidelity)}

\label{app:SDP-geo-props}

    Proofs for data processing, symmetry, and the direct-sum property follow by proceeding similarly to our proof approaches for \cref{thm:properties_SDP_fidelity}. Non-negativity follows by picking $X_{ij}=0$ for all $i,j\in[r]$ such that $i\neq j$; this is a feasible choice for which the objective function evaluates to zero. Thus, $F_{\operatorname{SDP}}^G(\rho_1, \ldots, \rho_r)\geq 0$.

\medskip 
\underline{Weak orthogonality:}
By recalling~\eqref{eq:ineq-FGSDP-FG-FH} and the orthogonality of average pairwise Holevo fidelity, for a mutually orthogonal tuple of states, the equality $ F_{\operatorname{SDP}}^G(\rho_1, \ldots, \rho_r) =0$ holds, thus proving weak orthogonality.

\medskip 
\underline{Faithfulness:}
The inequality in~\eqref{eq:upper_geometric_SDP} also shows that, when $F_{\operatorname{SDP}}^G(\rho_1, \ldots, \rho_r) =1$, all states are the same by utilizing the faithfulness of $F_{\operatorname{SDP}}$ (see \cref{thm:properties_SDP_fidelity}). To see the reverse implication, suppose that $\rho_i=\rho$ for all $i\in [r]$. In~\cref{def:SDP_geometric_multi} choose $X_{ij}= \rho$ for $i \neq j$, which is a valid candidate since it is self-adjoint and satisfies the SDP condition
\begin{equation}
    \sum_{i=1}^r |i\rangle\!\langle i| \otimes \rho_i + \sum_{i \neq j} |i\rangle\!\langle j| \otimes X_{ij}  = \sum_{i,j=1}^r |i\rangle\!\langle j| \otimes \rho  \geq 0.
\end{equation}
For this choice of $X_{ij}$, the objective function takes the value $1$, which implies $1 \leq F_{\operatorname{SDP}}^G(\rho,\ldots,\rho)$. Using~\eqref{eq:upper_geometric_SDP} along with $F_{\operatorname{SDP}}(\rho_1, \ldots, \rho_r) \leq 1$, we conclude that $F_{\operatorname{SDP}}^G(\rho,\ldots,\rho)=1$, thus proving the faithfulness property.

\medskip 
\underline{Reduction to classical fidelity:}
For commuting states, with  the use of~\eqref{eq:upper_geometric_SDP} and reduction to commuting states for multivariate SDP fidelity (see \cref{thm:properties_SDP_fidelity}), we have 
\begin{equation}
       F_{\operatorname{SDP}}^G(\rho_1, \ldots, \rho_r) \leq F_{\operatorname{SDP}}(\rho_1, \ldots, \rho_r) = \frac{2}{r(r-1)} \sum_{i<j} \sum_{x \in \cX} \sqrt{\rho_i(x) \rho_j(x)}.
       \label{eq:up-bound-F_G_SDP-comm-states}
\end{equation}
For the lower bound, set $X_{ij} = \sum_x \sqrt{\rho_i(x)\rho_j(x)} |x\rangle\!\langle x|$ in~\cref{def:SDP_geometric_multi}. By defining  $M_x \coloneqq \sum_i |i\rangle \otimes \sqrt{\rho_i(x)} |x\rangle$, we conclude that 
\begin{equation}
   \sum_{i=1}^r |i\rangle\!\langle i| \otimes \rho_i + \sum_{i \neq j} |i\rangle\!\langle j| \otimes X_{ij} = \sum_x M_xM_x^\dag \geq 0 ,
\end{equation}
which proves that this choice of $X_{ij}$ is feasible.
Thus, it follows that 
\begin{equation}
    \frac{2}{r(r-1)} \sum_{i<j} \sum_{x \in \cX} \sqrt{\rho_i(x) \rho_j(x)} \leq  F_{\operatorname{SDP}}^G(\rho_1, \ldots, \rho_r),
\end{equation}
which, when combined with \eqref{eq:up-bound-F_G_SDP-comm-states}, proves that $F_{\operatorname{SDP}}^G$ reduces to average pairwise fidelity for commuting states.

\medskip 
\underline{Super-multiplicativity:}
    Let $(X_{ij})_{i,j}$ be such that $X_{ij}=X_{ji} \in \operatorname{Herm}$  and $A \coloneqq \sum_{i=1}^r |i\rangle\!\langle i| \otimes \rho_i + \sum_{i \neq j} |i\rangle\!\langle j| \otimes X_{ij} \geq 0$.
    Then for integer $n >1$ we also have 
    \begin{equation}
     \sum_{i=1}^r |i\rangle\!\langle i| \otimes \rho_i^{\otimes n} + \sum_{i \neq j} |i\rangle\!\langle j| \otimes X_{ij}^{\otimes n} \geq 0,
    \end{equation}
    by applying~\cite[Theorem~3.1]{BlockMatrix} $n-1$ times. This theorem can be easily seen from a different proof. Let
    \begin{equation}
    Y = \sum_{i,j} |i\rangle\!\langle j| \otimes Y_{ij} \geq 0, \qquad Z = \sum_{i,j} |i\rangle\!\langle j| \otimes Z_{ij} \geq 0.
    \end{equation}
    Then $Y\otimes Z \geq 0$, and picking
    \begin{equation}
    M = \sum_{i} |i\rangle\!\langle i| \otimes I \otimes \langle i| \otimes I,
    \end{equation}
    it follows that $M (Y\otimes Z) M^\dag \geq 0$, and one can check that
    \begin{equation}
    M (Y\otimes Z) M^\dag =  \sum_{i,j} |i\rangle\!\langle j| \otimes Y_{ij} \otimes Z_{ij}.
    \end{equation}

    With that, $X_{ij}^{\otimes n}$ is a feasible candidate for $F^G_{\operatorname{SDP}}\!\left(\rho_1^{\otimes n}, \ldots, \rho_r^{\otimes n}\right)$, leading to 
    \begin{equation}
        \frac{2}{r(r-1)} \sum_{i<j} \left( \Tr[X_{ij}]\right)^n \leq F^G_{\operatorname{SDP}}\!\left(\rho_1^{\otimes n}, \ldots, \rho_r^{\otimes n}\right).
    \end{equation}
Then using convexity of the function $x^n$ for $n>1$, we have 
 \begin{equation}
       \left( \frac{2}{r(r-1)} \sum_{i<j} \Tr[X_{ij}]\right)^n\leq F^G_{\operatorname{SDP}}\!\left(\rho_1^{\otimes n}, \ldots, \rho_r^{\otimes n}\right).
    \end{equation}
We conclude the proof by supremizing over $(X_{ij})_{i,j}$ satisfying the required constraints.

\subsection{Proof of Proposition~\ref{prop:mult-geo-SDP-G-alt} (Alternate form of multivariate geometric SDP fidelity)}

\label{app:mult-geo-SDP-G-alt}

To see the expression in \eqref{eq:mult-geo-SDP-G-alt}, recall the expression
for $F^{G}_{\operatorname{SDP}}(\rho_{1},\ldots,\rho_{r})$ in \eqref{eq:SDP-multi-geometric-def}. Let us
define
\begin{equation}
X=\sum_{i,j=1}^{r}|i\rangle\!\langle j|\otimes X_{ij}.
\end{equation}
The constraint in \eqref{eq:SDP-multi-geometric-def} can be rewritten as
\begin{align}
X  &  \geq0,\\
\left(  \Delta_{r}\otimes\operatorname{id}_{d}\right)  \left(  X\right)   &
=\sum_{i=1}^{r}|i\rangle\!\langle i|\otimes\rho_{i}.
\label{eq:state-constraint-S-G-SDP}
\end{align}
Since the constraint $X\geq0$ imposes that $X$ is Hermitian, it follows that
$X_{ij}=X_{ji}^{\dag}$ for all $i,j\in\left[  r\right]  $. The further
constraint that $X_{ij}=X_{ji}\in\operatorname{Herm}$ follows from this and
the constraint $\left(  T_r\otimes\operatorname{id}\right)  \left(  X\right)
=X$ because
\begin{align}
\left(  T_r\otimes\operatorname{id}\right)  \left(  X\right)   &  =\left(
T_r\otimes\operatorname{id}\right)  \left(  \sum_{i,j=1}^{r}|i\rangle\!\langle
j|\otimes X_{ij}\right) \\
&  =\sum_{i,j=1}^{r}T(|i\rangle\!\langle j|)\otimes X_{ij}\\
&  =\sum_{i,j=1}^{r}|j\rangle\!\langle i|\otimes X_{ij}\\
&  =\sum_{i,j=1}^{r}|i\rangle\!\langle j|\otimes X_{ji}.
\end{align}
The sum $2\sum_{i<j}\operatorname{Tr}[X_{ij}]$ can be rewritten as
\begin{align}
2\sum_{i<j}\operatorname{Tr}[X_{ij}]  &  =\sum_{i\neq j}\operatorname{Tr}
[X_{ij}]\\
&  =\sum_{i,j=1}^{r}\operatorname{Tr}[X_{ij}]-\sum_{i=1}^{r}\operatorname{Tr}
[X_{ii}]\\
&  =\sum_{i,j=1}^{r}\operatorname{Tr}[X_{ij}]-\sum_{i=1}^{r}\operatorname{Tr}
[\rho_{i}]\\
&  =\left(  \sum_{i,j=1}^{r}\operatorname{Tr}[X_{ij}]\right)  -r,
\end{align}
where we made use of \eqref{eq:state-constraint-S-G-SDP}. Now observe that
\begin{equation}
\sum_{i,j=1}^{r}\operatorname{Tr}[X_{ij}]=\operatorname{Tr}[\left(
|+\rangle\!\langle+|\otimes I_d\right)  X]
\end{equation}
because
\begin{align}
\operatorname{Tr}[\left(  |+\rangle\!\langle+|\otimes I_d\right)  X]  &
=\operatorname{Tr}\!\left[  \left(  \sum_{i,j=1}^{r}|i\rangle\!\langle
j|\otimes I_d\right)  \left(  \sum_{k,\ell=1}^{r}|k\rangle\!\langle\ell|\otimes
X_{k\ell}\right)  \right] \\
&  =\sum_{i,j,k,\ell=1}^{r}\operatorname{Tr}\!\left[  |i\rangle\!\langle
j|k\rangle\!\langle\ell|\otimes X_{k\ell}\right] \\
&  =\sum_{i,j,k,\ell=1}^{r}\langle j|k\rangle\!\langle\ell|i\rangle
\operatorname{Tr}[X_{k\ell}]\\
&  =\sum_{i,j=1}^{r}\operatorname{Tr}[X_{ji}]\\
&  =\sum_{i,j=1}^{r}\operatorname{Tr}[X_{ij}].
\end{align}
Putting everything above together, we conclude the equality in \eqref{eq:mult-geo-SDP-G-alt}.

\subsection{Proof of Proposition~\ref{prop:secrecy-meas-geo-SDP-SDP-dual} (Alternate form of secrecy measure~$S^{G}_{\operatorname{SDP}}$)}

\label{app:secrecy-meas-geo-SDP-SDP-dual}

The equality in \eqref{eq:secrecy-measure-G-primal} follows from combining
\eqref{eq:def-mult-secrecy-geo} and \eqref{eq:mult-geo-SDP-G-alt}. So it
remains to prove \eqref{eq:secrecy-measure-G-dual}, for which we employ
SDP\ duality theory. Consider that
\begin{align}
&  \sup_{X\geq0}\left\{
\begin{array}
[c]{c}
\operatorname{Tr}[\left(  |+\rangle\!\langle+|\otimes I_d\right)  X]:\\
\left(  \Delta_{r}\otimes\operatorname{id}_{d}\right)  \left(  X\right)
=\sum_{i=1}^{r}|i\rangle\!\langle i|\otimes\rho_{i},\\
\left(  T_r\otimes\operatorname{id}\right)  \left(  X\right)  =X
\end{array}
\right\} \nonumber\\
&  =\sup_{X\geq0}\left\{
\begin{array}
[c]{c}
\operatorname{Tr}[\left(  |+\rangle\!\langle+|\otimes I_d\right)  X]+\\
\inf_{Y,Z\in\operatorname{Herm}}\left\{
\begin{array}
[c]{c}
\operatorname{Tr}\!\left[  Y\left(  \sum_{i=1}^{r}|i\rangle\!\langle
i|\otimes\rho_{i}-\left(  \Delta_{r}\otimes\operatorname{id}_{d}\right)
\left(  X\right)  \right)  \right] \\
+\operatorname{Tr}[Z\left(  \left(  T_r\otimes\operatorname{id}\right)  \left(
X\right)  -X\right)  ]
\end{array}
\right\}
\end{array}
\right\} \\
&  =\sup_{X\geq0}\inf_{Y,Z\in\operatorname{Herm}}\left\{
\begin{array}
[c]{c}
\operatorname{Tr}[\left(  |+\rangle\!\langle+|\otimes I_d\right)
X]+\operatorname{Tr}\!\left[  Y\left(  \sum_{i=1}^{r}|i\rangle\!\langle
i|\otimes\rho_{i}-\left(  \Delta_{r}\otimes\operatorname{id}_{d}\right)
\left(  X\right)  \right)  \right] \\
+\operatorname{Tr}[Z\left(  \left(  T_r\otimes\operatorname{id}\right)  \left(
X\right)  -X\right)  ]
\end{array}
\right\} \\
&  =\sup_{X\geq0}\inf_{Y,Z\in\operatorname{Herm}}\left\{
\begin{array}
[c]{c}
\operatorname{Tr}[\left(  |+\rangle\!\langle+|\otimes I_d\right)
X]-\operatorname{Tr}[\left(  \Delta_{r}\otimes\operatorname{id}_{d}\right)
\left(  Y\right)  X]\\
+\operatorname{Tr}\!\left[  Y\left(  \sum_{i=1}^{r}|i\rangle\!\langle
i|\otimes\rho_{i}\right)  \right] \\
+\operatorname{Tr}[\left(  T_r\otimes\operatorname{id}\right)  \left(  Z\right)
X]-\operatorname{Tr}[ZX]
\end{array}
\right\} \\
&  =\sup_{X\geq0}\inf_{Y,Z\in\operatorname{Herm}}\left\{
\begin{array}
[c]{c}
\operatorname{Tr}\!\left[  Y\left(  \sum_{i=1}^{r}|i\rangle\!\langle
i|\otimes\rho_{i}\right)  \right] \\
+\operatorname{Tr}[\left(  \left(  |+\rangle\!\langle+|\otimes I_d\right)
-\left(  \Delta_{r}\otimes\operatorname{id}_{d}\right)  \left(  Y\right)
+\left(  T_r\otimes\operatorname{id}\right)  \left(  Z\right)  -Z\right)  X]
\end{array}
\right\} \\
&  \leq\inf_{Y,Z\in\operatorname{Herm}}\sup_{X\geq0}\left\{
\begin{array}
[c]{c}
\operatorname{Tr}\!\left[  Y\left(  \sum_{i=1}^{r}|i\rangle\!\langle
i|\otimes\rho_{i}\right)  \right] \\
+\operatorname{Tr}[\left(  \left(  |+\rangle\!\langle+|\otimes I_d\right)
-\left(  \Delta_{r}\otimes\operatorname{id}_{d}\right)  \left(  Y\right)
+\left(  T_r\otimes\operatorname{id}\right)  \left(  Z\right)  -Z\right)  X]
\end{array}
\right\} \\
&  =\inf_{Y,Z\in\operatorname{Herm}}\left\{
\begin{array}
[c]{c}
\operatorname{Tr}\!\left[  Y\left(  \sum_{i=1}^{r}|i\rangle\!\langle
i|\otimes\rho_{i}\right)  \right]  :\\
\left(  |+\rangle\!\langle+|\otimes I_d\right)  -\left(  \Delta_{r}
\otimes\operatorname{id}_{d}\right)  \left(  Y\right)  +\left(  T_r\otimes
\operatorname{id}\right)  \left(  Z\right)  -Z\leq0
\end{array}
\right\} \\
&  =\inf_{Y,Z\in\operatorname{Herm}}\left\{
\begin{array}
[c]{c}
\operatorname{Tr}\!\left[  Y\left(  \sum_{i=1}^{r}|i\rangle\!\langle
i|\otimes\rho_{i}\right)  \right]  :\\
|+\rangle\!\langle+|\otimes I_d+\left(  T_r\otimes\operatorname{id}\right)
\left(  Z\right)  \leq\left(  \Delta_{r}\otimes\operatorname{id}_{d}\right)
\left(  Y\right)  +Z
\end{array}
\right\}
\end{align}
Strong duality follows by picking $X=\sum_{i=1}^{r}|i\rangle\!\langle
i|\otimes\rho_{i}$ in the primal (feasible choice)\ and $Z=0$ and
$Y=2rI_{r}\otimes I_{d}$ in the dual (strictly feasible choice). Indeed, for
the latter, consider that
\begin{align}
\left(  \Delta_{r}\otimes\operatorname{id}_{d}\right)  \left(  2rI_{r}\otimes
I_{d}\right)   &  =2rI_{r}\otimes I_{d}\\
&  =2r\left(  I_{r}-\frac{|+\rangle\!\langle+|}{r}+\frac{|+\rangle\!\langle
+|}{r}\right) \\
&  =2r\left(  I_{r}-\frac{|+\rangle\!\langle+|}{r}\right)  +2|+\rangle\!
\langle+|\\
&  >|+\rangle\!\langle+|,
\end{align}
where we used the fact that $I_{r}-\frac{|+\rangle\!\langle+|}{r}\geq0$.

\subsection{Proof of Proposition~\ref{prop:multi-geo-submult} (Super-multiplicativity of the secrecy measure $S_{\operatorname{SDP}}^{G}$)}

\label{app:multi-geo-submult}

Let $X_{1}$ satisfy the constraints
\begin{align}
X_{1}  & \geq0,\\
\left(  \Delta_{r_{1}}\otimes\operatorname{id}_{d_{1}}\right)  \left(
X_{1}\right)    & =\sum_{i=1}^{r_{1}}|i\rangle\!\langle i|\otimes\rho_{i},\\
\left(  T_{r_{1}}\otimes\operatorname{id}_{d_{1}}\right)  \left(
X_{1}\right)    & =X_{1},
\end{align}
and let $X_{2}$ satisfy the constraints
\begin{align}
X_{2}  & \geq0,\\
\left(  \Delta_{r_{2}}\otimes\operatorname{id}_{d_{2}}\right)  \left(
X_{2}\right)    & =\sum_{j=1}^{r_{2}}|j\rangle\!\langle j|\otimes\sigma_{j},\\
\left(  T_{r_{2}}\otimes\operatorname{id}_{d_{2}}\right)  \left(
X_{2}\right)    & =X_{2}.
\end{align}
Then $X_{1}\otimes X_{2}$ satisfies the constraints for $S_{\operatorname{SDP}
}^{G}((\rho_{i}\otimes\sigma_{j})_{i\in\left[  r_{1}\right]  ,j\in\left[
r_{2}\right]  })$. Indeed,
\begin{align}
X_{1}\otimes X_{2}  & \geq0,\\
\sum_{i=1}^{r_{1}}\sum_{j=1}^{r_{2}}|i\rangle\!\langle i|\otimes
|j\rangle\!\langle j|\otimes\rho_{i}\otimes\sigma_{j}  & \simeq\left(
\sum_{i=1}^{r_{1}}|i\rangle\!\langle i|\otimes\rho_{i}\right)  \otimes\left(
\sum_{j=1}^{r_{2}}|j\rangle\!\langle j|\otimes\sigma_{j}\right)  \\
& =\left(  \Delta_{r_{1}}\otimes\operatorname{id}_{d_{1}}\right)  \left(
X_{1}\right)  \otimes\left(  \Delta_{r_{2}}\otimes\operatorname{id}_{d_{2}
}\right)  \left(  X_{2}\right)  \\
& =\left(  \Delta_{r_{1}}\otimes\operatorname{id}_{d_{1}}\otimes\Delta_{r_{2}
}\otimes\operatorname{id}_{d_{2}}\right)  \left(  X_{1}\otimes X_{2}\right)
\\
X_{1}\otimes X_{2}  & =\left(  T_{r_{1}}\otimes\operatorname{id}_{d_{1}
}\right)  \left(  X_{1}\right)  \otimes\left(  T_{r_{2}}\otimes
\operatorname{id}_{d_{2}}\right)  \left(  X_{2}\right)  \\
& =\left(  T_{r_{1}}\otimes\operatorname{id}_{d_{1}}\otimes T_{r_{2}}
\otimes\operatorname{id}_{d_{2}}\right)  \left(  X_{1}\otimes X_{2}\right)  .
\end{align}
Then it follows that
\begin{align}
& \operatorname{Tr}[\left(  |+\rangle\!\langle+|_{r_{1}}\otimes I_{d_{1}
}\right)  X_{1}]\cdot\operatorname{Tr}[\left(  |+\rangle\!\langle+|_{r_{2}
}\otimes I_{d_{2}}\right)  X_{2}]\nonumber\\
& =\operatorname{Tr}[\left(  \left(  |+\rangle\!\langle+|_{r_{1}}\otimes
I_{d_{1}}\right)  X_{1}\right)  \otimes\left(  \left(  |+\rangle
\!\langle+|_{r_{2}}\otimes I_{d_{2}}\right)  X_{2}\right)  ]\\
& =\operatorname{Tr}[\left(  |+\rangle\!\langle+|_{r_{1}}\otimes I_{d_{1}
}\otimes|+\rangle\!\langle+|_{r_{2}}\otimes I_{d_{2}}\right)  \left(
X_{1}\otimes X_{2}\right)  ]\\
& \leq (r_1 r_2)^2 S_{\operatorname{SDP}}^{G}((\rho_{i}\otimes\sigma_{j})_{i\in\left[
r_{1}\right]  ,j\in\left[  r_{2}\right]  }) .
\end{align}
Since the inequality holds for all feasible $X_{1}$ and $X_{2}$, we conclude
the inequality in \eqref{eq:multi-geo-submult}.

\end{document}